\theoremstyle{plain}
\newtheorem{theorem}{Theorem}
\newtheorem{proposition}[theorem]{Proposition}
\newtheorem{lemma}[theorem]{Lemma}
\newtheorem{corollary}[theorem]{Corollary}
\newtheorem{definition}[theorem]{Definition}
\numberwithin{theorem}{section}
\numberwithin{equation}{section}
\newcommand{\nc}{\newcommand}
\nc{\be}{\begin{equation}}
\nc{\la}{\label}
\nc{\ba}{\begin{array}}
\nc{\ea}{\end{array}}
\nc{\bs}{\begin{split}}
\nc{\es}{\end{split}}
\newcommand{\R}{\mathbb{R}}
\newcommand{\C}{\mathbb{C}}
\newcommand{\Z}{\mathbb{Z}}
\newcommand{\cC}{\mathcal{C}}
\newcommand{\cL}{\mathcal{L}}
\nc{\al}{\alpha}
\nc{\del}{\delta}
\nc{\h}{\delta}
\nc{\Gam}{\Gamma}
\nc{\G}{\Gamma}
\nc{\g}{\gamma}
\nc{\gam}{\gamma}
\nc{\ka}{\kappa}
\nc{\lam}{\lambda}
\nc{\Lam}{\Lambda}
\nc{\Om}{\Omega}
\nc{\om}{\omega}
\nc{\ta}{\tau}
\nc{\w}{\omega}
\nc{\io}{\iota}
\nc{\z}{\zeta}
\nc{\s}{\sigma}
\nc{\Si}{\Sigma}
\nc{\vphi}{\varphi}
\nc{\e}{\epsilon}
\nc{\bP}{\bar{P}}
\nc{\bQ}{\bar{Q}}
\nc{\ran}{\rangle}
\nc{\lan}{\langle}
\newcommand{\ra}{\rightarrow}
\newcommand{\ls}{\lesssim}
\newcommand{\gs}{\gtrsim}
\newcommand{\one}{\mathbf{1}}
\newcommand{\Ran}{\operatorname{Ran}}
\renewcommand{\Re}{\mathrm{Re}} 
\renewcommand{\Im}{\mathrm{Im}} 
\newcommand{\Tr}{\mathrm{Tr}}
\newcommand{\tr}{\mathrm{Tr}}
\newcommand{\dist}{\mathrm{dist}}
\nc{\bfone}{{\bf 1}}
\newcommand{\p}{\partial}
\newcommand{\n}{\nabla}
\def\eqn{\begin{align}}
\def\eeqn{\end{align}}
\newcommand{\etal}{\eta_0}
\newcommand{\cbeta}{s_\beta}
\newcommand{\cbet}{c_T}
\newcommand{\UBF}{U_{\rm BF}}
\newcommand{\Udel}{U_{\delta}}
\newcommand{\absoint}{\left| \oint \right|}
\nc{\ex}{\text{xc}}
\nc{\Ex}{\text{Xc}}
\newcommand{\kp}{{\kappa'}}
\newcommand{\lat}{\mathcal{L}}  
\nc{\latd}{\mathcal{L}_\delta}
\nc{\latde}{\mathcal{L}_\delta}
\newcommand{\Lper}{L_{\rm per}^2}
\newcommand{\LperB}{L^2_{{\rm per}, \delta}}
\newcommand{\Omd}{\Om}
\newcommand{\Omde}{\Om_\delta}
\newcommand{\den}{\operatorname{den}}
\newcommand{\ft}{f_T}
\newcommand{\cutOff}{r}
\newcommand{\Br}{B(\cutOff)}
\newcommand{\LinMOp}{M}
\newcommand{\Proj}{P_r}
\newcommand{\hper}{h_{\rm per}}
\newcommand{\hperDelta}{h^\delta_{\rm per}}
\newcommand{\hperkDelta}{h^\delta_{{\rm per}, k}}
\newcommand{\hphi}[1]{h^{#1}}
\newcommand{\Kfib}[1]{K_{ #1}}
\newcommand{\Mfib}[1]{M_{ #1}}
\newcommand{\barKfib}[1]{\bar K_{ #1}}
\newcommand{\rperDelta}{r^\delta_{\rm per}}
\newcommand{\rperk}{r_{{\rm per}, k}}
\newcommand{\rpero}{r_{{\rm per}, 0}}
\newcommand{\rperkDelta}{r^\delta_{{\rm per}, k}}
\newcommand{\rperDeltak}{r_{{\rm per}, \delta k}}
\newcommand{\Xom}{X}
\newcommand{\DETAILS}[1]{}
\begin{document}

\begin{quote}
\qquad \qquad \qquad  \qquad \qquad \qquad   \qquad \qquad \qquad 	{\it J Stat Phys (2020) 180:954-1001 \\ }
\end{quote}

\title[PBfromRHF April 29, 2020]  {On derivation of the Poisson-Boltzmann equation} 
\author{Ilias (Li) Chenn and I.M. Sigal}

\address{IC: Dept of Mathematics, MIT, Cambridge, MA, USA} 

\address{IMS: Dept of Mathematics, Univ of Toronto, Toronto, Canada} 

\date{April 15, 2020} 

\maketitle

\begin{quote}
\qquad \qquad \qquad  	{\it To Joel with friendship and admiration. \\ }
\end{quote}

\begin{abstract} Starting from the microscopic reduced Hartree-Fock equation, we derive the nanoscopic linearized Poisson-Boltzmann equation for the electrostatic potential associated with the electron density.

Mathematics Subject Classification (MSC): 81V74, 81V45, 81Q80, 35Q70

\end{abstract}


\section{Introduction}

\subsection{The reduced Hartree-Fock equation}   
The 
 success of the  Hartree-Fock and density functional theories in revealing the electronic structure of matter warrants 
their use as a starting point in the derivation of emergent macroscopic properties of quantum matter. 

Here, one of the central problems is the derivation of  macroscopic Maxwell's equations 
  in dielectrics. 
\DETAILS{from 
the many-body Quantum Mechanics (Schr\"odinger equation) is one of the key problems in theoretical and mathematical 
 physics.}
The first attack on such a derivation 
 was made in the pioneering works of E. Canc\`es, M. Lewin and G. Stoltz 
 and W. E and J. Lu and their collaborators  
 (\cite{CS, CDL, CDL2, CL, CLS, ELu, ELu2, ELu3, ELu}). 
These works deal with  the reduced Hartree-Fock equation (REHF)\footnote{The REHF obtained from the Hartree-Fock equation (HFE) by omitting the exchange term, see below.} 
  and the Kohn-Sham equation (KSE) of the density functional theory (DFT) at zero temperature.
 The first treatment of the positive temperature REHF was given by Levitt (\cite{Lev}) (see also \cite{CS2}). 

In this paper, we consider the REHF at positive temperature, 
which is also a simplified DFT equation, and derive from it
 the linearized effective Poisson-Boltzmann equation of electrostatics, widely used in  molecular and structural biology (see e.g. \cite{FogBM}). 

For a positive temperature $T$ and with the electron charge set to $e=-1$, REHF can be written in terms of  the one-particle negative charge (or probability)  density $\rho(x)$ of the electron (or generally any Fermi) gas,  as
 \begin{align} \label{DFT-rho-eq} 
	& \rho= \den[\ft(h_{ \rho} - \mu)], 
\end{align}
where $\den: A\ra \rho_A$ is the map  from operators, $A$, to functions 
  $\rho_A(x):=A(x, x)$ (here $A(x, y)$ stands for the integral kernel of an operator $A$), 
  $\ft(\lam)$ is the Fermi-Dirac distribution, 
\begin{align}\label{fT} 
\ft(\lam):=	f_{FD}(\lam/T),\ \quad	f_{FD}(\lam) = \frac{1}{e^\lam+1} \, 
\end{align}
(due to the Fermi-Dirac statistic),
  $\mu$ is the chemical potential 
 and $h_{\rho}$ is a self-adjoint one-particle Hamiltonian 
  depending on the density $\rho$ (self-consistency). Since  $h_{\rho}$ is self-adjoint  the r.h.s. of \eqref{DFT-rho-eq} is well defined. Assuming the electrons are 
  subject to an external potential due to a positive charge distribution $\ka$ (say, due to positive ions), $h_{\rho}$ is given by 
\begin{align}\label{hrho} 
	h_{\rho}:=-\Delta - v*(\ka-\rho) \, ,  
\end{align} 
where 
$v$ is an inter-particle pair potential. It is taken to be the electrostatic potential, as specified below. 

 Let $L^2_{\rm loc}\equiv L^2_{\rm loc}(\R^d)$ denote the space of locally square integrable functions.  We fix  a lattice $\lat\subset \R^d$ and let $L^2_{\rm per}$ be  the space of  $L^2_{\rm loc}$, periodic w.r.t. $\lat$ functions. 
 Finally, let $(L^2_{\rm per})^\perp$ be  the orthogonal complement of the constant functions in $L^2_{\rm per}$. In what follows,  we assume $\rho-\ka\in L^2+(L^2_{\rm per})^\perp$ 
  and $v$ is the electrostatic potential, $v(x)= \frac{1}{4\pi |x|}$ in $3$D, or, generally, 
\[v*f=(-\Delta)^{-1}f,\] 
 for $f\in L^2+(L^2_{\rm per})^\perp$, 
so that $\Delta^{-1}$ is well-defined.\footnote{The decomposition $L^2+L^2_{\rm per}$ is unique: if $f\in L^2+L^2_{\rm per}$, then the periodic part, $f_{\rm per}$, of $f$ is given by the Fourier coefficients \[\hat f_{\rm per}(k):=\lim_{n\ra \infty} \frac{1}{|\Lam_n|} (2\pi)^{-d/2}\int_{\Lam_n}e^{ik\cdot x}f(x) dx,\ k\in \lat^*,\] where  $\Lam_n:=\cup_{\lam \in \lat_n}(\Om+\lam)$, with $\lat_n:=\lat\cap[-n, n]^d$ and $\Om$  an arbitrary fundamental cell of $\lat$, and $\lat^*$ is the reciprocal lattice. Hence $L^2+L^2_{\rm per}$ is a Hilbert space with the inner product which is sum of the inner products in $L^2$ and $L^2_{\rm per}$. The operator $\Delta$ on  $L^2+L^2_{\rm per}$ is self-adjoint on the natural domain 
 (i.e. $H^2+H^2_{\rm per}$) and is invertible on the subspace $L^2+(L^2_{\rm per})^\perp$.} 
  For  $\rho$'s and $\ka$'s specified above, the operator $h_{ \rho}$ is self-adjoint. 

The {\it positive temperature, reduced Hartree-Fock equation} \eqref{DFT-rho-eq} will be abbreviated, with the view to readability, as the {\it TREHF}.

For $T=0$, function \eqref{fT}  becomes the characteristic function of the interval $(-\infty, 0)$ and Eq. \eqref{DFT-rho-eq} becomes just the REHF.

\subsection{Electrostatic potential} \label{sec:electro-pot}
Due to the choice $v*f=(-\Delta)^{-1}f$,  the electrostatic potential  $\phi=v*(\ka-\rho)$ satisfies the Poisson equation 
\begin{align} 
	\label{DFTphi-eq} 
	 - &\Delta \phi= \ka-\rho.
\end{align}
Plugging $\rho$ from  \eqref{DFT-rho-eq} into this equation and  taking $ v*(\ka-\rho)=\phi$ in \eqref{hrho}, we find the equation for $\phi$ 
\begin{align} \label{phi-only-eq}
	-\Delta \phi = \ka - \den[\ft(\hphi{\phi}-\mu)],
\end{align}
where 
\begin{align} \label{h-phi}	\hphi{\phi} = -\Delta - \phi. 
\end{align}
We can recover $\rho$ from $\phi$ via equation \eqref{DFTphi-eq} or the equation
\begin{align} \label{rho-phi-expr}\rho = \den[\ft(\hphi{\phi} - \mu)].\end{align}

Let $H^s$ and $H^s_{\rm per}$ be the Sobolev spaces corresponding to $L^2$ and  $L^2_{\rm per}$.   If $\ka, \rho\in L^2 + L^2_{\rm per}$, $\phi\in H^2 + H^2_{\rm per}$ 
 and $\phi$  and  $\rho-\ka$ satisfy \eqref{DFTphi-eq}, then 
\begin{align}\label{solv-cond}
& \int_\Om \rho_{\rm per}=\int_\Om \ka_{\rm per},
\end{align}
 where $\Om$ is an arbitrary fundamental cell of $\lat$ and the subindex `per' denotes the periodic part of the corresponding function ($\in L^2 + L^2_{\rm per}$). Indeed, let $\Lam_n:=\cup_{\lam \in \lat_n}(\Om+\lam)$, where $\lat_n:=\lat\cap[-n, n]^d$. Integrating \eqref{DFTphi-eq} over the domain $\Lam_n$ 
  and using the Stokes' theorem, 
we find \[\int_{\p\Lam_n}\n\phi= \int_{\Lam_n} (\ka-\rho).\] Since $\lim_{n\ra \infty} \frac{1}{|\Lam_n|}\int_{\p\Lam_n}\n\phi=0$ and $\lim_{n\ra \infty} \frac{1}{|\Lam_n|}\int_{\Lam_n}(\ka-\rho)=  \int_\Om (\rho_\g - \ka)_{\rm per}$, the last relation gives \eqref{solv-cond}.

Eq \eqref{solv-cond} shows that $\rho-\ka\in L^2+(L^2_{\rm per})^\perp$, i.e. it satisfies the conditions mentioned in the paragraph after \eqref{hrho}.

Eq \eqref{solv-cond} determines the chemical potential $\mu$ and expresses the conservation of the charge per fundamental cell of $\lat$.  It is considered as the solvability condition and should be added to \eqref{DFT-rho-eq} 
  in the periodic case.\\

In what follows we 
 associate with a solution $\rho$ of \eqref{DFT-rho-eq} the electrostatic potential 
\begin{align} \label{phi-rho-expr}
	\phi_\rho=  (- \Delta)^{-1} (\ka-\rho) \, , 
\end{align}
and with a solution $\phi$ of Eq. \eqref{phi-only-eq}, the charge density $\rho$ according to \eqref{DFTphi-eq}, or \eqref{rho-phi-expr}.

\subsection{Relation to the TEHF and KSE}  
 The key 
positive temperature  HFE is given by 
\begin{align} \label{HF-gam-eq}
	& \g=\ft(h_{\g} - \mu), 
\end{align}
where $f_{T}(\lam)$ is as above and, for an external charge distribution $\ka$,  
\begin{align} \label{h-gam} 
	h_{\g}:=-\Delta - v*(\ka - \rho_\g)  + ex(\g) \, . 
\end{align} 
Here, recall, $\rho_\g(x):=\g(x, x)$ and $v*f=(-\Delta)^{-1}f$, and 
$ex(\g)$ (the exchange term) is the operator with the integral kernel 
$ex(\g)(x, y):=-v(x-y)\g(x, y)$, where $\g(x, y)$ is the integral kernel of $\g$. 
Observing that $h_{\g}\big|_{ex(\g)=0}=h_{\rho_\g}$, where $h_{\rho}$ is given in  \eqref{hrho},  one  sees that \eqref{HF-gam-eq} with $ex(\g)=0$ implies the equation
\begin{align} \label{DFTgam-eq} 
	& \g=\ft(h_{\rho_\g} - \mu).
\end{align}
Eq. \eqref{DFTgam-eq} is equivalent to Eq.  \eqref{DFT-rho-eq}. Indeed, applying the map $\den$ to equation  \eqref{DFTgam-eq} gives \eqref{DFT-rho-eq}. 
In the opposite direction,  
 if $\rho$ solves \eqref{DFT-rho-eq}, then the density operator 
\begin{align} \label{rho-gam} \g = \ft(h_\rho - \mu),\end{align}
 acting on $L^2(\R^d)$, solves \eqref{DFTgam-eq}.  Thus, \eqref{DFTgam-eq} is the TREHF in terms of the density operator $\g$.


 {By replacing  $ex(\g)$ in 
\eqref{h-gam} 
 by a local exchange-correlation term $\ex(\rho)$ and then applying, as above, the map $\den$ to the resulting equation, one  obtains the natural extension of the original Kohn-Sham equation to positive temperatures:} 
 \begin{align} \label{DFT-rho-eq'} 
	& \rho= \den[\ft(h_{ \rho}^{\rm KS} - \mu)],\\ 
\label{hrho'} 
	& h_{\rho}^{\rm KS}:=-\Delta - v*(\ka-\rho) +\ex(\rho) \, .  \end{align}

\subsection{The origin of the TEHF/TREHF equations} 
As the TEHF and TREHF arise in the same way, in order to avoid repetitions, we consider here only the later. 

Equation \eqref{DFTgam-eq} 
 originates from the static version
\begin{align} \label{DFT-gam-eq'} 
	& [h_{\rho_\g}, \g]=0 
\end{align}
of the time dependent RHF equation (
see e.g. 
\cite{CS1} for a review)
\begin{align}
	\partial_t \g = i[h_{\rho_\g}, \g] \,.  \label{eqn:time-dependent-KS}
\end{align}
Indeed, ignoring symmetries and accidental divergence, $\g$ solves \eqref{DFT-gam-eq'} if and only if  $\g$ solves $\g=f((h_{\rho_\g}-\mu)/T)$ for some reasonable function $f$. (The parameters $T$ and $\mu$ are of no significance at this stage; they are introduced for future reference.) 

The selection of $f$ is done on physics grounds, either bringing the system in question in contact with a thermal reservoir at temperature $T$ and the chemical potential $\mu$, or passing to the thermodynamic limit. This leads to Eq. \eqref{DFTgam-eq}.

As we discuss below, Eq. \eqref{DFTgam-eq} is the Euler-Lagrange equation for the natural free energy.

\medskip

\paragraph{\bf Remark.} 
 If the particles in question were bosons, then $f_{FD}$ would be replaced by the Bose-Einstein distribution
\begin{align}
	f_{BE}(\lam) = \frac{1}{e^\lam - 1} \, .
\end{align}

\subsection{Results} \label{sec:results} 
\DETAILS{Derivation of  macroscopic Maxwell's equations for the electro-magnetic fields in dielectrics from 
the many-body Quantum Mechanics (Schr\"odinger equation) is one of the key problems in theoretical and mathematical 
 physics. In the full generality, this problem is far from our reach. 
 Thence there is a great interest in derivation of these equations 
  from reliable microscopic models.
 
\DETAILS{Derivation of macroscopic theories from microscopic ones is one of the main challenges of theoretical and mathematical physics, with 
 very few  very limited successes. 

Thence are 
{Problem}: Derive macro 
theories from reliable micro 
 models }


With success in describing the electronic structure of matter, the DFT and HFT ares presently the {\it leading such microscopic theories}. Hence it is an interesting proposition to derive the macroscopic Maxwell's equations  from the HFT/DFT microscopic theory.}

We are interested in the dielectric response in a medium subjected to a local deformation of the crystalline structure.  
To formulate our results we introduce some notation and definition.

 In what follows, we assume that $d=3$ and let $\lat $ be a (crystalline) Bravais lattice in $\R^3$. 
 We also define the Hilbert space of $\lat$-periodic functions
\begin{align} \label{L2-per} 
	L^2_{\rm per}\equiv L^2_{\rm per}(\R^3) = \{ f \in L^2_{\rm loc}(\R^3) : \text{$f$ is $\lat$-periodic } \},
\end{align}
 with the inner product $\lan f, g\ran = \int_\Om \bar f g$ and the norm $\|f\|_{L^2_{\rm per}}^2 = \int_\Om |f|^2$ for some arbitrary fundamental domain $\Om$ of $\lat$. We denote by $H^s_{\rm per}\equiv H^s_{\rm per}(\R^3)$ and $\|\cdot\|_{H^s_{\rm per}}$ the associated Sobolev spaces and their norms, while the standard Sobolev spaces and their norms are denoted by $H^s\equiv H^s(\R^3)$ and $\|\cdot\|_{H^s}$.
 

\medskip
\paragraph{\textbf{Crystals.}} 
We consider 
 a  background	charge distribution, $\ka(y)\equiv \ka_{\rm per}(y)$, periodic with respect to the lattice $\cL$ (crystal). Here  $y$ stands for the microscopic coordinate.

We think of  $\lat$ and $\ka_{\rm per}$ as a crystal lattice and  the ionic charge distribution of $\lat$. An example of $\ka_{\rm per}$ is
\begin{align}
	\ka_{\rm per}(y) = \sum_{l \in \lat} \ka_a(y-l) \, ,
\end{align}
where $\ka_a$ denotes an ionic (``atomic'') charge distribution.   

\medskip

\paragraph{\bf Dielectrics.} Next, we describe a model of the (crystalline) dielectric. 
\begin{definition}\label{def:diel}	
We say that 
an $\lat$-periodic background charge  density $\ka_{\rm per} \in L^2_{\rm per}$ is 
 dielectric, 
 if TREHF \eqref{DFT-rho-eq}, with $\ka=\ka_{\rm per}$, 
  has an $\lat$-periodic solution $(\rho_{\rm per}, \mu_{\rm per})$, with the following properties:

(a) the periodic one-particle Schr\"odinger operator 
\begin{align}\label{h-per-ex}	
& h_{\rm per}:=\hphi{\phi_{\rm per}} =-\Delta - \phi_{\rm per}, 
\text{ with}\\
\label{phi-per}	& \phi_{\rm per}:= 4\pi (-\Delta)^{-1}(\ka_{\rm per}-\rho_{\rm per}), 
\end{align}
 acting on $L^2 \equiv L^2(\R^3)$ is self-adjoint and 
has a gap in its spectrum; 

(b) $\mu_{\rm per}$ is in this gap; 

 (c)  $\phi_{\rm per} \in H^2_{\rm per} $ and $ \|\phi_{\rm per}\|_{H^2_{\rm per}} + 
 |\mu_{\rm per}|\le \Lambda_{\rm per}$, for some constant $\Lambda_{\rm per}$ independent of $T$. 
\end{definition}

An existence result for the 
dielectrics is discussed in Remarks 6 and 7 after the next theorem. In particular, Proposition \ref{prop:diel} shows that the set of dielectric charge  densities $\ka_{\rm per}$ is robust. Moreover, \eqref{phi-only-eq} can be reformulated so that only $\phi_{\rm per}$ and $\mu_{\rm per}$, but not $\ka_{\rm per}$, enter it explicitly, see \eqref{psi-eq}. So these are the only inputs of our analysis. 

\medskip

\textbf{Dielectric response.} 
We consider a macroscopically deformed microscopic crystal charge distribution, 
\begin{align}\label{m-del}
	\ka_\delta(y) = \ka_{\rm per} (y) +  \delta^{3}\ka'(\del y), 
\end{align}
where $\del$ is a small parameter which stands for the ratio of microscopic and macroscopic scale and $\ka'(x) \in L^2$ is a small local perturbation living on the macroscopic scale. By $y$ and $x=\del y$, we denote the microscopic and macroscopic coordinates, respectively. Thus,  the microscopic scale is $y\sim 1$ and $x\sim \del$ and the macroscopic one, $y\sim 1/\del$ and $x\sim 1$.

We formulate the conditions for our main result.
  We introduce the homogeneous Sobolev spaces
\begin{align}\label{dotHs}
	\dot{H}^s\equiv \dot{H}^s(\R^3) = \left\{ f \text{ measurable on $\R^3$} : \|f\|_{\dot{H}^s} < \infty \right\}
\end{align}
for $s\ge 0$ with the associated norm 
\begin{align}
	\|f\|_{\dot{H}^s}^2 = \int |(-\Delta)^{s/2}  f(k)|^2 \,. \label{eqn:dot-H-s-norm-def}
\end{align}
\DETAILS{\begin{enumerate}[label={[A\arabic*]}]	
	\item\label{A:kappa} (Periodicity) 
	$\ka_{\rm per}\ \text{ is $\lat$-periodic and } 	\ka_{\rm per} \in H^2_{\rm per}(\R^3). 
	$ 
\end{enumerate}}

\begin{enumerate}[label={[A\arabic*]}]
	\item\label{A:diel} 
	 (Dielectricity) 
$\ka_{\rm per}$	  is dielectric.\end{enumerate}

\DETAILS{For the operator  $h_{\rm per}$ acting on $L^2(\R^3)$ and defined in \eqref{h-per-ex}, we let
\begin{align} \label{eta-R3}	&\eta 
 := \text{dist}(\mu_{\rm per}, \sigma(h_{\rm per})). 
\end{align}}

Let $h_{\rm per}$ and $h_{\rm per, 0}$ denote operators given by expression \eqref{h-per-ex} acting on $L^2(\R^3)$ and  $L^2_{\rm per}(\R^3)$, respectively. These operators are self-adjoint and the latter has a purely discrete spectrum.   By  Assumptions \ref{A:diel}, $\mu_{\rm per}$ is in a gap of $h_{\rm per}$. For notational convenience, we rescale our problem so that
\begin{align} \label{eta-R3}	&
\eta := \text{dist}(\mu_{\rm per}, \sigma(h_{\rm per})) =1. 
\end{align}
   It follows the Bloch-Floquet decomposition results in Section \ref{sec:BF-decomp} below that the gaps of $h_{\rm per}$ are contained 
in the resolvent set of $h_{\rm per, 0}$, so that
\begin{align}   \label{eta-Om}	&\etal :=   \text{dist}(\mu_{\rm per}, \sigma(h_{{\rm per}, 0}))\ge 1.
\end{align}

\DETAILS{
\begin{enumerate}[label={[A\arabic*]}]
\setcounter{enumi}{1}
	\item\label{A:spec-gap-ineq} (Spectral gap inequality) 
		\begin{align} 
	\eta_0 \le 3\eta =3.		
			\label{eta'-gap-def} 
		\end{align}
\end{enumerate} }

\begin{enumerate}[label={[A\arabic*]}]
	\setcounter{enumi}{1}	
	\item\label{A:kappa-reg} (Perturbation $\kp$) 
	\begin{align*}
		\kp \in H^1 \cap \dot H^{-1} . 
	\end{align*}
\end{enumerate}

In what follows, the inequalities $A\ls B$ and $A\gs B$ mean that there are constants $C$ and $c$ independent of $T$ and $\del$, s.t. $A\le C B$ and $A\ge c B$ and similarly for $A\ll B$ and $A\gg B$.
\DETAILS{Denote $\cbeta:= \beta  e^{-\eta(\Om)\beta}$, where  $\beta = 1/T$ is the inverse temperature. 
By  Assumption \ref{A:scaling}, 
  $\cbeta\ll 1$.
We require also that
\begin{align}  \label{beta-del-cond} \delta^{9/8}\ll  \cbeta \ll   \delta^{9/11}. \end{align}}
  Our main result is 
	 	  \begin{theorem} \label{thm:PE-macro} 
  Let Assumptions \ref{A:diel} - \ref{A:kappa-reg} hold  and let 
  $(\phi_{\rm per}, \mu_{\rm per})$ be the  electrostatic and chemical potentials 
 associated with $\ka_{\rm per}$ (entering \ref{A:diel}) as per 
  Definition \ref{def:diel}. 
 \DETAILS{ {\bf solving TREHF \eqref{phi-only-eq}, with $\ka =\ka_{\rm per}$ from  \eqref{m-del}}.} There is $\al$ $=\al(\Lambda_{\rm per})>0$ 
  sufficiently small, 
   s.t., if 
\begin{enumerate}[label={[A\arabic*]}]
\setcounter{enumi}{2}
	\item\label{A:scaling} (Regime) 
	\DETAILS{{\bf(Replace  Assumption \ref{A:scaling} by a condition at the beginning of Theorem \ref{thm:PE-macro}? add \eqref{beta-del-cond}?)} 
		\begin{align*}
			\delta \ll 1\ \text{ and }\ |T \ln T| \ll \eta(\R^3)\ \quad  (T=1/\beta). 
		\end{align*}}
The parameters $T > 0$ and $\del>0$ satisfy 
\begin{align}  \label{beta-del-cond} 
c_{T}:=T^{-1}  e^{-\etal/T} \le \al,\ \quad   
  c_{T}^{- 8/9} \delta  \le  \al,  \end{align}  
\end{enumerate}
  then    the following statements are true 
 \begin{enumerate}
	\item  Electrostatic TREHF \eqref{phi-only-eq}, with $\ka=\ka_\delta$ given in 
	 \eqref{m-del} and $\mu = \mu_{\rm per}$, has a unique solution  $\phi_\delta \in H^2_{\rm per} + H^1$;  
	\item The potential $\phi_\delta(y)$  
	 is of the form 
\begin{align} \label{phi-del-exp}
\phi_\delta(y) = \phi_{\rm per}(y) + \delta\psi (\delta y) + \vphi_{\rm rem}(\delta y), 
\end{align}
where  
$\vphi_{{\rm rem}}(x) \in H^{1}$ and obeys the estimates (with $\dot{H}^0=L^2$) 
\begin{align}\label{phidel-rem-est} \|\vphi_{{\rm rem}}\|_{\dot{H}^i}\ll  
\al^{\frac14-\frac12 i} (c_T^{-1/2} \delta)^{2-i},
\end{align}
 with $\al$ given in \eqref{beta-del-cond}, and $\psi (x) \in H^1$ and satisfies the equation 
\begin{align}
	(\nu-\n\cdot \epsilon \nabla) \psi = \ka', \label{eqn:macro-eff-eqn}
\end{align}
with a positive number  $\nu>0$ and
 a constant real, symmetric $3 \times 3$ matrix, $\epsilon \geq 1- O(\cbet^2)$; 
 \item   $\epsilon\equiv \epsilon(T)$   and  $\nu\equiv \nu(T, \del)$ are  given explicitly by \eqref{eps} - \eqref{eps''} and \eqref{nu} - \eqref{m}, below. Moreover, $\nu\ra 0$, as either $T\ra 0$ or $T\ra \infty$.
   \end{enumerate}
\end{theorem}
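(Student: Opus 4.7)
The strategy is to reformulate \eqref{phi-only-eq} as a perturbation of the periodic solution, linearize the nonlinear density response around it, perform a Bloch--Floquet analysis of the linearization, extract the macroscopic effective symbol to obtain $\nu$ and $\epsilon$, and finally close a contraction for the remainder.

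\textbf{Step 1: Reformulation.} Writing $\phi_\delta = \phi_{\rm per} + u$ and subtracting the unperturbed equation \eqref{phi-only-eq} for $\phi_{\rm per}$, we obtain
\begin{align*}
	(-\Delta + L)\, u = \delta^{3}\kappa'(\delta \cdot) + N(u),
\end{align*}
where $L$ is the Fr\'echet derivative of $\phi \mapsto \operatorname{den}[\ft(\hphi{\phi}-\mu_{\rm per})]$ at $\phi_{\rm per}$ and $N(u)$ collects the quadratic and higher remainders. The first job is to verify that $L$ extends to a bounded operator on appropriate subspaces of $L^2+L^2_{\rm per}$ and that its kernel is periodic in $y$, $y'$ up to diagonal translation, so that Bloch--Floquet methods apply. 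Standard spectral calculus gives a representation
\[ L u = -\operatorname{den}\!\Big[\tfrac{1}{2\pi i}\oint_\Gamma \ft(z - \mu_{\rm per})(\hphi{\phi_{\rm per}}-z)^{-1} u (\hphi{\phi_{\rm per}}-z)^{-1}\,dz\Big], \]
with a contour $\Gamma$ surrounding $\sigma(h_{\rm per})$ at distance $\sim 1$, using the gap assumption \ref{A:diel}. Boundedness estimates depend on $c_T$ via the Fermi factor tails.

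\textbf{Step 2: Bloch--Floquet symbol and effective coefficients.} Using the Bloch--Floquet decomposition of $h_{\rm per}$ referenced after \eqref{eta-Om}, decompose the periodic operator $M := -\Delta + L$ in the form $M = \int^\oplus M(k)\,dk$ over the Brillouin zone. A macroscopically slow profile $\psi(\delta y)$ corresponds on the Fourier side to concentration near $k = 0$, so the effective macroscopic operator is read off from the expansion
\begin{align*}
	M(k) = \nu(T,\delta) + k^{\mathsf T}\!\big(\epsilon(T) - \tfrac{\delta^2 I}{\delta^2}\big) k + O(|k|^{4}),
\end{align*}
where $\nu(T,\delta)$ (the $k=0$ value of $L$ restricted to constants, which is the TREHF compressibility) is the screening mass and $\epsilon(T)$ is read off from the Hessian at $k=0$. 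Derivatives of the Bloch eigenprojections via standard perturbation formulas (e.g., \cite{CLS}) yield the explicit formulas \eqref{eps}--\eqref{m}. The positivity $\nu > 0$ and the inequality $\epsilon \ge 1 - O(c_T^2)$ then follow by rewriting these as sums of quadratic forms of $\ft'\ge 0$, plus the fact that the $\epsilon$-corrections come from off-diagonal couplings that vanish as $c_T \to 0$ (the insulator limit). The limits $\nu \to 0$ as $T \to 0$ or $T\to \infty$ come, respectively, from the exponential smallness $\ft'(h_{\rm per}-\mu_{\rm per}) \sim c_T$ at the gap and from the flattening $\ft'\to 0$ at high $T$.

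\textbf{Step 3: Adiabatic/homogenization expansion.} Rescaling $x = \delta y$ and writing the ansatz $u(y) = \delta\psi(\delta y) + \varphi_{\rm rem}(\delta y)$, I would apply the Bloch--Floquet calculus of Step 2 to a slowly varying profile to show
\begin{align*}
	M[\psi(\delta \cdot)](y) = \delta^{2}\big(\nu\,\psi - \nabla_x\!\cdot\!\epsilon\nabla_x\psi\big)(\delta y) + \delta^{3}R_1(\delta y),
\end{align*}
with $R_1 \in L^2$. Matching at order $\delta^{3}$ then gives the macroscopic equation \eqref{eqn:macro-eff-eqn}; the unique solvability of \eqref{eqn:macro-eff-eqn} in $H^1$ follows from $\nu>0$ and $\epsilon \ge 1 - O(c_T^2)$ being coercive together with $\kappa' \in H^1 \cap \dot H^{-1}$ (Assumption \ref{A:kappa-reg}).

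\textbf{Step 4: Contraction for the remainder.} After subtracting the leading term, $\varphi_{\rm rem}$ satisfies a fixed-point equation of the form $\varphi_{\rm rem} = M_{\rm eff}^{-1}[\delta\,R_1 + N(\delta\psi + \varphi_{\rm rem})]$, where $M_{\rm eff} = \nu - \nabla\!\cdot\!\epsilon\nabla$ in macroscopic coordinates. Using the resolvent bounds $\|M_{\rm eff}^{-1}\|_{\dot H^{-i}\to \dot H^{2-i}} \lesssim \nu^{-1+i/2}$ for $i=0,1,2$, and the Taylor estimate $\|N(u)\|_{L^2} \lesssim c_T^{-1}\|u\|_{H^1}^2$ coming from the second derivative of the density map (the $c_T^{-1}$ reflects the sensitivity to $T$), we get a contraction on the ball prescribed by \eqref{phidel-rem-est} provided the compound smallness $c_T^{-8/9}\delta \ll 1$ of \eqref{beta-del-cond} holds; the exponent $8/9$ emerges precisely from balancing the linear source $\delta R_1$, the quadratic nonlinearity, and the powers of $\nu^{-1}\sim c_T^{-1}$ in $M_{\rm eff}^{-1}$.

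\textbf{Main obstacle.} The principal difficulty is Step 2: giving a rigorous Bloch--Floquet expansion of the density-response operator $L$ around $k=0$ with quantitative error bounds uniform in $T$, so that both $\nu$ (exponentially small but positive) and $\epsilon$ (order one, with the correct symmetric positive-definite structure) emerge with the right dependence on $c_T$. This requires carefully estimating contour integrals of resolvents of $h_{\rm per}$ and their derivatives in the Bloch momentum, together with the delicate fact that at $T=0$ the insulator gap forces $\nu\equiv 0$ while $\epsilon$ remains bounded below by $1$.
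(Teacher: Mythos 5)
Your plan captures the broad outline (linearize around $\phi_{\rm per}$, Bloch--Floquet analysis of the linearization, read off effective coefficients, close by a fixed point), and your Step~1 and the first half of Step~2 are consistent with the paper's Proposition~\ref{pro:M-expl} and Lemma~\ref{lem:PAP-repr'}. However, there is a structural gap that would prevent Steps 2--4 as written from closing, and it is precisely what the paper's Lyapunov--Schmidt / Feshbach--Schur reduction is designed to fix.

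First, the formula for $\epsilon$ is not ``read off from the Hessian at $k=0$'' of the Bloch fiber of $M=-\Delta+L$ projected to constants. The effective coefficients come from the Feshbach--Schur complement $\ell = \Proj K_\delta\Proj - \Proj M_\delta\bar K_\delta^{-1}M_\delta\Proj$ (see \eqref{ell-def}); the resulting symbol $b(k)$ in Lemma~\ref{lem:ell-b-rel} involves $\Mfib{k}-\Mfib{k}\barKfib{k}^{-1}\Mfib{k}$, and the Schur piece produces the contribution $\epsilon''=\lan\rho',\barKfib{0}^{-1}\rho'\ran$ in $\epsilon=\one+\epsilon'-\epsilon''$. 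Taking only $\lan 1,\Mfib{k}1\ran$ would miss $\epsilon''$ altogether and give the wrong dielectric matrix. (Relatedly, the positivity argument uses $-\ft'\ge 0$; $\ft'\le 0$ for Fermi--Dirac.)

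Second, the contraction in your Step~4 would not close. You propose to invert $M_{\rm eff}=\nu-\n\cdot\epsilon\n$ against the nonlinearity, but $M_{\rm eff}$ is only the low-momentum effective symbol; it does not control the high-momentum content of the solution, which the nonlinearity $N(\delta\psi+\vphi_{\rm rem})$ generates at order one. The paper resolves this by decomposing $\vphi=\vphi_s+\vphi_l$ with $\Proj=\chi_{B(r)}(-i\n)$, $r=a/\delta$, solving the large-momentum equation via $\bar K_\delta^{-1}$ (Proposition~\ref{pro:phir-fixed-pt}, Lemma~\ref{lem:Kinv-est-k}) and the small-momentum equation via $\ell^{-1}$ (Proposition~\ref{prop:vphis-fp-exist}), each in its own anisotropic ball with the $\delta$-deformed norm \eqref{s-norm-def}. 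The two sectors need genuinely different inverses and different scalings of the nonlinear estimate (Propositions~\ref{prop:nonlinear-del-rough} and \ref{prop:nonlinear-rough}), and the condition $c_T^{-8/9}\delta\le\al$ emerges from balancing both sectors, not from a single $M_{\rm eff}^{-1}$ bound. Without the momentum split, the fixed-point map is not a contraction on any reasonable ball.

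Finally, a minor point: the asymptotics you ascribe to $\epsilon$ as $c_T\to 0$ are misleading --- $\epsilon'-\epsilon''$ does not vanish in the $T\to0$ limit; rather $\epsilon(T)\to\epsilon(0)$, the zero-temperature dielectric constant of Canc\`es--Lewin (Appendix~\ref{sec:epsilon-T-to-zero}). What vanishes as $T\to 0$ is $\nu$, via $m\sim\beta e^{-\etal\beta}\to0$ (Lemmas~\ref{lem:V1-lower-bound}--\ref{lem:Vp-uppper-bound}).
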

We discuss Assumptions \ref{A:diel} and \ref{A:scaling} in Remarks 6 and 10 and the statements of the theorem, in Remarks 1-4, below. 
\subsection{Discussion}
 1)   Theorem \ref{thm:PE-macro}(1) and equation 
 \eqref{phi-rho-expr} connecting the charge density $\rho$ with $\phi$ imply that RHF equation \eqref{DFT-rho-eq}, with 
	 \eqref{m-del} and $\mu = \mu_{\rm per}$, has a unique solution  $\rho_\delta \in L^2_{\rm per} + \dot H^{-1}$. 

\medskip

2) The quantity $\nu\equiv \nu(T, \del)$ is defined as
 \begin{align}\label{nu}	& \nu=\delta^{- 2}|\Om|^{-1}(m+  O(\cbet^2)),\\
 \label{m}	& 
 m = -\tr_\Om\left[ f_{T}'(h_{{\rm per}, 0}-\mu) \right]>0, 
\end{align}   
where $h_{{\rm per}, 0}$ denotes the restriction of the self-adjoint operator $h_{\rm per}:= h^{\phi_{\rm per}}=  -\Delta +\phi_{\rm per}$ to $L^2_{\rm per}$.  
 Lemmas \ref{lem:V1-lower-bound} and \ref{lem:Vp-uppper-bound} of Appendix \ref{app:bounds-on-V} imply the estimates
\begin{align} \label{m-est}	&0<\cbet \ls m  \ls  \cbet. 
\end{align}
By \eqref{m-est} and \eqref{beta-del-cond}, $m$ is the leading term in \eqref{nu} and $\nu \gg  \delta^{-7/8}$.   
 A careful examination of the formulae above shows that $\nu\ra 0$, as either $T\ra 0$ or $T\ra \infty$.

3) 
The $3 \times 3$ matrix, $\epsilon$, in  \eqref{eqn:macro-eff-eqn} is given explicitly by
\begin{align}
\label{eps}	&\epsilon := \one +\epsilon' - \epsilon'',\\ 
 \label{eps'}	&\epsilon' = -\frac{1}{|\Om|}\Tr_{L^2_{\rm per}} \oint  \rpero^2(z) (-i\nabla) \rpero(z) 
		 (-i\nabla) \rpero(z) , \\
\label{eps''}	&\epsilon'' = \frac{1}{|\Om|} \lan \rho',\barKfib{0}^{-1} \rho'\ran_{L^2_{\rm per}} \, , 
\end{align} 
 where 
 $	\oint := \frac{1}{2\pi i} \int_\Gamma dz \ft(z-\mu), $ 
with $\Gamma$ the contour given in Figure \ref{fig:cauchy-int-contour} below,  with the positive orientation, $\rpero(z):=(z-h_{{\rm per}, 0})^{-1}$, 
  $\barKfib{0}$ is the operator defined in \eqref{eqn:bar-K-0-def}, 
and  
\begin{align} \label{rho'}
	\rho' =  2 \den \oint \rpero^2(z)(-i\n)\rpero(z) \, . 
\end{align}

4) Eqs. \eqref{eqn:macro-eff-eqn}, \eqref{nu} and \eqref{m-est}  imply that 
\[\|\psi\|_{\dot{H}^{i}} =O([\delta|\Om|^{1/2} m^{-1/2}]^{2-i}),\ i=0, 1,\] and therefore, by  \eqref{beta-del-cond}, \eqref{phidel-rem-est} and \eqref{m-est}, we have
\[\|\vphi_{{\rm rem}}\|_{L^2}\ll 
\al^{1/4}(m^{-1/2} \delta)^2\ll \|\psi\|_{L^2}.\] Hence $\psi$ is a subleading term in \eqref{phi-del-exp} in the $L^2$-norm. 

\medskip

5)  
  \eqref{eqn:macro-eff-eqn} is 
 the linearized  {\it 
  Poisson-Boltzmann equation} used extensively in physical chemistry and molecular biology (see e.g. \cite{FogBM}). 
$\epsilon$ is an effective {\it permittivity matrix} and $\sqrt\nu$ and $1/\sqrt\nu$ are  the {\it  Debye-H\"uckel parameter} and the {\it  Debye length}, respectively.

 The screening term $\nu$ in \eqref{eqn:macro-eff-eqn} is due to the electrons at the tail of the {\it Fermi-Dirac distribution} being in the {\it conduction} band. (In the macroscopic regime, the  Fermi-Dirac distribution becomes the (Maxwell-) Boltzmann distribution.)

\medskip


6)  (Existence of crystalline dielectrics)
We say that the potential $\phi$ is gapped if 
the Schr\"odinger operator $-\Delta - \phi$  has a gap in its continuous spectrum.  
\begin{proposition} \label{prop:diel}
For any $\lat$-periodic, gapped potential $\phi_{\rm per} \in H^k_{\rm per}$, $k\ge 2,$ and any real number $\mu_{\rm per}$ in a gap of $h_{\rm per} := -\Delta - \phi_{\rm per}$, there is  
  $\ka_{\rm per} \in H^{k-2}_{\rm per}$ such that Eq. \eqref{DFT-rho-eq}, with $\ka=\ka_{\rm per}$, 
  has the solution $(\rho=\rho_{\rm per} \in H^{k-2}_{\rm per}, \mu=\mu_{\rm per})$ 
   with the associated (according to \eqref{DFTphi-eq}) electrostatic potential 
exactly  $\phi_{\rm per}$. Moreover, the pair $(\phi_{\rm per}, \mu_{\rm per})$ satisfies the electrostatic Eq \eqref{phi-only-eq} with this $\ka_{\rm per}$. 
\end{proposition}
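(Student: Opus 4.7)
The proof is essentially a reverse construction: rather than solving for $\rho$ and $\phi$ given $\ka$ (the usual direction, which would require a fixed-point argument), we use the freedom to choose $\ka_{\rm per}$ to make the equations hold by fiat.

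The plan is as follows. First, define the candidate density directly by
\begin{align*}
\rho_{\rm per} := \den\bigl[f_T(h_{\rm per}-\mu_{\rm per})\bigr].
\end{align*}
Since $f_T$ is a bounded smooth function and $h_{\rm per}$ is self-adjoint, $f_T(h_{\rm per}-\mu_{\rm per})$ is a well-defined bounded self-adjoint operator on $L^2(\R^3)$. Because $h_{\rm per}$ commutes with $\lat$-translations, so does $f_T(h_{\rm per}-\mu_{\rm per})$; hence its diagonal $\rho_{\rm per}$ is $\lat$-periodic. Second, define the candidate background charge by
\begin{align*}
\ka_{\rm per} := -\Delta \phi_{\rm per} + \rho_{\rm per}.
\end{align*}
This immediately gives the Poisson relation \eqref{DFTphi-eq}, and since $\phi_{\rm per}\in H^k_{\rm per}$, the regularity $\ka_{\rm per}\in H^{k-2}_{\rm per}$ will follow once we establish $\rho_{\rm per}\in H^{k-2}_{\rm per}$.

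Third, verify that the pair $(\rho_{\rm per},\mu_{\rm per})$ solves \eqref{DFT-rho-eq} with this $\ka_{\rm per}$. By construction, $-\Delta\phi_{\rm per} = \ka_{\rm per}-\rho_{\rm per}$, and since $\ka_{\rm per}-\rho_{\rm per}$ has zero mean over $\Om$ (as $\int_\Om \Delta\phi_{\rm per}=0$ by periodicity and Stokes), we may invert $-\Delta$ on $(L^2_{\rm per})^\perp$ to get $v*(\ka_{\rm per}-\rho_{\rm per}) = \phi_{\rm per}$, possibly after normalizing $\phi_{\rm per}$ to have zero mean (any constant shift is absorbed into $\mu_{\rm per}$, so this is harmless). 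Therefore the self-consistent Hamiltonian is
\begin{align*}
h_{\rho_{\rm per}} = -\Delta - v*(\ka_{\rm per}-\rho_{\rm per}) = -\Delta - \phi_{\rm per} = h_{\rm per},
\end{align*}
and plugging back into \eqref{DFT-rho-eq} reproduces the definition of $\rho_{\rm per}$. The electrostatic reformulation \eqref{phi-only-eq} follows by eliminating $\rho_{\rm per}$ via \eqref{DFTphi-eq}.

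The one non-trivial technical step is the Sobolev regularity $\rho_{\rm per}\in H^{k-2}_{\rm per}$. The approach I would take is to use the Cauchy integral representation
\begin{align*}
f_T(h_{\rm per}-\mu_{\rm per}) = \frac{1}{2\pi i}\oint_\Gamma f_T(z-\mu_{\rm per})(z-h_{\rm per})^{-1}\,dz,
\end{align*}
over a contour $\Gamma$ in the strip of analyticity of $f_T$ enclosing $\sigma(h_{\rm per})$, combined with the Bloch--Floquet fiber decomposition referenced in Section \ref{sec:BF-decomp}. Each fiber $(z-h_{{\rm per},\xi})^{-1}$ is trace-class on $L^2(\Om)$ with kernel inheriting smoothness from $\phi_{\rm per}\in H^k_{\rm per}$; integrating over quasi-momenta and using rapid decay of $f_T(z-\mu_{\rm per})$ along the contour then yields $\rho_{\rm per}\in H^{k-2}_{\rm per}$ by standard commutator / elliptic regularity arguments applied fiberwise. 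This regularity argument is the main, though essentially routine, obstacle; everything else is algebraic manipulation of the definitions.
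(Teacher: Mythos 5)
Your construction is exactly the paper's: define $\rho_{\rm per}:=\den[\ft(h_{\rm per}-\mu_{\rm per})]$, set $\ka_{\rm per}:=-\Delta\phi_{\rm per}+\rho_{\rm per}$, and verify that $h_{\rho_{\rm per}}=h_{\rm per}$ so that \eqref{DFT-rho-eq} holds by construction. You also fill in two details the paper glosses over — the mean-zero normalization (constant absorbed into $\mu_{\rm per}$) needed to invert $-\Delta$ on the periodic sector, and a sketch of the $H^{k-2}_{\rm per}$ regularity of $\rho_{\rm per}$ via the Cauchy-integral/Bloch--Floquet representation — so your version is, if anything, slightly more complete than the paper's.
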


\begin{proof} 
Let $\phi_{\rm per}$ be such that $h_{\rm per} := -\Delta - \phi_{\rm per}$ has a gap. We choose $\mu_{\rm per}$ to be in this gap and define (see \eqref{h-phi}-\eqref{rho-phi-expr})
\begin{align} \label{eqn:rho-def-by-V}
	\rho_{\rm per} := \den[\ft(h_{\rm per} - \mu_{\rm per})].
\end{align}
Next, we define 
\begin{align}\label{phi-ka-def-by-V}	
	\ka_{\rm per} := -\Delta \phi_{\rm per}  + \rho_{\rm per}. 
\end{align}
Then, it is straightforward to check that $(\rho_{\rm per}, \mu_{\rm per})$ is a solution of Eq.  \eqref{DFT-rho-eq} 
 with background potential $\ka_{\rm per}$. By construction, $h_{\rm per}$ has a gap and $\mu_{\rm per}$ is in this gap.
 \end{proof}

One can extend Proposition \ref{prop:diel} to construct  pairs $(h_{\rm per} = -\Delta - \phi_{\rm per}, \mu_{\rm per})$ having any desired property {\bf P}. 
Following Proposition \ref{prop:diel}, we construct $\rho_{\rm per}$, $\phi_{\rm per}$, and $\ka_{\rm per}$ via \eqref{eqn:rho-def-by-V} and \eqref{phi-ka-def-by-V} in this order. Then $(\rho_{\rm per}, \mu_{\rm per})$ is a solution of Eq. \eqref{DFT-rho-eq} with background potential $\ka_{\rm per}$. By construction, $h_{\rm per}$ has property {\bf P}.

 The proposition above shows that for any positive $\eta$ and $T$, we can find  
  $\ka_{\rm per} \in H^{k-2}_{\rm per}$ such that the solution of Eq. \eqref{DFT-rho-eq} with $\ka=\ka_{\rm per}$ and $T$ gives the gap $\eta$.

\medskip

7)  (General dielectrics) 
We say that a background charge  density $\ka$ is dielectric  if 
 Eq. \eqref{DFT-rho-eq} with background charge distribution $\ka$ has a solution $(\rho, \mu)$, with $\rho$ in an appropriate space, say, $H^2_{\rm loc}\cap L^\infty$, 
and having the following properties:

(a) the one-particle Schr\"odinger operator, defined for this solution, 
\begin{align}
	& h^\phi:=-\Delta - \phi, 
	\text{ with }\ 
	\phi:=  (-\Delta)^{-1}(\ka-\rho), 
\end{align}
acting on $L^2 $, is self-adjoint and 
has a gap in its spectrum; 

(b) $\mu$ is in this gap.

By the remark at the end of the previous item we have
\begin{proposition}[Existence of general dielectrics] \label{prop:diel-gen}
For any 
gapped potential $\phi \in H^2_{\rm loc}\cap L^\infty$ and any  number $\mu$ in a gap of $h^{\phi} := -\Delta - \phi $, there is   $\ka \in L^2_{\rm loc}\cap L^\infty$ 
s.t. Eq. \eqref{DFT-rho-eq}, with these $\ka \in L^2_{\rm loc}\cap L^\infty$ and $\mu$, has the solution $\rho$, whose 
 the electrostatic potential  (according to \eqref{DFTphi-eq}) is $\phi$. 
\end{proposition}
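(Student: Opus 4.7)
The strategy mirrors the proof of Proposition~\ref{prop:diel}: rather than solving a genuine self-consistent problem, we use the freedom in $\ka$ to reverse-engineer it from the spectral data of $h^{\phi}$. Concretely, given the gapped potential $\phi \in H^{2}_{\rm loc} \cap L^{\infty}$ and $\mu$ in a gap of $h^{\phi} = -\Delta - \phi$, I would define
\[
\rho := \den[f_{T}(h^{\phi} - \mu)], \qquad \ka := -\Delta \phi + \rho.
\]
Then relation \eqref{DFTphi-eq} holds by construction, and hence $h_{\rho} = -\Delta - v*(\ka - \rho) = -\Delta - \phi = h^{\phi}$, so that $\rho = \den[f_{T}(h_{\rho} - \mu)]$ is exactly \eqref{DFT-rho-eq}.

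The technical content lies in showing that $\rho$ and $\ka$ lie in the claimed spaces. Since $\phi \in L^{\infty}$, the Kato--Rellich theorem gives that $h^{\phi}$ is self-adjoint on the domain $H^{2}(\R^{3})$, so the gap hypothesis is meaningful. Choosing a contour $\Gamma \subset \C$ that encloses $\sigma(h^{\phi}) \cap (-\infty,\mu)$ inside the gap and sits in a right half-plane on which $f_{T}(z-\mu)$ decays, one writes
\[
f_{T}(h^{\phi}-\mu) = \frac{1}{2\pi i} \oint_{\Gamma} f_{T}(z-\mu)\,(z - h^{\phi})^{-1}\, dz .
\]
Combes--Thomas estimates applied to the Schr\"odinger operator $h^{\phi}$ with bounded potential yield exponentially decaying off-diagonal bounds on the resolvent kernel $(z-h^{\phi})^{-1}(x,y)$, uniformly in $z \in \Gamma$; together with the Fermi--Dirac decay of $f_{T}(z-\mu)$ along the right-hand portion of $\Gamma$, this produces a jointly continuous integral kernel for $\g := f_{T}(h^{\phi}-\mu)$. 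Setting $\rho(x) := \g(x,x)$ then yields $\rho \in L^{\infty}$, from which $\ka = -\Delta \phi + \rho \in L^{2}_{\rm loc} \cap L^{\infty}$ (interpreting the $L^{\infty}$ component as requiring $-\Delta\phi$ locally bounded, which is automatic under the natural strengthening $\phi \in W^{2,\infty}_{\rm loc}$).

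The main obstacle is precisely the pointwise kernel bound on $\g$. In the periodic setting of Proposition~\ref{prop:diel} one bypasses this via the Bloch--Floquet fibration onto a compact fundamental cell, where the relevant resolvent trivially has a smooth kernel. In the present non-periodic setting, one must instead work on all of $\R^{3}$ with only a locally regular potential, which is where Combes--Thomas (for exponential off-diagonal decay) and local elliptic regularity (to promote $L^{2}$-kernel information to pointwise control on the diagonal) enter. Once the kernel bound is in hand, the rest of the argument reduces to the same two-line construction as in the periodic proof.
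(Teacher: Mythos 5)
Your first paragraph is exactly the paper's argument: the paper proves Proposition~\ref{prop:diel-gen} by a single reference to the remark following Proposition~\ref{prop:diel}, and that remark is precisely the reverse-engineering construction $\rho := \den[f_T(h^\phi-\mu)]$, $\ka := -\Delta\phi + \rho$. So the core approach matches. You go further than the paper, though, in two useful respects. First, the paper leaves the membership claims for $\rho$ and $\ka$ entirely implicit, whereas you supply a plausible route to $\rho\in L^\infty$ via the Cauchy contour representation plus Combes--Thomas off-diagonal decay and local elliptic regularity for the kernel on the diagonal; this is not in the paper and would be needed for a complete proof. Second, you correctly flag a genuine discrepancy in the proposition as stated: under the hypothesis $\phi\in H^2_{\rm loc}\cap L^\infty$ one only gets $-\Delta\phi\in L^2_{\rm loc}$, so $\ka = -\Delta\phi+\rho$ lands in $L^2_{\rm loc}$ but there is no reason for it to be in $L^\infty$ as the statement asserts; your suggested strengthening to $\phi\in W^{2,\infty}_{\rm loc}$ (or equivalently stating $\ka\in L^2_{\rm loc}$ without the $L^\infty$ claim) is the natural fix. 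In short: same strategy as the paper, but your writeup is the more careful of the two.
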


\medskip

8) (Existence of ideal crystals)
The existence of periodic solutions  to Eq. \eqref{DFT-rho-eq} 
(equilibrium crystalline structures exists at $T>0$) is shown in the following:

\begin{theorem} 
\label{thm:ideal-cryst-exist} 
Let $d=3$ and $\ka_{\rm per} \in H^2_{\rm per}$. 
 Then Eq. \eqref{DFT-rho-eq},  with the $\cL-$periodic background charge density $\ka=\ka_{\rm per}$ 
  has a solution $(\rho_{\rm per}, \mu_{\rm per})$,  with $\rho_{\rm per}$ periodic and satisfying $\sqrt{\rho_{\rm per}} \in H^1_{\rm per}$.
\end{theorem}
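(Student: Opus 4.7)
Equation \eqref{DFTgam-eq}, equivalent to \eqref{DFT-rho-eq}, is the Euler-Lagrange equation of a grand-canonical free energy, so the natural strategy is the direct method of the calculus of variations in the periodic setting, following Levitt \cite{Lev} (with Catto-Le Bris-Lions type arguments adapted to positive temperature). I would minimize the per-cell functional
\begin{align*}
\cF_T(\g) = \tr_\Om(-\Delta\,\g) + T\,\tr_\Om\bigl(\g\log\g+(1-\g)\log(1-\g)\bigr) + \tfrac{1}{2}\,D(\ka_{\rm per}-\rho_\g),
\end{align*}
where $\g$ is a self-adjoint operator commuting with $\cL$-translations with $0\le \g\le 1$, $\tr_\Om$ denotes trace per unit cell (defined via Bloch-Floquet decomposition, cf.\ Section \ref{sec:BF-decomp}), $\rho_\g(y)=\g(y,y)$, and $D(f):=\lan f,(-\Delta)^{-1} f\ran_{L^2_{\rm per}}$ is the per-cell Coulomb energy on mean-zero periodic functions. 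The admissible class is further restricted by the neutrality constraint $\int_\Om \rho_\g=\int_\Om \ka_{\rm per}$, which ensures $\ka_{\rm per}-\rho_\g\in (L^2_{\rm per})^\perp$ so that $D(\cdot)$ makes sense.

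\textbf{Existence of a minimizer.} Each of the three terms in $\cF_T$ is bounded below: the kinetic and Coulomb terms are nonnegative, and the Fermi-Dirac entropy is controlled in absolute value by a multiple of $\tr_\Om\g$, which is fixed by neutrality. The Hoffmann-Ostenhof inequality gives $\tr_\Om(-\Delta\,\g)\ge \|\nabla\sqrt{\rho_\g}\|_{L^2_{\rm per}}^2$, so any minimizing sequence $\{\g_n\}$ enjoys a uniform $H^1_{\rm per}$ bound on $\sqrt{\rho_{\g_n}}$. Rellich-Kondrachov on the fundamental cell yields precompactness of $\rho_{\g_n}$ in $L^p_{\rm per}$ for $1\le p<3$; combined with fiberwise Hilbert-Schmidt compactness implied by the kinetic bound, one extracts a subsequence converging weakly to some admissible $\g_\star$. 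Joint operator convexity of the Fermi-Dirac entropy, weak lower semicontinuity of the kinetic term, and continuity of $D$ on $(L^2_{\rm per})^\perp$ then give $\cF_T(\g_\star)\le \liminf \cF_T(\g_n)$, so $\g_\star$ is a minimizer.

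\textbf{Euler-Lagrange equation, regularity, and main obstacle.} Since the neutrality constraint is linear, a Lagrange multiplier $\mu_{\rm per}\in\R$ arises so that $\g_\star$ is an unconstrained critical point of $\cF_T+\mu_{\rm per}(\tr_\Om\g-\int_\Om\ka_{\rm per})$. Using that the derivative of the Fermi-Dirac entropy is $T\log(\g/(1-\g))$ and the variational derivative of $\tfrac12 D(\ka_{\rm per}-\rho_\g)$ in $\g$ is multiplication by $-v*(\ka_{\rm per}-\rho_\g)$, the first-order condition reads $\g_\star=\ft(h_{\rho_{\g_\star}}-\mu_{\rm per})$, which is \eqref{DFTgam-eq}; applying $\den$ gives \eqref{DFT-rho-eq} with $\rho_{\rm per}:=\rho_{\g_\star}$, and the kinetic bound together with Hoffmann-Ostenhof yields $\sqrt{\rho_{\rm per}}\in H^1_{\rm per}$. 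The chief difficulty is making rigorous sense of the per-cell Coulomb energy, since $v(y)=1/(4\pi|y|)$ is not summable over $\cL$; the cure is exactly the device used throughout the paper, namely to impose neutrality so that $\ka_{\rm per}-\rho_\g$ lies in the mean-zero subspace $(L^2_{\rm per})^\perp$ on which $(-\Delta)^{-1}$ is well defined (cf.\ the footnote after \eqref{hrho}). A secondary technical point is uniform control of $\tr_\Om\g$ along the minimizing sequence and lower semicontinuity of the entropy in the weak fiberwise topology; both are standard Bloch-Floquet fiber arguments once the kinetic bound is in hand.
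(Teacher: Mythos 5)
The paper does not prove this theorem itself: just after the statement the authors write ``We give references to the proof of this theorem below,'' and the Literature paragraph then says the result ``is taken from \cite{CS2},'' which (like \cite{AnCan,CLeBL2,CLeBL}) ``use variational techniques.'' Your proposal is a sketch of exactly that variational route — direct minimization of a per-cell grand-canonical free energy $\cF_T(\g)$ over $\lat$-periodic density matrices $0\le\g\le 1$ under the neutrality constraint, followed by the Euler--Lagrange computation and Hoffmann--Ostenhof for the $H^1_{\rm per}$ regularity of $\sqrt{\rho_{\rm per}}$ — so it is aligned with the cited reference's approach rather than an alternative one.

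One point in your coercivity argument is stated imprecisely and would not survive scrutiny as written: you claim the Fermi--Dirac entropy is ``controlled in absolute value by a multiple of $\tr_\Om\g$, which is fixed by neutrality.'' The function $s(x)=x\log x+(1-x)\log(1-x)$ does \emph{not} satisfy $|s(x)|\ls x$ near $x=0$, and for an infinite-rank $\g$ the sum $\sum_n|s(\lambda_n)|$ can diverge even when $\sum_n\lambda_n<\infty$ (take $\lambda_n\sim 1/(n\log^2 n)$). What makes the entropy term finite along a minimizing sequence is precisely the kinetic bound you already have: one uses $|s(x)|\ls x^\alpha$ for any fixed $\alpha\in(0,1)$ and then controls $\tr_\Om\g^\alpha$ by interpolating between $\tr_\Om\g$ (fixed by neutrality) and $\tr_\Om(-\Delta\,\g)$ (bounded along the minimizing sequence), e.g.\ via a Lieb--Thirring/Cwikel type estimate on the Bloch fibers. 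With that fix the argument closes; the remaining steps (weak lower semicontinuity of the kinetic term and of the jointly operator-convex entropy, continuity of the mean-zero Coulomb energy on $(L^2_{\rm per})^\perp$, the Lagrange multiplier for the linear constraint, and passage to \eqref{DFT-rho-eq} via $\den$) are as you describe.
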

We give references to the proof of this theorem below.

\medskip

9)  
In the limit $T \rightarrow 0$, our expression for the dielectric constant $\epsilon$ agrees with \cite{CL} (see Appendix \ref{sec:epsilon-T-to-zero} below).

\medskip

10)  (Physical dimensions) 
 The physical cell size of common crystals is on the order of $10^{-10}$ (\cite{SB}). This gives  $\del\sim 10^{-10}$. The gap size, $\etal$, is on the order $1 eV$ \cite{SB}. Since the Boltzmann constant, $k_B$, is of the order $10^{-4} eV/K$, this gives $\etal/k_B\sim 10^{4} K$.  Thus, though we do not compute actual constants in our estimate, we expect that the allowed values of $\del$ and $T$ are within physically interesting ranges. 

\medskip

11) (Energy)
The evolution \eqref{eqn:time-dependent-KS} conserves the number of particles $N_{\Xom}(\g):=\Tr_\Xom (\gamma)$  and the energy  
 \begin{align} 
	E_{\Xom}(\g) &:=\Tr_\Xom \big((-\Delta) \gamma \big) + \frac12 \int_\Xom \s_\g v*\s_\g, 		\label{energy} 
\end{align}
 where $\Xom$ is either $\R^d$ or a fundamental cell $\Om$ of $\lat$, with $\Tr_\Xom$ defined accordingly, and $\s_\g:=\ka - \rho_\gamma$.

Eq. \eqref{DFTgam-eq} is 
the Euler-Lagrange equation  for the free energy functional 
\begin{align}\label{FT-def} 
	F_{T}(\g):=E_{\Xom}(\g) -T S_{\Xom}(\g)-\mu N_{\Xom}(\g),
	\end{align}
where $S_{\Xom}(\g) = -\Tr_{\Om} (\g \ln \g+(\one-\g) \ln (\one-\g))$ is the entropy.

To obtain the HF (free) energy functional, one should add to \eqref{energy} (\eqref{FT-def}) the HF exchange energy term $Ex(\g):=\frac12 \int_\Xom\int_\Xom |v(x-y)\g(x, y)|^2$.

\medskip


\bigskip

\paragraph{\bf Literature.}
The relation of the HF theory to the exact quantum many-body problem was established rigorously in \cite{LiebSim}.   

For $T=0$, the existence theory for the RHFE and HFE was developed in \cite{LiebSim, Lions1,
CLeBL2, AnCan, HLS}, see \cite{Lions, Lieb2, LeBL}, for reviews. For the Hartree-Fock equation \eqref{HF-gam-eq} with periodic $\ka = \ka_{\rm per}$, the existence of periodic solutions from certain trace classes was obtained in \cite{CLeBL} and \cite{CLeBL2}.

Results for $T=0$, similar and related to Theorem \ref{thm:PE-macro}, were proven in \cite{CS, CDL, CDL2, CL, CLS, ELu2, ELu3, ELu}.

For the case where $T > 0$, F. Nier \cite{Nier} 
 proved the existence and uniqueness of the TRHF \eqref{DFT-rho-eq} 
  via variational techniques. Later, Prodan and Nordlander \cite{PrNord} provided another existence and uniqueness result with the exchange-correlation term in the case where $\ka = \ka_{\rm per}$ is small. In this case, the associated potential term $\phi_{\rm per} + \ex(\rho_{\rm per})$, where $\ex(\rho)$ is a local exchange-correlation term, see \eqref{hrho'}, is small as well. (As was pointed by A. Levitt, a result for small $\ka = \ka_{\rm per}$ would not work in Theorem \ref{thm:PE-macro} above as Assumption \ref{A:diel} fails for it.) 

The results given in Theorem \ref{thm:ideal-cryst-exist} is taken from \cite{CS2}. Papers \cite{AnCan, CLeBL2, CLeBL, CS2} use variational techniques and did not provide uniqueness results.
A. Levitt \cite{Lev} proved 
the screening of small defects for the TRHFE.

\bigskip

\paragraph{\bf Approach.}  As in \cite{Lev}, our starting point is Eq. \eqref{phi-only-eq} for the electrostatic potential $\phi$. We also use some important ideas from  \cite{CLS}. However, our approach to proving Theorem \ref{thm:PE-macro} is fairly novel. Rather that employing variations-based techniques, we use the Lyapunov-Schmidt reduction, which also allows us to estimate the remainders.
 
 The starting equation of our analysis can be formulated as follows.  Let $(\phi_{\rm per} (x), \mu_{\rm per})$ be the solution of \eqref{phi-only-eq}, with $\ka (x)=\ka_{\rm per} (x)$, and let  $\ka_\delta$ be given in  \eqref{m-del}. 
Define $\psi$ by the equality 
\[
	 \phi = \phi_{\rm per} +  \psi.\] 
Plugging this decomposition into \eqref{phi-only-eq}, with  $\ka=
\ka_\delta$ 
 and $\mu=\mu_{\rm per}$, and using that $h^{\phi}=h^{\phi_{\rm per}}-\psi$, we arrive at the equation for $\psi$: 
\begin{align} \label{psi-eq}
	-\Delta \psi =  \kp_\del - \den[g_{\phi_{\rm per}'}(\psi)],
\end{align}
where $\phi_{\rm per}':=\phi_{\rm per}+\mu_{\rm per}$, $g_{\phi_{\rm per}'}(\psi):=\ft(h^{\phi_{\rm per}'}-\psi)-\ft(h^{\phi_{\rm per}'})$ 
 and $\kp_\delta(y):=\delta^{3}\ka'(\del y)$. 
 
 This is a nonlinear and nonlocal Poisson equation for $\psi$. We see that only $\phi_{\rm per}':=\phi_{\rm per}+\mu_{\rm per}$, but not $\ka_{\rm per}$, enters Eq. \eqref{psi-eq} explicitly.

Though we deal with the simplest microscopic model - the reduced HF equation - our techniques are fairly robust and would work for 
 the full-fledged DFT. Also, we favoured rough estimates to more precise but lengthier ones which produce better bounds on $\beta$ in  \eqref{beta-del-cond}, see 
  Appendix \ref{sec:nonlin-est-improv} below.



\medskip

\medskip

The paper is organized as follows. After presenting preliminary material on charge density estimates and the Bloch-Floquet decomposition in Section \ref{sec:den-BF-deco}, we prove  Theorem \ref{thm:PE-macro} in Sections  \ref{sec:pf:thm:PE-macro} - \ref{sec:nonlin-est-rough}. 
  Section \ref{sec:pf:thm:PE-macro} 
   contains  the main steps of the proof of Theorem \ref{thm:PE-macro}. 
    Section \ref{sec:nonlin-est-rough} covers fairly straightforward 
 technical estimates of the nonlinearity.


\subsection*{Acknowledgements}
The authors thank   Rupert Frank, J\"urg Fr\"ohlich, Gian Michele Graf, Christian Hainzl, Jianfeng Lu and Heinz Siedentop for stimulating discussions, and to anonymous referee, for many pertinent remarks. 
The  correspondence with Antoine Levitt played a crucial role in steering the research at an important junction. The second author is also grateful to  Volker Bach, S\'ebastien  Breteaux, Thomas Chen and J\"urg Fr\"ohlich for enjoyable collaboration on related topics.

 The research on this paper is supported in part by NSERC Grant No. NA7901. The first author is also in part supported by NSERC CGS D graduate scholarship.

\section{Densities and Bloch-Floquet decomposition} \label{sec:den-BF-deco}

\subsection{Locally trace class operators} \label{sec:loc-tr-class}
Let $C_c\equiv C_c(\R^3)$ denote the space of compactly supported continuous functions on $\R^3$. An operator $A$ on $L^2$ is said to be locally trace class if  $fA$  and $Af$ are trace class for all $f \in C_c$. (For the proofs below, it suffices to require that $fA$ is trace class.) 

Let $\lat$ be a Bravais lattice on $\R^3$ and $\Om$ a fundamental domain of $\lat$ as in Section \ref{sec:results}. Denote $|S|$ to be the volume of a measurable set $S \subset \R^3$ and note that $|\Om|$ is independent of the choice of the fundamental cell $\Om$. Let $T_s$ be the translation operator 
\begin{align}
	T_s : f(x) \mapsto f(x-s). \label{eqn:translation-op-def}
\end{align}
We say that a function $f: \R^3\ra \C$  is $\lat$-periodic if and only if it is invariant under the translations action of $T_s$ for all lattice elements $s \in \lat$. We define the space
\begin{align} \label{eqn:Lp-per}
	L^p_{\rm per}\equiv L^p_{\rm per}(\R^3) = \{ f \in L^p_{\rm loc}(\R^3) :  
f \text{ is $\lat$-periodic}	 \},
\end{align}
with the norm of $L^p(\Om)$ for some $\Om$. The norms for $L^p_{\rm per}$ and $L^p\equiv L^p(\R^3)$ are distinguished by the subindices $L^p_{\rm per}$ and $L^p$. 

We say that a bounded operator $A$ on $L^2$ is $\lat$-periodic if and only if $[A, T_s] = 0$ for all $s \in \lat$ where $T_s$ is the translation operator defined in \eqref{eqn:translation-op-def}. 

Let $S^p$ be the standard $p$-Schatten space of bounded operators on $L^2$ with the $p$-Schatten norm
\begin{align} \label{Ip-norm'}
	\|A\|_{S^p}^p 	:=& \Tr_{L^2}( (A^*A)^{p/2} ).
\end{align} 
Next, let $\chi_Q$ denote the characteristic function of a set $Q \subset \R^3$ and let $S^p_{\rm per}$ be the space of bounded, $\lat$-periodic operators $A$ on $L^2$ with $\|A\|_{S^p_{\rm per}} < \infty$ where
\begin{align} 
	\|A\|_{S^p_{\rm per}}^p	:=& \Tr_{\Om}( (A^*A)^{p/2} ) := \frac{1}{|\Om|} \Tr_{L^2}( \chi_\Om (A^*A)^{p/2} \chi_\Om).  \label{eqn:I-p-per-def}
\end{align} 

We remark that the $S_{\rm per}^2$ norm does not depend on the choice of $\Om$ since $A$ is $\lat$-periodic. We have the following estimates for the densities in terms of Schatten norms.

\subsection{Densities} \label{sec:densities}
For a locally trace class operator $A$, we define its density $\den[A]$ to be a regular countably additive complex Borel measure satisfying 
\begin{align} 
 \label{den-def1}	& \int \den(A)f= \Tr( fA), 
\end{align}
for every $f \in C_c$. If $\Tr(fA)$ is continuous in $f$ in the $C_c$-topology, then the Riesz representation theorem shows that \eqref{den-def1}, for every $f \in C_c$, 
  define $\den[A]$ uniquely. In our case, we will frequently stipulate stronger regularity assumptions on $A$, 
  implying that $\den[A]$ is actually in a reasonable function space. (e.g. Lemma \ref{lem:rhoA-by-tracial-norm} below).
  
If an operator $A$ has an (distributional) integral kernel, $A(x, y)$, with the diagonal, $A(x, x)$, being  a regular countably additive complex Borel measure, then   
\begin{align}  \label{den-prop1}
	&  \den(A)(x) = A(x, x).
\end{align}
Finally, den is a linear map on  locally trace class operators with the property that for any $f \in C_c$,
\begin{align}  \label{den-prop1}
	&   \den( fA) = f\den(A).
\end{align}

\begin{lemma} \label{lem:rhoA-by-tracial-norm}
Let $A$ be a locally trace class operator on $L^2$ and $\epsilon > 0$. We have the following statements.
\begin{enumerate}
	\item If $(1-\Delta)^{3/4 + \epsilon} A \in S^2$, resp. $S^2_{\rm per}$, then $\den[A] \in L^2 $, resp. $L^2_{\rm per}$. Moreover, respectively,
		\begin{align}
	& \|\den[A]\|_{L^2} \ls \|(1-\Delta)^{3/4 + \epsilon} A\|_{S^2}, \label{rhoA-by-tracial-norm1} \\
			&\|\den[A]\|_{L^2_{\rm per}} \ls |\Om|^{1/2} \|(1-\Delta)^{3/4 + \epsilon} A\|_{S^2_{\rm per}} \label{rhoA-by-tracial-norm2} 
		\end{align}
	\item If $(1-\Delta)^{1/4 + \epsilon} A \in S^{6/5}$, then $\den[A] \in \dot{H}^{-1} $ (where $\dot{H}^s $ is defined in \eqref{dotHs}). Moreover,
		\begin{align}
			\|\den[A]\|_{\dot{H}^{-1}} \ls \|(1-\Delta)^{1/4 + \epsilon} A\|_{S^{6/5}}  \label{rhoA-by-tracial-norm3}	
		\end{align}
\end{enumerate}
\end{lemma}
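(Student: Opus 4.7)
The strategy is the standard duality-plus-Kato--Seiler--Simon (KSS) argument. Recall KSS: for $p\ge 2$,
$$\|f(x)g(-i\nabla)\|_{S^p}\le (2\pi)^{-3/p}\|f\|_{L^p}\|g\|_{L^p},$$
and in $\R^3$ the symbol $\langle\xi\rangle^{-s}$ lies in $L^p$ exactly when $sp>3$. All three inequalities will follow by rewriting $\Tr(fA)$ by means of a fractional Laplacian splitting, applying H\"older in the Schatten classes, and then absorbing one factor with KSS plus (for part (2)) the Sobolev embedding.

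For part (1), the defining relation $\int f\,\den[A]=\Tr(fA)$ for $f\in C_c$ together with $L^2$ duality gives
$$\|\den[A]\|_{L^2}=\sup_{\|f\|_{L^2}\le 1,\, f\in C_c}|\Tr(fA)|.$$
Writing $fA=f\,(1-\Delta)^{-3/4-\epsilon}\cdot (1-\Delta)^{3/4+\epsilon}A$ and applying H\"older with exponents $(2,2)$,
$$|\Tr(fA)|\le \|f(1-\Delta)^{-3/4-\epsilon}\|_{S^2}\,\|(1-\Delta)^{3/4+\epsilon}A\|_{S^2}.$$
KSS at $p=2$ bounds the first factor by $C\|f\|_{L^2}\|\langle\xi\rangle^{-3/2-2\epsilon}\|_{L^2}$, and the $L^2$-norm is finite because $2(3/2+2\epsilon)=3+4\epsilon>3$. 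This yields \eqref{rhoA-by-tracial-norm1}.

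For part (2), I use $\dot H^{-1}$--$\dot H^1$ duality:
$$\|\den[A]\|_{\dot H^{-1}}=\sup_{\|g\|_{\dot H^1}\le 1,\, g\in C_c}|\Tr(gA)|.$$
Split $gA=g(1-\Delta)^{-1/4-\epsilon}\cdot(1-\Delta)^{1/4+\epsilon}A$ and apply H\"older with exponents $(6,6/5)$:
$$|\Tr(gA)|\le \|g(1-\Delta)^{-1/4-\epsilon}\|_{S^6}\,\|(1-\Delta)^{1/4+\epsilon}A\|_{S^{6/5}}.$$
KSS at $p=6$ controls the first factor by $C\|g\|_{L^6}\|\langle\xi\rangle^{-1/2-2\epsilon}\|_{L^6}$; the symbol norm is finite since $6(1/2+2\epsilon)=3+12\epsilon>3$, and the Sobolev embedding $\dot H^1(\R^3)\hookrightarrow L^6(\R^3)$ converts $\|g\|_{L^6}$ to $\|g\|_{\dot H^1}$. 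This gives \eqref{rhoA-by-tracial-norm3}.

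The periodic version of part (1) is handled by the same splitting, but now the trace is taken over a fundamental cell via the identity $\int_\Om f\,\den[A]=\Tr_\Om(fA)$ for periodic $f,A$, and H\"older is used with the cell-normalized Schatten norm $\|\cdot\|_{S^2_{\rm per}}$. The factor $|\Om|^{1/2}$ in \eqref{rhoA-by-tracial-norm2} arises because a unit vector in $L^2_{\rm per}$ has $L^2(\Om)$-mass $1$, while the cell-normalized KSS estimate produces $\|f\|_{L^2(\Om)}/|\Om|^{1/2}$ on the Schatten side. The only step requiring care is the periodic analogue of KSS with the normalization of \eqref{eqn:I-p-per-def}, which follows from a Bloch--Floquet decomposition (Section \ref{sec:BF-decomp}) reducing matters to a fibrewise version of the classical KSS bound; I expect this bookkeeping with $|\Om|$-factors to be the only real annoyance in writing the argument out in full.
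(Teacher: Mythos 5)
Your proof is correct and follows essentially the same route as the paper: duality pairing, Schatten--Hölder with the splitting $f(1-\Delta)^{-s}\cdot(1-\Delta)^{s}A$, the Kato--Seiler--Simon bound on the resulting multiplication-times-Fourier-multiplier factor, and (for part 2) the Sobolev embedding $\dot H^1\hookrightarrow L^6$. The treatment of the periodic case via the cell-normalized Schatten norm and the periodic KSS inequality also matches the paper's argument, with the $|\Om|^{1/2}$ arising from the same normalization bookkeeping.
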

\begin{proof}
We prove \eqref{rhoA-by-tracial-norm2} and \eqref{rhoA-by-tracial-norm3} only; \eqref{rhoA-by-tracial-norm1} is similar and easier. We begin with \eqref{rhoA-by-tracial-norm2}. Since the operator $(1-\Delta)^{1/4+\epsilon} A$ is $\lat$-periodic, its density, if it exists, is also $\lat$-periodic. By the $L^2_{\rm per}$-$L^2_{\rm per}$ duality, relation \eqref{den-def1},  $\den[A] \in L^2_{\rm per}$ and \eqref{rhoA-by-tracial-norm2} holds if and only if 
\begin{align}
	|\Tr_{\Om}(f A)| \ls  |\Om|^{1/2} \| f\|_{L^2_{\rm per}}\|(1-\Delta)^{3/4 + \epsilon} A\|_{S^{\rm per}_2} \label{eqn:TrAf-need-to-est-per}
\end{align}
for all $f \in L^2(\R^3)$ with support in $\Om$, where we recall $\| f\|_{L^2_{\rm per}} = \|f\chi_\Om\|_{L^2}$. Since the support of $f$ is in $\Om$, by the H\"{o}lder's inequality for the trace-per-volume norm,
\begin{align}
	\frac{1}{|\Om|}|\Tr_{\Om}&(f A)| = \frac{1}{|\Om|}|\Tr( \chi_\Om fA \chi_\Om)| \\
		&\ls \|A(1-\Delta)^{3/4+\epsilon}\|_{S^2_{\rm per}} \|(1-\Delta)^{-3/4-\epsilon}f\|_{S^2_{\rm per}}.
\end{align}
By the Kato-Seiler-Simon inequality
\begin{align}
	\|f(x)g(-i\n) \|_{S^p} \ls \|f\|_{L^p}\|g\|_{L^p} \label{eqn:KSS}
\end{align}
for $2 \leq p < \infty$ (see \cite{SimonTI}; one can also replace $S^p$ and $L^p$ by their periodic versions $S_{\rm per}^p$ and $L^p_{\rm per}$, respectively.), we obtain \eqref{eqn:TrAf-need-to-est-per}. Thus, \eqref{rhoA-by-tracial-norm2} is proved.

Now we prove \eqref{rhoA-by-tracial-norm3} as above. By the $\dot{H}^1$-$\dot{H}^{-1}$ duality, it suffices to show that
\begin{align}
	|\Tr (f A)| \ls  \| f\|_{\dot{H}^1}\|(1-\Delta)^{1/4 + \epsilon} A\|_{S^{6/5}} \label{eqn:TrAf-need-to-est}
\end{align}
for all $f \in \dot{H}^1 \cap C_c $ and for $\epsilon > 0$. So, we estimate $|\Tr (f A)|$. By the non-abelian H\"{o}lder inequality with $1 = \frac{1}{6} + \frac{1}{6/5}$ (\cite{SimonTI}), 
\begin{align}
	|\Tr (f A)| \ls  \|f(1-\Delta)^{-1/4 - \epsilon}\|_{S^6} \|(1-\Delta)^{1/4 + \epsilon} A\|_{S^{6/5}} .
\end{align}
The Kato-Seiler-Simon inequality \eqref{eqn:KSS} shows
\begin{align}
	|\Tr (f A)| \ls  \|f\|_{L^6} \|(1-\Delta)^{1/4 + \epsilon} A\|_{S^{6/5}}.\label{eqn:TrfA-L6-I65} 
\end{align}
Now, applying the Gagliardo-Nirenberg-Sobolev inequality (for $d=3$; see \cite{LiebLoss})
\begin{align}
	\| f \|_{L^6} \ls \|\n f\|_{L^2} \label{eqn:hardy-ineq}
\end{align}
to $\|f\|_{L^6}$ in \eqref{eqn:TrfA-L6-I65}, we obtain \eqref{eqn:TrAf-need-to-est}. The proof of Lemma \ref{lem:rhoA-by-tracial-norm} is completed by the $\dot{H}^1$-$\dot{H}^{-1}$ duality and the fact that $\dot{H}^1  \cap C_c $ is dense in $\dot{H}^1$.
\end{proof}

\subsection{Bloch-Floquet Decomposition} 
 \label{sec:BF-decomp} 
Let $\lat^*$ denote the lattice reciprocal to $\lat$,  with the reciprocity relation between bases for $\lat$  and  $\lat^*$ given by $\om_i\cdot \om_j^*=2\pi\del_{ij}$.
  Define the (fiber integral) space
\begin{align}
	\mathcal{H}_{\lat}^\oplus = \{ f \in L^2_{\rm loc}&(\R^3_k \times \R^3_x) : T_s^x f= f \\
		&\text{ and } T_r^k f = e^{- i r \cdot x} f,\ \,\forall s \in \lat,\ \forall r \in \lat^* \},
\end{align}
where $T_s^k$ is the translation in the $k$-variable by $s$ and $T_r^x$ is the translation in the $x$-variable by $r$ (see \eqref{eqn:translation-op-def}). 
We write $f = f_k(x) \in \mathcal{H}_{\lat}^\oplus$ as
\begin{align}
	f=\int_{\R^3/\lat^*}^\oplus \, f_k \, d\hat k = \int_{\Omd^*}^\oplus \, f_k \, d\hat k, 
\end{align}
 for some choice of a fundamental cell $\Omd^*$ of the reciprocal lattice $\lat^*$ and $d\hat k := |\Omd^*|^{-1} dk$.

We use the Bloch-Floquet decomposition $\UBF$ mapping from $L^2(\R^3)$ into $\mathcal{H}^\oplus_{\lat}$ as
\begin{align} \label{eqn:bloch-def}
	& \UBF f := \int^\oplus_{\Omd^*} d\hat k f_k \, ,\\
\label{fk}	&  f_k(x) := \sum_{t \in  \lat} e^{- ik(x+t)} f(x+t)  
\end{align}
and the inverse Bloch-Floquet transform
\begin{align} \label{eqn:inverse-bloch}	
	& \UBF^{-1}\big( \int^\oplus_{\Omd^*} d\hat k f_k \big)(x) 
		:= \int_{\Omd^*} d\hat k \, e^{ ikx} f_k(x), \, \forall x\in \R^3.
\end{align}

\begin{lemma} \label{lem:hatf-L1}
We have, for any $f \in L^2 $,
\begin{align}
	\int_{\Omd} f_k(x) dx = \hat{f}(k) \label{eqn:int-f-p-is-f-hat}
\end{align}
\end{lemma}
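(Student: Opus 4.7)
The plan is to prove the identity by a direct change-of-variables computation for sufficiently nice $f$ (e.g. $f \in L^1 \cap L^2$), and then extend to all $f \in L^2$ by density. No deep tools are needed; the only things to be careful about are the interchange of the lattice sum with the integral and the Fourier-transform convention.

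For $f \in L^1 \cap L^2$ the sum in \eqref{fk} converges absolutely a.e. and $\sum_{t \in \lat} \int_{\Om} |f(x+t)|\,dx = \|f\|_{L^1} < \infty$, so Tonelli's theorem permits us to swap sum and integral. Substituting \eqref{fk} into the left-hand side and changing variables $y = x+t$ in the $t$-th summand gives
\begin{align*}
	\int_{\Om} f_k(x)\,dx
	&= \sum_{t \in \lat} \int_{\Om} e^{-ik(x+t)} f(x+t)\,dx
	= \sum_{t \in \lat} \int_{\Om + t} e^{-iky} f(y)\,dy \\
	&= \int_{\R^3} e^{-iky} f(y)\,dy = \hat f(k),
\end{align*}
where in the penultimate step I used that the translates $\{\Om + t\}_{t \in \lat}$ tile $\R^3$ up to a set of measure zero. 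This is the identity in question, with the Fourier convention $\hat f(k) = \int_{\R^3} e^{-iky} f(y)\,dy$ dictated by the definition \eqref{fk}.

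To pass to general $f \in L^2$, I would use density of $L^1 \cap L^2$ in $L^2$ together with boundedness of both sides in $f$: the Fourier transform is an isometry of $L^2$, while $\UBF : L^2(\R^3) \to \mathcal{H}^\oplus_{\lat}$ is unitary, so the map $f \mapsto \int_{\Om} f_k(\cdot)\,dx$ extends to a bounded operator from $L^2(\R^3)$ to $L^2(\Om^*)$. Approximating an arbitrary $f \in L^2$ by a sequence in $L^1 \cap L^2$ and passing to the limit yields the identity for a.e.\ $k \in \Om^*$. The only conceivable obstacle, the justification of the sum-integral swap, is handled cleanly by Tonelli on the level of the approximants; the rest is routine.
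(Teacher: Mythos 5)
Your computation is the same as the paper's: unfold the definition of $f_k$, change variables $y = x+t$, and use that the translates $\Om + t$ tile $\R^3$. The only difference is that you take more care than the paper does about justifying the sum--integral interchange and extending from $L^1 \cap L^2$ to $L^2$ by density, which is a reasonable refinement but not a different approach.
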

\begin{proof}
By \eqref{fk} and a change of variable, we see that
\begin{align}
	\int f_k(x) dx :=& \int_{\Omd} \sum_{t \in \lat} e^{- ik(x+t)} f(x+t) d{x} \\
	=& \sum_{t \in \lat} \int_{t+ \Omd} e^{- ikx} f(x) d{x} \\
	=& \int e^{- ikx} f(x) dx.
\end{align}
Equation \eqref{eqn:int-f-p-is-f-hat} follows from the definition of the Fourier transform.
\end{proof}

Let  $\lan f\ran_{S} = |S|^{-1} \int_S f(x) dx$, the average of $f$ on a set $S$, and $\chi_S$ be the indicator (characteristic) function of $S$.
\begin{lemma} \label{lem:P-decomposed-form}
Let $f \in L^2 $ and $f_k$ be its $k$-th fiber $\lat$-Bloch-Floquet decomposition. Then for any $S \subset \Omd^*$,
\begin{align}\label{P-decomposed-form}
	\chi_S(-i\n)f = \UBF^{-1}\int^{\oplus}_{S} d\hat k \, \lan f_k \ran_{\Omd}. 
\end{align}
\end{lemma}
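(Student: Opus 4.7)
My plan is to prove the equality by pushing both sides into a common Fourier representation and observing that the multiplier $\chi_S(-i\n)$ is realized on the Fourier side. The key ingredients are Lemma \ref{lem:hatf-L1}, the explicit form of $\UBF^{-1}$ in \eqref{eqn:inverse-bloch}, and the standard volume identity $|\Omd|\,|\Omd^*|=(2\pi)^3$ between the fundamental cells of $\lat$ and of $\lat^*$.

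First I would rewrite the average appearing on the right-hand side using Lemma \ref{lem:hatf-L1}:
\begin{align*}
\lan f_k\ran_{\Omd} \;=\; \frac{1}{|\Omd|}\int_{\Omd} f_k(x)\,dx \;=\; \frac{1}{|\Omd|}\,\hat f(k).
\end{align*}
Next, I would apply the definition \eqref{eqn:inverse-bloch} of the inverse Bloch-Floquet transform to the fiber integral $\int^{\oplus}_{S} d\hat k\,\lan f_k\ran_{\Omd}$ (viewing the integrand as the fiber that equals $\lan f_k\ran_{\Omd}$ for $k\in S$ and vanishes otherwise). Since $\lan f_k\ran_{\Omd}$ is independent of $x$, this yields
\begin{align*}
\UBF^{-1}\!\int^{\oplus}_{S}\!d\hat k\,\lan f_k\ran_{\Omd}(x)
 \;=\;\int_{S} d\hat k\; e^{ikx}\,\frac{\hat f(k)}{|\Omd|}
 \;=\;\frac{1}{|\Omd^*|\,|\Omd|}\int_{S} e^{ikx}\hat f(k)\,dk,
\end{align*}
where in the last step I use $d\hat k=|\Omd^*|^{-1}dk$.

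Then I would invoke the reciprocity relation $\om_i\cdot\om_j^*=2\pi\del_{ij}$, which implies $|\Omd|\,|\Omd^*|=(2\pi)^3$, so the prefactor becomes $(2\pi)^{-3}$. This gives
\begin{align*}
\UBF^{-1}\!\int^{\oplus}_{S}\!d\hat k\,\lan f_k\ran_{\Omd}(x)\;=\;\frac{1}{(2\pi)^3}\int_{S} e^{ikx}\,\hat f(k)\,dk
\;=\;\frac{1}{(2\pi)^3}\int_{\R^3} e^{ikx}\,\chi_S(k)\,\hat f(k)\,dk,
\end{align*}
which, by Fourier inversion (with the convention $\hat f(k)=\int e^{-ikx}f(x)\,dx$ that is consistent with Lemma \ref{lem:hatf-L1}), is precisely $\chi_S(-i\n)f(x)$, establishing \eqref{P-decomposed-form}.

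There is no real obstacle beyond bookkeeping. The only point demanding care is keeping the normalization constants straight: one must fix the Fourier convention so that it is compatible with Lemma \ref{lem:hatf-L1}, and then the factor $|\Omd|^{-1}$ from the average combines with $|\Omd^*|^{-1}$ from $d\hat k$ via $|\Omd|\,|\Omd^*|=(2\pi)^3$ to produce exactly the $(2\pi)^{-3}$ prefactor of the Fourier inversion formula. Since $f\in L^2$, all integrals are interpreted in the $L^2$ sense, which is standard for Bloch-Floquet computations.
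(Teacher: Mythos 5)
Your proof follows the same route as the paper's: use Lemma \ref{lem:hatf-L1} to replace $\lan f_k\ran_{\Omd}$ by $|\Omd|^{-1}\hat f(k)$, expand $\UBF^{-1}$ from \eqref{eqn:inverse-bloch}, and recognize the resulting integral as the Fourier multiplier $\chi_S(-i\n)$ acting on $f$. Your bookkeeping of the normalization constants is in fact more careful than the paper's: you use $|\Omd|\,|\Omd^*|=(2\pi)^3$ (which is what the stated reciprocity relation $\om_i\cdot\om_j^*=2\pi\del_{ij}$ gives) together with the standard $(2\pi)^{-3}$ Fourier inversion, whereas the paper writes $d\hat k=|\Omd|\,dk$ and then uses an unnormalized inversion formula, two nonstandard conventions that offset each other.
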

\begin{proof}
 Let $f \in L^2 $ with the $k$-th fiber $f_k$. Then Lemma \ref{lem:hatf-L1} shows that 
\begin{align}
	\lan f_k \ran_{\Omd} = |\Omd|^{-1} \hat{f}(k). \label{eqn:lan-fk-ran-Om-delta-eqn}
\end{align}
Using the definition of the inverse Bloch transform in \eqref{eqn:inverse-bloch} and \eqref{eqn:lan-fk-ran-Om-delta-eqn}, we see that
\begin{align}
	\UBF^{-1} \left( \int_{S}^\oplus d\hat k \, \lan f_k \ran_{\Omd} \right) =& \int_{\Omd^*} d\hat k \,  e^{  ikx} \lan f_k \ran_{\Om_\delta} \notag \\
		=& \int_{S} d\hat k \,  |\Omd|^{-1} e^{ ikx} \hat f (k)  
\end{align}
Since $d\hat k = |\Omd^*|^{-1}dk = |\Omd| dk$, the last equation yields
\begin{align}
	\UBF^{-1} \int_{S}^\oplus d\hat k \, \lan f_k \ran_{\Omd}	=& \int_S d k \,   e^{ ikx}\hat{f}(k) = \chi_S(-i\n) f, \label{eqn:UBR-inverse-1}
\end{align}
which gives \eqref{P-decomposed-form}.
\end{proof}

Let $\Proj = \chi_{B(r)}(-i\n)$ where $B(r)$ is the ball of radius $r$ centered at the origin (see \eqref{eqn:choice-of-P}). Lemma \ref{lem:hatf-L1} and \ref{lem:P-decomposed-form} imply
\begin{corollary} \label{cor:Pf-computation}
Let $f \in L^2$ and $\Br \subset \Omd^*$, then
\begin{align} 
(\Proj f)_k = |\Omd|^{-1} \hat f(k)\chi_{\Br}(k).
\end{align}
\end{corollary}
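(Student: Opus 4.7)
The plan is to simply combine Lemma \ref{lem:P-decomposed-form} with the identity \eqref{eqn:lan-fk-ran-Om-delta-eqn} established inside its proof (which itself comes from Lemma \ref{lem:hatf-L1}). Specifically, I would take $S = \Br$ in Lemma \ref{lem:P-decomposed-form}, which is allowed by the hypothesis $\Br \subset \Omd^*$, to obtain
\begin{align*}
\Proj f \;=\; \chi_{\Br}(-i\nabla) f \;=\; \UBF^{-1}\int_{\Br}^\oplus d\hat k\,\langle f_k\rangle_{\Omd}.
\end{align*}
Applying $\UBF$ to both sides and using the convention from the definition of $\mathcal{H}_\lat^\oplus$ that a fiber supported on $\Br$ is extended by zero to $\Omd^*$, this reads
\begin{align*}
(\Proj f)_k \;=\; \langle f_k\rangle_{\Omd}\,\chi_{\Br}(k).
\end{align*}

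Next, I would invoke Lemma \ref{lem:hatf-L1}, which gives $\int_{\Omd} f_k(x)\,dx = \hat f(k)$ and hence $\langle f_k\rangle_{\Omd} = |\Omd|^{-1}\hat f(k)$; substituting this into the previous display yields the claimed formula $(\Proj f)_k = |\Omd|^{-1}\hat f(k)\,\chi_{\Br}(k)$. There is no real obstacle here: the whole content has already been unpacked in the two preceding lemmas, and the corollary is only a matter of choosing $S = \Br$ and rewriting the $\Omd$-average as a Fourier coefficient. The only point that requires a line of care is the convention that the fiber of a Bloch-Floquet decomposition restricted to $S \subset \Omd^*$ is zero on $\Omd^* \setminus S$, which justifies the factor $\chi_{\Br}(k)$ on the right-hand side.
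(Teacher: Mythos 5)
Your proof is correct and is exactly what the paper has in mind: the paper gives no written proof for the corollary, stating only that it follows from Lemmas \ref{lem:hatf-L1} and \ref{lem:P-decomposed-form}, and your argument fills in precisely those details (take $S = B(r)$ in Lemma \ref{lem:P-decomposed-form}, read off the fiber, then substitute $\lan f_k\ran_{\Omd} = |\Omd|^{-1}\hat f(k)$ from Lemma \ref{lem:hatf-L1}). The one point you flag as requiring care --- that a fiber integral over $\Br$ corresponds to a fiber vanishing on $\Omd^*\setminus\Br$, hence the factor $\chi_{\Br}(k)$ --- is indeed the only step that is not purely mechanical, and you handle it correctly.
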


 Any $\lat$-periodic operator $A$ has a Bloch-Floquet decomposition \cite{RS4} in the sense that
\begin{align}\label{BF-deco-opr} 
	A = \UBF^{-1} \int^\oplus_{\Om^*} d\hat k A_k \UBF ,
\end{align}
where $A_k$ are operators (called {\it $k$-fibers} of $A$) on $L^2_{\rm per}$ and the operator $\int^\oplus_{\Om^*} d\hat k \, A_k$ acts on $\int^\oplus_{\Om^*} d\hat k f_k \in \mathcal{H}_{\lat}^\oplus$ as
\begin{align}\label{fib-int-action}
	\int^\oplus_{\Om^*} d\hat k A_k \cdot \int^\oplus_{\Om^*} d\hat k f_k = \int^\oplus_{\Om^*} d\hat k A_kf_k.
\end{align}

Definitions \eqref{BF-deco-opr} and  \eqref{fib-int-action} implies the following relations for  any $\lat$-periodic operators $A$ and $B$ 
\begin{align}\label{Af-fib}
	&(A f)_k= A_k f_k,\\
\label{AB-fib}
	&(A B)_k= A_k B_k,\\
\label{A-Ak-norm}
	&\|A \|= \sup_{k\in\Om^*}\|A_k \| .
\end{align}
Furthermore, we have

\begin{lemma} \label{lem:Ak-repr} Let $A$ be an $\lat$-periodic operator and $A_k$, its  $k$-fibers in its Bloch-Floquet decomposition. Then \[A_k = e^{- ixk}A_{0} e^{ ixk}.\]
\end{lemma}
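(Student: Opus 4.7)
The plan is to use the inverse Bloch-Floquet formula \eqref{eqn:inverse-bloch} to write the action of $A$ on an arbitrary $f\in L^2$ as an integral of $A$ acting on Bloch waves $e^{ik\cdot x}f_k(x)$, and then invoke the $\lat$-periodicity of $A$ to show that this action preserves quasi-momentum, which identifies the fibers $A_k$ as a conjugation of $A_0$ by $e^{\pm ik\cdot x}$.

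First I would observe that if $u\in L^2_{\rm per}$, then the Bloch wave $e^{ik\cdot x}u(x)$ satisfies $T_s(e^{ik\cdot} u) = e^{-ik\cdot s}(e^{ik\cdot}u)$ for every $s\in\lat$, where $T_s$ is the translation in \eqref{eqn:translation-op-def}. Since $A$ is $\lat$-periodic, $[A, T_s]=0$, so $T_s A(e^{ik\cdot}u) = A T_s(e^{ik\cdot}u) = e^{-ik\cdot s} A(e^{ik\cdot}u)$. This forces the function $B_k u:=e^{-ik\cdot x} A(e^{ik\cdot}u)$ to be $\lat$-periodic, hence to define an operator $B_k: L^2_{\rm per}\to L^2_{\rm per}$ via the natural extension of $A$ to Bloch waves (which, for the operators actually used in the paper, is given by the local differential form of $-\Delta-\phi_{\rm per}$ and its functional calculus).

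Next, I apply $A$ to the inverse decomposition $f(x)=\int_{\Omd^*}d\hat k\, e^{ik\cdot x}f_k(x)$; pulling $A$ inside the fiber integral gives
\begin{align*}
(Af)(x) = \int_{\Omd^*} d\hat k\, A(e^{ik\cdot}f_k)(x) = \int_{\Omd^*} d\hat k\, e^{ik\cdot x}(B_k f_k)(x).
\end{align*}
Comparing this with \eqref{eqn:inverse-bloch} and using uniqueness of the Bloch-Floquet decomposition together with \eqref{Af-fib} identifies $A_k f_k = B_k f_k$, i.e.\ $A_k = B_k = e^{-ik\cdot x} A\, e^{ik\cdot x}$. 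Specializing $k=0$ in the definition of $B_k$ gives $A_0 u = Au$ for periodic $u$, so $A_0$ is the restriction of $A$ to $L^2_{\rm per}$; substituting back into $A_k = B_k$ yields the desired identity $A_k = e^{-ix\cdot k}A_0 e^{ix\cdot k}$.

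The main obstacle is interpretational rather than technical: multiplying $u\in L^2_{\rm per}$ by $e^{ik\cdot x}$ leaves $L^2_{\rm per}$, so the composition $A_0 e^{ik\cdot x}$ requires extending $A_0$ beyond periodic functions. For periodic differential operators and their resolvents (the only cases needed in the sequel), this extension is canonical because the same local formula defines the action on any $H^2_{\rm loc}$ function. With this understanding the identity is genuine and no further estimate is required.
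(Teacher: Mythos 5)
Your argument is correct, and it runs the calculation in the opposite direction from the paper's. The paper works from the direct Bloch--Floquet formula \eqref{fk}, writing $(Af)_k=\sum_{t\in\lat}e^{-ik(x+t)}T_{-t}Af$, commuting $A$ past $T_{-t}$ by periodicity, and recognizing the result as $e^{-ikx}A_0 e^{ikx}f_k$, whence the claim follows from $(Af)_k=A_kf_k$. You instead start from the inverse formula \eqref{eqn:inverse-bloch}, push $A$ through the fiber integral, and read off the fiber by uniqueness of the decomposition together with \eqref{Af-fib}. The two routes are essentially dual and comparably short, but yours has the merit of making explicit the one genuine subtlety present in either argument: the composition $A_0\,e^{ik\cdot x}$ only makes sense once $A$ (equivalently $A_0$) has been extended to act on Bloch waves rather than merely on $L^2$ or $L^2_{\rm per}$. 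The paper crosses the same bridge silently in the step where $A$ applied to $e^{ikx}f_k$ is replaced by $A_0\,e^{ikx}f_k$; you isolate the point, note that the extension is canonical for the periodic differential operators and resolvents actually used, and thereby clarify what the lemma is really asserting.
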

\begin{proof}
We compute $(Af)_k$. Let $T_s$ denote the translation operator \eqref{eqn:translation-op-def}. Let $A_0$ denote the $0$-th fiber of $A$ in its Bloch-Floquet decomposition. By \eqref{fk} and the periodicity of $A$,
\begin{align}
	(Af)_k =& \sum_{t\in \lat}  e^{- ik(x+t)} T_{-t} Af = 
	\sum_{t\in \lat}  e^{- ikx} A e^{- i kt} T_{-t} f \\
			=& e^{- ikx} A_0  e^{ ikx} \sum_{t\in \lat}  e^{- i k(x+t)} T_{-t} f \\
			=& e^{- ikx} A_0  e^{ ikx} f_k.
\end{align}
\end{proof}

Now, we have the following result.

\begin{lemma} \label{lem:PAP-repr'}  
Let $A$ be an $\lat$-periodic operator and $A_k$, its  $k$-fibers in its Bloch-Floquet decomposition and let $r$ be such that $B(r)\subset \Om^*$. Then
 \begin{align}\label{PAP-explicit-form}	\Proj A  \Proj = b(-i \nabla) \Proj 
\end{align}
where $b (k) = \lan A_k \one\ran_{\Om}$, $1 \in L^2_{\rm per}(\R^3)$ is the constant function $1$.
\end{lemma}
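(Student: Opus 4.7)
The plan is to verify \eqref{PAP-explicit-form} by computing the Bloch--Floquet fibers of both sides on an arbitrary $f\in L^2$ and then invoking unitarity of the BF decomposition. Both operators are $\lat$-periodic ($A$ by hypothesis, $P_r$ as a Fourier multiplier, and $b(-i\n)$ likewise), so the fiber calculus \eqref{Af-fib}--\eqref{AB-fib} applies, and it suffices to show $(P_r A P_r f)_k = (b(-i\n)P_r f)_k$ for a.e.\ $k\in\Om^*$.

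For the left-hand side I apply $P_r$, then $A$, then $P_r$ again. Corollary \ref{cor:Pf-computation} gives $(P_r f)_k = |\Om|^{-1}\hat f(k)\chi_{B(r)}(k)\,\one$, a scalar multiple of the constant function $\one\in L^2_{\rm per}$. By \eqref{Af-fib}, $(A P_r f)_k = A_k (P_r f)_k = |\Om|^{-1}\hat f(k)\chi_{B(r)}(k)\,A_k\one$. A second application of $P_r$ is handled by Lemma \ref{lem:P-decomposed-form}, which replaces each $k$-fiber by $\chi_{B(r)}(k)$ times its $\Om$-average:
\[
(P_r A P_r f)_k \;=\; \chi_{B(r)}(k)\,\bigl\langle (AP_r f)_k\bigr\rangle_\Om\,\one \;=\; \chi_{B(r)}(k)\,b(k)\,|\Om|^{-1}\hat f(k)\,\one,
\]
using the defining identity $b(k)=\lan A_k \one\ran_\Om$.

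For the right-hand side, $b(-i\n)$ is translation-invariant and hence $\lat$-periodic, and Lemma \ref{lem:Ak-repr} identifies its $k$-th fiber as $e^{-ikx}b(-i\n)e^{ikx} = b(k-i\n)$ acting on $L^2_{\rm per}$. Since $-i\n\one = 0$, one has $b(k-i\n)\one = b(k)\one$, so
\[
(b(-i\n)P_r f)_k \;=\; b(k-i\n)(P_r f)_k \;=\; \chi_{B(r)}(k)\,b(k)\,|\Om|^{-1}\hat f(k)\,\one,
\]
which matches the fiber of the left-hand side for every $f\in L^2$. Applying $\UBF^{-1}$ yields \eqref{PAP-explicit-form}.

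There is no real obstacle: the argument is a direct bookkeeping exercise in the BF calculus. The decisive structural point is that every fiber of $P_r f$ is a scalar multiple of $\one$, which is exactly what is needed for the scalar symbol $b(k) = \lan A_k \one \ran_\Om$ to capture the whole sandwich $P_r A P_r$. The only care required is to use Corollary \ref{cor:Pf-computation} and Lemma \ref{lem:P-decomposed-form} consistently, and to recognize the shifted multiplier $b(k-i\n)$ as the natural fiber of a translation-invariant operator on $L^2_{\rm per}$.
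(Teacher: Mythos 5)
Your proof is correct and follows essentially the same path as the paper's: you compute the fiber of $\Proj A \Proj f$ by combining Corollary \ref{cor:Pf-computation} (each fiber of $\Proj f$ is $|\Om|^{-1}\hat f(k)\chi_{\Br}(k)\one$), the fiber relation \eqref{Af-fib}, and Lemma \ref{lem:P-decomposed-form} for the outer $\Proj$, arriving at $\chi_{\Br}(k)b(k)|\Om|^{-1}\hat f(k)\one$, which is exactly the paper's intermediate expression. The only (cosmetic) difference is the last step — the paper simply unpacks $\UBF^{-1}$ via \eqref{eqn:inverse-bloch} to recognize the result as $b(-i\n)\Proj f$, whereas you instead fiber the right-hand side through Lemma \ref{lem:Ak-repr}; this quietly assumes the $0$-fiber of the multiplier $b(-i\n)$ is $b(-i\n)\big|_{L^2_{\rm per}}$, which is true but not recorded in that lemma, so the paper's direct unpacking is marginally tighter.
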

\begin{proof}  Let $f_k$ be the $k$-th fiber of the Bloch-Floquet function $f$. 
\DETAILS{Since $\Udel^* \Proj \Udel = P_{\del r}$, we have $\Proj A_\delta \Proj =\Udel^* P_{\del r} A P_{\del r}\Udel$. Using 
 Lemma \ref{lem:P-decomposed-form}, with $S = B(\del r)$ and $f = A P_{\del r} \varphi$, for the r.h.s. gives
 \begin{align}\label{PAP-repr'}
	\Proj A_\delta \Proj \varphi &= \Udel^* \UBF^{-1} \int^{\oplus}_{\Omd^*} d\hat k \,   \lan (A P_{\del r} \Udel\varphi)_k \ran_{\Om}.
	\end{align}
By Corollary \ref{cor:Pf-computation} and equation \eqref{PAP-repr'}, we find
 \begin{align}
	\Proj A_\delta \Proj \varphi &= |\Omde|^{-1}  \UBF^{-1} \int^{\oplus}_{\Br} d\hat k \,   \lan A_k 1 \ran_{\Omd}  \hat  (\Udel\varphi)(k)\end{align}
where $1 \in \LperB$ is the constant function equal to $1$. Applying the inverse Bloch-Floquet transform \eqref{eqn:inverse-bloch} and using $d \hat k = |\Omd|^{-1} dk$, we see that
\begin{align}
	\Proj A_\delta \Proj \varphi &= |\Omd|^{-1} \UBF^{-1} \int_{\Br}^\oplus d\hat k \,  e^{2\pi i k x} \lan (A_\delta)_k 1 \ran_{\Omd}  \hat \varphi(k)\\
		=& b_\delta(-i\n) \varphi \label{eqn:b-delta-i-n}
\end{align}
where $b_\delta(k) = \lan (A_\delta)_k 1 \ran_{\Omd}$, which gives \eqref{PAP-explicit-form}.}
We apply Lemma \ref{lem:P-decomposed-form} with $S = \Br$ and $f = A \Proj \varphi$ (so that $\chi_S(-i\n) = \Proj$) to obtain
 \begin{align}\label{PAP-repr'}
	\Proj A \Proj \varphi &= \UBF^{-1} \int^{\oplus}_{\Omd^*} d\hat k \,   \lan (A \Proj \varphi)_k \ran_{\Omd}.
	\end{align}
By Corollary \ref{cor:Pf-computation} and equation \eqref{PAP-repr'}, we find
 \begin{align}
	\Proj A \Proj \varphi &= |\Omd|^{-1}  \UBF^{-1} \int^{\oplus}_{\Br} d\hat k \,   \lan A_k 1 \ran_{\Omd}  \hat \varphi(k),\end{align}
where $1 \in \LperB$ is the constant function equal to $1$. Using the definition \eqref{eqn:inverse-bloch} of the inverse Bloch-Floquet transform  and that $d \hat k = |\Omd|^{-1} dk$, we deduce \eqref{PAP-explicit-form}.
\end{proof}

\subsection{Passing to the macroscopic variables} \label{sec:BF-decomp} 
Define the microscopic lattice $\latde:=\del \lat$ and let $\latde^*$ be its reciprocal lattice. Define the rescaling operator 
\begin{align}\label{rescU}
	\Udel: f(x) \mapsto \delta^{-3/2} f(\delta^{-1} x)
\end{align}
mapping from the microscopic to the macroscopic scale. A change of variable in \eqref{den-def1} gives the following

\begin{lemma} \label{lem:U-rho-UAUstar}
For any operator $A$ on $L^2 $, we have 
\begin{align}
	\delta^{-3/2}\Udel\den[A] = \den[\Udel A\Udel^*].
\end{align}
\end{lemma}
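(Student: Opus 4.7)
The identity is a routine change-of-variables calculation built on the defining property \eqref{den-def1} of the density map, combined with the fact that the rescaling $\Udel$ acting on a multiplication operator by conjugation simply rescales the multiplier. I would proceed as follows.

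First, I would pair both sides against an arbitrary test function $f\in C_c$ and use \eqref{den-def1}. Because $\Udel$ is unitary (as one checks from \eqref{rescU}: $\|\Udel g\|_{L^2}^2 = \delta^{-3}\!\int|g(\delta^{-1}x)|^2dx = \|g\|_{L^2}^2$), conjugation preserves the locally-trace-class property, so $\Udel A\Udel^*$ has a well-defined density, and
\begin{align*}
\int \den[\Udel A \Udel^*]\, f \;=\; \Tr(f\, \Udel A \Udel^*) \;=\; \Tr(\Udel^* f\, \Udel\, A),
\end{align*}
by cyclicity of the trace (applicable because the operators involved remain in the appropriate Schatten/trace classes).

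Second, I would compute the conjugated multiplication operator. A direct application of the formulas $\Udel g(x)=\delta^{-3/2}g(\delta^{-1}x)$ and $\Udel^* g(x)=\delta^{3/2}g(\delta x)$ shows that $\Udel^* f \Udel$ is simply multiplication by $f(\delta\,\cdot)$; indeed, for any $g\in L^2$,
\begin{align*}
(\Udel^* f\Udel g)(x) \;=\; \delta^{3/2}\bigl(f\Udel g\bigr)(\delta x) \;=\; \delta^{3/2}\, f(\delta x)\,\delta^{-3/2} g(x) \;=\; f(\delta x)\,g(x).
\end{align*}
Therefore, using \eqref{den-def1} again for $A$ (with the test function $f(\delta\,\cdot)\in C_c$),
\begin{align*}
\Tr(\Udel^* f \Udel\, A) \;=\; \int \den[A](x)\, f(\delta x)\, dx.
\end{align*}

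Third, I would perform the change of variables $y=\delta x$, $dx=\delta^{-3}dy$, obtaining
\begin{align*}
\int \den[\Udel A \Udel^*]\, f \;=\; \delta^{-3}\!\int f(y)\, \den[A](\delta^{-1}y)\, dy \;=\; \int \bigl(\delta^{-3/2}\Udel\den[A]\bigr)(y)\, f(y)\, dy,
\end{align*}
where the last equality is just the definition \eqref{rescU} applied to the function $\den[A]$. Since this holds for every $f\in C_c$, the Riesz representation characterization of densities recalled after \eqref{den-def1} identifies the two densities, completing the proof.

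The argument has no real obstacle; the only thing worth watching is the justification that $\Udel A\Udel^*$ is itself locally trace class with a well-defined density and that the cyclicity step is legitimate, but both follow immediately from unitarity of $\Udel$ and the assumption that $fA$ (hence $f\Udel A\Udel^* = (\Udel^* f\Udel) A$ up to conjugation) is trace class for every $f\in C_c$.
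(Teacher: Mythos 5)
Your proof is correct and is exactly the change-of-variables argument the paper is alluding to (the paper simply states ``a change of variable in \eqref{den-def1} gives the following'' and leaves the verification implicit). Testing against $f\in C_c$, using $\Udel^* f\Udel = f(\delta\,\cdot)$, and rescaling the integral is the intended route, and your remark that $f\Udel A = \Udel\,(\Udel^* f\Udel)A$ is trace class (so cyclicity of the trace is licit) closes the only point that needed care.
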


Finally, note that  \begin{align}A\ \text{ is  $\lat$-periodic\ 
iff\ $\Udel A \Udel^*$ be  $\latde$-periodic}.\end{align} 
Lemma \ref{lem:PAP-repr'} implies

\begin{lemma} \label{lem:PAP-repr-del}  
Let $A$ be an $\lat$-periodic operator and $A_k$, its  $k$-fibers in its Bloch-Floquet decomposition  and let $r$ be such that $B(\del r)\subset \Om^*$. Then 
\begin{align}\label{PAP-explicit-form}	
	\Proj \Udel A \Udel^* \Proj = b(-i\delta\nabla) \Proj 
\end{align}
where $b (k) = \lan A_k \one\ran_{\Om}$, $1 \in L^2_{\rm per} $ is the constant function $1$.
\end{lemma}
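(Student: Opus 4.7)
The plan is to reduce Lemma \ref{lem:PAP-repr-del} to the non-rescaled Lemma \ref{lem:PAP-repr'} by conjugating with the unitary rescaling $\Udel$. Introduce the companion cutoff $\Proj^\delta := \chi_{B(\delta r)}(-i\nabla)$; the hypothesis $B(\delta r)\subset \Om^*$ is precisely the condition ``$B(r')\subset\Om^*$'' needed to invoke Lemma \ref{lem:PAP-repr'} on the $\lat$-periodic operator $A$ with cutoff radius $r'=\delta r$. A direct Fourier-side calculation, starting from the scaling laws $\widehat{\Udel^*\varphi}(k) = \delta^{-3/2}\hat\varphi(k/\delta)$ and $\widehat{\Udel\varphi}(k)=\delta^{3/2}\hat\varphi(\delta k)$, establishes the two intertwining identities
$$\Proj\,\Udel = \Udel\,\Proj^\delta, \qquad \Udel\,b(-i\nabla)\,\Udel^* = b(-i\delta\nabla),$$
for any measurable symbol $b$. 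Specializing the second to $b=\chi_{B(\delta r)}$ and using the scaling $\chi_{B(\delta r)}(\delta k)=\chi_{B(r)}(k)$ also yields $\Udel\,\Proj^\delta\,\Udel^* = \Proj$.

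Equipped with these intertwiners, I would rewrite
$$\Proj\,\Udel A \Udel^*\,\Proj \;=\; \Udel\,\bigl(\Proj^\delta A \Proj^\delta\bigr)\,\Udel^*,$$
invoke Lemma \ref{lem:PAP-repr'} with the larger cutoff radius $\delta r$ to obtain $\Proj^\delta A \Proj^\delta = b(-i\nabla)\,\Proj^\delta$, where $b(k) = \lan A_k\,\one\ran_\Om$ coincides with the symbol in the statement (the fibers of $A$ being independent of the choice of cutoff), and finally conjugate by $\Udel$:
$$\Udel\,b(-i\nabla)\,\Proj^\delta\,\Udel^* \;=\; \bigl(\Udel\, b(-i\nabla)\,\Udel^*\bigr)\bigl(\Udel\,\Proj^\delta\,\Udel^*\bigr) \;=\; b(-i\delta\nabla)\,\Proj,$$
which is \eqref{PAP-explicit-form}. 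The only real work lies in verifying the two intertwining identities: these are elementary Fourier-side computations, but one must carefully track the powers of $\delta$ that appear through $\widehat{\Udel\varphi}$ and $\widehat{\Udel^*\varphi}$. Once they are secured, the proof is a one-line unitary conjugation of the preceding lemma and presents no further obstacle — the content is exactly the dimensional conversion between the microscopic and macroscopic scales encoded by $\Udel$.
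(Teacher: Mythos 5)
Your proposal is correct and is essentially the paper's own proof: both conjugate by $\Udel$ to reduce to Lemma \ref{lem:PAP-repr'} applied with cutoff radius $\delta r$, using the identities $\Udel^*\Proj\Udel = P_{\delta r}$ and $\Udel\,b(-i\nabla)\,\Udel^* = b(-i\delta\nabla)$. You merely spell out the Fourier-side verifications of those intertwining relations, which the paper takes as given.
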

\begin{proof} 
 By $\Udel^* \Proj \Udel = P_{\del r}$ and  Lemma \ref{lem:PAP-repr'}, we have 
\[
	\Proj \Udel A \Udel^* \Proj  =\Udel P_{\del r} A P_{\del r}\Udel^* = \Udel  b(-i \nabla) P_{\del r}\Udel^*.
\]
Relations $\Udel P_{\del r} \Udel^* = \Proj$ and $\Udel b(-i \nabla) \Udel^* = b(-i\delta\nabla)$ yield \eqref{PAP-explicit-form}. 
\end{proof}

\section{Dielectric Response: Proof of Theorem \ref{thm:PE-macro}} \label{sec:pf:thm:PE-macro}
In this section, we prove Theorem \ref{thm:PE-macro} modulo several  technical (though important) statements proved in Sections \ref{sec:oper-ell} and \ref{sec:nonlin-est-rough}. 

\subsection{Linearized Map}
Our starting point is  equation \eqref{phi-only-eq}, which we reproduce here 
\begin{align} \label{eqn:new-phi-only-eqn'}
	-\Delta \phi =  \ka - \den[\ft(\hphi{\phi}-\mu)],
\end{align}
where, recall, $f_{T}(\lam)$ is given in \eqref{fT}  and, recall,  
\begin{align}
	\hphi{\phi} := -\Delta - \phi. 
\end{align} 
We consider \eqref{eqn:new-phi-only-eqn'} on the function space $\phi \in H_{\rm per}^2 + \dot{H}^1 $. For such $\phi$'s, the operator $\hphi{\phi}$ is self-adjoint and bounded below so that functions of $\hphi{\phi}$ above are well-defined by the spectral theory.  

Our first step is to investigate the linearization of the map on the r.h.s. of \eqref{eqn:new-phi-only-eqn'} 
\begin{align} \label{M-def} 
	&M:= d_\phi \den[\ft (\hphi{\phi}  - \mu)]\big|_{\phi=\phi_{\rm per}}.
 \end{align}

To derive basic properties of $M$, we find an explicit formula for it. 
Recalling the relation $\ft(\lam):=f_{FD}(\lam/T)$, 
 see \eqref{fT},  and assuming that $\phi$ is close to $\phi_{\rm per}$, we write $\ft(\hphi{\phi}  - \mu)$ using the Cauchy-integral formula 
\begin{align}
	\ft(\hphi{\phi}  - \mu) = \frac{1}{2\pi i} \int_\Gamma dz \ft(z-\mu) (z-\hphi{\phi})^{-1} \label{fFD-cauchy-formula}
\end{align}
where $\Gamma$ is a positively oriented contour around the spectrum of $\hphi{\phi}$ 
not containing the poles of $\ft$ which are located at $\mu + i\pi (2k+1) T$, $k \in \Z$ (see Figure \ref{fig:cauchy-int-contour'} below), 
 in which $\e$ satisfies 
\begin{align} \label{eps-bet-mu-ineq} \e < T \pi\ \text{ and }\ -1 < \cos(\mu \epsilon). \end{align}
Here we use that $h^\phi$ is bounded from below 
and, due to the definition $\ft(\lam) = \frac{1}{e^{\lam/T}+1}$ (see \eqref{fT}) and the relation $ |\Im z|\le \pi/4 T$,  
\begin{align}
	& |\ft(z-\mu)| \ls \min(1, e^{-(\Re z-\mu)/T})) \label{oint-decay}	
\end{align}
assuring the convergence of the integral.  (Note that we do not use that $h^\phi$  has a gap and that $\mu$ is in the gap.) 

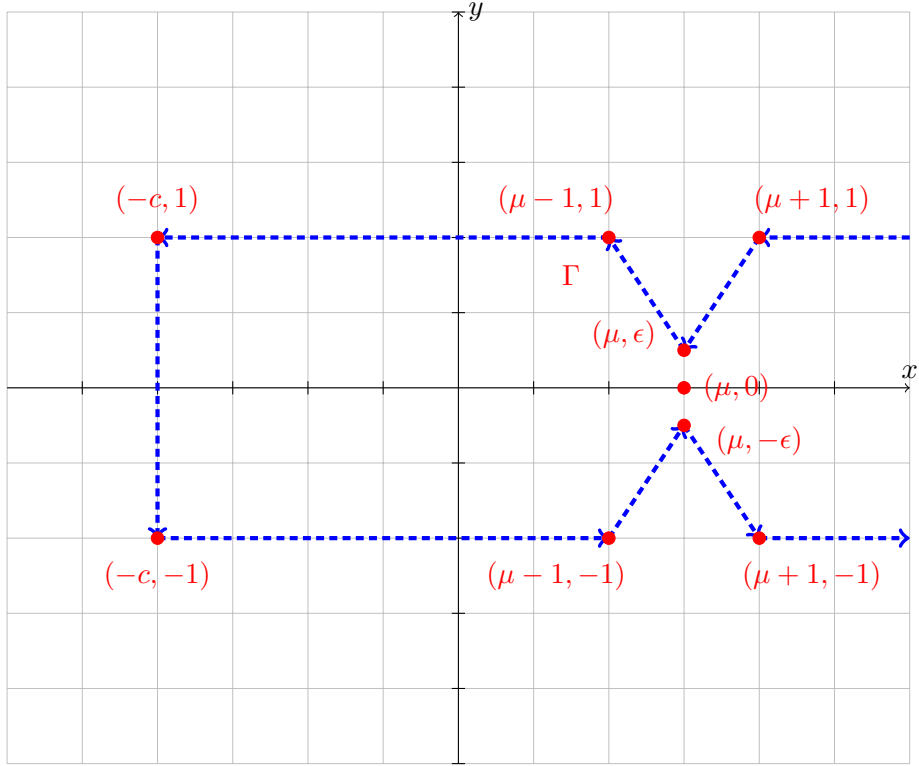
\begin{figure}[ht] 
\begin{center}
\begin{tikzpicture}
    \path [draw, help lines, opacity=.5]  (-6,-5) grid (6,5);
    \foreach \i in {1,...,5} \draw (\i,2.5pt) -- +(0,-5pt) node [anchor=north, font=\small] {} (-\i,2.5pt) -- +(0,-5pt) node [anchor=north, font=\small] {} (2.5pt,\i) -- +(-5pt,0) node [anchor=east, font=\small] {} (2.5pt,-\i) -- +(-5pt,0) node [anchor=east, font=\small] {};
    \draw [->] (-6,0) -- (6,0) node [anchor=south] {$x$};
    \draw [->] (0,-5) -- (0,5) node [anchor=west] {$y$};
		\draw [->, ultra thick, color=blue, densely dashed] (6,2) node {} -- (4,2) node {};
		\draw [->, ultra thick, color=blue, densely dashed] (4,2) node {} -- (3,0.5) node {};
		\draw [->, ultra thick, color=blue, densely dashed] (3,0.5) node {} -- (2,2) node {};
    \draw [->, ultra thick, color=blue, densely dashed] (2,2) node {} -- (-4,2) node {};
		\draw [->, ultra thick, color=blue, densely dashed] (-4,2) node {} -- (-4, -2) node {};
		\draw [->, ultra thick, color=blue, densely dashed] (-4, -2) node {} -- (2, -2) node {};
		\draw [->, ultra thick, color=blue, densely dashed] (2, -2) node {} -- (3, -0.5) node {};
		\draw [->, ultra thick, color=blue, densely dashed] (3, -0.5) node {} -- (4, -2) node {};
		\draw [->, ultra thick, color=blue, densely dashed] (4, -2) node {} -- (6, -2) node {};
		\node[color=red] at (-4,2.5) {$(-c,  1)$};
		\draw[red, fill=red] (-4,2) circle (.5ex);
		\node[color=red] at (-4,-2.5) {$(-c, - 1)$};
		\draw[red, fill=red] (-4,-2) circle (.5ex);
		\node[color=red] at (3.7,0) {$(\mu, 0)$};
		\draw[red, fill=red] (3,0) circle (.5ex);
		\node[color=red] at (2.2, 0.7) {$(\mu, \epsilon)$};
		\draw[red, fill=red] (3,0.5) circle (.5ex);
		\node[color=red] at (4,-0.7) {$(\mu, -\epsilon)$};
		\draw[red, fill=red] (3,-0.5) circle (.5ex);
		\node[color=red] at (1.3,2.5) {$(\mu - 1, 1)$};
		\draw[red, fill=red] (2,2) circle (.5ex);
		\node[color=red] at (4.7,2.5) {$(\mu + 1, 1)$};
		\draw[red, fill=red] (4,2) circle (.5ex);
		\node[color=red] at (1.3,-2.5) {$(\mu - 1, -1)$};
		\draw[red, fill=red] (2,-2) circle (.5ex);
		\node[color=red] at (4.7,-2.5) {$(\mu + 1, -1)$};
		\draw[red, fill=red] (4,-2) circle (.5ex);
		\node[color=red] at (1.5,1.5) {$\G$};
  \end{tikzpicture}
\end{center}
\caption{The contour $\G$. It depends on $\epsilon$ satisfying \eqref{eps-bet-mu-ineq}. } \label{fig:cauchy-int-contour}
\end{figure}

To simplify the expressions below, we will introduce the following notation
\begin{align}
	\oint :=& \frac{1}{2\pi i} \int_\Gamma dz \ft(z-\mu) \label{eqn:oint-adhoc} 
\end{align}
where $\Gamma$ is the contour given in Figure \ref{fig:cauchy-int-contour},  with the positive orientation.

Recall the notation for the $\lat$-periodic Hamiltonian and introduce one for the $\lat$-periodic resolvent:
 \begin{align} \label{rper-def}	
h_{\rm per}:= h^{\phi_{\rm per}}=  -\Delta -\phi^{\rm per}, \quad r_{\rm per}(z) = (z-h_{\rm per})^{-1}.
\end{align}
By Theorem \ref{thm:ideal-cryst-exist}, the electrostatic potential, $\phi_{\rm per}(y)$ associated with the solution $\rho_{\rm per}(y)$ (c.f. \eqref{phi-rho-expr}) 
  satisfies
   \begin{align}\label{A:crystal-structure} 	\phi_{\rm per} \in H^2_{\rm per}. 
  \end{align}
  Hence the operator $\hper$ is self-adjoint and the operator functions above are well-defined.  Moreover, under Assumption  \ref{A:diel}, 
\begin{align}\label{eqn:resolvent-bound-on-G}
	\sup_{z\in \G} \|(z-h_{\rm per})^{-1}\|_\infty = O(1). 
\end{align}
Finally, for any operator $h$, we denote $h^L: \al \ra h\al$ and $h^R: \al \ra \al h$. 

 The next proposition gives an explicit form for $M$ and states its  properties
  (also see \cite{CL}).

\begin{proposition} \label{pro:M-expl}
Let Assumption \ref{A:diel} hold. Then 
\begin{enumerate}
\item The operator $M$ has the following explicit representation  
\begin{align} 
\label{Mexpl} & M f = - \den \big[ \oint  r_{\rm per}(z)   f  r_{\rm per}(z)   \big] \\ 
 \label{Mexpl2'}=- &\frac12   \den \big[ \frac{\tanh(\frac{1}{2 T}(h_{\rm per}^L-\mu)) - \tanh(\frac{1}{2 T}(h_{\rm per}^R-\mu))}{h_{\rm per}^L- h_{\rm per}^R}f \big],	\end{align}
where $f \in L^2 $ on the right hand side is considered as a multiplication operator. 

\item  The operator $M$  is bounded,
 self-adjoint, positive on $L^2 $ and $\lat$-periodic (c.f. Section \ref{sec:densities}) and satisfies
 \begin{align}\label{M-bnd}
	\|M\| \ls 1. 
\end{align} 
\end{enumerate}
\end{proposition}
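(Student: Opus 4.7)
The plan is to derive the two explicit formulas for $M$ from the Cauchy representation \eqref{fFD-cauchy-formula}, and then read off the four properties in part (2) directly from those formulas.

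To obtain \eqref{Mexpl}, I would replace $\phi$ by $\phi+tf$ in \eqref{fFD-cauchy-formula}, differentiate in $t$ at $t=0$, and pass the derivative through the contour integral and through $\den$. Since $\hphi{\phi+tf}=\hphi{\phi}-tf$, the standard resolvent identity gives
\[
\frac{d}{dt}\bigl(z-\hphi{\phi+tf}\bigr)^{-1}\bigg|_{t=0} = -\,r_{\rm per}(z)\,f\,r_{\rm per}(z),
\]
so applying $\den$ and evaluating at $\phi=\phi_{\rm per}$ produces \eqref{Mexpl}. The interchange of $d/dt$ with $\oint$ is justified by the uniform bound \eqref{eqn:resolvent-bound-on-G} together with the exponential decay \eqref{oint-decay} of $\ft(z-\mu)$ on $\Gamma$. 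For \eqref{Mexpl2'}, I would note that the left- and right-multiplication operators $h_{\rm per}^L$ and $h_{\rm per}^R$ commute and rewrite $r_{\rm per}(z)\,\cdot\,r_{\rm per}(z) = (z-h_{\rm per}^L)^{-1}(z-h_{\rm per}^R)^{-1}$ acting on $f$; the partial-fraction identity
\[
\frac{1}{(z-h_{\rm per}^L)(z-h_{\rm per}^R)} = \frac{1}{h_{\rm per}^L - h_{\rm per}^R}\!\left[\frac{1}{z-h_{\rm per}^L} - \frac{1}{z-h_{\rm per}^R}\right]
\]
together with applying \eqref{fFD-cauchy-formula} to each factor and the algebraic identity $\ft(x)=\tfrac12(1-\tanh(x/(2T)))$ converts \eqref{Mexpl} into \eqref{Mexpl2'}.

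For part (2), the $\lat$-periodicity is immediate: $h_{\rm per}$ commutes with every lattice translation $T_s$ ($s\in\lat$), so $r_{\rm per}(z)$ does as well, and the equivariance $\den[T_sAT_s^{-1}]=T_s\den[A]$ of the density functional yields $M\circ T_s=T_s\circ M$. Self-adjointness and positivity are cleanest in a spectral representation. Expanding formally in an eigensystem $(\lam_k,\psi_k)$ of $h_{\rm per}$ (made rigorous via the Bloch-Floquet decomposition of Section~\ref{sec:BF-decomp}), the integral kernel of $M$ takes the form
\[
M(x,y) \;=\; -\sum_{k,m}\frac{\ft(\lam_k-\mu)-\ft(\lam_m-\mu)}{\lam_k-\lam_m}\,\psi_k(x)\overline{\psi_m(x)}\,\overline{\psi_k(y)}\psi_m(y),
\]
which is manifestly invariant under $(x,y)\leftrightarrow(y,x)$ and under complex conjugation, hence self-adjoint on $L^2$. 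The divided differences $[\ft(\lam_k-\mu)-\ft(\lam_m-\mu)]/(\lam_k-\lam_m)$ are nonpositive (interpreted as $\ft'(\lam_k-\mu)$ on the diagonal) because $\ft$ is strictly decreasing, so
\[
\langle f,Mf\rangle \;=\; -\sum_{k,m}\frac{\ft(\lam_k-\mu)-\ft(\lam_m-\mu)}{\lam_k-\lam_m}\,\bigl|\langle\psi_k,f\psi_m\rangle\bigr|^2 \;\ge\; 0,
\]
giving positivity.

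The boundedness estimate \eqref{M-bnd} is the step requiring the most care, primarily because the constant must be independent of $T$. My plan is to apply Lemma~\ref{lem:rhoA-by-tracial-norm}(1) to bound
\[
\|Mf\|_{L^2}\;\lesssim\;\Bigl\|(1-\Delta)^{3/4+\epsilon'}\!\oint r_{\rm per}(z)\,f\,r_{\rm per}(z)\,dz\Bigr\|_{S^2},
\]
commute $(1-\Delta)^{3/4+\epsilon'}$ past one resolvent using that $(1-\Delta)^s r_{\rm per}(z)$ is bounded with at most polynomial growth in $|z|$ (obtained by writing $-\Delta=h_{\rm per}+\phi_{\rm per}$ and invoking $\phi_{\rm per}\in L^\infty$, which follows from $\phi_{\rm per}\in H^2_{\rm per}$ in Assumption~\ref{A:diel} via Sobolev embedding), estimate $\|f(1-\Delta)^{-3/4-\epsilon'}\|_{S^2}\lesssim\|f\|_{L^2}$ by the Kato-Seiler-Simon inequality \eqref{eqn:KSS}, and integrate against $|\ft(z-\mu)|\,|dz|$ on $\Gamma$ using the decay \eqref{oint-decay} together with $\|r_{\rm per}(z)\|_\infty=O(1)$ on $\Gamma$ (the contour stays a fixed distance from $\sigma(h_{\rm per})$ thanks to the gap $\eta=1$ in \eqref{eta-R3}). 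The hard part is that a naive Minkowski estimate yields a bound that grows with $T$; to extract the $T$-uniform constant in \eqref{M-bnd} I would exploit the alternative integrated form $[\ft(a)-\ft(b)]/(a-b)=\int_0^1\ft'(ta+(1-t)b)\,dt$ together with $\int|\ft'|=1$, so that the $1/T$ bound on $\|\ft'\|_\infty$ is compensated by the effective support size of $\ft'$. I expect this $T$-uniformity to be the principal obstacle.
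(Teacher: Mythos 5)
Your part (1) derivation matches the paper's: differentiate the Cauchy representation, use the partial-fraction / second resolvent identity, and convert to the $\tanh$ form. (A small point in your favor: the identity you use, $\ft(x)=\tfrac12\bigl(1-\tanh(x/(2T))\bigr)$, is the correct one; the paper's proof writes $\ft(\lambda)=\tfrac12\bigl(1+\tanh(\lambda/2T)\bigr)$, which is a sign slip, but the paper's stated conclusion \eqref{Mexpl2'} is what your sign produces after reconciling conventions.) Your periodicity argument is also identical to the paper's.

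Where you genuinely diverge is in the proof of self-adjointness and positivity. You expand in a formal eigensystem of $h_{\rm per}$ and then invoke Bloch--Floquet to ``make it rigorous''; the paper instead introduces $\alpha_f:=(c+h_{\rm per})^{-1/2}\,f\,(c+h_{\rm per})^{-1/2}$, shows $\alpha_f\in S^2$ with $\|\alpha_f\|_{S^2}\lesssim\|f\|_{L^2}$ via Kato--Seiler--Simon, and rewrites $\langle f,Mg\rangle=\Tr\bigl(\alpha_f^*\,G(h_{\rm per}^L,h_{\rm per}^R)\,\alpha_g\bigr)$ for an explicit bounded, real, nonnegative kernel $G$, from which boundedness, symmetry, self-adjointness, and positivity all drop out at once by the spectral theorem on $S^2$. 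Both exploit exactly the same core fact --- the divided difference of a decreasing $\ft$ is nonpositive --- but your eigenfunction expansion as written is only formal: $h_{\rm per}$ on $L^2(\R^3)$ has purely absolutely continuous band spectrum and no eigenfunctions, so one must first pass to the fibers $h_{{\rm per},k}$ on $L^2_{\rm per}$ and sum over both band index and quasi-momentum, at which point you would effectively be reconstructing the $\alpha_f$/$S^2$ machinery. The paper's route avoids this entirely and also gives qualitative boundedness for free; you should view it as the cleaner way to rigorize the same heuristic.

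For the $T$-uniform bound \eqref{M-bnd}, your plan again matches the paper's (Lemma \ref{lem:rhoA-by-tracial-norm}(1), commute $(1-\Delta)$ past one resolvent using $\phi_{\rm per}\in L^\infty$, Kato--Seiler--Simon, then integrate against $|\ft(z-\mu)|\,|dz|$ over $\Gamma$). However, your closing worry about $T$-uniformity, and the proposed fix via $\tfrac{\ft(a)-\ft(b)}{a-b}=\int_0^1\ft'(ta+(1-t)b)\,dt$ and $\int|\ft'|=1$, is tangential here: that identity lives in the $G(h^L,h^R)$ functional-calculus picture, not in the contour-integral picture you actually set up, and the paper does not need it. The contour $\Gamma$ stays at distance $\ge 1$ from $\sigma(h_{\rm per})$ by the rescaled gap \eqref{eta-R3} (including at the dip near $\mu$, since $\mu$ is at distance $\ge 1$ from the spectrum), so $\|r(z)\|=O(1)$ on $\Gamma$; combined with the exponential decay \eqref{oint-decay} this makes $\absoint$ and the full estimate uniformly controlled. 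In short: your plan is correct in structure, but the positivity step needs the $\alpha_f$-regularization (or a genuine Bloch--Floquet reduction) to be a proof rather than a sketch, and the proposed divided-difference fix for $T$-uniformity is unnecessary and does not fit the route you have chosen.
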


\begin{proof}[Proof of Proposition \ref{pro:M-expl}]
 In this proof, we {\it omit the subscript} ``per'' in $h_{\rm per}$ and $r_{\rm per}(z)$. 
We begin with item (1). Equation \eqref{Mexpl} follows from definition \eqref{M-def}, the Cauchy formula  \eqref{fFD-cauchy-formula} 
and a simple differentiation of the resolvent. 

Now, we use \eqref{Mexpl} to derive  \eqref{Mexpl2'}. By the definition of $h^L$ and $h^R$ and the second resolvent identity, we have, for any operator $\alpha$,
\begin{align}\notag
    (z-h)^{-1}& \alpha (z-h)^{-1} =  (z-h^L)^{-1} (z-h^R)^{-1} \alpha  \\
  &=(h^L-h^R)^{-1} ((z-h^L)^{-1} - (z-h^R)^{-1}) \alpha . 
\end{align}
  Using the Cauchy integral formula and the definition \eqref{eqn:oint-adhoc} and the choice of  the contour $\Gamma$ i(see Figure \ref{fig:cauchy-int-contour}), we observe that 
\begin{align}\notag
    \oint &(z-h)^{-1} \alpha (z-h)^{-1}=  (h^L-h^R)^{-1} \\ &
    \notag \times\frac{1}{2\pi i} \int_\Gamma dz \ft(z-\mu)  ((z-h^L)^{-1} - (z-h^R)^{-1}) \alpha\\ \label{M-expr'}& =  (h^L-h^R)^{-1} (\ft(h^L-\mu)  - \ft(h^R-\mu)) \alpha. 
\end{align}
Now, by definition \eqref{fT}, $\ft(\lam):= \frac{1}{e^{\lam/T}+1}$ and therefore $\ft(\lam)=\frac{1}{2}(1+\tanh(\lam/2T))$. This relation, together with \eqref{M-expr'}, gives  
\begin{align} \label{eqn:formula-for-L-alpha}   
    \oint &(z-h)^{-1} \alpha (z-h)^{-1} \notag\\
    &= \frac{1}{2}\frac{\tanh(\frac{1}{2 T}(h^L-\mu)) - \tanh(\frac{1}{2 T}(h^R-\mu))}{h^L- h^R} \alpha. 
\end{align}
This, together with  \eqref{Mexpl},  gives \eqref{Mexpl2'}. Item (1) is now proved.

Now we prove item (2). 
Since $h = h_{\rm per}$ is self-adjoint and bounded below, we can pick $c>0$ 
 sufficiently large, s.t. $h\ge - c+1$. Then, in particular, $h+c$ is invertible and, for each function $f \in L^2(\R^3)$, we define the operator
\begin{align}
    \alpha_f := (c+h)^{-1/2}  f (c+h)^{-1/2}.
\end{align}
The Kato-Seiler-Simon inequality \eqref{eqn:KSS} shows that $\alpha_f$ is Hilbert-Schmidt and 
\begin{align}
	\|\alpha_f\|_{S^2} \ls \|f\|_{L^2} \label{eqn:I2-by-L2}
\end{align}
(the $S^2$  norm  is given in \eqref{Ip-norm'}). 
 Using \eqref{Mexpl}, together with \eqref{den-def1}, we write $\lan f, \LinMOp g \ran$\\ $ = -\oint \Tr( \bar f r (z)   g  r(z))$, which can be transformed to
\begin{align}
    \lan f, \LinMOp g \ran =& - \oint \Tr( \alpha_{f}^* (c+h)  r(z)  \alpha_{g}  r(z)  (c+h)) \, .
\end{align}
Moreover, by \eqref{eqn:formula-for-L-alpha}, we have that
\begin{align}
  \label{eqn:fMg-HS-form}   & \lan f, M  g \ran = \Tr\left( \alpha_{f}^* G(h^L, h^R) \alpha_{g} \right),\\
\label{eqn:temp-f-x-y-tanh} &G(x,y) := -\frac12 \frac{\tanh(\frac{1}{2 T}(x-\mu))-\tanh(\frac{1}{2 T}(y-\mu))}{(x+c)^{-1}-(y+c)^{-1}}. 
\end{align}
Since the function $G : \R^2 \rightarrow \R$
is bounded on the set $x,y \geq -c+1$, we see that $M$ is bounded due to \eqref{eqn:I2-by-L2} and \eqref{eqn:fMg-HS-form}. 

Moreover, we can also see from expressions \eqref{eqn:fMg-HS-form} - \eqref{eqn:temp-f-x-y-tanh} that $\LinMOp$ is symmetric since $G$ is real and $h^L$ and $h^R$ are self-adjoint in the space $S^2$. Since $M$ is bounded, it is self-adjoint. Since the function $G$ in \eqref{eqn:temp-f-x-y-tanh} is positive for $x,y \geq -c+1$, equation \eqref{eqn:fMg-HS-form} and spectral theorem on $S^2$ show that $\lan f, \LinMOp f \ran =\Tr\left( \alpha_{f}^* G(h^L, h^R) \alpha_{f} \right)> 0$ for any nonzero $f \in L^2(\R^3)$. This shows that $M$ is positive. 

 Finally, formula \eqref{Mexpl} and the fact $h = h_{\rm per}$ and $r = r_{\rm per}(z)$ are $\lat$-periodic show that $M$ is $\lat$-periodic.

 To prove bound \eqref{M-bnd}, we  use \eqref{Mexpl} and \eqref{rhoA-by-tracial-norm1} to find
\begin{align} 
\label{Mest1}  \|M f\|_{L^2}& \ls \|(1-\Delta)  \oint  r (z)   f  r (z)\|_{S^2}    \\ 
 \label{Mest2}&\ls \big|\oint\big|\|(1-\Delta)    r (z)\|^2  \| f (1-\Delta)^{-1}\|_{S^2}    \, .	\end{align}
Now,  writing $-\Delta=h-z+\phi_{\rm per}+z,$ for $z\in \Gam$,  and using the uniform boundedness of $\|\phi_{\rm per}\|_{H^2}$ which follows from Assumption \ref{A:diel}, we derive the estimate 
\begin{align}\label{Del-r-est}\|(1-\Delta) r(z)\|\ls\ & 1,\end{align}
which, together with \eqref{Mest2} and  the Kato-Seiler-Simon inequality \eqref{eqn:KSS}, gives  bound \eqref{M-bnd}. 
  The proof of Proposition \ref{pro:M-expl} is now complete. \end{proof}

\subsection{Scaling and splitting}\label{sec:scal}
This step is to pass   from the microscopic coordinate $y$ to the macroscopic one, $ x= \delta y$ passing to the macroscopic quantities (with superscripts $\delta$) which are related the microscopic quantities (with subscripts $\delta$) as 
\begin{align}\label{ka-phi-macro-micro-scaling}	& \ka^\delta = \delta^{-3/2}\Udel \ka_\delta,\ 
	\phi^\delta(x) = \delta^{1/2} (\Udel\phi_\delta)(x)= \delta^{-1} \phi_\delta(\delta^{-1}x ),\\ 
	&\ka_{\rm per}^\delta(x) := \delta^{-3} \ka_{\rm per}(\delta^{-1}x ) = (\delta^{-3/2}\Udel \ka_{\rm per})(x),
\end{align}
where $\Udel: f(x) \mapsto \delta^{-3/2} f(\delta^{-1} x)$, the $L^2(\R^3)$-unitary scaling map, see \eqref{rescU} (note that the $L^1$-norm, hence total charge, is preserved under this scaling). Let
\begin{align} 
	\ka^\delta(x) = \ka_{\rm per}^\delta(x) + \ka'(x) 
\end{align}
be the macroscopic perturbed background potential. Accordingly, we rescale equation \eqref{eqn:new-phi-only-eqn'}   by applying $\delta^{-3/2}\Udel$ to it. 
Using Lemma \ref{lem:U-rho-UAUstar} and relations $\Udel f_{T}(\hphi{\phi}-\mu)\Udel^*=f_{T}(\Udel \hphi{\phi}\Udel^*-\mu)$ and \[\Udel \hphi{\phi}\Udel^* = -\delta^2 \Delta - \delta \phi^\delta,\] 
we arrive at the rescaled electrostatic potential equation
\begin{align}
	\label{phi-del-eq}- &	\Delta \phi^\delta = 4\pi \ka^\delta - F_\delta(\phi^\delta),\\ 
	\label{F-del-phi}&F_\delta(\phi) =  4\pi \den[ \ft(-\delta^2 \Delta - \delta \phi - \mu)]. 
\end{align}
We will consider \eqref{phi-del-eq} on the space $H_{\rm per}^2 +\dot{H}^1 $. 

Let $\phi_{\rm per}^\delta = \delta^{1/2} \Udel \phi_{\rm per}$, where $\phi_{\rm per}$ is the periodic potential associated to the periodic solution $(\rho_{\rm per}, \mu_{\rm per}$) of \eqref{DFT-rho-eq} with periodic background charge $\ka_{\rm per}$ given in Theorem \ref{thm:ideal-cryst-exist}. We split the solution $\phi^\delta$ into the big part $\phi_{\rm per}^\delta$ and the fluctuation 
\begin{align}\label{varphi}
	\varphi \equiv \varphi^\delta:= \phi^\delta - \phi_{\rm per}^\delta .
\end{align}

We rewrite equation \eqref{phi-del-eq} by expanding the r.h.s. around $\phi_{\rm per}^\delta$ to obtain
\begin{align}\label{eqn:macro-eqn-varphi}
	K_\delta \varphi = 4\pi \kp + N_\delta(\varphi) 
\end{align}
where $N_\delta$ is defined by this expression and 
 \begin{align} 
 \label{KdeltaMdelta-def} K_\delta  =&  -\Delta +  M_\delta \, ,\ \text{ with }\	M_\delta  = d_\phi F_\delta(\phi_{\rm per}^\delta).
\end{align}
 Note that the inputs into this equation are $\phi_{\rm per},\ \mu=\mu_{\rm per}$ and $\kp$ (cf. \eqref{psi-eq}).

As was mentioned in the introduction, we prove Theorem \ref{thm:PE-macro} by decomposing $\varphi$ in \eqref{varphi} in small and  large momentum parts (c.f. \cite{CLS}). We use rough estimates for high momenta while we expand in $\delta$ and use a perturbation argument for low momenta.

We begin with a discussion of the linearized map, $K_\delta$.
Since we rescaled equation \eqref{DFT-rho-eq} by  applying $\delta^{-3/2}\Udel$ to it and rescaled the microscopic potentials via \eqref{ka-phi-macro-micro-scaling}, it follows that
\begin{align}
	F_\delta = \delta^{-3/2} \Udel \circ F \circ (\delta^{-1/2}\Udel^*) \label{eqn:F-delta-rescaling-rel}
\end{align}
where $F = F_{\delta = 1}$. Thus, by the definition of $M_\delta$ in \eqref{KdeltaMdelta-def} and the fact it is linear, it can be written as
\begin{align} \label{Mdelta-M-relation'}
	M_\delta  = \delta^{-2}\Udel M\Udel^* ,\end{align}
where $M:=M_{\del = 1}$ and is given by \eqref{M-def}.

Recall that an operator $A$ on $L^2(\R^3)$ is said to be $\lat$-periodic if and only if it commutes with the translations $T_s$ (see \eqref{eqn:translation-op-def}) by all lattice elements $s \in \lat$. As an immediate consequence of Proposition \ref{pro:M-expl}, representation \eqref{Mexpl}, and the rescaling \eqref{Mdelta-M-relation'}, we have the following result

\begin{proposition} \label{pro:Mdel-prop}
Let Assumption \ref{A:diel} hold. Then $M_\delta$ is $\latd$-periodic, positive (so that $K_\delta = -\Delta + M_\delta > -\Delta$), bounded on $L^2 $ with an $O(\delta^{-2})$ bound, and has the following representation
\begin{align} \label{Mexpl2}	
	M_\delta \varphi = - \delta \den\left[ \oint \rperDelta(z)\varphi \rperDelta(z)\right] \, , 
\end{align}
where the resolvent operator $\rperDelta(z)$ acting on $L^2(\R^3)$ is given by
\begin{align} \label{rperDelta-def} 
&\rperDelta(z) = (z-\hperDelta)^{-1},\ \quad \hperDelta = -\delta^2 \Delta - \delta \phi_{\rm per}^\delta.
\end{align}
\end{proposition}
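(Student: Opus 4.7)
The plan is to derive every assertion from the rescaling identity \eqref{Mdelta-M-relation'}, namely $M_\delta=\delta^{-2}\Udel M\Udel^*$, combined with Proposition \ref{pro:M-expl}. The three abstract properties (periodicity, positivity, norm bound) then require only one-line arguments. Since $\Udel$ is unitary and $\Udel T_s \Udel^*=T_{\delta s}$ for the translations defined in \eqref{eqn:translation-op-def}, the $\lat$-periodicity of $M$ transfers to $\latd=\delta\lat$-periodicity of $\Udel M\Udel^*$, and scaling by $\delta^{-2}$ preserves this. Positivity of $M$ is preserved under unitary conjugation and multiplication by the positive scalar $\delta^{-2}$, giving $M_\delta>0$ and hence $K_\delta=-\Delta+M_\delta>-\Delta$. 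Finally, $\|M_\delta\|=\delta^{-2}\|\Udel M\Udel^*\|=\delta^{-2}\|M\|\lesssim \delta^{-2}$ by \eqref{M-bnd}.

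For the representation \eqref{Mexpl2}, I would track the scaling factors through \eqref{Mexpl} using three ingredients. First, a direct computation from the definition \eqref{rescU} yields $\Udel(-\Delta)\Udel^*=-\delta^2\Delta$ and, on multiplication operators, $\Udel \phi_{\rm per}\Udel^*=\phi_{\rm per}(\cdot/\delta)=\delta\,\phi_{\rm per}^\delta$. Thus $\Udel h_{\rm per}\Udel^*=-\delta^2\Delta-\delta\phi_{\rm per}^\delta=\hperDelta$, and conjugating the resolvent by $\Udel$ gives $\Udel r_{\rm per}(z)\Udel^*=\rperDelta(z)$. Since $\Udel$ is unitary, $\sigma(\hperDelta)=\sigma(h_{\rm per})$, so the contour $\Gamma$ of Figure \ref{fig:cauchy-int-contour} remains admissible and the integral $\oint$ is unchanged. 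Second, Lemma \ref{lem:U-rho-UAUstar} provides the intertwining $\Udel\den[A]=\delta^{3/2}\den[\Udel A\Udel^*]$. Third, the same computation as for $\phi_{\rm per}$ shows that for the multiplication operator by $\Udel^*\varphi$ one has $\Udel(\Udel^*\varphi)\Udel^*=\delta^{3/2}\varphi$ as a multiplication operator.

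Plugging these into
\[
M_\delta\varphi = \delta^{-2}\,\Udel\,M(\Udel^*\varphi)
= -\delta^{-2}\,\Udel\,\den\!\left[\oint r_{\rm per}(z)\,(\Udel^*\varphi)\,r_{\rm per}(z)\right],
\]
pulling $\Udel$ inside via Lemma \ref{lem:U-rho-UAUstar} (picking up $\delta^{3/2}$) and conjugating each factor (picking up another $\delta^{3/2}$ from the middle multiplication operator and none from the two resolvents), collects to $\delta^{-2}\cdot\delta^3=\delta$, producing exactly \eqref{Mexpl2}. The only step that is more than routine bookkeeping is verifying that the contour integral $\oint$ commutes with the conjugation/density operations; this is immediate since $\Gamma$ is a fixed contour whose admissibility depends only on $\sigma(h_{\rm per})=\sigma(\hperDelta)$ and since for each $z\in\Gamma$ the bound \eqref{eqn:resolvent-bound-on-G} together with the Kato--Seiler--Simon estimate \eqref{eqn:KSS} justifies exchanging the integral with $\den$, exactly as in the proof of Proposition \ref{pro:M-expl}.
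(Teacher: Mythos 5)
Your proof is correct and takes the same route the paper itself declares ``immediate'': deriving all claims from $M_\delta=\delta^{-2}\Udel M\Udel^*$ together with Proposition \ref{pro:M-expl}. The bookkeeping of the $\delta$-powers and the conjugation identities $\Udel h_{\rm per}\Udel^*=\hperDelta$, $\Udel(\Udel^*\varphi)\Udel^*=\delta^{3/2}\varphi$ are all accurate and reproduce \eqref{Mexpl2} exactly.
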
 


\subsection{Lyapunov-Schmidt decomposition}

To separate small and large momenta, we now perform a Lyapunov-Schmidt reduction. 

Let $\chi_{Q}$ be the characteristic function of a set $Q \subset \R^3$. Let $\Omde^*$ denote the fundamental domain of $\latde^*$ as in Subsection \ref{sec:BF-decomp}. We recall the definition of the orthogonal projection onto low momenta (as \cite{CLS})
 \begin{align} \label{eqn:choice-of-P}
	\Proj = \chi_{\Br}(-i\n) \, ,
\end{align}
where $B(\cutOff)$ is the ball of radius $r$ centred at the origin. With $m$ given in \eqref{m} and estimated in \eqref{m-est}, we choose $r$ such that $B(r) \subset \Omde^*$ and  
\begin{align} \label{a-m}	a:=\delta r = O(1) \text{ small, but }\ a^4 \gg m,
\end{align}
 is  independent of $\del$ and $T$ (or $m$) and is fixed. Below, we use the convention that $\ls$ is independent of $r$, $\delta$ and $T$.  \DETAILS{!!!{\bf(In order not to clatter formulae below, we do not display the  $a$ dependence of the constants, even though 
 it could be easily traced.\footnote{In refinements, it is natural to make $a$ dependent on $\delta$ and $T$.})}}
Let 
\begin{align}
	\bar \Proj = 1- \Proj \label{eqn:bar-Proj}
\end{align}
be the orthogonal projection onto the large momenta. We decompose 
\begin{align}
	\varphi = \varphi_s + \varphi_l, \label{eqn:varphi-decomposition}
\end{align}
where $\varphi_s = \Proj \varphi$ and $\varphi_l = \bar \Proj\varphi$. Here $s$ stands for small momentum and $l$ stands for large momenta. We split \eqref{eqn:macro-eqn-varphi} as
\begin{align}
	&\Proj K_\delta (\varphi_s + \varphi_l) = \Proj\kp + \Proj N_\delta(\varphi), \label{eqn:P-eqn}\\
	&\bar \Proj K_\delta  (\varphi_s + \varphi_l) = \bar \Proj \kp + \bar \Proj N_\delta(\varphi) \label{eqn:barP-eqn}\, .
\end{align}
We solve \eqref{eqn:barP-eqn} for $\varphi_l$ in 
 the ball
\begin{align}\label{Bl-def} 
	B_{l, \del} := &\{\varphi \in \bar \Proj  H^1  : 
	 \|\varphi\|_{\dot{H}^1} \le  c_l\}, \end{align}
while keeping $\varphi_s$ fixed in the (deformed) ball 
\begin{align}\label{Bs-def} 
	&B_{s,\delta} := \{ \varphi \in \Proj H^1 :  \|\varphi \|_{\delta} \le c_s 
	 \},\end{align} 
 with the  norm $\|\varphi \|_{\delta}$ given by
\begin{align} \label{s-norm-def}
	\|\varphi \|_{\delta}^2 :=\sum_0^1\z^{2(i-1)} \|\n^i\varphi\|_{L^2}^2 ,\ \z:=\delta m^{-1/2} .
\end{align}
The constants $c_s$ and $c_l$ above {(\it should not be confused with the estimating function $c_T$ which appeared in Theorem \ref{thm:PE-macro})} are chosen to satisfy the conditions  
\begin{align}\label{cs-cl-cond} &\z \ll c_s \ll   c_l  \ll   \theta^{- 3/2}\z,
 \end{align}
where $\theta:= m^{- 8/9}  \delta$  and, recall, $\z:=\delta m^{-1/2}$. 
 The latter condition can be satisfied, provided 
\begin{align}  \label{m-ineq} 
\theta:= m^{-8/9} \delta \ll  1.\end{align}
Due to estimate  \eqref{m-est}, this is equivalent to  condition \eqref{beta-del-cond}. 

 We see that, while our model is parametrized by  $\delta$ and $\beta$ satisfying \eqref{beta-del-cond}, 
our method  is determined by the parameters   $a$, $c_s$ and $c_l$, satisfying \eqref{a-m} and 
\eqref{cs-cl-cond}. 

 The subleading term, $\psi$, in \eqref{phi-del-exp} just fits into $B_{s,\delta}$: $\|\psi \|_{\delta}\sim \z\ll c_s$.
Finally, we note that   since $\n^{-1} \bar \Proj\le r^{-1} \bar \Proj, \n^{-1}:=\n \Delta^{-1}, r=a/\del$, we have
\begin{align} \label{s-Hdot-est}\|\varphi\|_{L^2} \ls   m^{ 1/2}\z \|\varphi\|_{\dot{H}^1}, \quad
 \|\varphi\|_{\delta}  
  \ls   \|\varphi\|_{\dot{H}^1} ,\quad  \forall \varphi\in \Ran \bar\Proj.\end{align}
 Eq. \eqref{s-Hdot-est} shows that, if $m^{ 1/2}\z=\del \ll c_s/c_l$, then, in the $L^2$-norm, $B_{l,\delta}$ is much smaller that $B_{s,\delta}$.

In the proofs below, we will use the convention  $\|\cdot\|_{\dot{H}^{0}}\equiv \|\cdot\|_{L^2}$ 
and the estimates of the nonlinearity $N_\delta$ (defined implicitly through \eqref{eqn:macro-eqn-varphi}) proved in Proposition \ref{prop:nonlinear-rough} in Section \ref{sec:nonlin-est-rough} below, under Assumption \ref{A:diel}:
\begin{align} \label{Ndel-est'}	
 \|N_\delta(\varphi_1) - & N_\delta(\varphi_2)\|_{L^2}  \notag \\& \ls  m^{-\frac13}
   \delta^{-1/2} (\|\varphi_1\|_{{ \delta}} + \|\varphi_2\|_{{ \delta}})  
	\|\varphi_1 - \varphi_2\|_{{ \delta}}.\   
\end{align}\begin{proposition} \label{pro:phir-fixed-pt} 
 Let Assumptions \ref{A:diel} -  \ref{A:scaling} hold. 
 Assume $\varphi_s \in B_{s,\delta}$ and that \eqref{a-m} holds. Then equation \eqref{eqn:barP-eqn} on $B_{l,\delta}$ has a unique solution $\varphi_l = \varphi_l(\varphi_s)\in B_{l,\delta}$. 
\end{proposition}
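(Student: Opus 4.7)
The plan is to recast \eqref{eqn:barP-eqn} as a fixed-point equation for $\varphi_l$ and apply the Banach contraction mapping theorem on $B_{l,\delta}$, with $\varphi_s\in B_{s,\delta}$ held fixed.

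First I would establish invertibility of $\bar\Proj K_\delta\bar\Proj$ on $\Ran\bar\Proj$. Since $M_\delta\ge 0$ by Proposition \ref{pro:Mdel-prop} and $-\Delta\ge r^2$ on $\Ran\bar\Proj$, the operator $\bar\Proj K_\delta\bar\Proj$ is bounded below by $r^2$ there. A standard energy argument ($\|u\|_{\dot H^1}^2\le \langle u, K_\delta u\rangle$ combined with $\|u\|_{L^2}\le r^{-1}\|u\|_{\dot H^1}$ for $u\in\Ran\bar\Proj$) yields
$$\|(\bar\Proj K_\delta\bar\Proj)^{-1}f\|_{\dot H^1}\ls r^{-1}\|f\|_{L^2},\qquad f\in\Ran\bar\Proj.$$
Next, since $-\Delta$ commutes with $\Proj$ we have $\bar\Proj(-\Delta)\varphi_s=0$, so $\bar\Proj K_\delta\varphi_s=\bar\Proj M_\delta\varphi_s$. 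Equation \eqref{eqn:barP-eqn} therefore rewrites as $\varphi_l=T(\varphi_l)$ with
$$T(\varphi_l):=(\bar\Proj K_\delta\bar\Proj)^{-1}\bar\Proj\bigl[\kp-M_\delta\varphi_s+N_\delta(\varphi_s+\varphi_l)\bigr].$$

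To show that $T$ maps $B_{l,\delta}$ into itself, I would combine the inverse bound above with $\|M_\delta\|\ls\delta^{-2}$ from Proposition \ref{pro:Mdel-prop}, the hypothesis $\kp\in H^1$, and the nonlinear estimate \eqref{Ndel-est'} applied with $\varphi_2=0$, to obtain
$$\|T(\varphi_l)\|_{\dot H^1}\ls r^{-1}\bigl(\|\kp\|_{L^2}+\delta^{-2}\|\varphi_s\|_{L^2}+m^{-1/3}\delta^{-1/2}(c_s+c_l)^2\bigr).$$
Inserting $r=a/\delta$ and $\|\varphi_s\|_{L^2}\le \z c_s$ (from the definition of $\|\cdot\|_\delta$), I would then exploit the parameter conditions \eqref{a-m}, \eqref{cs-cl-cond} and \eqref{m-ineq} term by term to conclude each summand is $\ll c_l$. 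For the contraction estimate, only the nonlinear piece contributes, and \eqref{Ndel-est'} combined with $\|\varphi_l\|_\delta\ls\|\varphi_l\|_{\dot H^1}$ on $\Ran\bar\Proj$ (see \eqref{s-Hdot-est}) gives
$$\|T(\varphi_l^1)-T(\varphi_l^2)\|_{\dot H^1}\ls r^{-1}m^{-1/3}\delta^{-1/2}(c_s+c_l)\|\varphi_l^1-\varphi_l^2\|_{\dot H^1},$$
whose prefactor is $\ll 1$ under \eqref{cs-cl-cond}--\eqref{m-ineq}. The Banach fixed-point theorem then produces the unique $\varphi_l(\varphi_s)\in B_{l,\delta}$.

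The main obstacle will be verifying the self-map property, specifically controlling the source-like term $r^{-1}\|\bar\Proj M_\delta\varphi_s\|_{L^2}$. The crude bound $\|M_\delta\|\ls\delta^{-2}$ yields a contribution of order $m^{-1/2}c_s$, and forcing this below $c_l$ is precisely why the gap condition $c_l\ll\theta^{-3/2}\z$ (equivalently the smallness of $\theta=m^{-8/9}\delta$) is imposed in \eqref{cs-cl-cond}. If this bookkeeping proves too tight, a finer estimate of $\bar\Proj M_\delta\Proj$ via Lemma \ref{lem:PAP-repr-del} and the low-frequency localization of $\varphi_s$ would improve the margin.
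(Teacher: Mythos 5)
Your fixed-point setup is exactly the paper's: you split \eqref{eqn:barP-eqn} into a linear source piece and a nonlinear piece, invert $\bar K_\delta := \bar\Proj K_\delta\bar\Proj$ on $\Ran\bar\Proj$ with the spectral bound $\|\bar K_\delta^{-1}f\|_{\dot H^1}\ls r^{-1}\|f\|_{L^2}$, use the nonlinear estimate \eqref{Ndel-est'}, and close with the contraction mapping theorem. The inverse bound and the contraction half of your argument are fine.

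The genuine gap is the linear source term $r^{-1}\|\bar\Proj M_\delta\varphi_s\|_{L^2}$, and the bookkeeping you sketch does \emph{not} close. With $\|M_\delta\|\ls\delta^{-2}$ and $\|\varphi_s\|_{L^2}\le\z c_s$ you obtain $r^{-1}\delta^{-2}\z c_s\sim m^{-1/2}c_s$, as you note. But the constraints in \eqref{a-m} and \eqref{cs-cl-cond} force $c_s\gg\z$, and the paper's working choice is $c_l=\om^{-1/4}\z$ with $\om=\max(\theta^2,m)$; if $\om=m$ this gives $c_l=m^{-1/4}\z$, and then $m^{-1/2}c_s\ge m^{-1/2}\z\gg m^{-1/4}\z=c_l$ — the opposite of what you need. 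The other branch $\om=\theta^2$ fails by a similar dimensional count. Moreover, the condition $c_l\ll\theta^{-3/2}\z$ you invoke is an \emph{upper} bound on $c_l$, so it can only make absorbing a $m^{-1/2}c_s$ term harder, not easier; it plays no role in controlling the linear source term and is there for the nonlinearity. So the crude operator-norm bound on $M_\delta$ is genuinely insufficient here.

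What the paper actually proves, and what your argument needs, is the refined estimate $\|\bar K_\delta^{-1}M_\delta\Proj f\|_{\dot H^1}\ls\|f\|_\delta$ (inequality \eqref{KinvMP-est'} of Lemma \ref{lem:Kinv-est-k}), which replaces the would-be $m^{-1/2}c_s$ by simply $c_s$, so that the source is absorbed under the assumed $c_s\ll c_l$. This costs real work: one writes $\nabla\bar K_\delta^{-1}M_\delta=\nabla\bar K_\delta^{-1}\nabla\cdot(\nabla^{-1}M_\delta)$, uses $\nabla\bar K_\delta^{-1}\nabla\le 1$, and then estimates $\bar\Proj\nabla^{-1}M_\delta\Proj$ via the decomposition $M_\delta=M_\delta'+M_\delta''$ of Proposition \ref{prop:Mdel-deco} (Appendix \ref{sec:Mdel-bnd}), where $M_\delta'$ acts on low momenta as multiplication by the $L^\infty$-controlled potential $V_\delta$, and $M_\delta''$ carries an extra factor of $|k|$ that is converted into a derivative. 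You gesture at this in your closing sentence (``a finer estimate of $\bar\Proj M_\delta\Proj$ via Lemma \ref{lem:PAP-repr-del}''), which is the right instinct, but this is not an optional refinement — it is the crux of the proof, and your proposal as written does not supply it.
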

\begin{proof}[Proof of Proposition \ref{pro:phir-fixed-pt}]We use that,  by Proposition \ref{pro:Mdel-prop}, $\bar K_\delta:=\bar \Proj  K_\delta  \bar \Proj$ is invertible on the range of $\bar \Proj$ (see \eqref{eqn:bar-Proj}) to convert \eqref{eqn:barP-eqn} into a fixed point problem  
\begin{align}\label{fp-phil}	&\varphi_l = \Phi_l(\varphi_l) =\Phi_l' + \Phi_l''(\varphi_l),\end{align}
where 
\begin{align} \label{Phi'} 
	 &\Phi_l':=  \bar K_\delta^{-1} (-M_\delta \varphi_s +\bar \Proj \kp),\\
 \label{Phi''} 
 	  &\Phi_l''(\varphi_l):= \bar K_\delta^{-1}\bar \Proj N_\delta(\varphi_s + \varphi_l). 
\end{align}
Given $\varphi_s$, this is a fixed point problem for $\varphi_l$. We will solve this problem in the ball $B_{l,\delta}$ defined in \eqref{Bl-def}). 
Let $\dot{H}^{0}\equiv L^2$.  We begin with the following simple but key lemma 
\begin{lemma}\label{lem:Kinv-est-k} 
 Let Assumption \ref{A:diel}   
  hold
  and let $c_T:=T^{-1} e^{-\etal/T}\ls 1$ (which is weaker than  Assumption \ref{A:scaling}). 
   Then, for $f \in L^2(\R^3)$, 
\begin{align}\label{Kinv-est} 
	& \|\bar K_\delta^{-1} f\|_{\dot{H}^{k-i}} \ls r^{-2+k} \| f\|_{\dot{H}^{-i}}, \,  i\le k,\ k=0, 1\\
 \label{KinvMP-est}
	&\|\bar K_\delta^{-1} M_\delta \Proj f\|_{L^2} \ls   \|f\|_{L^{2}},\\ 
	\label{KinvMP-est'}
	&\|\bar K_\delta^{-1} M_\delta \Proj f\|_{\dot H^{1}} \ls   \|f\|_{\delta}. 
\end{align}
\end{lemma}
\begin{proof}[Proof of Lemma \ref{lem:Kinv-est-k}]  Since $-\Delta \bar\Proj\ge r^2\bar\Proj$, we have the inequality $r^2 \|f\|^2$ $\ls \lan f, \bar K_\delta f\ran\le \|f\| \|\bar K_\delta f\|$, which gives $r^2 \|f\|\ls  \|K_\delta f\|$, which implies \eqref{Kinv-est} for $k=i=0$.

 Since $  \bar K_\delta\bar\Proj\ge -\Delta\bar\Proj$, we have $ \|\bar\Proj f\|_{\dot{H}^1}^2\ls \lan f, \bar K_\delta f\ran\le \| \bar\Proj f\| \|\bar K_\delta f\|$.  This inequality and $\| \bar\Proj f\|=\|  \n^{-1}\bar\Proj  \n f\|\le r^{-1} \| \n f\|$, where  
\begin{align}\label{nabla-inverse-def}	\n^{-1} := \n (-\Delta)^{-1}, 
\end{align}
 give $ \|\bar\Proj f\|_{\dot{H}^1}\ls r^{-1}  \|\bar K_\delta f\|$, which implies \eqref{Kinv-est} for $k=1$.

 Inequality \eqref{Kinv-est}, with $i=0$, 
  and the bound $\|M_\delta\| \ls  \delta^{-2} $, proven in Proposition \ref{pro:Mdel-prop},
 yield 
 \begin{align}
 \label{KinvMP-estk}
	&\|\bar K_\delta^{-1} M_\delta \Proj f\|_{\dot{H}^k} \ls  r^{k}  \|f\|_{L^{2}},
\end{align}
for $k= 0, 1$, which for $k=0$ implies  
 \eqref{KinvMP-est}. 

Finally, we prove more subtle \eqref{KinvMP-est'}. Using $\n^{-1}$ from \eqref{nabla-inverse-def}, we write  $\nabla\bar K_\delta^{-1} M_\delta f=\nabla \bar K_\delta^{-1}\nabla \cdot (\nabla^{-1} M_\delta ) f$. Proposition \ref{pro:Mdel-prop} shows that $\nabla \bar K_\delta^{-1} \nabla \leq 1$. It follows
\begin{align}
	\|\nabla\bar K_\delta^{-1} M_\delta f\|_{L^2} \ls \|\bar \Proj \n^{-1} M_\delta f\|_{L^2}.
\end{align}
This bound and Proposition \ref{prop:Mdel-bnd} of Appendix \ref{sec:Mdel-bnd} imply  
 \eqref{KinvMP-est'}.  
\end{proof}

Definition \eqref{Phi'} and Eqs \eqref{KinvMP-est'} and  \eqref{Kinv-est}, with $k=1, i=0$,      
 show that  
 \begin{align}
 \label{eqn:Phi-s-bound-in-B-l-delta-1}	\|\Phi'_l\|_{\dot{H}^1}   &\ls  
   \|\varphi_s\|_{\delta}+r^{-1} \| \ka'\|_{L^2}.
\end{align} 

   For the nonlinear term, 
   $\bar \Phi''_l(\varphi_l):= K_\delta^{-1}\bar \Proj N_\delta(\varphi_s + \varphi_l)$ (see \eqref{Phi''}),   Eqs \eqref{Kinv-est}, with $k=1, i=0$,   
   \eqref{Ndel-est-rough} 
 and the inequality $\|\varphi_l\|_{s,\delta} \ls  \|\varphi_l\|_{\dot{H}^1}$ (see \eqref{s-Hdot-est}) give  
\begin{align}
	&\|\Phi''_l(\varphi_l) \|_{\dot{H}^1} \ls  r^{-1} m^{-\frac13}
	 \delta^{- 1/2}(\|\varphi_s\|_{\delta} + 
	 \|\varphi_l\|_{\dot{H}^1})^2. 
	\label{eqn:Phi-prime-bound-in-B-l-delta-1} 
\end{align}
Since $\|\varphi_s\|_{ \delta} \le c_s$ and  $\|\varphi_l\|_{ \delta} \le c_l$ for $\varphi_s \in B_{l,\delta}$ and $\varphi_s \in B_{s,\delta}$ (see \eqref{Bs-def} and \eqref{Bl-def}) and, due to our assumption \eqref{cs-cl-cond},  we have
 \begin{align}  \label{cl-cond''}  \del \| \kp\|_{L^2}&+  c_s 
 +\del^{1/2}   m^{-1/3}
     (c_s + c_l)^2\ll c_l,\end{align}
   \eqref{eqn:Phi-s-bound-in-B-l-delta-1} -\eqref{cl-cond''} 
    show that
$\Phi_l$ maps $B_{l,\delta}$ into itself. 

  Once more, by Eqs \eqref{s-Hdot-est}, \eqref{Kinv-est}, with $k=1, i=0$,    \eqref{Ndel-est'} 
   and  \eqref{KinvMP-est'}, 
 we see that $\Phi_l$ satisfies
\begin{align}
\label{Phi-l-differ-est}	 \|\Phi_l(\varphi_1) &- \Phi_l(\varphi_2)\|_{\dot{H}^{1}}  \notag \\
& \ls r^{-1}   m^{-\frac13}
 \delta^{-1/2} (\|\varphi_1\|_{{ \delta}} + \|\varphi_2\|_{{ \delta}})  
	\|\varphi_1 - \varphi_2\|_{{ \delta}}\, 
\end{align} 
 and therefore, since $r=a/\del$,  is a contraction on $B_{l,\delta}$  for $ m^{-\frac13}  \delta^{1/2}c_l\ll 1$, which follows from \eqref{cs-cl-cond}.  
  Proposition \ref{pro:phir-fixed-pt} now follows by applying the fixed point theorem on $B_{l,\delta}$.
 \end{proof} 


Let $\varphi_l = \varphi_l(\varphi_s)$ be the solution to equation \eqref{eqn:barP-eqn} given in Proposition \ref{pro:phir-fixed-pt} with $\varphi_s \in B_{s,\delta}$. 
Later on we will need a Lipschitz estimate on the solution,  $\varphi_l(\varphi_s) \in B_{l,\delta}$. 

\begin{lemma} \label{lem:vphil-Lip-est}
  If $\varphi,\psi \in B_{s,\delta}$, then  the solution,  $\varphi_l(\varphi_s) \in B_{l,\delta}$, to \eqref{eqn:barP-eqn}  given in Proposition \ref{pro:phir-fixed-pt} satisfies the estimate
    \begin{align} \label{vphil-Lip-est}	\|\varphi_l(\varphi)-\varphi_l(\psi)\|_{\dot{H}^1} \ls \|\varphi - \psi\|_{ \delta}.\end{align}
\end{lemma}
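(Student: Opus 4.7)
The plan is to exploit that both $\varphi_l(\varphi)$ and $\varphi_l(\psi)$ are fixed points of the map $\Phi_l$ in \eqref{fp-phil}, with only the input $\varphi_s$ changed. Writing $\Delta\varphi_l := \varphi_l(\varphi) - \varphi_l(\psi)$ and subtracting the two fixed-point identities, the $\bar\Proj\kp$ contribution cancels and the definitions \eqref{Phi'}--\eqref{Phi''} give
\begin{align*}
	\Delta\varphi_l = -\bar K_\delta^{-1} M_\delta \Proj(\varphi-\psi) + \bar K_\delta^{-1}\bar\Proj\bigl(N_\delta(\varphi+\varphi_l(\varphi)) - N_\delta(\psi+\varphi_l(\psi))\bigr).
\end{align*}

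For the linear piece, I apply \eqref{KinvMP-est'} directly to obtain $\|\bar K_\delta^{-1} M_\delta \Proj(\varphi-\psi)\|_{\dot{H}^1} \ls \|\varphi-\psi\|_\delta$, which is already in the desired form. For the nonlinear piece, I chain the $L^2 \to \dot{H}^1$ resolvent bound from \eqref{Kinv-est} (with $k=1$, $i=0$, contributing a factor $r^{-1}$) with the Lipschitz estimate \eqref{Ndel-est'} of the nonlinearity. Since $\varphi,\psi\in B_{s,\delta}$ and both $\varphi_l(\varphi),\varphi_l(\psi)\in B_{l,\delta}$, the sum of $\|\cdot\|_\delta$-norms of the arguments is $\ls c_s + c_l \ls c_l$, while the $\|\cdot\|_\delta$-norm of their difference is $\le \|\varphi-\psi\|_\delta + \|\Delta\varphi_l\|_\delta$. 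Combining these estimates gives
\begin{align*}
	\|\Delta\varphi_l\|_{\dot{H}^1} \ls \|\varphi-\psi\|_\delta + r^{-1} m^{-1/3}\delta^{-1/2}\, c_l \bigl(\|\varphi-\psi\|_\delta + \|\Delta\varphi_l\|_\delta\bigr).
\end{align*}

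To close the estimate, I use \eqref{s-Hdot-est} to replace $\|\Delta\varphi_l\|_\delta$ by $\|\Delta\varphi_l\|_{\dot{H}^1}$ (legitimate because $\Delta\varphi_l\in\Ran\bar\Proj$), substitute $r = a/\delta$, and invoke \eqref{cs-cl-cond}, which in particular forces $c_l \ll \theta^{-3/2}\z = m^{5/6}\delta^{-1/2}$. A short bookkeeping of exponents then shows that the coefficient of $\|\Delta\varphi_l\|_{\dot{H}^1}$ on the right-hand side is $\ls a^{-1} m^{1/2} \ll 1$, so this term can be absorbed into the left-hand side, producing \eqref{vphil-Lip-est}. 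The only real obstacle is confirming the smallness of this coefficient in the chosen regime, but this is the same type of check already carried out to establish the contraction property of $\Phi_l$ in the proof of Proposition \ref{pro:phir-fixed-pt}, so no new technique is required — the argument is a structural variant of the contraction estimate \eqref{Phi-l-differ-est}.
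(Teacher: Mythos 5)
Your proposal is correct and follows exactly the argument the paper invokes: subtract the two fixed-point identities (this is the paper's \eqref{vphil-Lip-rel}), bound the linear term via \eqref{KinvMP-est'}, bound the nonlinear term via \eqref{Kinv-est} with $k=1,i=0$ and \eqref{Ndel-est'}, and then absorb the $\|\Delta\varphi_l\|$-term using the smallness of the contraction constant as in Proposition \ref{pro:phir-fixed-pt}. You have merely spelled out the bookkeeping that the paper summarizes as ``going through the same arguments as in the proof of Proposition \ref{pro:phir-fixed-pt}.''
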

\begin{proof}
Since $\varphi_l(\varphi), \varphi_l(\psi)$ satisfy \eqref{eqn:barP-eqn} (and therefore \eqref{fp-phil}), we see that
\begin{align} \label{vphil-Lip-rel}	 \varphi_l(\varphi) &- \varphi_l(\psi) = -\bar K_\delta^{-1}M_\delta (\varphi - \psi) \notag\\
		&+ \bar K_\delta^{-1} \bar \Proj (N_\delta(\varphi + \varphi_l(\varphi) ) - N_\delta(\psi + \varphi_l(\psi))).
\end{align}
Using Eqs \eqref{vphil-Lip-rel}, \eqref{Kinv-est}, with $k=1, i=0$,   and  \eqref{KinvMP-est'} 
  and nonlinear estimate \eqref{Ndel-est'} 
and going through the same arguments as in the proof of Proposition \ref{pro:phir-fixed-pt}, we show  
\eqref{vphil-Lip-est}.
\end{proof}

We substitute $\varphi_l = \Phi_l(\varphi_l)$ (see \eqref{fp-phil}), with $\Phi_l(\varphi_l)$ given by \eqref{fp-phil}-\eqref{Phi''} 
 into equation \eqref{eqn:P-eqn} 
and use that $\Proj K_\delta \bar \Proj  = \Proj M_\delta \bar \Proj$ to arrive at the following equation
\begin{align}
 \label{phis-eq}	\ell \varphi_s = Q \kp + QN(\varphi(\varphi_s)),
\end{align}
where $\varphi(\varphi_s) = \varphi_s + \varphi_l(\varphi_s)$ with $\varphi_l(\varphi_s)$ being the solution of \eqref{eqn:barP-eqn}, and
\begin{align}\label{ell-def} 
 	&\ell := \Proj K_\delta \Proj  - \Proj M_\delta \bar K_\delta^{-1} M_\delta \Proj ,\\ 
	&Q := \Proj -\Proj M_\delta \bar K_\delta^{-1} \label{Q-def}. 
\end{align}
 Note that $\ell$ is the Feshbach-Schur map of $K_\delta :=-\Delta +  M_\delta$ with projection $\Proj$.

In Subsection \ref{sec:vphis-fp} below, we prove the following  
\begin{proposition} \label{prop:vphis-fp-exist} 
Under Assumption \ref{A:diel}, Eq. \eqref{phis-eq} 
 has a unique solution $\varphi_s \in B_{s,\delta}$. 
\end{proposition}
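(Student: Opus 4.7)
The strategy is to recast \eqref{phis-eq} as the fixed-point problem
\[
\varphi_s = \Phi_s(\varphi_s) := \ell^{-1} Q\kp \,+\, \ell^{-1} Q\, N_\delta\!\bigl(\varphi_s+\varphi_l(\varphi_s)\bigr),
\]
with $\varphi_l(\varphi_s)$ the map produced by Proposition \ref{pro:phir-fixed-pt}, and to apply the Banach contraction principle on $B_{s,\delta}$. Two analytic ingredients are required: (i) invertibility and a dual-norm bound for the Feshbach--Schur operator $\ell$ on $\Ran\Proj$, and (ii) bounds on $Q$ and on the nonlinearity.

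The central ingredient is coercivity of $\ell$ in the norm $\|\cdot\|_\delta$. By Lemma \ref{lem:PAP-repr-del}, the block $\Proj M_\delta\Proj$ is a Fourier multiplier whose symbol at $k=0$ is $\nu \gs \zeta^{-2}$ (with $\zeta=\delta m^{-1/2}$, using \eqref{nu} and \eqref{m-est}). The subtracted Schur piece $\Proj M_\delta \bar K_\delta^{-1} M_\delta \Proj$ is controlled via Proposition \ref{pro:Mdel-prop} and Lemma \ref{lem:Kinv-est-k} (in particular $\|\bar K_\delta^{-1}\|\ls r^{-2}$), and is subdominant relative to $\nu$ thanks to the cut-off hierarchy \eqref{a-m}. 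Together with $\Proj(-\Delta)\Proj$ this yields the quadratic-form lower bound
\[
\langle u,\ell u\rangle \gs \zeta^{-2}\|u\|_{L^2}^2 + \|\n u\|_{L^2}^2 = \|u\|_\delta^2, \qquad u\in\Ran\Proj,
\]
hence invertibility and the dual-norm estimate
\[
\|\ell^{-1} f\|_\delta \ls \|f\|_{\delta^*}, \qquad \|f\|_{\delta^*} := \min\bigl(\zeta\|f\|_{L^2},\ \|f\|_{\dot H^{-1}}\bigr).
\]
A parallel argument using Proposition \ref{pro:Mdel-prop} and Lemma \ref{lem:Kinv-est-k} gives the boundedness $\|Qf\|_{L^2}\ls\|f\|_{L^2}$ (and analogously in $\dot H^{-1}$). (These bounds are the subject of Section \ref{sec:oper-ell}.)

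For the source term, $\|\ell^{-1} Q\kp\|_\delta \ls \|\kp\|_{\delta^*} \ls \zeta\|\kp\|_{L^2}$, and \eqref{cs-cl-cond} ($\zeta\ll c_s$) together with Assumption \ref{A:kappa-reg} gives $\|\ell^{-1} Q\kp\|_\delta \ll c_s$. For the nonlinear term, we combine the rough bound \eqref{Ndel-est'} with the Lipschitz property of $\varphi_s\mapsto\varphi_s+\varphi_l(\varphi_s)$ in $\|\cdot\|_\delta$ (Lemma \ref{lem:vphil-Lip-est}, together with $\|\varphi_l\|_\delta\ls\|\varphi_l\|_{\dot H^1}$ from \eqref{s-Hdot-est}). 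This gives, for $\varphi_{s,1},\varphi_{s,2}\in B_{s,\delta}$,
\[
\|N_\delta(\varphi(\varphi_{s,1}))-N_\delta(\varphi(\varphi_{s,2}))\|_{L^2}\ls m^{-1/3}\delta^{-1/2}(c_s+c_l)\,\|\varphi_{s,1}-\varphi_{s,2}\|_\delta,
\]
and applying $\|\ell^{-1} Q\,\cdot\|_\delta\ls\zeta\|\cdot\|_{L^2}$ yields a Lipschitz constant for $\Phi_s$ bounded by $\zeta\, m^{-1/3}\delta^{-1/2}(c_s+c_l)=\delta^{1/2}m^{-5/6}(c_s+c_l)$, which is $\ll 1$ by \eqref{cs-cl-cond} (using $c_l\ll \theta^{-3/2}\zeta=m^{5/6}\delta^{-1/2}$). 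The same estimate, together with the source bound and $N_\delta(0)=0$, gives $\Phi_s(B_{s,\delta})\subset B_{s,\delta}$. The Banach fixed-point theorem then produces the unique $\varphi_s\in B_{s,\delta}$ solving \eqref{phis-eq}.

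The main obstacle is the coercivity bound for $\ell$. The mass $\nu$ is exponentially small in $1/T$ (since $m\ls c_T=T^{-1}e^{-\etal/T}$ by \eqref{m-est}), and it must nevertheless dominate both the kinetic term $-\Delta\le r^2=a^2/\delta^2$ on $\Ran\Proj$ and the subtracted Schur piece $\Proj M_\delta\bar K_\delta^{-1}M_\delta\Proj$. The parameter hierarchy in \eqref{beta-del-cond}, \eqref{a-m} and \eqref{cs-cl-cond} is calibrated precisely so that all these competing smallness conditions are simultaneously compatible, and establishing this rigorously is what requires the detailed work of Section \ref{sec:oper-ell}.
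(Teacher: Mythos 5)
Your overall strategy --- recasting \eqref{phis-eq} as the fixed-point problem $\varphi_s=\ell^{-1}Q\kp+\ell^{-1}QN_\delta(\varphi_s+\varphi_l(\varphi_s))$ and running a contraction on $B_{s,\delta}$ using Lemma \ref{lem:ell-inv-Q-est}, the nonlinear estimate \eqref{Ndel-est'} and the Lipschitz bound of Lemma \ref{lem:vphil-Lip-est} --- is exactly the route the paper takes, and the smallness and Lipschitz constants you extract ($\zeta\,m^{-1/3}\delta^{-1/2}(c_s+c_l)=\delta^{1/2}m^{-5/6}(c_s+c_l)\ll1$ by \eqref{cs-cl-cond}) match the paper's \eqref{Phis-est-1}--\eqref{Phis-est-2}.

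There is, however, a real gap in your sketch of the ``central ingredient,'' the coercivity of $\ell$. You attribute the subdominance of the Schur complement $\Proj M_\delta\bar K_\delta^{-1}M_\delta\Proj$ relative to $\nu$ to the cut-off hierarchy \eqref{a-m} together with $\|M_\delta\|\ls\delta^{-2}$ and $\|\bar K_\delta^{-1}\|\ls r^{-2}$. But those operator-norm bounds only give
\[
\|\Proj M_\delta\bar K_\delta^{-1}M_\delta\Proj\|\ls\delta^{-4}r^{-2}=a^{-2}\delta^{-2},
\]
which, since $a\ll1$ and $m\ll1$, is far \emph{larger} than $\nu\sim m\,\delta^{-2}$; condition $a^4\gg m$ gives $a^{-2}\ll m^{-1/2}$, nowhere near $a^{-2}\ls m$ that your argument would need. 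The naive operator-norm estimate therefore cannot justify the quadratic-form lower bound $\langle u,\ell u\rangle\gs\|u\|_\delta^2$. What actually makes the Schur term harmless is the Bloch--Floquet symbol analysis of Section \ref{sec:oper-ell}: Lemma \ref{lem:b-expan} shows that the value of the Schur symbol at $k=0$ is $|\Om|^{-1}b_2(0)=O(\cbeta^2)$, exponentially smaller than $m\gs\cbeta$, and this rests not on the cut-off but on the pointwise smallness $\|V\|_{L^2_{\rm per}}=O(\cbeta)$ (Lemma \ref{lem:Vp-uppper-bound}). Likewise, the quadratic-in-$k$ part of the Schur symbol is of the \emph{same} order as the kinetic piece and is not dominated by it; it is absorbed into the permittivity matrix $\epsilon$, whose positivity $\epsilon\ge\one-O(\cbeta^2)$ is a separate nontrivial fact proven via the Feshbach--Schur argument in Lemma \ref{lem:dielectric-const-positive}. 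In short, Proposition \ref{pro:ell-expansion-full} is used as a black box in the paper's proof of this step, and the one-line heuristic you offer for why it holds would not survive scrutiny; the remainder of your argument stands once that proposition is taken as given.
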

 As a consequence of Propositions \ref{pro:phir-fixed-pt} and \ref{prop:vphis-fp-exist} and equations \eqref{eqn:varphi-decomposition} and \eqref{s-Hdot-est}, equation \eqref{eqn:macro-eqn-varphi} has the unique solution $\varphi =\vphi_s + \vphi_l \in H^1 (\R^3)$, with the estimate 
\[\|\vphi\|_{\del} \le c_s+c_l.\] 
This proves the existence and uniqueness of the solution $\phi_\delta \in (H_{\rm per}^2 + H^1)(\R^3)$ of \eqref{phi-del-eq} (and therefore of \eqref{phi-only-eq}) with $\ka$ given in \eqref{m-del}.  
This completes the proof of Theorem \ref{thm:PE-macro}(1). \qquad  \qquad  \qquad  \qquad  \qquad  \qquad   \qquad   \qquad  \qquad  \qquad  \qquad   \qquad  \qquad    $\Box$

Now, we address Theorem \ref{thm:PE-macro}(2). 
 Below, we let $\beta = T^{-1}$, so that 
  \[\cbet=\beta  e^{-\beta\etal}=:\cbeta.\] We begin with a result, 
 proven in Section \ref{sec:oper-ell}, which gives a detailed description of the operator $\ell$.

\begin{proposition} \label{pro:ell-expansion-full}
On $\text{ran } \Proj$, the operator $\ell$ in \eqref{ell-def} is a  smooth, real, even function of $-i\n$ and it has the expansion 
\begin{align}
	\ell =& \nu  -\nabla \epsilon 
	\n + O(\delta^2 (-i\nabla)^4)  \label{eqn:ell-expansion-full}
\end{align}
where $\nu = \delta^{-2}|\Om|^{-1} (m +  O(\cbeta^{2}))$,  with $m$ given in \eqref{m}, and  
  $\epsilon$ is a  matrix given explicitly in \eqref{eps} - \eqref{eps''} and satisfies the estimate 
  \begin{align}\label{eps-ineq'} 
\epsilon   \ge \one - O(\cbeta^2) .	
\end{align}
   \end{proposition}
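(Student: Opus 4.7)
The plan is to reduce $\ell$ on $\Ran\Proj$ to a scalar multiplier by Bloch--Floquet decomposition and then Taylor-expand at zero momentum. Using $K_\delta = \delta^{-2}\Udel K\Udel^*$ with $K = -\Delta + M$, and $\Udel^*\Proj\Udel = P_a := \chi_{B(a)}(-i\nabla)$ with $a = \delta r$, one rewrites
\begin{align*}
	\ell = (-\Delta)\Proj + \delta^{-2}\Proj\Udel\bigl[M - M\bar P_a(\bar P_a K \bar P_a)^{-1}\bar P_a M\bigr]\Udel^*\Proj.
\end{align*}
The bracketed operator is $\lat$-periodic, so Lemma \ref{lem:PAP-repr-del} gives $\ell = \ell(-i\nabla)\Proj$ with
\begin{align*}
	\ell(k) = |k|^2 + \delta^{-2}(b_1(\delta k) - b_2(\delta k)),\qquad b_j(p) := \langle (A_j)_p \one\rangle_\Om,
\end{align*}
where $A_1 = M$, $A_2 = M\bar P_a(\bar P_a K \bar P_a)^{-1}\bar P_a M$, and $(A_j)_p$ is the $p$-fiber of $A_j$ on $L^2_{\rm per}$.

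Next, I would establish that $b_j$ is smooth and even in $p$. Smoothness follows from the analytic dependence of $h_{{\rm per},p} := (-i\nabla+p)^2 - \phi_{\rm per}$ on $p$ together with the uniform resolvent bound \eqref{eqn:resolvent-bound-on-G}. Evenness $b_j(-p) = b_j(p)$ follows from $Ch_{{\rm per},p}C = h_{{\rm per},-p}$ (with $C$ complex conjugation) and reality of $b_j$, so all odd Taylor coefficients vanish. Expanding to second order,
\begin{align*}
	\ell(k) = \nu + k\cdot\epsilon\,k + O(\delta^2 |k|^4),
\end{align*}
with $\nu = \delta^{-2}(b_1(0) - b_2(0))$, $\epsilon = \one + \tfrac12\nabla_p^2(b_1 - b_2)(0)$, and the remainder estimated using the uniform bound on fourth derivatives together with $|k|\le r = a/\delta$.

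To identify $\nu$, note that $M_0\one = -\den[\oint r_{{\rm per},0}^2(z)] = -\den[f_T'(h_{{\rm per},0}-\mu)]$ by Cauchy's formula for derivatives, giving $b_1(0) = m/|\Om|$ for $m$ in \eqref{m}. Choosing $a$ smaller than the shortest nonzero vector of $\lat^*$ makes $(\bar P_a)_p = \bar\pi_0$ (the projection onto non-constant periodic functions) locally constant in $p$ near $0$, so
\begin{align*}
	b_2(0) = |\Om|^{-1}\langle \bar\pi_0 M_0\one, \barKfib{0}^{-1}\bar\pi_0 M_0\one\rangle.
\end{align*}
Exponential smallness of $f_T'(\cdot-\mu)$ on $\sigma(h_{{\rm per},0})$ (which lies at distance $\ge 1$ from $\mu$) gives $\|\bar\pi_0 M_0\one\|_{L^2_{\rm per}} = O(c_T)$, hence $b_2(0) = O(c_T^2)$ and \eqref{nu} follows. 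To identify $\epsilon$, I differentiate $r_{{\rm per},p}(z)$ twice using $\partial_p h_{{\rm per},p} = 2(-i\nabla+p)$, expand, and collect terms: invoking cyclicity of the trace on $L^2_{\rm per}$ recovers $\epsilon'$ from \eqref{eps'} as $\tfrac12\nabla_p^2 b_1(0)$, and $-\epsilon''$ from \eqref{eps''} as $-\tfrac12\nabla_p^2 b_2(0)$, with $\rho'$ in \eqref{rho'} arising as $\partial_p(\bar\pi_0 M_p\one)|_{p=0}$. The lower bound $\epsilon \ge \one - O(c_T^2)$ follows from the same exponential-smallness applied to the Hessians.

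The main technical obstacle will be the Hessian bookkeeping: carefully tracking commutators of $-i\nabla$ with the periodic resolvents and the contour integrals to reproduce the explicit forms \eqref{eps'}--\eqref{eps''}. A secondary pitfall is the scaling interplay between $\lat$- and $\latd$-periodic objects under $\Udel$, which is resolved by the observation that $(\bar P_a)_p = \bar\pi_0$ is locally constant in $p$ near the origin.
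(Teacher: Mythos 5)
Your reduction to a scalar Bloch--Floquet multiplier $b$ and the second-order Taylor expansion correctly reproduce the paper's Lemmas~\ref{lem:ell-b-rel} and~\ref{lem:b-expan}: the representation $\ell = \delta^{-2}b(-i\delta\nabla)\Proj$, the identification $b_1(0)=m/|\Om|$ via Cauchy's formula for derivatives, the observation that $(\bar P_a)_{p=0}=\bar\Pi_0$ for small $a$, and the exponential smallness $b_2(0)=O(c_T^2)$ are all in line with the paper's argument, up to the usual commutator bookkeeping.

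However, the final sentence --- that the lower bound $\epsilon \ge \one - O(c_T^2)$ ``follows from the same exponential-smallness applied to the Hessians'' --- is wrong, and this is where the proof has a genuine gap. The Hessians $\tfrac12\nabla_p^2 b_1(0)$ and $\tfrac12\nabla_p^2 b_2(0)$ are \emph{not} exponentially small: $\epsilon'$ in \eqref{eps'} is a Cauchy integral weighted by $f_T(z-\mu)$, which is $O(1)$ on the portion of $\G$ to the left of $\mu$ (the filled bands), and $\epsilon''$ in \eqref{eps''} is likewise $O(1)$. What is exponentially small is only the \emph{correction} term $\epsilon_1'+\epsilon_1''=O(c_\beta)$ appearing in \eqref{b-expan}. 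The matrix $\one + \epsilon' - \epsilon''$ involves a nontrivial competition between $\epsilon'$ (of indefinite sign) and $-\epsilon''\le 0$, and there is no pointwise reason it should be $\ge \one - O(c_\beta^2)$. The paper proves \eqref{eps-ineq'} in Lemma~\ref{lem:dielectric-const-positive} by a structural positivity argument: it introduces $K_{c,\delta}=K_\delta + c\Delta$ (positive for $c<1$ because $M_\delta>0$ by Proposition~\ref{pro:M-expl}), applies the Feshbach--Schur map $F_P$ with its spectral-preservation property \eqref{eqn:feshbach-property-1}, expands $F_P(K_{c,\delta}+\lambda)-\lambda P$ in $\lambda$ and $c$, and takes $s\to\infty$ and then $c\to 1$. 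You need some version of this positivity input --- the conclusion $\epsilon\ge\one - O(c_\beta^2)$ is essentially the statement that the physical dielectric response exceeds $1$, which does not come for free from smallness estimates on Taylor coefficients.
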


By Proposition \ref{pro:ell-expansion-full} 
 the leading order term in $\ell$ is given by 
\begin{align}
	  &\ell_0 := \nu - \nabla \epsilon 
   \nabla, \label{ell0-def}
\end{align}
where $\nu = \delta^{-2}|\Om| (m +  O(\cbeta^2))$,  
 with $m$ given in \eqref{m},   $\e\ge \one - O(\cbeta^2)$.

To construct   an expansion of $\varphi_s$, we 
let $\psi$  be the solution to the equation
\begin{align}
	\ell_0 \psi = \ka' \, \label{eqn:phi0-def}
\end{align}
(since $\nu>0$ and $\e>0$, this solution exists)  
and  write 
\begin{align}
	\varphi_s = \Proj \psi + \psi_1 \label{eqn:varphis-decomposition}
\end{align}
where $\psi_1$ is defined by this expression. 
In Subsection \ref{sec:psi1-fp} below to prove the following 
\begin{proposition} \label{prop:psi1-est} 
Under Assumption \ref{A:diel},  $\psi_1 \in B_{s,\delta}$  obeys the estimate 
\begin{align} \label{psi1-est}
	\|\psi_1\|_{\delta} \ls ( m^{1/2}+ \theta^{1/2})\z. 
	\end{align}
\end{proposition}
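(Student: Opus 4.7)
The plan is to derive a fixed-point equation for $\psi_1$ from \eqref{phis-eq} and \eqref{eqn:phi0-def}, then estimate it by inverting $\ell$ on $\Ran\Proj$ in a weighted norm. Substituting $\varphi_s=\Proj\psi+\psi_1$ into \eqref{phis-eq} and using that $\ell_0$ from \eqref{ell0-def} is a function of $-i\nabla$ (so it commutes with $\Proj$, giving $\ell_0\Proj\psi=\Proj\ell_0\psi=\Proj\ka'$ by \eqref{eqn:phi0-def}), together with $Q-\Proj=-\Proj M_\delta\bar K_\delta^{-1}\bar\Proj$ from \eqref{Q-def}, yields
\begin{align*}
\ell\psi_1 = R_1 + R_2 + R_3,
\end{align*}
with $R_1:=-\Proj M_\delta\bar K_\delta^{-1}\bar\Proj\ka'$, $R_2:=-(\ell-\ell_0)\Proj\psi$, and $R_3:=Q N_\delta(\Proj\psi+\psi_1+\varphi_l(\Proj\psi+\psi_1))$.

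I next invert $\ell$ on $\Ran\Proj$. By Proposition \ref{pro:ell-expansion-full}, $\ell|_{\Ran\Proj}$ is a real even function of $-i\nabla$ with symbol $\nu+\epsilon k\cdot k+O(\delta^2 k^4)$. For $|k|\le r$, the correction is $O(a^2 k^2)$; since $\epsilon\ge 1-O(c_T^2)$ and $a,c_T$ are small, this is absorbed into the main part, so $\ell(k)\gs \nu+k^2$. With the dual norm $\|g\|_{-\delta}^2:=\int_{|k|\le r}(\nu+k^2)^{-1}|\hat g|^2\,dk$ and $\nu\sim\z^{-2}$, this yields the weighted resolvent bound $\|\ell^{-1}g\|_\delta\ls\|g\|_{-\delta}\ls\min(\z\|g\|_{L^2},\|g\|_{\dot H^{-1}})$ on $\Ran\Proj$.

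I then bound each $R_i$ in $\|\cdot\|_{-\delta}$. Transposing \eqref{KinvMP-est'} gives $\|\Proj M_\delta\bar K_\delta^{-1}\bar\Proj f\|_{-\delta}\ls\|f\|_{\dot H^{-1}}$; combined with the high-frequency bound $\|\bar\Proj\ka'\|_{\dot H^{-1}}\ls r^{-1}\|\ka'\|_{L^2}\ls\delta$, this produces $\|R_1\|_{-\delta}\ls\delta=m^{1/2}\z$. For $R_2$, using $\hat\psi(k)=\hat\ka'(k)/\ell_0(k)$ and splitting the Fourier integral at $|k|^2=\nu$ yields
\begin{align*}
\|R_2\|_{-\delta}^2\ls \delta^4\int_{|k|\le r}k^8(\nu+k^2)^{-3}|\hat\ka'|^2\,dk\ls(\delta^4\nu+\delta^2 a^2)\|\ka'\|_{L^2}^2,
\end{align*}
hence $\|R_2\|_{-\delta}\ls\delta a\ls m^{1/2}\z$. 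For $R_3$, the bound $\|Q\|_{L^2\to L^2}\ls 1$ combined with \eqref{Ndel-est'}, Lemma \ref{lem:vphil-Lip-est}, and $\|\varphi\|_\delta\ls\z+\|\psi_1\|_\delta$ (from $\|\Proj\psi\|_\delta\ls\z$ and \eqref{eqn:Phi-s-bound-in-B-l-delta-1}) gives $\|R_3\|_{-\delta}\ls m^{-1/3}\delta^{-1/2}\z(\z+\|\psi_1\|_\delta)^2\ls\theta^{3/2}\z + L\|\psi_1\|_\delta$ with Lipschitz constant $L\ll 1$ on the ball $\{\|\psi_1\|_\delta\le C(m^{1/2}+\theta^{1/2})\z\}\subset B_{s,\delta}$, the smallness of $L$ following from $\theta\ll 1$ under Assumption \ref{A:scaling}.

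Combining these, the contraction mapping theorem on that ball yields a unique $\psi_1$ with $\|\psi_1\|_\delta\ls(m^{1/2}+\theta^{1/2})\z$; Proposition \ref{prop:vphis-fp-exist} ensures it matches the $\psi_1$ from \eqref{eqn:varphis-decomposition}, giving \eqref{psi1-est}. The main obstacle is the quantitative weighted invertibility of $\ell$ on $\Ran\Proj$, together with the Fourier split in the $R_2$ bound, which must simultaneously exploit the $\Proj$-cutoff at scale $r=a/\delta$ and the decay of $\hat\psi$ relative to $\hat\ka'$ provided by the $\ell_0$-inversion.
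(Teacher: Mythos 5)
Your proposal follows the paper's own strategy quite closely: you derive the same fixed-point equation $\ell\psi_1 = R_1+R_2+R_3$ that the paper writes in \eqref{eqn:phi1-eqn}--\eqref{Phi1-nonl} (up to relabelling), and you estimate $\psi_1$ by inverting $\ell$ on $\Ran\Proj$ using the symbol bound from Proposition \ref{pro:ell-expansion-full}. The genuine technical differences are worth recording. First, you introduce the dual norm $\|\cdot\|_{-\delta}$ explicitly and bound $R_2$ by a Fourier split at $|k|^2=\nu$; the paper instead uses the cruder bound $|\ell'|\lesssim\delta^2(-i\nabla)^4$ together with $\ell,\ell_0\gtrsim-\Delta+\zeta^{-2}$ to get \eqref{eqn:Phikp-L2}, which uses $\ka'\in H^1$. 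Your split only uses $\ka'\in L^2$, so it is marginally tighter, though this extra regularity is assumed anyway. Second, and more significantly, the paper does \emph{not} sharpen the bound on $\varphi_l$: it keeps the a priori $\|\varphi_l\|_\delta\lesssim c_l$ from Proposition \ref{pro:phir-fixed-pt}, estimates $\|\Phi_1''\|_\delta\lesssim\theta^{3/2}\zeta^{-1}c_l^2$, and then exploits the freedom to choose $c_l=\omega\zeta$ with $\omega=\min(\theta^{-1/2},m^{-1/4})$ to land on $\theta^{1/2}\zeta$. You instead claim the sharper $\|\varphi_l\|_\delta\lesssim\zeta+\|\psi_1\|_\delta$, which, if established, makes the tuning of $c_l$ unnecessary and gives the even better $\theta^{3/2}\zeta$ for the nonlinear piece. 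Third, the paper explicitly does not rerun a contraction at this stage --- it uses the existence of $\psi_1$ already secured by Proposition \ref{prop:vphis-fp-exist} and merely reads off the estimate --- whereas you set up a contraction on the small ball.

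There is one step you assert but do not justify, and it is the crux of the second point above: $\|\varphi\|_\delta\lesssim\zeta+\|\psi_1\|_\delta$ requires $\|\varphi_l\|_\delta\lesssim\zeta+\|\psi_1\|_\delta$, and your cited reasons --- \eqref{eqn:Phi-s-bound-in-B-l-delta-1} and Lemma \ref{lem:vphil-Lip-est} --- only bound the linear part $\Phi_l'$ and the Lipschitz dependence $\varphi_l(\varphi_s)-\varphi_l(0)$, respectively. You still need a bound on $\varphi_l(0)$ itself, which carries the nonlinear term $\Phi_l''$ from \eqref{Phi''}; this must be closed by absorbing $r^{-1}m^{-1/3}\delta^{-1/2}c_l\,\|\varphi_l(0)\|_{\dot H^1}$ into the left side, using $c_l\ll\theta^{-3/2}\zeta$ from \eqref{cs-cl-cond}. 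This is a real, though fillable, gap; without it the $R_3$ estimate would only deliver $\theta^{3/2}\zeta^{-1}c_l^2$ as in the paper, and you would need the paper's choice of $c_l$ to conclude.
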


Due to \eqref{eqn:varphi-decomposition} and \eqref{eqn:varphis-decomposition}, the solution $\varphi$ of equation \eqref{eqn:macro-eqn-varphi} can be written as
 \begin{align}\label{vphi-exp}
 	\varphi = \Proj \psi + \psi_1 + \varphi_l \, 
\end{align}
with $\psi_1 \in B_{s,\delta}$, satisfying estimate \eqref{psi1-est},  and $\varphi_l \in B_{l, \delta}$.

To  complete the proof of item (2) of Theorem \ref{thm:PE-macro}, 
we notice that \eqref{varphi}, \eqref{vphi-exp} and the relation $\Proj  \psi = \psi - \bar \Proj  \psi$ 
imply \eqref{phi-del-exp} with 
\begin{align}
 \label{vphi-rem1}	
\varphi_{\rm rem} =& 	\psi_1- \bar \Proj  \psi + 
\varphi_l \, .
\end{align}
Thus it remains to estimate the remainders above (see \eqref{phidel-rem-est}). 

Eq. \eqref{psi1-est} controls $\psi_1$. 
\DETAILS{shows that \[\|\psi_1\|_{\dot{H}^1} \ls 
 m^{- 11/6}\delta^{ 5/2}.\]} 
To control the term $-\bar \Proj  \psi$, we use \eqref{nabla-inverse-def} and  $\ell_0^{-1} \bar \Proj \le r^{-2}$ 
to obtain, for $i=0, 1$,
\begin{align}
\notag	\|  \bar \Proj  \psi \|_{\dot{H}^i} & = \|\nabla^i \bar \Proj \ell_0^{-1} \ka'\|_{L^2} = \|\ell_0^{-1} \bar \Proj \nabla^i \ka'\|_{L^2}\\ 
 \label{barPpsi-est}		&  \ls r^{-2} \| \ka'\|_{\dot{H}^i}. \, 
\end{align} 
 Since, by condition 
 \eqref{beta-del-cond},  $( m^{1/2}+ \theta^{1/2})\z^{2-i}\gg \delta^{2}$, 
Eqs. \eqref{psi1-est} and \eqref{barPpsi-est},  together with \eqref{vphi-rem1}, show that 
\begin{align}  \label{vphi-rem1-est}\|\psi_1- \bar \Proj  \psi\|_{\del} 
\ls  ( m^{1/2}+ \theta^{1/2})\z. \end{align} 

  By Proposition \ref{pro:phir-fixed-pt}, $\varphi_l$ is in the range of $\bar \Proj$ and bounded as $ \|\varphi_{l}\|_{\dot{H}^1} \ls  c_l$. 
Hence, using \eqref{s-Hdot-est} and  taking $c_l =\om^{-1/4}\z,\ \om:=\max(\theta^{2}, m)\ll 1$
 (satisfying \eqref{cs-cl-cond}) and using that $r^{-1}\om\z= \om m^{1/2}  \z^{2}\ll  m^{1/4}  \z^{2}$,  gives 
\begin{align}  \label{vphil-est}	&\|\varphi_{l}\|_{L^2} \ll   m^{1/4}  \z^{2},
	 \quad \|\varphi_{l}\|_{\dot{H}^1} \ls  \om^{-1/4}\z. 
	\end{align} 
 \DETAILS{If we use instead definition \eqref{Bl-def} and inequality \eqref{s-Hdot-est}, then we find the estimate $\|\varphi_{\rm rem, 2}\|_{\del} \ls c_l .$
Taking here $c_l \sim \z=  m^{-1/2}  \delta$ (see condition \eqref{cs-cl-cond}) gives the rougher estimate 
\begin{align} 
&\|\varphi_{\rm rem, 2}\|_{\del} \ls \z=m^{-1/2}  \delta \sim \delta^{7/16},	\label{eqn:stronger-est-for-ex-0-rem-2}
\end{align} 
{\bf which is good enough for us, since it $\ll$ than  $\|\psi\|_{\del} \sim (m^{-1/2}  \delta) \sim \delta^{7/16}$.}}
By \eqref{m-est}, Eqs \eqref{vphi-rem1-est} and \eqref{vphil-est} imply  part (2) of Theorem \ref{thm:PE-macro}. 

Finally,  part (3) of  Theorem \ref{thm:PE-macro} follows from Proposition \ref{pro:ell-expansion-full} and Eqs. \eqref{ell0-def} and \eqref{eqn:phi0-def}.     \quad $\Box$


\subsection{Small quasi-momenta: Proof of Proposition \ref{prop:vphis-fp-exist} }\label{sec:vphis-fp}
Our starting point is equation \eqref{phis-eq}.
By Proposition \ref{pro:ell-expansion-full}, the operator $\ell$ given in \eqref{ell-def} is invertible. Hence
 we can  rewrite \eqref{phis-eq} as the fixed point problem:
\begin{align} \label{phis-fp'}
	 \varphi_s = \Phi_s(\varphi_s),\ \Phi_s(\varphi_s):= -\ell^{-1}Q (\ka - N(\varphi(\varphi_s))),
\end{align}
where $\varphi(\varphi_s) = \varphi_s + \varphi_l(\varphi_s)$ with $\varphi_l(\varphi_s)$ being the solution of \eqref{eqn:barP-eqn}, and $Q$ is given in  \eqref{Q-def}.

  First,
   we estimate the operator $\ell^{-1}Q$ in $\Phi_s$.
 Recall, $m$ is given in \eqref{m}.  

\begin{lemma} \label{lem:ell-inv-Q-est}
 Assume \eqref{a-m} and, recall,  $\z := \delta  m^{- 1/2}$. Then   
\begin{align}
	\|\ell^{-1}Qf\|_{ \delta} 
	 \ls \z \| f\|_{L^2}. 
\end{align}
\end{lemma}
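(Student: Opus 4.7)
The plan is to control the two terms in $\|\ell^{-1}Qf\|_{\delta}^2 = \z^{-2}\|\ell^{-1}Qf\|_{L^2}^2 + \|\n \ell^{-1}Qf\|_{L^2}^2$ separately, using the coercivity bounds on $\ell$ that are encoded in the expansion of Proposition \ref{pro:ell-expansion-full} together with the $L^2$-boundedness of $Q$ that follows from Lemma \ref{lem:Kinv-est-k}.

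First I would show $\|Qf\|_{L^2}\ls \|f\|_{L^2}$. Since $\Proj$ is an orthogonal projection, this reduces to $\|\Proj M_\delta \bar K_\delta^{-1}\|_{L^2\to L^2}\ls 1$. Because $M_\delta$ is self-adjoint (Proposition \ref{pro:Mdel-prop}) and $\bar K_\delta^{-1}$ is self-adjoint on $\Ran\,\bar\Proj$, the operator $\Proj M_\delta \bar K_\delta^{-1}$ is the adjoint of $\bar K_\delta^{-1} M_\delta \Proj$, whose $L^2\to L^2$ norm is $\ls 1$ by \eqref{KinvMP-est}. Hence $\|Qf\|_{L^2}\le \|\Proj f\|_{L^2}+\|\Proj M_\delta \bar K_\delta^{-1} f\|_{L^2}\ls \|f\|_{L^2}$.

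Next I would extract two lower bounds on $\ell$ from expansion \eqref{eqn:ell-expansion-full}. The projection $\Proj$ restricts to momenta $|{-i\n}| \le r = a/\delta$, so on $\Ran\,\Proj$ one has $\delta^2(-i\n)^4 \le a^2(-i\n)^2$, which, together with $\epsilon \ge \one - O(\cbet^2)$ and the smallness of $a$ (recall \eqref{a-m}) and of $\cbet$, yields the simultaneous coercivity
\begin{align*}
\ell \gs \nu \gs \z^{-2} \qquad\text{and}\qquad \ell \gs -\Delta \qquad\text{on } \Ran\,\Proj.
\end{align*}
The first bound gives $\|\ell^{-1}\|_{L^2\to L^2}\ls \z^2$, so $\|\ell^{-1}Qf\|_{L^2} \ls \z^2\|Qf\|_{L^2}\ls \z^2\|f\|_{L^2}$. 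For the gradient, using self-adjointness of $\ell$ and $\ell^{-1}$ together with the second bound on $g:=\ell^{-1}Qf$,
\begin{align*}
\|\n \ell^{-1}Qf\|_{L^2}^2 \ls \lan g,\, \ell\, g\ran = \lan Qf,\, \ell^{-1}Qf\ran \le \|\ell^{-1}\|\,\|Qf\|_{L^2}^2 \ls \z^2\|f\|_{L^2}^2.
\end{align*}
Combining these two estimates with the definition \eqref{s-norm-def} of $\|\cdot\|_\delta$ gives $\|\ell^{-1}Qf\|_{\delta}^2 \ls \z^{-2}\cdot \z^4\|f\|_{L^2}^2 + \z^2\|f\|_{L^2}^2 \ls \z^2\|f\|_{L^2}^2$, as claimed.

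The main obstacle is ensuring that the quartic remainder $O(\delta^2(-i\n)^4)$ in the expansion of $\ell$ does not swamp its quadratic principal symbol on $\Ran\,\Proj$; this is precisely where condition \eqref{a-m} intervenes, since fixing $a$ small and independent of $\delta$ and $T$ (while still $a^4\gg m$) allows the remainder to be absorbed into $-\n\epsilon\n$ up to a factor $1-O(a^2+\cbet^2)$, preserving the two lower bounds above. The rest is a routine combination of the Cauchy-Schwarz inequality and the duality observation above.
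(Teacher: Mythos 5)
Your proof is correct and takes a genuinely different route from the paper's. The paper establishes $\|\n^k\ell^{-1}Q\|\ls\z^{2-k}$ ($k=0,1$) as direct operator bounds: after obtaining $\|\n^k\ell^{-1}\|\ls\z^{2-k}$ from the same coercivity considerations that you use, it controls the $\Proj M_\delta\bar K_\delta^{-1}$ piece of $Q$ by invoking the decomposition of $M_\delta$ into a multiplication part and a commutator part from Proposition \ref{prop:Mdel-deco}, arriving at the sharper estimate $\|\n^k\ell^{-1}\Proj M_\delta\bar K_\delta^{-1}\|\ls m^{1/2}\z^{2-k}$ of \eqref{MdelKinv-bnd'}. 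Your argument bypasses that machinery entirely: the observation that $\Proj M_\delta\bar K_\delta^{-1}$ is the adjoint of $\bar K_\delta^{-1}M_\delta\Proj$, together with \eqref{KinvMP-est}, already gives $\|Q\|_{L^2\to L^2}\ls 1$, and the quadratic form of $\ell$ combined with the two coercivity bounds $\ell\gs\z^{-2}$ and $\ell\gs -\Delta$ converts the $L^2$ bound into the $\dot{H}^1$ bound without ever having to commute $\n$ past $\ell^{-1}Q$. This is cleaner and self-contained for the present statement; what it gives up is only the extra factor of $m^{1/2}$ in \eqref{MdelKinv-bnd'}, which the paper does not need here either (that sharpening is exploited later, in the estimate \eqref{Phi-ka'-est} within the proof of Proposition \ref{prop:psi1-est}).
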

\begin{proof} By the choice $a := \delta r = O(1)$ 
 (see  \eqref{a-m}), we have that that \[O(\delta^2 (-i\nabla)^4)=O(a^2 (-i\nabla)^2)\ \text{ on }\ \Ran \Proj.\] By Proposition \ref{pro:ell-expansion-full}, we have that  $\nu = \delta^{-2}|\Om|^{-1} (m +  O(\cbeta^{2}))$, which, together with the lower bound in \eqref{m-est}, implies  
\begin{align} \label{nu-est'} \nu \gs \delta^{-2}|\Om|^{-1} m=|\Om|^{-1} \z^{-2}.\end{align} These two facts and Eq. \eqref{eqn:ell-expansion-full} imply $\n^k\ell^{-1} 
\ls \z^{2-k},\ k=0, 1, 2$, which gives
\begin{align} 
\label{n-k-ell-1-est}	
 \|\n^k\ell^{-1}  \| \ls  \z^{2-k},\ k=0, 1, 2, 
\end{align}
for the $L^2$-operator norm.  
 Furthermore, we claim the  bound 
 \begin{align} \label{MdelKinv-bnd'}	\|\nabla^k\ell^{-1} \Proj  M_\delta \bar K_\delta^{-1}  \| \ls  \z^{2-k} m^{1/2} .
\end{align}
 Indeed, decomposing $M_\delta$ according to   \eqref{Mdelta-deco}-\eqref{Mdel''-bnd} of Proposition \ref{prop:Mdel-deco} and using bound  Eqs \eqref{n-k-ell-1-est}, 
we find 
\begin{align} \notag 
\|\nabla^k &\ell^{-1} \Proj  M_\delta  \n^{-1}\bar \Proj\|\\
 \notag & \leq  \|\nabla^k\ell^{-1} \Proj  M_\del' \Proj \| + \|\nabla^k\ell^{-1} \n\Proj  \n^{-1} M_\delta''  \n^{-1}\bar \Proj \|\\
\label{Mdel-bnd''}	&\ls   \z^{2-k}\delta^{-1} m^{1/2} + \z^{1-k} , 
\end{align}
where $\|\cdot \|$ is the operator norm in $L^2$. Since $\z := \delta  m^{- 1/2}$, this implies 
\begin{align} \label{Mdel-bnd'}	\|\nabla^k\ell^{-1} \Proj  M_\delta \n^{-1}\bar \Proj f\|_{L^2} \ls a^{-1} \z^{1-k}   \| f\|_{L^2}.
\end{align}
 
Eqs  \eqref{Kinv-est}, with $k=1, i=0$,   and \eqref{Mdel-bnd'}, together with the insertion of $\one =\n^{-1}  \n=\Delta^{-1}\n\n$ between $M_\delta$ and $\bar K_\delta^{-1}$, imply \eqref{MdelKinv-bnd'}.

Using  \eqref{n-k-ell-1-est} and \eqref{MdelKinv-bnd'} 
 and recalling the definition $Q := \Proj -\Proj M_\delta \bar K_\delta^{-1} $ (see \eqref{Q-def}), we find that 
\begin{align}
 \label{nonlinear-est-3'}		& \|\n^k\ell^{-1}Q\|  \ls \z^{2-k}, \ k=0, 1, 2,
\end{align}
which, due to the definition of the norm $\|f\|_{ \delta} $ $\simeq \sum_0^1\z^{k-1} \|\n^k\varphi\|_{L^2}$ 
  in \eqref{s-norm-def}, implies Lemma \ref{lem:ell-inv-Q-est}. 
\end{proof}

 Lemma \ref{lem:ell-inv-Q-est} and nonlinear estimate \eqref{Ndel-est-rough}, 
together with  $\z \delta^{- 1/2}=m^{- 1/2} \delta^{1/2}$, 
 imply that, under Assumption \ref{A:diel}, 
 \begin{align} \notag 
	&\|\ell^{-1}Q[N_\delta(\varphi) - N_\delta(\psi)]\|_{ \delta}\\ 
 \label{nonlin-est-s1'}	& \qquad  \qquad \ls 
 m^{- 5/6}\delta^{1/2}(\|\varphi\|_{ \delta} + \|  \psi\|_{ \delta})\|\varphi - \psi\|_{ \delta}. 
\end{align}

 Eq. \eqref{phis-fp'}, 
  Lemma 
 \ref{lem:vphil-Lip-est}   and 
  estimate \eqref{nonlin-est-s1'} 
  imply, for 
 $\varphi, \psi \in B_{s,\delta}$, 
\begin{align} 
 \label{Phis-est-1} &\|\Phi_s(\varphi_s) \|_{\delta} \ls 
 \z\|\kp\|_{L^{2}}    + m^{- 5/6} \delta^{1/2} c_s^2,\\ 
&\|\Phi_s(\varphi_s) - \Phi_s(\varphi_s')\|_{\delta} \ls 
 m^{- 5/6}\delta^{1/2} c_s\|\varphi_s - \varphi_s'\|_{\delta}. \label{Phis-est-2}
\end{align}
These inequalities, together with the inequality $ m^{5/6}  \delta^{- 1/2}$ 
  $=\theta^{-3/2}\z \gg c_s\gg \z$, which follows from assumption  \eqref{cs-cl-cond}, 
 yield that $\Phi_s(\varphi_s)$ is a contraction on $B_{s,\delta}$ and therefore has a unique fixed point.  This proves Proposition \ref{prop:vphis-fp-exist}. 
 \qquad \qquad  \qquad \qquad  \qquad \qquad  \qquad  \qquad \qquad   \qquad \qquad $\Box$

\subsection{
Proof of Proposition \ref{prop:psi1-est} }\label{sec:psi1-fp}

In view of Proposition \ref{pro:ell-expansion-full}, we write
\begin{align}
	&\ell = \ell_0 + \ell', \label{eqn:ell-decomposition-ell-0-1-2} 
\end{align}
where $ \ell_0$ is defined \eqref{ell0-def}, 
 and $\ell'$ is defined by this expression. 
By  Proposition \ref{pro:ell-expansion-full}, $\ell'=O(\delta^2 (-i\n)^4)$ on the range of $\Proj$.

 Inserting \eqref{eqn:varphis-decomposition} into equation \eqref{phis-eq} and 
 using \eqref{eqn:ell-decomposition-ell-0-1-2} and the relations $\ell  \Proj \psi= \Proj \kp+ \ell'  \Proj \psi$ and $- \Proj \kp+ Q \kp = - \Proj M_\delta \bar K_\delta^{-1} \kp$, we obtain the equivalent equation for $\psi_1$: 
\begin{align} \label{eqn:phi1-eqn}	&\ell \psi_1 =-\ell'\Proj \psi  
- \Proj M_\delta \bar K_\delta^{-1} \kp + QN_\delta(\tilde\varphi) \, ,\\
	&\tilde \varphi = \tilde \varphi(\psi_1) := \Proj \psi + \psi_1 + \varphi_l(\Proj \psi+\psi_1), \label{eqn:tilde-varphi-def}
\end{align}
with $\varphi_l = \varphi_l(f)$  the solution to equation \eqref{eqn:barP-eqn} given by Proposition \ref{pro:phir-fixed-pt} with $\varphi_s$ replaced by $f \in B_{s,\delta}$.

By Proposition \ref{pro:ell-expansion-full}, the operators $\ell$ and $\ell_0$ are invertible. 
We invert $\ell_0$ (see \eqref{ell0-def}) in \eqref{eqn:phi0-def} to obtain
\begin{align}
	\psi := \ell_0^{-1}\kp \label{eqn:phi1-eqn-inverse} \, .
\end{align}
Furthermore, we invert $\ell$ (see \eqref{ell-def}) 
  in equation \eqref{eqn:phi1-eqn} and use \eqref{eqn:phi1-eqn-inverse} to find
\begin{align}
 \label{psi1-fp}	\psi_1 =& \Phi_1(\psi_1) = \Phi_{1}' + \Phi_1''(\tilde \varphi) \, ,
\end{align}
where $\tilde \varphi$ is  given in \eqref{eqn:tilde-varphi-def},  and, with $Q$  given in  \eqref{Q-def},  
\begin{align} \label{Phi-k'}	
&\Phi_{1}' := -\ell^{-1} [\ell' \ell_0^{-1}\kp  +\Proj M_\delta \bar K_\delta^{-1} \kp],\\
 \label{Phi1-nonl} 
 & \Phi_1''(\tilde \varphi):= \ell^{-1}QN(\tilde \varphi).
\end{align}
  \eqref{psi1-fp} is a fixed point equation for $\psi_1$. However, we do not have to solve it since we have already proved the existence of $\psi_1$. We use \eqref{psi1-fp} to estimate $\psi_1$.

Next,  by Proposition \ref{pro:ell-expansion-full}, Eq. \eqref{nu-est'} and the relation $O(\delta^2 (-i\nabla)^4)=O(a^2 (-i\nabla)^2)$ on $\Ran \Proj$ (see the definition of $\Proj$ in  \eqref{eqn:choice-of-P}), valid due to the choice $a := \delta r = O(1)$ 
  (see  \eqref{a-m}), we have that
the operators $\ell, \ell_0, \ell'$ given in \eqref{ell-def}, \eqref{ell0-def}, and \eqref{eqn:ell-decomposition-ell-0-1-2}, respectively, satisfy
\begin{align}
&	|\ell'| \ls  \delta^2 (-i\n)^4 , \label{ell'-est}\\
&	\ell_0 \gs  -\Delta + \z^{-2}, \label{ell0-lower-bnd}\\
&	\ell \gs  -\Delta + \z^{-2}, \label{ell-lower-bnd}
\end{align} 
where, recall,  $\z:=  m^{- 1/2}  \delta$, with $m$ given in \eqref{m}, (cf. \eqref{n-k-ell-1-est}). 

Using \eqref{ell'-est} - \eqref{ell-lower-bnd} 
 and the fact $\ell, \ell_0, \ell'$ are self-adjoint and are functions of $-i\n$ and therefore mutually commute, 
and using  \eqref{MdelKinv-bnd'}, 
we find that  
\begin{align}
\label{eqn:Phikp-L2}	&\| \Phi_{1}' \|_{\dot H^i} \ls  \del^{ 2}\|\kp\|_{H^i}+    m^{1/2}\z^{2-i}\|\kp\|_{L^2},\ i=0,1.
\end{align}
  By the choice of the $B_{s,\delta}$ norm (see \eqref{s-norm-def}) and since $\del^{ 2}\ll m^{1/2}\z^{2-i}$ (by  \eqref{cs-cl-cond}, or \eqref{beta-del-cond}), we see that  
  \begin{align} \label{Phi-ka'-est} 
	\|\Phi_{1}'\|_{\delta} \ls   m^{1/2} \z \|\kp\|_{L^2}. 
\end{align}

Now, we turn our attention to the map $\Phi(\tilde \varphi):=\ell^{-1}QN(\tilde \varphi)$ (see \eqref{Phi1-nonl}).  The definition of $\Phi(\tilde \varphi)$ 
   and Eq. \eqref{nonlin-est-s1'} give 
   \begin{align} \label{Phi-est} 
	\|\Phi_{1}''(\tilde \varphi)\|_{\del} \ls 
	 m^{- 5/6}\delta^{1/2}\|\tilde \varphi\|_{\delta}^2. 
\end{align}
Next, we estimate $\tilde \varphi = \tilde \varphi(\psi_1) := \Proj \psi + \psi_1 + \varphi_l(\Proj \psi+\psi_1)=\varphi_s+\varphi_l$ (see \eqref{eqn:tilde-varphi-def}).   
 By Propositions \ref{pro:phir-fixed-pt} and \ref{prop:vphis-fp-exist},  $\|\varphi_l\|_{ \delta} \le c_l$ 
  and $\| \Proj \psi + \psi_1\|_{ \delta} \ls c_s$ and therefore 
 $\|\tilde \varphi\|_{ \delta} \ls c_s+c_l$. This, together with \eqref{Phi-est} and condition \eqref{cs-cl-cond} and inequality \eqref{s-Hdot-est}, yields
\begin{align} \label{Phi-est'} 
	\|\Phi_{1}''(\tilde \varphi)\|_{ \delta} &\ls   
	 m^{- 5/6}\delta^{ 1/2}c_l^2. 
\end{align}
 Equations \eqref{Phi-ka'-est} and \eqref{Phi-est'} and the relation $ m^{- 5/6}\delta^{ 1/2}=\theta^{3/2}\z^{- 1}$ imply  
\begin{align} \label{Phi1-est'} 
	\|\Phi_1(\psi_1)\|_{ \delta} &\ls  m^{1/2} \z \| \ka'\|_{L^{2}}+ \theta^{3/2}\z^{- 1} c_l^2. 
\end{align}
By condition  \eqref{cs-cl-cond} 
 and 
 our choice $c_l =\om\z,\ \om:=\min(\theta^{- 1/2}, m^{-1/4})$, we see that 
   \eqref{Phi1-est'} implies 
\begin{align} \notag
	\|\Phi_1(\psi_1)\|_{ \delta} 	& \ls   
	  (a^{-2} m^{1/2}+ \theta^{1/2})\z.
\end{align}
 Since $\psi_1=\Phi_1(\psi_1)$ and, the above estimate 
 gives \eqref{psi1-est},  proving  Proposition \ref{prop:psi1-est}. 
\quad \quad \quad $\Box$

\section{Analysis of the operator $\ell$. Proof of Proposition \ref{pro:ell-expansion-full}}   \label{sec:oper-ell}

The goal of this section is to prove Proposition \ref{pro:ell-expansion-full}. 
The proof follows readily from Lemmas \ref{lem:ell-b-rel}, \ref{lem:b-expan} and \ref{lem:dielectric-const-positive} below.  Throughout this section, we suppose 
Assumption \ref{A:diel} 
holds, without mentioning this explicitly.  

Let  $\Mfib{k}$ and $\barKfib{k}$ be the $k$-th Bloch-Floquet fibers of $M\equiv M_{\delta=1}$ and $\bar K \equiv \bar K_{\delta=1}$ (see \eqref{BF-deco-opr}, {\it not to be confused} with $M_{\delta}$ and $\bar K_{\delta}$). 
Since, by Proposition \ref{pro:Mdel-prop}, $\bar K$ is invertible, then so is $\barKfib{k}$ and $\barKfib{k}^{-1}=(\bar K^{-1})_k$ (see \eqref{AB-fib}).
 We have
\begin{lemma}\label{lem:ell-b-rel} The operator $\ell$, defined in \eqref{ell-def}, is of the form    
\begin{align}\label{ell-b-rel}	 \ell =& \delta^{-2} b (-i\delta \nabla) \Proj, \end{align}
where $b(k)$ is a smooth, even function of $-i\delta \n$ given  explicitly as:
\begin{align}  \label{b-expl}
	b (k) =& |\Om|^{-1} \big\lan 1, (|k|^2+\Mfib{k}-\Mfib{k} \barKfib{k}^{-1} \Mfib{k}) 1 \big\ran_{L^2_{\rm per}}. \end{align}
 \end{lemma}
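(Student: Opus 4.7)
The plan is to recognize $\ell$ as a scaled sandwich $\delta^{-2}\Proj \Udel A \Udel^*\Proj$ of a single $\lat$-periodic operator $A$, and then read off $b$ from Lemma \ref{lem:PAP-repr-del}. Expanding the definition of $\ell$ and splitting $K_\delta = -\Delta + M_\delta$, I get
\begin{align*}
\ell = \Proj(-\Delta)\Proj + \Proj\bigl(M_\delta - M_\delta \bar K_\delta^{-1} M_\delta\bigr)\Proj.
\end{align*}
Using the scaling identities $M_\delta = \delta^{-2}\Udel M \Udel^*$ from \eqref{Mdelta-M-relation'}, $\Udel^*(-\Delta)\Udel = -\delta^{-2}\Delta$, and $\Udel^*\bar\Proj \Udel = \bar P_{\delta r}$, a direct calculation yields $\bar K_\delta = \delta^{-2}\Udel \tilde K \Udel^*$ with $\tilde K := \bar P_{\delta r}(-\Delta+M)\bar P_{\delta r}$, hence $\bar K_\delta^{-1} = \delta^{2}\Udel \tilde K^{-1} \Udel^*$. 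Plugging this back in, together with $\Proj(-\Delta)\Proj = \delta^{-2}\Proj\Udel(-\Delta)\Udel^*\Proj$, I arrive at
\begin{align*}
\ell = \delta^{-2}\Proj \Udel A \Udel^*\Proj, \qquad A := -\Delta + M - M\tilde K^{-1} M,
\end{align*}
which is $\lat$-periodic since $-\Delta$, $M$, $\bar P_{\delta r}$, and therefore $\tilde K^{-1}$ all commute with the translations $T_s$, $s\in\lat$.

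Next, Lemma \ref{lem:PAP-repr-del} applied to $A$ yields $\ell = \delta^{-2}b(-i\delta\n)\Proj$ with $b(k) = \langle A_k 1\rangle_\Om = |\Om|^{-1}\langle 1, A_k 1\rangle_{L^2_{\rm per}}$. The product rule \eqref{AB-fib} for fibers and Lemma \ref{lem:Ak-repr} give $A_k = (-\Delta)_k + M_k - M_k \tilde K_k^{-1} M_k$ and $(-\Delta)_k 1 = e^{-ikx}(-\Delta)(e^{ikx}) = |k|^2$. The key observation is that on the relevant region $k\in B(\delta r)\subset\Om^*$ (forced by the right-hand $\Proj$), the fiber $(\bar P_{\delta r})_k$ equals precisely the projection $Q := I - |\Om|^{-1}|1\rangle\langle 1|$ onto $\{1\}^\perp\subset L^2_{\rm per}$, independently of the specific value of $\delta r$; this is a direct consequence of Lemma \ref{lem:P-decomposed-form} with $S = B(\delta r)$. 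Consequently $\tilde K_k = QK_kQ$ coincides with the $\delta=1$ fiber $\bar K_k$, and the claimed formula for $b(k)$ follows.

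Smoothness of $k\mapsto b(k)$ on $B(\delta r)$ comes from the smooth $k$-dependence $M_k = e^{-ikx}M_0 e^{ikx}$ provided by Lemma \ref{lem:Ak-repr} and the uniform invertibility of $QK_kQ$ on $\{1\}^\perp$, which follows from Assumption \ref{A:diel} together with a Neumann argument near $k=0$. Evenness $b(-k) = b(k)$ follows from complex-conjugation symmetry: the antiunitary $C$ satisfies $C M_{-k} C = M_k$, $C \bar K_{-k}^{-1} C = \bar K_k^{-1}$, and $C\,1 = 1$, forcing $\langle 1, A_{-k} 1\rangle = \overline{\langle 1, A_k 1\rangle}$, which is real by self-adjointness of $A_k$. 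The main obstacle in this argument is the identification $\tilde K_k = \bar K_k$ on $k\in B(\delta r)$; this is what makes the formula for $b$ genuinely $\delta$-independent in its form, despite $\tilde K$ itself depending on $\delta r$ through $\bar P_{\delta r}$.
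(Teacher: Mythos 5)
Your proof is correct and follows essentially the same route as the paper: factor $\ell$ as $\delta^{-2}\Proj\Udel A\Udel^*\Proj$ for an $\lat$-periodic operator $A$, invoke Lemma \ref{lem:PAP-repr-del}, and read off the fibers, with smoothness and evenness coming from the same Bloch--Floquet and complex-conjugation symmetries. Your explicit identification of $(\bar P_{\delta r})_k = \bar\Pi_0$ for $k\in B(\delta r)$, via Lemma \ref{lem:P-decomposed-form}, is a useful clarification of the paper's more compressed rescaling step $\ell=\delta^{-2}\Udel\,\ell\big|_{\delta=1}\Udel^*$, which is a bit cavalier about the cutoff radius appearing in $\ell\big|_{\delta=1}$.
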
 
\begin{proof} Since $M_\delta$ is $\latde$-periodic by  Proposition \ref{pro:Mdel-prop}, Eq \eqref{ell-def} implies that so is $\ell$. Moreover, \eqref{ell-def}  and \eqref{Mdelta-M-relation'} yield
\begin{align}
\ell =& P_r [\delta^{-2}U_\delta (-\Delta + M) U_\delta^*] P_r  
  - P_r [\delta^{-2}U_\delta M U_\delta^*] \notag \\
&\times\bar P_r [\delta^{-2}U_\delta (-\Delta + M) U_\delta^*]^{-1} \bar P_r [\delta^{-2}U_\delta M U_\delta^*] P_r,  \notag   
\end{align}
where, recall, $M\equiv M_\del\big|_{\del = 1}$, which implies that
\begin{align} \label{ell-ell-del1}
	\ell  = \delta^{-2}\Udel \ell\big|_{\del = 1}\Udel^* .\end{align}
The last two properties and Lemma \ref{lem:PAP-repr-del} show that $\ell$ is a function of $-i\delta \n$ of the form \eqref{ell-b-rel},    
where $b(k) = \lan (\ell |_{\delta =1} )_k 1 \ran_\Om$, with  $(\ell\mid_{\delta =1} )_k$ being the Bloch-Floquet fibres of $\ell\mid_{\delta =1} $ and $1$ standing for the constant function, $1 \in L^2_{\rm per}(\R^3)$.   Using equations \eqref{AB-fib}, \eqref{ell-def} and $\Delta_k\one =0$, we find explicit form \eqref{b-expl} of $b(k)$.

 The next proposition gives the Bloch-Floquet decomposition of the operator $M$.
\begin{proposition} \label{prop:M-BF-deco} 
The operator $M$ has a Bloch-Floquet decomposition \eqref{BF-deco-opr} whose $k-$fiber, $M_{k}$, acting on $\Lper$ is given by 
\begin{align} \label{exact-form-Mk}
	M_{k} f =&  -  \den\left[ \oint r_{\rm per, 0}(z)f r_{{\rm per}, k}(z)\right]
\end{align}
where 
 $f \in \Lper$ and, on $\Lper$,
\begin{align}
	&r_{\rm per, k}(z) = (z- h_{\rm per, k})^{-1}, \label{eqn:rperk-adhoc} \\
	&h_{\rm per, k} = (-i\nabla- k)^2 - \phi_{\rm per} \label{eqn:hperk-adhoc} \, .
\end{align}
\end{proposition}
\begin{proof}[Proof of Proposition \ref{prop:M-BF-deco}]
Let $T_s$ be given in \eqref{eqn:translation-op-def} and $\varphi \in L^2 $. To compute $k$-fibers of  $M$,  we note $T_{-t}\den [A] = \den\left[T_t^* A T_t \right] $ and $[ T_t,  r_{\rm per}(z)]=0$ for all $t \in \lat$. Using these relations, the definition of the Bloch-Floquet decomposition \eqref{fk} and equation \eqref{Mexpl}, we obtain
\begin{align}
	 (M \varphi)_k(x) & 
	 = - \sum_{t \in \lat} e^{- ik(x+t)} \oint T_{-t}\den\left[ r_{\rm per}(z)  \varphi r_{\rm per}(z) \right] \notag\\
		= &- \sum_{t \in \lat} e^{- ik(x+t)}\oint \den\left[T_t^* r_{\rm per}(z) \varphi r_{\rm per}(z) T_t \right].  \label{eqn:M-delta-varphi-k-1}
\end{align}
Since $r_{\rm per}(z)$ is $\lat$-periodic, \eqref{eqn:M-delta-varphi-k-1} shows
\begin{align}
	(M \varphi)_k & (x) \notag\\
		  &= - \sum_{t \in \lat} e^{-2\pi ik(x+t)}\oint \den\left[r_{\rm per}(z) (T_{-t}\varphi) r_{\rm per}(z) \right] . \label{eqn:M-delta-varphi-k-2}
\end{align}
Using that $\den[A]f = \den[Af] = \den[fA]$ for any operator $A$ on $L^2(\R^3)$ and any sufficiently regular function $f$ on $\R^3$, we insert the constant factor of $e^{-ikt}$ into $\den$ in \eqref{eqn:M-delta-varphi-k-2}. We obtain
\begin{align}
		  (M &\varphi)_k(x) \notag\\
		  &= -  e^{-ikx} \oint \den\left[r_{\rm per}(z)  \sum_{t \in \lat} e^{- ikt} (T_{-t}\varphi) r_{\rm per}(z) \right]. 
\end{align}
This and the definition of the Bloch-Floquet decomposition of $\varphi$, \eqref{fk}, imply
\begin{align}
(M \varphi)_k(x) 	=&  -  \oint \den\left[r_{\rm per}(z)  \varphi_k e^{ ikx} r_{\rm per}(z) e^{-ikx}  \right]. 
\end{align}
 Since $e^{ ikx} (-i\nabla) e^{- ikx} = -i\nabla - k$, and therefore $e^{ i kx}r_{\rm per}(z)e^{- ikx} = r_{{\rm per}, k}(z)$, this gives \eqref{exact-form-Mk}. 
\end{proof}

Since the resolvents $r_{\rm per, k}(z)$ 
 are smooth in $k$ (see \eqref{eqn:rperk-adhoc} - \eqref{eqn:hperk-adhoc}), then, by \eqref{exact-form-Mk}, $\Mfib{k}$ is also smooth in $k$. Hence, by \eqref{b-expl}, $b(k)$ is smooth in $k$.
 
   Since the operator $\Mfib{k}-\Mfib{k} \barKfib{k}^{-1} \Mfib{k}$ in \eqref{b-expl} is self-adjoint, the function $b(k)$ is real.  By Lemma \ref{prop:M-BF-deco} and the properties $\bar r_{\rm per, k}(z) := \cC r_{\rm per, k}(z)\cC$ $ = r_{\rm per, - k}(\bar z)$, where $\cC$ is the complex conjugation, and the contour of integration in \eqref{exact-form-Mk} is symmetric w.r.to the reflection $z\ra \bar z$, we have $b(k)=\bar b(k)=b(- k)$, i.e. $b(k)$ is even. 
 \end{proof}

Let 
$ \Kfib{0} = \Kfib{k=0}$ denote the $0$-fiber of $K$, acting on $L^2_{\rm per}(\R^3)$. We also let $\Pi_0$ denote the projection onto constant functions on $L^2_{\rm per}(\R^3)$ and $\bar \Pi_0 := 1- \Pi_0$. Finally, we define 
\begin{align}
	\barKfib{0} :=  \bar \Pi_0 \Kfib{0} \bar \Pi_0. \label{eqn:bar-K-0-def}
\end{align}
Recall  the abbreviation $\cbeta:=\beta e^{- \etal\beta}$. With this notation, we have

\begin{lemma}\label{lem:b-expan} Let  $m$  
be given in \eqref{m}.  
 The function $b(k)$ given in \eqref{b-expl} satisfies
  \begin{align} 
\notag		 b(k)  = |\Om|^{-1} (m +  O(\cbeta^2)) 
		&+ k \cdot \epsilon  k \\
\label{b-expan}& + k \cdot O(\cbeta)  k +O(|k|^4), \end{align}
   where $m$ is a scalar given by \eqref{m} and $ \epsilon$ is a real matrix given by \eqref{eps}-\eqref{eps''}.
 \end{lemma}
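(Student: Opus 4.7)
Since Lemma \ref{lem:ell-b-rel} established that $b(k)$ is smooth, real and even in $k$, Taylor's theorem gives
\[
 b(k) = b(0) + \tfrac{1}{2}k_i k_j (\partial_{k_i}\partial_{k_j}b)(0) + O(|k|^4),
\]
with no first- or third-order terms. The proof therefore reduces to (a) evaluating $b(0)$ and showing its deviation from $m/|\Om|$ is $O(c_T^2)/|\Om|$; (b) computing the Hessian at $k=0$ and identifying it as $2(\one+\epsilon'-\epsilon'') + O(c_T)\cdot\one$. To control the $O(|k|^4)$ remainder uniformly, I would use that $h_{{\rm per},k}=(-i\nabla-k)^2-\phi_{\rm per}$ is an analytic family with uniformly bounded resolvent on the contour $\Gamma$ for $k$ in a neighborhood of $0$, so that the Taylor series of $r_{{\rm per},k}(z)$ converges in the appropriate operator norm.

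\textbf{The constant term.} The key identity is $\oint \rpero^2(z)=f_T'(h_{{\rm per},0}-\mu)$, obtained by integrating by parts in the Cauchy contour (the decay bound \eqref{oint-decay} kills boundary contributions). Applied to $\Mfib{0}1=-\den\oint \rpero^2$, this gives $\langle 1,\Mfib{0}1\rangle_{L^2_{\rm per}}=-\tr_{L^2_{\rm per}}f_T'(h_{{\rm per},0}-\mu)=m$. For the quadratic correction I would exploit self-adjointness of $\Mfib{0}$ (Proposition \ref{pro:M-expl}, carried to fibers) together with $\barKfib{0}=\bar\Pi_0 \Kfib{0}\bar\Pi_0$ acting on $\bar\Pi_0 L^2_{\rm per}$, rewriting
\[
 \langle 1,\Mfib{0}\barKfib{0}^{-1}\Mfib{0}1\rangle = \langle \bar\Pi_0\Mfib{0}1, \barKfib{0}^{-1}\bar\Pi_0\Mfib{0}1\rangle.
\]
The spectral gap hypothesis \ref{A:diel} gives $\|f_T'(h_{{\rm per},0}-\mu)\|\le c_T$, which together with the appendix's density-from-tracial-norm bounds yields $\|\bar\Pi_0\Mfib{0}1\|_{L^2_{\rm per}}\lesssim c_T$. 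Combined with the uniform bound $\|\barKfib{0}^{-1}\|\lesssim 1$ (from $\barKfib{0}\ge \bar\Pi_0(-\Delta)\bar\Pi_0 \gtrsim 1$, since constants are the zero mode of $-\Delta$ on $L^2_{\rm per}$, and $\Mfib{0}\ge 0$), this delivers the $O(c_T^2)/|\Om|$ correction.

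\textbf{The quadratic term.} I split the Hessian into three contributions arising from $|k|^2$, $\Mfib{k}$, and $\Mfib{k}\barKfib{k}^{-1}\Mfib{k}$. The $|k|^2$ piece contributes precisely $\one$ to $\epsilon$. For $\langle 1,\Mfib{k}1\rangle=-\oint\tr(\rpero\, r_{{\rm per},k})$ I use the Neumann expansion $r_{{\rm per},k}=r_0+r_0 V_k r_0+r_0 V_k r_0 V_k r_0+O(k^3)$ with $V_k=-2k\cdot p+|k|^2$, $p=-i\nabla$. Iterated integration by parts gives $\oint r_0^n = \tfrac{1}{(n-1)!}f_T^{(n-1)}(h_{{\rm per},0}-\mu)$; the odd-order piece $\oint\tr[r_0^2 p_i r_0]=\tfrac12\tr[p_i f_T''(h_{{\rm per},0}-\mu)]$ vanishes by time-reversal (since $\overline{p_i}=-p_i$, $\overline{h_{\rm per,0}}=h_{\rm per,0}$, and $\Gamma$ is symmetric under $z\mapsto\bar z$), confirming the absence of a linear term. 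The quadratic piece produces traces of the form $\oint\tr[r_0^2 p_i r_0 p_j r_0]=-|\Om|\epsilon'_{ij}$ (giving the $\epsilon'$ part of $\epsilon$, with symmetrization supplied by complex conjugation) together with terms proportional to $\tr f_T''(h_{{\rm per},0}-\mu)$, which are $O(c_T)$ by the gap and can be absorbed into $k\cdot O(c_T)k$. For $\langle 1,\Mfib{k}\barKfib{k}^{-1}\Mfib{k}1\rangle$, I set $w_k:=\bar\Pi_0\Mfib{k}1$ and use $\Mfib{k}=\Mfib{k}^*$ to rewrite it as $\langle w_k,\barKfib{k}^{-1}w_k\rangle$. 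Expanding bilinearly in $k$ and noting $\partial_{k_i}w_k|_0=\bar\Pi_0\rho'_i$ with $\rho'_i$ as in \eqref{rho'}, every term in the quadratic expansion that contains the factor $w_0$ is $O(c_T)$ since $\|w_0\|\lesssim c_T$; the only surviving leading term is $k_i k_j\langle\bar\Pi_0\rho'_i,\barKfib{0}^{-1}\bar\Pi_0\rho'_j\rangle=|\Om|\,k\cdot\epsilon''k$, contributing $-\epsilon''$ to $\epsilon$ after the sign in \eqref{b-expl}.

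\textbf{Main obstacle.} The chief difficulty is the bookkeeping in step (iii): the second derivative of $\langle w_k,\barKfib{k}^{-1}w_k\rangle$ produces roughly nine bilinear terms (two Taylor derivatives distributed among three factors), and one must systematically track which terms are $O(1)$ (only the symmetric pairing of $\partial w$ with $\partial w$ via $\barKfib{0}^{-1}$) versus $O(c_T)$ (anything carrying a $w_0$) versus $O(c_T^2)$. A related subtlety is ensuring that the symmetrization $\tfrac12(\epsilon'_{ij}+\epsilon'_{ji})$ obtained from the contraction with $k_ik_j$ coincides with the $\epsilon'$ defined in \eqref{eps'}, for which the time-reversal symmetry of the contour and of $h_{{\rm per},0}$ is again used. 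These accounting steps, together with verifying that traces like $\tr f_T^{(n)}(h_{{\rm per},0}-\mu)$ for $n\ge 2$ really are $O(c_T)$ in the relevant sense (using both the gap and the finiteness per fundamental cell guaranteed by Appendix A), are the technical heart of the proof.
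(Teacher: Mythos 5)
Your plan coincides with the paper's proof: the paper also splits $b=b_1-b_2$ with $b_1(k)=|k|^2+|\Om|^{-1}\langle 1,M_k 1\rangle$ and $b_2(k)=|\Om|^{-1}\langle 1,M_k\bar K_k^{-1}M_k 1\rangle$, uses $\oint r_0^n=\tfrac{1}{(n-1)!}f_T^{(n-1)}(h_{{\rm per},0}-\mu)$ to evaluate $b_1(0)=|\Om|^{-1}m$ and the $O(c_T)$ correction $\epsilon_1'$, uses the resolvent identity and evenness to extract the $\one+\epsilon'$ piece, and controls $b_2$ by writing $\rho_k=(\bar P_a)_k M_k 1=(\bar P_a)_kV+\rho_k'$ (your $w_k$ up to the $\bar P_a$ vs.\ $\bar\Pi_0$ convention, which agree at $k=0$ for small $a$), with $\|V\|_{L^2_{\rm per}}\lesssim c_T$ from Lemma \ref{lem:Vp-uppper-bound} and $\|\bar K_k^{-1}\|\lesssim 1$ playing exactly the roles you assign them. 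The bookkeeping you flag as the main obstacle is indeed how the paper allocates the terms among $b_2(0)=O(c_T^2)$, $\mathrm{Rem}=O(c_T|k|^2)$, and the surviving $-k\cdot\epsilon''k$.
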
 
 \begin{proof}[Proof of Lemma \ref{lem:b-expan}]  
  First, we use \eqref{b-expl} to write $b(k)$ as 
\begin{align}
\label{b=b1-b2}	&b (k)  = b_1(k) - b_2(k),\\
  \label{b1-def}	&b_1(k) := |k|^2+|\Om|^{-1} \left\lan 1, \Mfib{k} 1 \right\ran_{L^2_{\rm per}},\\ 
\label{b2-def}	&b_2(k) :=  |\Om|^{-1} \lan 1, \Mfib{k} \barKfib{k}^{-1} \Mfib{k} 1\ran_{L^2_{\rm per}}. 
\end{align}
We begin with $b_1(k)$. We claim that 
\begin{align}
	b_1(k) =& |\Om|^{-1} m 
	+ \epsilon'_1 |k|^2 + k \cdot (1+\epsilon') k + O(|k|^4), \label{eqn:bk-b0-explicit-formula}
\end{align}
 where $m$ and $\epsilon'$ are given in \eqref{m} and \eqref{eps'}, respectively, and $\epsilon'_1$  is a real, symmetric matrix satisfying 
	$\epsilon'_1 = O(\cbeta), $ 
contributing to  the third term on the r.h.s. of \eqref{b-expan}. 

Using definition of $b_1$ in \eqref{b1-def} and Proposition \ref{prop:M-BF-deco}, 
 we see that 
\begin{align}
	b_1(k) &= -  |\Om|^{-1}\lan 1, \den \oint \rpero(z) 1 r_{{\rm per}, k}(z) \ran_{L^2(\Om)}  \\
\label{b1-form'}		&= - |\Om|^{-1} \Tr_{\Om}  \oint \rpero(z)  r_{{\rm per}, k}(z),  
\end{align}
where $1$ is the constant function $1 \in \Lper$ and $\Om$ is an arbitrary fundamental cell of $\lat$. 
 To begin with, using the Cauchy-formula for derivatives, we obtain
\begin{align}\label{eqn:b10-explicit-formula}	b_1(0) =& - |\Om|^{-1} \Tr_{\Om}   \oint \rpero^2(z) =|\Om|^{-1} m.
\end{align}
Next, recall that $h_{{\rm per}, k} = (-i\n - k)^2 - \phi_{\rm per}$ (see \eqref{eqn:hperk-adhoc}). We have, by the resolvent identity, that 
\begin{align}\label{rper-res-form}\rperk(z) & - \rpero(z) =	\rperk(z)   [2(-i\nabla) \cdot k -   |k|^2] \rpero(z). \end{align}
Applying this identity to \eqref{b1-form'} and using that $b_1(k)$ is even, we obtain \eqref{eqn:bk-b0-explicit-formula}, with 
  $\epsilon'_1:= -  |\Om|^{-1} \Tr_{L^2_{\rm per}} \oint  \rpero^3(z)$ and $\epsilon'$ given by \eqref{eps'}.
Using the Cauchy-integral formula, we rewrite  $\epsilon'_1$  as
\begin{align}
	\epsilon'_1 =& -\frac12  \Tr_{L^2_{\rm per}}  f''_{T}(\hper - \mu).
\end{align}
Then following the proof of Lemma \ref{lem:Vp-uppper-bound} with $f'_{\rm FD}$ replaced by $f''_{\rm FD}$, we show that $\epsilon'_1 =  O(\cbeta)$. This proves \eqref{eqn:bk-b0-explicit-formula}.

Next, we prove the expansion
\begin{align}\label{eqn:b2-claim}	b_2(k) = k \cdot \epsilon'' k + k \cdot \epsilon_1'' k  +O(|k|^4) +  O(\cbeta^2), 
\end{align}
  where $\epsilon''$ is given in \eqref{eps''}, $\epsilon_1''$ is a real matrix satisfying $\epsilon_1'' = O(\cbeta)$ (contributing to  the third term on the r.h.s. of \eqref{b-expan}). 
First,  
we recall from \eqref{b2-def} 
\begin{align}
	b_2(k) =& |\Om|^{-1}\lan 1, (M \bar K^{-1} M)_k 1 \ran_{L^2_{\rm per}} \\
		=& |\Om|^{-1}\lan \Mfib{k} 1, \barKfib{k}^{-1} \Mfib{k} 1 \ran_{L^2_{\rm per}} , \label{eqn:b2-def}
\end{align}
where, recall, $\Mfib{k}, \Kfib{k}$ and $\barKfib{k}$ are the $k$-th Bloch-Floquet fiber of $M\equiv M_{\delta=1}, K\equiv K_{\delta=1}$ and $\bar K\equiv \bar K_{\delta=1}$.
Letting  
\begin{align}
\label{rhop}	\rho_k = (\bar P_a)_k \Mfib{k} 1 \in \Lper\, ,
\end{align}
where $\Lper$ is given in \eqref{L2-per}, 
 $a$ is given in \eqref{a-m}, and $P_a$ is defined in \eqref{eqn:choice-of-P}, we find
\begin{align}\label{b2-repr}	b_2(k) =  |\Om|^{-1} \lan \rho_k, \barKfib{k}^{-1} \rho_k \ran_{L^2_{\rm per}} \, .
\end{align}

Now, we expand $\rho_k$ in $k$. By \eqref{exact-form-Mk}, we have $M_{k=0} \one $ $=  -  \den\left[ \oint r_{\rm per, 0}^2(z)\right]$. Next, recall $\oint := \frac{1}{2\pi i} \int_\Gamma dz f_{T}(z-\mu)$  (see \eqref{eqn:oint-adhoc}) to obtain 
\begin{align} \label{M01}M_{k=0} \one =- \den  f_{T}'(h_{{\rm per}, 0}-\mu).\end{align} Since $f_{FD}'\le 0$, we have $M_{k=0} \one>0$. Introduce the function 
\begin{align} \label{V}	&V(x) = -  \den\left[f_{T}'(h_{{\rm per}, 0}-\mu) \right](x) \geq 0. 
\end{align}
By definition \eqref{rhop} and Eqs \eqref{exact-form-Mk}, \eqref{M01} and  \eqref{V}, we have 
\begin{align}
	&\rho_k = (\bar P_a)_k V + \rho'_k, \label{eqn:Ustar-rhop-expanded}  \\
	&\rho'_k :=   (\bar P_a)_k \den \oint \rpero^2(z)(2(-i\n) k + k^2)  r_{{\rm per}, k}(z).  \label{eqn:rho-prime-k}
\end{align}
Inserting the decomposition \eqref{eqn:Ustar-rhop-expanded} 
into \eqref{b2-repr} gives
\begin{align}
	|\Om|b_2(k) =   & \lan V, \barKfib{k}^{-1} V \ran + 2\Re\lan V, \barKfib{k}^{-1} \rho_{k}' \ran + \lan \rho_k', \barKfib{k}^{-1} \rho_{k}' \ran. \label{b2-lowest-order}
\end{align}
We expand the third term on the r.h.s. on \eqref{b2-lowest-order}. 
First, we give a rough bound.  
  For $z\in \Gamma$, we claim the estimates
\begin{align}\label{Del-r-est'}\|(1-\Delta)^{\al}  r_{{\rm per}, k}&(z)\|\notag\\
\le\ & \|(1-\Delta)^{\al}  r_{{\rm per}}(z)\|\ls\ d^{\al-1}\ls 1,\end{align}
for $\al=0, 1/2$, where 
$d \equiv d(z):=\dist (z, \s(h_{\rm per}))\ge \frac14 .$ 
The first estimate follows from \eqref{A-Ak-norm}. The second estimate is straightforward for $\al=0, 1$, which by interpolation, gives it for all $\al \in [0, 1]$. For $\al= 1/2$, it can be also proven directly as $\|(-\Delta)^{1/2}f\|^2=\lan f, (h_{{\rm per}}-z+\phi_{\rm per}+z) f\ran$. Taking $f = r_{{\rm per}}(z)u$, we arrive at the second estimate in \eqref{Del-r-est'} for $\al=1/2$.

By the second resolvent identity \eqref{rper-res-form} and estimates \eqref{Del-r-est'}, we have the expansion 
 \[r_{{\rm per}, k}(z)=r_{{\rm per}, 0}(z)+O(|k| d^{-3/2} +|k|^2 d^{-2}).\]
Using this expansion in \eqref{eqn:rho-prime-k}, we find 
\[\rho_{k}' =\rho' \cdot k +O(|k|^2),\]
where $\rho'$ is given in \eqref{rho'}.  
Using  the latter relation,  the relation $\barKfib{k}^{-1}=\barKfib{0}^{-1}+O(k)$ 
  and the fact that, since on $L^2_{\rm per}$ the spectrum of $-i\n$ is discrete,   $(\bar P_a)_{k=0} = (\bar P_{a=0})_{k=0}$ for $a$ is sufficiently small, we obtain 
\begin{align} \label{b2-lowest-order'}	|\Om|^{-1}\lan \rho_k', \barKfib{k}^{-1} \rho_k' \ran = - k \epsilon'' k + O(k^4),
\end{align}
 for 
 $\epsilon''$ is given in \eqref{eps''}, where the power of the remainder comes from the fact $b_2(k)$ is even which is shown by the same argument that was used in demonstration that  $b(k)$ is even.
Eqs. \eqref{b2-lowest-order} and \eqref{b2-lowest-order'} show that 
\begin{align}
	&b_2(k) = b_2(0) - 
	 k \epsilon'' k + O(k^4) + \text{Rem}, \label{eqn:b2-higher-order} \\
	&	\text{Rem} := \lan V, [\barKfib{k}^{-1} - \bar K^{-1}_0]V \ran + 2 \Re\lan V, \barKfib{k}^{-1} \rho_{ k}' \ran ,  \label{eqn:b2-higher-order-rem}
\end{align}
with $b_2(0):=|\Om|^{-1}\lan V, \barKfib{k}^{-1} V \ran$. To estimate $b_2(0)$ and the terms in \eqref{eqn:b2-higher-order-rem}, we use  Eq. \eqref{Kinv-est}  
   and the relation $\|\bar K^{-1}\|=\sup_k \|\barKfib{k}^{-1}\|$ (see \eqref{A-Ak-norm}) to obtain
\[\|\barKfib{k}^{-1}\|\ls 1.\] 
We use this bound,  
Lemma \ref{lem:Vp-uppper-bound}, \eqref{eqn:rho-prime-k} and the fact that $b_2(k)$ is even in $k$, to obtain 
\begin{align}
|\Om| |\text{Rem}| \ls & (\|V\|_{L^2_{\rm per}}^2 + \|V\|_{L^2_{\rm per}})|k|^2 
 \notag \\
 \label{eqn:b2-higher-order-def-est}	= &  O(\cbeta |k|^2),\\ 
  \label{b20-est}|\Om|  b_2(0) = & O(\|V\|_{L^2_{\rm per}}^2) = O(\cbeta^2).\end{align}
We identify  the first, third and fourth terms on the r.h.s. of \eqref{eqn:b2-higher-order} 
with the fourth,  third and second terms in \eqref{eqn:b2-claim}, respectively. 
 Equations \eqref{eqn:b2-higher-order} - \eqref{eqn:b2-higher-order-def-est} imply \eqref{eqn:b2-claim}.  

 Equations \eqref{b=b1-b2}, 
  \eqref{eqn:bk-b0-explicit-formula}, and \eqref{eqn:b2-claim} yield equation \eqref{b-expan}, with $
  \e_1'+\e_1''$ making up  the third term on the r.h.s. of \eqref{b-expan}. 
  This completes the proof of Lemma \ref{lem:b-expan}. 
  \end{proof}

\begin{lemma} \label{lem:dielectric-const-positive} The $3 \times 3$ matrix $\epsilon$ entering \eqref{b-expl} 
 is symmetric and satisfies 
\begin{align}\label{eps-ineq} 
\epsilon  \ge \one - O(\cbeta^2) .	\end{align}
 \end{lemma}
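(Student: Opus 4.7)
\medskip
\noindent\textit{Proof plan.} The plan is to handle symmetry and the lower bound $\epsilon \ge \one - O(\cbeta^2)$ separately. For symmetry, the matrix $\epsilon$ appears as the pure-quadratic coefficient in the Taylor expansion \eqref{b-expan} of the smooth real function $b(k)$ from Lemma~\ref{lem:ell-b-rel}, so $\epsilon$ is automatically a real symmetric matrix. For a direct verification, $\epsilon''_{ij} = |\Om|^{-1}\lan \rho'_i, \barKfib{0}^{-1}\rho'_j\ran$ is Hermitian by self-adjointness of $\barKfib{0}^{-1}$ and real by the symmetry of $\Gamma$ under $z \mapsto \bar z$; for $\epsilon'$, an integration by parts in $z$ using $\rpero^2 = -\partial_z \rpero$ together with cyclicity of the trace casts the definition in manifestly symmetric form.

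For the lower bound, the central tool is the Schur-complement characterisation implicit in \eqref{b-expl},
\begin{align*}
|\Om|\, b(k) \;=\; \min_{\phi \in \operatorname{Ran}\bar\Pi_0}\; \lan \one + \phi,\; K_k(\one+\phi)\ran_{L^2_{\rm per}},
\end{align*}
where $K_k = (-i\n - k)^2 + M_k$ is the $k$-th Bloch-Floquet fiber of $K = -\Delta + M$ and $\bar\Pi_0$ projects off constants in $L^2_{\rm per}$. The minimiser is $\phi_*(k) = -\bar K_k^{-1}\bar\Pi_0 M_k\one$, which at $k=0$ satisfies $\|\phi_*(0)\|_{L^2_{\rm per}} \ls \|V\|_{L^2_{\rm per}} = O(\cbeta)$ by Lemma~\ref{lem:Vp-uppper-bound} together with $\|\barKfib{0}^{-1}\| \ls 1$.

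Applying the envelope theorem and expanding $b(k)$ in $k$ yields a decomposition of $\tfrac{1}{2}\partial_{k_i}\partial_{k_j} b(0)$ into (i) a direct kinetic contribution $\delta_{ij}\,|\Om|^{-1}\|\one+\phi_*(0)\|^2 = \delta_{ij}(1 - O(\cbeta^2))$, using $\|\phi_*(0)\|^2_{L^2_{\rm per}} = O(\cbeta^2)$; (ii) a term $A_{ij}$ from $\partial^2_{k_i k_j} M_k|_{k=0}$ acting on $\one + \phi_*(0)$; and (iii) a subtracted screening term $B_{ij} = |\Om|^{-1}\lan \partial_{k_i} M_k|_{k=0}(\one+\phi_*(0)),\, \bar K_0^{-1}\partial_{k_j} M_k|_{k=0}(\one+\phi_*(0))\ran$. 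Using $\partial_{k_j} r_{{\rm per},k}|_{k=0} = 2i\, r_{{\rm per},0}\partial_j r_{{\rm per},0}$ and Proposition~\ref{prop:M-BF-deco} to compute $\partial_{k_j}M_k|_{k=0}\one$ and $\partial^2_{k_i k_j}M_k|_{k=0}\one$, the identification $\partial_{k_j} M_k|_{k=0}\one = \tfrac{1}{2}\rho'_j$ consistent with \eqref{rho'} matches $A_{ij}$ to $\epsilon'_{ij}$ of \eqref{eps'} up to $O(\cbeta)$ and $B_{ij}$ to $\epsilon''_{ij}$ of \eqref{eps''} up to $O(\cbeta^2)$; the residual $O(\cbeta)$ in $A_{ij}$ is exactly the scalar $k\cdot O(\cbeta)k$ piece already peeled off in \eqref{b-expan}.

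The positivity $\epsilon \ge \one - O(\cbeta^2)$ is then extracted by comparison with the $T=0$ limit: the Canc\`es-Lewin analysis \cite{CL} establishes $\epsilon^{T=0} \ge \one$ for the $T=0$ versions of \eqref{eps}-\eqref{eps''}, while $\epsilon - \epsilon^{T=0} = O(\cbeta^2)$ follows by contour deformation, exploiting $|f_T(z-\mu) - f_{T=0}(z-\mu)| \ls e^{-|\Re z - \mu|/T}$ on $\Gamma$ together with the gap assumption~\ref{A:diel}. The main obstacle is the $O(\cbeta)$-versus-$O(\cbeta^2)$ bookkeeping: because $\epsilon'$ and $\epsilon''$ individually acquire $O(\cbeta)$ thermal corrections at positive $T$, one must verify that these corrections cancel in the combination $\epsilon' - \epsilon''$ (or are absorbed into the separate $k\cdot O(\cbeta) k$ term in \eqref{b-expan}), leaving only an $O(\cbeta^2)$ residue in $\epsilon$ itself.
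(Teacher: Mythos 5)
Your symmetry argument is sound, both the indirect version via the real, even, smooth $b(k)$ of Lemma~\ref{lem:ell-b-rel} and the direct check of \eqref{eps'}--\eqref{eps''}. The gap is in the lower bound: the asserted estimate $\epsilon - \epsilon^{T=0} = O(\cbeta^2)$ is not what contour deformation delivers. Replacing $f_T(z-\mu)$ by the step function $\chi_{(-\infty,0)}$ produces, after picking up residues, an error governed by $|f_T(E-\mu) - \chi_{(-\infty,0)}(E-\mu)| \sim e^{-|E-\mu|/T}$ on the spectrum of $h_{{\rm per},0}$, so at best $O(e^{-\etal\beta})$. Since $\cbeta = \beta e^{-\etal\beta}$, one has $\cbeta^2/e^{-\etal\beta} = \beta^2 e^{-\etal\beta}\to 0$ on both branches of the regime $\cbeta\ll 1$ of \ref{A:scaling}; hence $e^{-\etal\beta}$ is much larger than $\cbeta^2$ and the $T=0$ comparison can yield at most $\epsilon \geq \one - O(\cbeta)$. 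Worse, on the high-temperature branch (small $\beta$) one has $e^{-\etal\beta}=O(1)$, so perturbation around $T=0$ gives nothing at all.

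The paper obtains the quadratic gain by a qualitatively different, non-perturbative mechanism that never compares with $T=0$. It exploits $K_{c,\del} := K_\del + c\Delta > 0$ for every $c\in[0,1)$ (thanks to $M_\del > 0$), applies the Feshbach--Schur positivity principle $F_P(K_{c,\del}+\lam) - \lam P \geq 0$ with $P=P_s$, and expands in the auxiliary parameters $\lam$, $c$, $s$, $\del$. In this expansion the only constant-order error is the $k=0$ value of the Schur-complement correction, $|\Om| b_2(0) = O(\|V\|^2_{L^2_{\rm per}}) = O(\cbeta^2)$ by Lemma~\ref{lem:Vp-uppper-bound}; it is quadratic precisely because it involves two copies of $V = -\den[f_T'(h_{{\rm per},0}-\mu)]$. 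Letting the cutoff go to infinity (via $s=\del^{-3/4}$, $\del\to 0$) and $c\to 1^-$ then yields $\xi\cdot\epsilon\,\xi \geq (1-O(\cbeta^2))|\xi|^2$. Your variational/envelope-theorem computation correctly reproduces the structure of $b(k)$ from Lemma~\ref{lem:b-expan}, but by itself it is an equality, not a positivity statement; the lower bound comes from the global positivity of $K_{c,\del}$ fed through the Feshbach--Schur identity, not from tracking the $T$-dependence of the coefficients.
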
 
\begin{proof} 
  We  prove this lemma  using the Feshbach-Schur map. 
  Let $P = P_s$ (see \eqref{eqn:choice-of-P}) for some real number $s > 0$, unrelated to $r$  and satisfying $B(\del s)\subset \Om^*$. 
  For any projection $P$ and operator $H$ on $L^2(\R^3)$, the Feshbach-Schur map $F_P(H)$ is defined as
\begin{align}
	 F_P(H) := P H P - PH \bar P \bar H^{-1}  \bar P H P. \label{eqn:def-of-Feshbach-map}
\end{align}
where $\bar P = 1-P$, $\bar H = \bar P H \bar P$, and $\bar H^{-1}$ is defined on the range of $\bar P$. The  Feshbach-Schur map has the property \cite{GS}
\begin{align}
	-\lambda \notin  \sigma(H) \iff -\lambda \notin \sigma(F_P(H+\lambda) - \lambda P). \label{eqn:feshbach-property}
\end{align}
for any $\lambda \geq 0$. That is, for all $\lam > 0$,
\begin{align}
	H \geq 0 \iff  F_P(H+\lambda) - \lambda P \geq 0. \label{eqn:feshbach-property-1}
\end{align}

With the Laplacian $\Delta$, we define
\begin{align}
	K_{c, \del} = K_\del + c \Delta. \label{eqn:K-e-del}
\end{align}
Since $M_\del > 0$ by Proposition \ref{pro:M-expl}, we have that $K_{c, \del} > 0$ for all $c \in [0,1)$. 
Consequently, \eqref{eqn:feshbach-property-1} shows that, for any $\lam > 0$, 
\begin{align} 
	F_P(K_{c, \del} + \lam) - \lam P \geq 0. \label{eqn:feshbach-spec-condition}
\end{align}

 Using definition \eqref{eqn:def-of-Feshbach-map} and the resolvent identity, we obtain 
\begin{align}
	 F_P&(K_{c, \del} + \lam) - \lam P  \label{eqn:FK-feshbach-map-term-1}\\
	&= PK_{c, \del} P  -  PM_\del (\bar K_{c, \del} + \lam \bar P)^{-1} M_\del P \\
	&= F_P(K_{c, \del}) +  \lam PM_\del \bar K_{c, \del}^{-1} (\bar K_{c, \del} + \lam \bar P)^{-1} M_\del P  \\ 
	&= F_P(K_{c, \del}) + \lam PM_\del \bar K_{c, \del}^{-2} M_\del P \notag \\
	&- \lam^2 PM_\del (\bar K_{c, \del})^{-2}(\bar K_{c, \del} + \lam \bar P)^{-1} M_\del P. \label{eqn:FK-feshbach-map-term-3}
\end{align}
By the choice of $P = P_s$ (see \eqref{eqn:choice-of-P}), we see that $\bar K_{c, \del} \gs s^2$. 
 Since $M_\delta \ls \delta^{-2}$, 
we see that the last term in \eqref{eqn:FK-feshbach-map-term-3} is bounded by $O(\lambda^2 \delta^{-4} s^{-6})$. Thus, \eqref{eqn:FK-feshbach-map-term-1} - \eqref{eqn:FK-feshbach-map-term-3} implies
\begin{align}
 \label{FK-exp1}	 F_P(K_{c, \del} + \lam) - \lam P&= F_P(K_{c, \del}) + \lam W+ O(\lambda^2 \delta^{-4} s^{-6})P, 
\end{align}	
  where $W:=PM_\del (\bar K_{c, \del})^{-2} M_\del P$. To estimate $W$,   we proceed as in the proof of Lemma \ref{lem:ell-b-rel}. 
 First,  since $M_\delta$ is $\latde$-periodic by  Proposition \ref{pro:Mdel-prop},   $W$ is $\latd$-periodic.
  Moreover, the definition  $W:=PM_\del (\bar K_{c, \del})^{-2} M_\del P$ and \eqref{Mdelta-M-relation'} yield 
	\begin{align}
			W =& P_s (\delta^{-2}U_\delta M_1 U_\delta^*)  \bar P_s [\delta^{-2}U_\delta (-\Delta + M_1) U_\delta^*]^{-2} \bar P_s 
			 (\delta^{-2}U_\delta M_1 U_\delta^*) P_s \ \notag \\				=& P_{\delta s} M_1 \bar P_{\delta s} \bar K_1^{-1} \bar P_{\delta s} M_1 P_{\delta s}, \notag 
		\end{align}  
which implies that
\begin{align} \label{W-W-del1}
	W  = \Udel W\big|_{\del = 1}\Udel^* .\end{align}
($(\bar K_{c, \del})^{-1}$ entering $W$ in the second power eats up $\delta^{-2}$ compared to \eqref{ell-ell-del1}.) Since $B(\del s)\subset \Om^*$, the last two properties and Lemma \ref{lem:PAP-repr-del} show that $W$ is a function of $-i\delta \n$ of the form   
\begin{align}\label{W-w-rel}	 W =& 
  w (-i\delta \nabla) P, 
\end{align}
where $w(k) = \lan W_k 1 \ran_\Om$, with  $W_k$ being the Bloch-Floquet fibers of $W $ and $1$ standing for the constant function, $1 \in L^2_{\rm per}(\R^3)$. 
 Using equation \eqref{AB-fib}, we find,  as in \eqref{ell-b-rel} - \eqref{b-expl},  the explicit form of $w(k)$:
\begin{align}  \label{w-expl}
	w (k) =& |\Om|^{-1} \big\lan 1, \Mfib{k} \barKfib{k}^{-2} \Mfib{k} 1 \big\ran_{L^2_{\rm per}}, \end{align}
where $\Mfib{k}$ and $\barKfib{k}$ are the $k$-th Bloch-Floquet fibres of $M_{\delta=1}$ and $\bar K_{\delta=1}$.

   Since the operator $\Mfib{k} \barKfib{k}^{-2} \Mfib{k}$ in \eqref{b-expl} is self-adjoint, the function $w(k)$ is real.  Arguing as with $b(k)$ in the proof of  Lemma \ref{lem:ell-b-rel}, 
     we conclude  that  $w(k)$ is even and smooth. 
    Furthermore, as  with $b_2(k)$ in the proof of Lemma \ref{lem:b-expan}, 
    we expand $w(k)$ in $k$ to the fourth order to obtain 
\begin{align}
	& W 
	=O(\cbeta^2) -\delta^2 \nabla \epsilon_3 \nabla P + O(\delta^4 (-i\nabla)^4P), \label{eqn:PMK-inv-KP-feshbach-part}
\end{align}
\begin{align}\label{eps3}
	\epsilon_3 := |\Om|^{-1} \lan \rho', {\bar K_{c, 0}}^{-2} \rho'\ran_{L^2_{\rm per}} > 0,
\end{align}
 where   $\rho'$ is given in \eqref{rho'}, $K_{c, 0}\equiv K_{c, \delta=1, k= 0}$ is the $0$-th fiber of $K_{c, \delta=1}$, and $ \bar K_{c, 0} = \bar \Pi_0  K_{c, 0} \bar \Pi_0$. Here $\bar \Pi_0 = 1 - \Pi_0$ and $\Pi_0$ is the projection in $L^2_{\rm per}$ onto constants. The inverse $\bar K_{c, 0}^{-2}$ is taken on the range of $\bar \Pi_0$. 
  Eqs \eqref{FK-exp1} and \eqref{eqn:PMK-inv-KP-feshbach-part} imply that 
  \begin{align}
	 F_P(K_{c,\del} + \lam) - \lam P &= F_P(K_{c, \del}) +O(\cbeta^2)- \lambda \delta^2 \nabla \epsilon_3 \nabla P\notag\\
	& + O(\delta^4 (-i\nabla)^4 P)+O(\lam \delta^{-4} s^{-6} P). 
	\label{eqn:FK-feshbach-map-term-5}
\end{align}

Now, we use definition \eqref{eqn:K-e-del} to expand the term $F_P(K_{c, \del})$ in \eqref{eqn:FK-feshbach-map-term-5} in $c$. A simple computation shows that
\begin{align}
	F_P(K_{c, \del}) =&  F_P(K_\del) + c \Delta P \\ 
		&- \sum_{n \geq 1} c^n PM_\del  (\bar K_{\del}^{-1} (-\Delta))^n \bar K_{\del}^{-1} M_\del P. \label{eqn:FPKepsilon-expand} 
\end{align}
Since $\bar K_{\del}\ge 0$, \eqref{eqn:FPKepsilon-expand} is negative, we conclude 
\begin{align} 
F_P(K_{c, \del}) 	&\leq F_P(K_\del) + c \Delta P \label{eqn:FK-feshbach-map-term-6-1}.
\end{align}
 Since $F_P(K_\del) = \ell$ for $r=s$ (see \eqref{ell-def}), we see, by  Lemma \ref{lem:b-expan}, that 
\begin{align}  \label{eqn:FK-feshbach-map-term-7} 
	F_P(K_\del) =&  -\nabla \epsilon  \nabla P 
		+ O(\delta^2 (-i\nabla)^4 P, 
\end{align}
with $\epsilon$ defined there. We use that 
 $O(\del^4 (-i\nabla)^4 P) = O(\tilde a^2 (-i\n)^2 P)$, where $\tilde a := \delta s$ (which is unrelated to the $a$ in \eqref{a-m}) 
and  \eqref{eqn:FK-feshbach-map-term-6-1} and \eqref{eqn:FK-feshbach-map-term-7} to obtain 
\begin{align} 
F_P(K_{c, \del}) 	&\leq -\n (\epsilon  - c + O(\tilde a^2))\n P. 
 \label{eqn:oneee-2-1}
\end{align}
Setting  $\epsilon_4:=O(\tilde a^2)+ \lambda \delta^2\epsilon_3$, we see that Eqs. \eqref{eqn:FK-feshbach-map-term-5}, \eqref{eqn:feshbach-spec-condition} and 
\eqref{eqn:oneee-2-1} imply
\begin{align}
	-\n (\epsilon + \epsilon_4 - c)\n P 
	&+O(\lam \delta^{-4} s^{-6}) P+O(\cbeta^2) P \notag\\& \ge  F_P(K_{c,\del} + \lam) - \lam P \ge 0. \label{eps-est'} 
\end{align}

Inequality \eqref{eps-est'} holds for all $s \in (0, \delta^{-1})$. Taking $s=\delta^{-3/4}$, we find \[-\n \epsilon\n P \ge [(c-O(\delta^{1/2}))\Delta   -O(\lam \delta^{1/2}) -O(\cbeta^2)] P_\del,\] 
where $P_\del:=P_{s=\del^{-3/4}}$. Since this holds for every $\del>0$, since $P\equiv P_s$ converges strongly to $\one$, as $s\ra \infty$, and since the expression for $\epsilon$ given in Lemma \ref{lem:b-expan}  is independent of $\delta$, we see that \[-\n \epsilon \n  \ge - c \Delta-  O(\cbeta^2),\] 
   for every $c \in [0,1)$. 
Passing to the Fourier transform gives $\xi \cdot \epsilon \xi  \ge  c |\xi|^2 -  O(\cbeta^2), \forall \xi\in \R^3$. For $ \xi\in \R^3,$ with $ |\xi|\ge 1$, this implies $\xi  \cdot\epsilon \xi  \ge  (c -  O(\cbeta^2)) |\xi|^2$, which is equivalent to \eqref{eps-ineq}.    \end{proof}

\section{Nonlinear estimates} \label{sec:nonlin-est-rough}

Let $N_\delta$ be given implicitly by \eqref{eqn:macro-eqn-varphi} and recall the definition of the $B_{s,\delta}$ norm from \eqref{s-norm-def}. Let $\dot{H}^{0}\equiv L^2$.  
In this section we prove estimates on $N_\delta$.


\begin{proposition} \label{prop:nonlinear-del-rough}
 Let Assumption \ref{A:diel} hold. 
 If $\|\varphi_1\|_{B_{s,\delta}}, \|\varphi_2\|_{B_{s,\delta}} = o(\delta^{-1/2})$, then we have the estimate
\begin{align}
 \label{Ndel-est-rough}	\|N_\delta & (\varphi_1) - N_\delta(\varphi_2)\|_{L^2} \notag \\
		& \ls  m^{ -1/3} \delta^{-1/2}(\|\varphi_1\|_{B_{s,\delta}} + \|\varphi_2\|_{B_{s, \delta}})\|\varphi_1 - \varphi_2\|_{\dot{H}^1} .
\end{align}
\end{proposition}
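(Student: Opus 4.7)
The nonlinearity $N_\delta$ is, by \eqref{eqn:macro-eqn-varphi} and the definition of $M_\delta$, the second-order and higher remainder of the map $F_\delta(\phi) = 4\pi\den[\ft(-\delta^2\Delta - \delta\phi - \mu)]$ at $\phi = \phi_{\rm per}^\delta$. The plan is to expand $\ft$ via the Cauchy contour representation \eqref{fFD-cauchy-formula} and iterate the resolvent identity for $(z - \hperDelta + \delta\varphi)^{-1}$ to obtain the Neumann-type series
\begin{align}\label{plan-N-expansion}
-N_\delta(\varphi) = 4\pi\sum_{n \geq 2}(-\delta)^n \den\oint \rperDelta(z)\bigl(\varphi\, \rperDelta(z)\bigr)^n,
\end{align}
convergent under the hypothesis $\|\varphi\|_{B_{s,\delta}} = o(\delta^{-1/2})$. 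I would then telescope each $n$-th term via $(\varphi_1 r)^n - (\varphi_2 r)^n = \sum_{k=0}^{n-1}(\varphi_1 r)^k\,\psi\, r\,(\varphi_2 r)^{n-1-k}$, where $\psi := \varphi_1 - \varphi_2$, reducing the problem to estimates of expressions of the form $\rperDelta \psi\, \rperDelta(\varphi\, \rperDelta)^{n-1}$ with $\varphi \in \{\varphi_1, \varphi_2\}$ placed at various slots.

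The core step is the bound on the leading term ($n=2$). I would rescale to microscopic coordinates using $\rperDelta(z) = \Udel r_{\rm per}(z) \Udel^*$, Lemma \ref{lem:U-rho-UAUstar}, and the unitarity of $\Udel$, to write
\begin{align}
\|\den[\rperDelta \psi\, \rperDelta \varphi\, \rperDelta]\|_{L^2(\R^3_x)} = \delta^{-3/2}\|\den[r_{\rm per}\psi_\delta r_{\rm per}\varphi_\delta r_{\rm per}]\|_{L^2(\R^3_y)},
\end{align}
where $f_\delta(y) := f(\delta y)$. The microscopic density is bounded via Lemma \ref{lem:rhoA-by-tracial-norm}(1), the Schatten-H\"older inequality, the uniform bound $\|(1-\Delta)^{3/4+\epsilon}r_{\rm per}(z)\|_{\rm op} \lesssim 1$ on $\Gamma$ (valid under Assumption \ref{A:diel} by interpolation from \eqref{Del-r-est}), and the Kato-Seiler-Simon estimate \eqref{eqn:KSS} applied to $f(1-\Delta)^{-1}$ in the form $\|f r_{\rm per}\|_{S^p} \lesssim \|f\|_{L^p}$ for $p>3/2$. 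Choosing the Schatten exponent $S^6$ for the $\psi$-factor (to enable Sobolev embedding $\dot H^1 \hookrightarrow L^6$ later) and $S^3$ for the $\varphi$-factor gives
\begin{align}
\|\den[r_{\rm per}\psi_\delta r_{\rm per}\varphi_\delta r_{\rm per}]\|_{L^2} \lesssim \|\psi_\delta\|_{L^6}\,\|\varphi_\delta\|_{L^3}.
\end{align}

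Translating back via $\|g_\delta\|_{L^p(\R^3_y)} = \delta^{-3/p}\|g\|_{L^p(\R^3_x)}$, the Sobolev bound $\|\psi\|_{L^6} \lesssim \|\psi\|_{\dot{H}^1}$, and the Gagliardo-Nirenberg interpolation $\|\varphi\|_{L^3} \lesssim \|\varphi\|_{L^2}^{1/2}\|\nabla\varphi\|_{L^2}^{1/2}$ combined with $\|\varphi\|_{L^2} \leq \zeta\|\varphi\|_{B_{s,\delta}}$ and $\|\nabla\varphi\|_{L^2} \leq \|\varphi\|_{B_{s,\delta}}$ (with $\zeta = \delta m^{-1/2}$), one finds $\|\psi_\delta\|_{L^6} \lesssim \delta^{-1/2}\|\psi\|_{\dot{H}^1}$ and $\|\varphi_\delta\|_{L^3} \lesssim \delta^{-1/2}m^{-1/4}\|\varphi\|_{B_{s,\delta}}$. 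Combining these with the $\delta^2$ prefactor from \eqref{plan-N-expansion} and the $\delta^{-3/2}$ rescaling factor, the two $n=2$ contributions to $N_\delta(\varphi_1) - N_\delta(\varphi_2)$ are each bounded by $\delta^{-1/2}m^{-1/4}\|\psi\|_{\dot{H}^1}\|\varphi_i\|_{B_{s,\delta}}$; since $m \ll 1$, this is stronger than, and hence implies, the stated bound. For $n \geq 3$, analogous Schatten-H\"older estimates (with exponents tuned so each remaining resolvent slot absorbs one potential) produce contributions decaying geometrically in powers of $\delta^{1/2}\|\varphi\|_{B_{s,\delta}} = o(1)$, yielding a summable tail.

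The main obstacle is the restriction to $\|\psi\|_{\dot H^1}$ alone (without the $\|\psi\|_{L^2}$ factor that a cleaner interpolation would provide). In $d=3$, only Sobolev $\dot H^1 \hookrightarrow L^6$ achieves this, which rigidly forces the $S^6$ Schatten exponent for the $\psi$-slot; the companion Schatten exponent on the $\varphi$-slot is then determined to be $S^3$, which produces exactly the $m^{-1/4}$ power through the Gagliardo-Nirenberg interpolation. The tail analysis for $n \geq 4$ requires some care since Sobolev in $d=3$ does not control $L^p(\varphi)$ for $p>6$; one handles this either by pairing adjacent $\varphi$-factors and using an operator-norm bound $\|\varphi\, \rperDelta\|_{\rm op} \lesssim \|\varphi\|_{L^3} + \|\varphi\|_{L^\infty}\|\cdot\|$, or, more simply, by absorbing successive factors into a single $S^6$ slot: at each stage of the iteration the hypothesis $\|\varphi\|_{B_{s,\delta}} = o(\delta^{-1/2})$ is exactly what is needed to close the geometric estimate.
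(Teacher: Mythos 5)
Your proof is correct and follows essentially the same route as the paper: expand $N_\delta$ by iterating the resolvent identity under the Cauchy contour, telescope the $n$-th order terms, bound densities via Lemma~\ref{lem:rhoA-by-tracial-norm}(1), non-abelian H\"older with Schatten exponents $S^6$ and $S^3$, Kato--Seiler--Simon, and the Sobolev embedding $\dot H^1 \hookrightarrow L^6$. The two small deviations from the paper are organizational and tactical rather than conceptual. First, the paper isolates the $\delta=1$ estimate as a stand-alone proposition (Proposition~\ref{prop:nonlinear-rough}) and only afterwards rescales, whereas you rescale the $n$-th monomial back to microscopic variables on the fly; both are just bookkeeping. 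Second, to handle the $S^3$-slot the paper interpolates at the Schatten level, $\|A\|_{S^3}\le \|A\|_{S^6}^{1/3}\|A\|_{S^2}^{2/3}$, producing the mixed exponents $\|\nabla\psi\|_{L^2}^{4/3}\|\psi\|_{L^2}^{2/3}$ and then invoking the elementary inequality $a^{1/3}b^{2/3}\lesssim a+b$ at the very end to land on $m^{-1/3}$; you instead apply KSS directly with $p=3$ to get $\|\varphi_\delta\|_{L^3}$ and then Gagliardo--Nirenberg $\|\varphi\|_{L^3}\lesssim\|\varphi\|_{L^2}^{1/2}\|\nabla\varphi\|_{L^2}^{1/2}$, giving $m^{-1/4}$, which is marginally sharper (since $m\ll 1$) and so certainly implies \eqref{Ndel-est-rough}. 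For the tail $n\ge 3$, your suggestion of absorbing the extra $\varphi$-factors via $\|\varphi\,\rperDelta\|_{\rm op}\le\|\varphi\,\rperDelta\|_{S^6}\lesssim\|\varphi\|_{L^6}$ and the smallness $\delta^{1/2}\|\varphi\|_{B_{s,\delta}}=o(1)$ is the right mechanism and matches the paper's treatment of the $k\ge 3$ terms; the alternative ``pairing'' route you mention (with the malformed $\|\varphi\|_{L^\infty}\|\cdot\|$ bound) should simply be dropped.
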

\DETAILS{\begin{proposition} \label{prop:nonlinear-rough}
 Let Assumption \ref{A:diel} hold. 
Then  we have the estimates  
\begin{align}
 \label{nonlinear-estm-rough}	 
  &\|N_\delta(\varphi_1) - N_\delta(\varphi_2)\|_{L^2}  \notag \\
& \ls  m^{-\frac13} \delta^{-1/2} (\|\varphi_1\|_{{ \delta}} + \|\varphi_2\|_{{ \delta}}) 
	\|\varphi_1 - \varphi_2\|_{{ \delta}}, 
\end{align} 
where, recall,  $m$ is defined in 
 \eqref{m} and, to simplify the notation, we let $m\ls 1$. 
\end{proposition}}
In Appendix \ref{sec:nonlin-est-improv}, we prove a more refined estimate. 
 We derive Proposition \ref{prop:nonlinear-del-rough}  from  its version  with $\delta = 1$ by rescaling. 
For  $\delta = 1$, we have the following result.

  \begin{proposition} \label{prop:nonlinear-rough}
 Let Assumption \ref{A:diel} hold 
 and either  $\| \psi\|_{L^2} = o(1)$ or  $\|\nabla \psi\|_{L^2} = o(1)$. 
 Then  $N:=N_{\delta=1}$ satisfies the estimate 
\begin{align}
	 \|N(\psi_1) - N(\psi_2)&\|_{L^2}\ls  \sum_{j=1}^2\bigg[ (\|\psi_j\|_{\dot{H}^1})\|\psi_1 - \psi_2\|_{\dot{H}^1} \notag \\&+  \big(\|\psi_j\|_{\dot{H}^1}^{1/3} \|\psi_j\|_{L^2}^{2/3} \|\psi_1 - \psi_2\|_{\dot{H}^{1}} \notag \\&  + \|\psi_j\|_{\dot{H}^{1}} \|\psi_1 - \psi_2\|_{\dot{H}^{1}}^{1/3}\|\psi_1 - \psi_2\|_{L^2}^{2/3}\big)\bigg]. \label{N-est-rough}\end{align}
\end{proposition}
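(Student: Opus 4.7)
My plan is to derive a closed form for $N(\psi)$ via the Cauchy integral representation and iterated second resolvent identity, decompose the difference $N(\psi_1) - N(\psi_2)$ into three pieces each carrying a factor of $\psi_1 - \psi_2$, and then estimate each piece using trace/Schatten bounds combined with Gagliardo--Nirenberg interpolation. Concretely, starting from $F(\phi) = 4\pi\,\den[f_T(h^\phi - \mu)]$ and the Cauchy formula \eqref{fFD-cauchy-formula}, the second resolvent identity applied to $h^{\phi_{\rm per}+\psi} = h_{\rm per} - \psi$ gives
\begin{align*}
    r^\psi(z) := (z - h^{\phi_{\rm per}+\psi})^{-1} = r_{\rm per}(z) - r_{\rm per}(z)\psi r_{\rm per}(z) + r_{\rm per}(z)\psi r_{\rm per}(z)\psi r^\psi(z).
\end{align*}
Using formula \eqref{Mexpl} for $M = dF(\phi_{\rm per})$, the first two terms are exactly $F(\phi_{\rm per})$ and $M\psi$ respectively, so
\begin{align*}
    N(\psi) = -4\pi\,\den \oint r_{\rm per}(z)\,\psi\,r_{\rm per}(z)\,\psi\,r^\psi(z)\,dz.
\end{align*}

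For the difference, I would telescope to obtain three contributions, each containing one explicit factor of $\psi_1 - \psi_2$: two of the form $r_{\rm per}\,(\psi_1-\psi_2)\,r_{\rm per}\,\psi_j\,r^{\psi_1}$ and $r_{\rm per}\,\psi_2\,r_{\rm per}\,(\psi_1-\psi_2)\,r^{\psi_1}$, plus a third obtained from $r^{\psi_1} - r^{\psi_2} = r^{\psi_1}(\psi_1-\psi_2)r^{\psi_2}$. Before estimating, I would check that $r^\psi(z)$ is uniformly bounded on $\Gamma$: under the hypothesis $\|\psi\|_{L^2} = o(1)$ or $\|\nabla\psi\|_{L^2} = o(1)$, the multiplication operator $\psi$ is $-\Delta$-form-bounded with arbitrarily small relative bound (by H\"older $\|\psi u\|_{L^2} \le \|\psi\|_{L^3}\|u\|_{L^6}$ followed by the Sobolev embedding $\dot H^1 \hookrightarrow L^6$ and the interpolation $\|\psi\|_{L^3} \lesssim \|\psi\|_{L^2}^{1/2}\|\nabla\psi\|_{L^2}^{1/2}$), so the spectral gap of $h_{\rm per}$ is preserved and $\|(1-\Delta)^\alpha r^\psi(z)\| \lesssim 1$ uniformly in $z\in \Gamma$ for $\alpha \in [0,1]$.

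With these bounds in hand I would apply Lemma \ref{lem:rhoA-by-tracial-norm}(1) to pass from $\|\den[\,\cdot\,]\|_{L^2}$ to an $S^2$ estimate with weight $(1-\Delta)^{3/4+\epsilon}$, absorb that weight into the leftmost $r_{\rm per}(z)$ via \eqref{Del-r-est}, and then apply H\"older in Schatten classes together with the Kato--Seiler--Simon inequality \eqref{eqn:KSS}. Schematically, for each of the three terms I write
\begin{align*}
    \|r_{\rm per}\,\psi_a\,r_{\rm per}\,\psi_b\,r^{\psi_1}\|_{S^2} \lesssim \|\psi_a(1-\Delta)^{-\alpha_1}\|_{S^{p_1}} \,\|\psi_b(1-\Delta)^{-\alpha_2}\|_{S^{p_2}},
\end{align*}
with $1/p_1+1/p_2 = 1/2$ and $\alpha_i$ chosen so that $(1-k^2)^{-\alpha_i}\in L^{p_i}(\mathbb{R}^3)$, giving $L^{p_i}$-norm bounds on $\psi_a, \psi_b$. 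Finally, the Gagliardo--Nirenberg inequality $\|\psi\|_{L^p} \lesssim \|\psi\|_{L^2}^{1-\theta}\|\nabla\psi\|_{L^2}^\theta$ with $1/p = 1/2 - \theta/3$ produces the three types of bounds in \eqref{N-est-rough}: the pure $\dot H^1\cdot\dot H^1$ bound arising from the symmetric choice $p_1=p_2=4$ (or from routing both factors through $L^6$ via Sobolev), and the asymmetric $(1/3,2/3)$ bounds arising from choosing one factor in $L^{18/7}$ (where $\theta=1/3$) paired with the other in a compensating Schatten index.

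The principal obstacle is the careful bookkeeping of the Schatten indices and of the weights $(1-\Delta)^{\alpha_i}$: the exponents $1/3$ and $2/3$ in the statement reflect specific, not-quite-obvious choices of Schatten/Sobolev splitting, and the asymmetry between the second and third terms of \eqref{N-est-rough} means one of the two factors must be handled in the interpolated norm while the other is kept in $\dot H^1$. A secondary but essential point is the uniform control of $r^{\psi_j}(z)$ on $\Gamma$, which is where the smallness hypothesis $\|\psi_j\|_{L^2}=o(1)$ or $\|\nabla\psi_j\|_{L^2}=o(1)$ enters; without it, the contour $\Gamma$ need not avoid the spectrum of $h^{\phi_{\rm per}+\psi_j}$.
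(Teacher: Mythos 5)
Your overall structure (working with the closed-form remainder $N(\psi)=\den\oint r_{\rm per}\psi\,r_{\rm per}\psi\,r^\psi$ rather than expanding in an infinite resolvent series, plus a three-term telescoping and uniform bounds on $r^\psi(z)$ via form-boundedness) is a legitimate alternative to the paper's route, which expands $N(\psi)=\sum_{k\ge 2}\den[N_k(\psi)]$ and telescopes inside each $N_k$. Both routes use the smallness hypothesis --- you to keep $\Gamma$ away from $\sigma(h_{\rm per}-\psi)$, the paper to make the series converge.

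There is, however, a concrete gap in the Schatten/Lebesgue bookkeeping, and it is exactly the step that produces the exponents $1/3$ and $2/3$ in \eqref{N-est-rough}. Your plan is to absorb $(1-\Delta)^{3/4+\epsilon}$ into one resolvent, place each $\psi$-factor in $S^{p_i}$ with $1/p_1+1/p_2=1/2$, and then use Kato--Seiler--Simon to pass to $\|\psi_a\|_{L^{p_1}}\|\psi_b\|_{L^{p_2}}$ followed by Gagliardo--Nirenberg. But once you are in Lebesgue norms, the constraint $1/p_1+1/p_2=1/2$ together with the Gagliardo--Nirenberg relation $1/p_i=1/2-\theta_i/3$ forces $\theta_1+\theta_2=3/2$. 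The target $(\theta_1,\theta_2)=(1/3,1)$ has sum $4/3\ne 3/2$ and is simply unreachable on this route; the best one can do is $(1/2,1/2)$ (via $p_1=3,\ p_2=6$) or $(3/4,3/4)$ (via $p_1=p_2=4$). Your specific suggestion of pairing $L^{18/7}$ (where $\theta=1/3$) with a ``compensating'' Schatten index fails because that compensating index is $p_2=9$, and $\|\psi\|_{L^9}$ is not controlled by $\|\psi\|_{L^2}$ and $\|\nabla\psi\|_{L^2}$ in $\R^3$. Likewise, $p_1=p_2=4$ gives $\|\psi\|_{L^2}^{1/4}\|\nabla\psi\|_{L^2}^{3/4}$ for each factor, not the pure $\dot H^1\cdot\dot H^1$ term; and routing both factors through $L^6$ violates the $1/p_1+1/p_2=1/2$ constraint.

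What the paper actually does is interpolate in Schatten space \emph{before} applying Kato--Seiler--Simon: it uses the non-standard inequality $\|A\|_{S^3}^3\le\|A\|\,\|A\|_{S^2}^2\le\|A\|_{S^6}\|A\|_{S^2}^2$, i.e.\ $\|A\|_{S^3}\le\|A\|_{S^6}^{1/3}\|A\|_{S^2}^{2/3}$ (see \eqref{I3-I2-bnd'}), applied to $A=\psi r_{\rm per}(z)$. Only after that does it send $S^6\mapsto L^6\hookleftarrow\dot H^1$ and $S^2\mapsto L^2$, producing $\|\nabla\psi\|_{L^2}^{1/3}\|\psi\|_{L^2}^{2/3}$. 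If you want your closed-form approach to yield the same estimate, you should import this Schatten interpolation step (placing the relevant $\psi$-factor in $S^3$, interpolating to $S^6^{1/3}S^2^{2/3}$, and only then applying Kato--Seiler--Simon and Sobolev). As written, your proposal would deliver only the $(1/2,1/2)$ or $(3/4,3/4)$ version of the estimate, and the $(1/3,2/3)$ exponents feed directly into the $m^{-1/3}$ factor in \eqref{Ndel-est-rough} and hence into the admissible range of $\delta$ and $T$ in the main theorem, so the discrepancy is not cosmetic.
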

We first derive Proposition \ref{prop:nonlinear-del-rough}  from  Proposition \ref{prop:nonlinear-rough} and then prove the latter statement. 
\begin{proof}[Proof of Proposition \ref{prop:nonlinear-del-rough}]
 By \eqref{eqn:F-delta-rescaling-rel}, $N_\delta$ and the unscaled nonlinearity $N = N_{\delta = 1}$ are related via
\begin{align}
	N_\delta(\varphi) = \delta^{-3/2} \Udel N( \psi),\ \qquad \psi = \delta^{-1/2} \Udel^* \varphi, \label{nonlinear-explicit-rescale}
\end{align}
where $\Udel$ is given in \eqref{rescU}. 
  Eqs \eqref{N-est-rough} and \eqref{nonlinear-explicit-rescale}   the relation $\|\Udel^* \varphi\|_{L^2}= \|\varphi\|_{L^2}$ and the notation $\psi_j = \delta^{-1/2} \Udel^* \varphi_j$ imply 
\begin{align}
	 \|N_\delta(\vphi_1) - N_\delta(\vphi_2)\|_{L^2}&\ls  	\delta^{-3/2}  \sum_{j=1}^2\bigg[\|\psi_j\|_{\dot{H}^1}\|\psi_1 - \psi_2\|_{\dot{H}^1} \notag \\ &+ \|\psi_j\|_{\dot{H}^1}^{1/3} \|\psi_j\|_{L^2}^{2/3} \|\psi_1 - \psi_2\|_{\dot{H}^{1}}  \notag \\& + \|\psi_j\|_{\dot{H}^{1}} \|\psi_1 - \psi_2\|_{\dot{H}^{1}}^{1/3}\|\psi_1 - \psi_2\|_{L^2}^{2/3} \bigg].\end{align}
	  Furthermore, using the relation $\|\psi_j\|_{\dot{H}^{k}}= \delta^{-1/2}\|\Udel^* \varphi_j\|_{\dot{H}^{k}}= \delta^{k-1/2}\| \varphi\|_{\dot{H}^{k}}$, we find
	  \begin{align}
	 \|N_\delta(\vphi_1) - N_\delta(\vphi_2)\|_{L^2}&\ls    \delta^{- 3/2}	  \sum_{j=1}^2\bigg[\delta \|\vphi_j\|_{\dot{H}^1} \|\vphi_1 - \vphi_2\|_{\dot{H}^1} \notag \\ &+ 
 	 \delta^{\frac13} \big(\|\vphi_j\|_{\dot{H}^1}^{1/3} \|\vphi_j\|_{L^2}^{2/3} \|\vphi_1 - \vphi_2\|_{\dot{H}^{1}} \notag \\&  + \|\vphi_j\|_{\dot{H}^{1}} \|\vphi_1 - \vphi_2\|_{\dot{H}^{1}}^{1/3}\|\vphi_1 - \vphi_2\|_{L^2}^{2/3}\big)\bigg]	. \label{Ndel-est'-rough} 
	\end{align}
To estimate the terms on the r.h.s. of \eqref{Ndel-est'-rough} we use the inequality $a^{1/3} b^{2/3}\le \frac23 (a+b)$, with $a:=\|\vphi\|_{\dot{H}^1}$ and $b:= m^{ 1/2}\del^{-1}\|\psi\|_{L^2}$, to obtain  
\[\|\vphi\|_{\dot{H}^1}^{1/3} \|\psi\|_{L^2}^{2/3}\le \frac23 (m^{ -1/2}\del)^{2/3}(\|\vphi\|_{\dot{H}^1}+m^{ 1/2}\del^{-1}\|\psi\|_{L^2}).\]
With  the definition of the norm $\|\cdot\|_{\del}$ in \eqref{s-norm-def}, this yields $\delta^{\frac13} \|\vphi\|_{\dot{H}^1}^{1/3} 
\|\vphi\|_{L^2}^{2/3} \|\chi\|_{\dot{H}^{1}}$\\ $\le\frac23 m^{ -1/3}\del \|\vphi\|_{\del} \|\chi\|_{\dot{H}^{1}}.$ Since $ \|\chi\|_{\dot{H}^{1}}\le \|\chi\|_{\del}$, this in turn implies \[\delta^{\frac13} \|\vphi\|_{\dot{H}^1}^{1/3} \|\vphi\|_{L^2}^{2/3} \|\chi\|_{\dot{H}^{1}}\le\frac23 m^{ -1/3}\del \|\vphi\|_{\del}\|\chi\|_{\del}.\]
 Applying this inequality to \eqref{Ndel-est'-rough}, 
  we arrive at \eqref{Ndel-est-rough}.  
 \end{proof}

\begin{proof}[Proof of Proposition \ref{prop:nonlinear-rough}]
Let $h_{\rm per}$ and $r_{\rm per}(z)$ be given in \eqref{rper-def}. First we observe that Eqs. \eqref{phi-del-eq}-\eqref{KdeltaMdelta-def}, with $\del=1$, read
\begin{align}
	\label{N} &N (\psi) =  F(\phi)- F(\phi_{\rm per})-d_\varphi F (\phi_{\rm per})\psi,\\ 
	\label{F-phi}&F (\phi) =   \den[ \ft(h^\phi - \mu)],	 
\end{align}
where $\psi := \phi - \phi_{\rm per}$ and, recall, $h^\phi:=- \Delta - \phi=h_{\rm per}- \psi$. 
Next, using Eqs \eqref{fFD-cauchy-formula} and \eqref{eqn:oint-adhoc} and expanding $(z-\hphi{\phi})^{-1}=(z-h_{\rm per}+ \psi)^{-1}$ to the second order, we find
\begin{align}\label{N-def}	N(\psi) :=& \den[\tilde{N}_2(\psi)],   
\end{align}
where
\begin{align}\label{tildeN-def} 
& \tilde{N}_k(\psi) := \oint (z-h_{\rm per}+ \psi)^{-1} [(-\psi)  r_{\rm per}(z) ]^k , 
\end{align}
with $\oint$ given by $\oint := \frac{1}{2\pi i} \int_\Gamma dz f_{T}(z-\mu)$, where $\Gamma$ is the contour given in Figure \ref{fig:cauchy-int-contour} (see \eqref{eqn:oint-adhoc}), equipped with the positive orientation.

We deform the contour $\Gamma$  given in Figure \ref{fig:cauchy-int-contour}  into the contour 
  indicated in Fig. \ref{fig:cauchy-int-contour'} by the blue dashed line and consisting of two separate contours traversed counter-clockwise.

By the formal resolvent expansion (without justifying the convergence)
\begin{align}
	(z-h_{\rm per}+\psi)^{-1} =&\sum_{k=2}^{\infty} r_{\rm per}(z) [(-\psi)  r_{\rm per}(z) ]^k, 
\end{align} 
we see that $N(\psi)$ can be written as the formal series
\begin{align} \label{N-series}	N(\psi) =& 
	\sum_{k=2}^{\infty} \den[N_k(\psi)],
\end{align}
 where 
\begin{align} \label{Nk-def}	& N_k(\psi) := \oint  r_{\rm per}(z)  [(-\psi)  r_{\rm per}(z) ]^k .
\end{align}

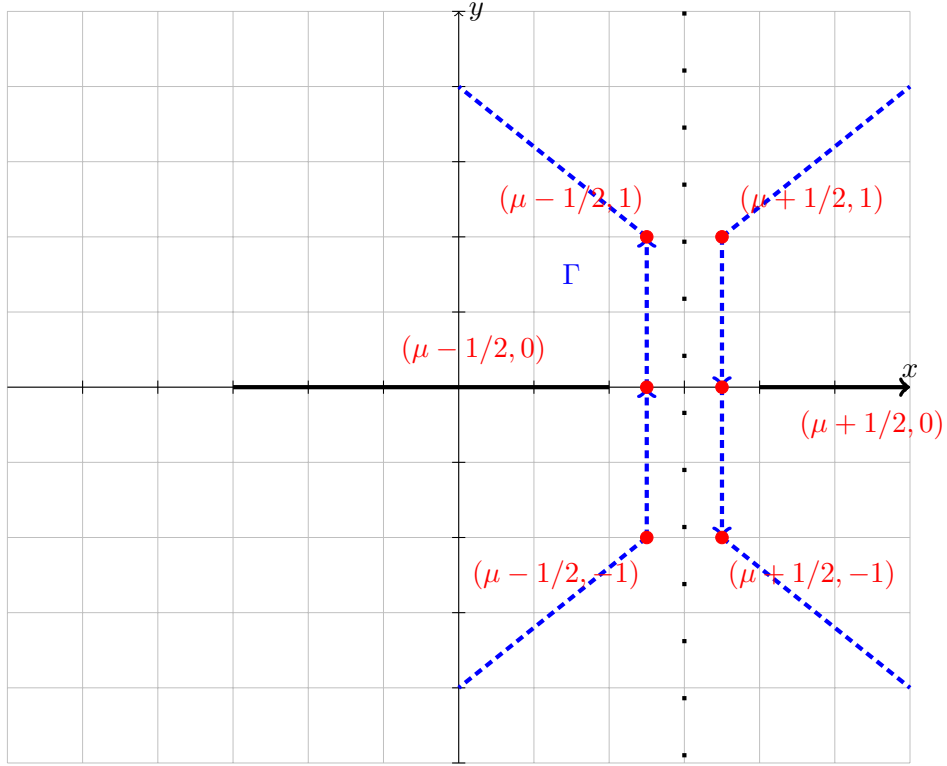
\begin{figure}[ht]  
\begin{center}
\begin{tikzpicture}
    \path [draw, help lines, opacity=.5]  (-6,-5) grid (6,5);
    \foreach \i in {1,...,5} \draw (\i,2.5pt) -- +(0,-5pt) node [anchor=north, font=\small] {} (-\i,2.5pt) -- +(0,-5pt) node [anchor=north, font=\small] {} (2.5pt,\i) -- +(-5pt,0) node [anchor=east, font=\small] {} (2.5pt,-\i) -- +(-5pt,0) node [anchor=east, font=\small] {};
    \draw [->] (-6,0) -- (6,0) node [anchor=south] {$x$};
    \draw [->] (0,-5) -- (0,5) node [anchor=west] {$y$};
		\draw [->, ultra thick, color=blue, densely dashed] (3.5,2) node {} -- (3.5,0) node {};
		\draw [->, ultra thick, color=blue, densely dashed] (2.5,0) node {} -- (2.5,2) node {};
		\draw [->, ultra thick, color=blue, densely dashed] (2.5, -2) node {} -- (2.5, 0) node {};
		\draw [->, ultra thick, color=blue, densely dashed] (3.5, 0) node {} -- (3.5, -2) node {};
		\draw [-, ultra thick, color=blue, densely dashed] (2.5, 2) node {} -- (0, 4) node {};
		\draw [-, ultra thick, color=blue, densely dashed] (2.5, -2) node {} -- (0, -4) node {};
		\draw [-, ultra thick, color=blue, densely dashed] (3.5, 2) node {} -- (6, 4) node {};
		\draw [-, ultra thick, color=blue, densely dashed] (3.5, -2) node {} -- (6, -4) node {};
		\draw [-, ultra thick, color=black] (-3, 0) node {} -- (2, 0) node {};
		\draw [->, ultra thick, color=black] (4, 0) node {} -- (6, 0) node {};
		\tikzstyle{loosely dotted}=[dash pattern=on \pgflinewidth off 20pt]
		\draw [-, ultra thick, color=black, loosely dotted] (3, 5) node {} -- (3, -5) node {};
		\draw[red, fill=red] (2.5,0) circle (.5ex);
		\node[color=red] at (0.2,0.5) {$(\mu - 1/2, 0)$};
		\draw[red, fill=red] (3.5,0) circle (.5ex);
		\node[color=red] at (5.5,-0.5) {$(\mu + 1/2, 0)$};
		\node[color=red] at (1.5,2.5) {$(\mu - 1/2, 1)$};
		\draw[red, fill=red] (2.5,2) circle (.5ex);
		\node[color=red] at (4.7,2.5) {$(\mu + 1/2, 1)$};
		\draw[red, fill=red] (3.5,2) circle (.5ex);
		\node[color=red] at (1.3,-2.5) {$(\mu - 1/2, -1)$};
		\draw[red, fill=red] (2.5,-2) circle (.5ex);
		\node[color=red] at (4.7,-2.5) {$(\mu + 1/2, -1)$};
		\draw[red, fill=red] (3.5,-2) circle (.5ex);
		\node[color=blue] at (1.5,1.5) {$\G$};
  \end{tikzpicture}
\end{center}
\caption{The deformation of the contour $\G$ is 
 indicated by the blue dashed line.   The spectrum of $h_{\rm per}$ is denoted by solid black line. The poles of $f_{T}(z-\mu)$ are denoted by the black dots. The number $c$ denotes the lower bound $h_{\rm per} > -c + 1$.} 
 \label{fig:cauchy-int-contour'}
\end{figure}

\DETAILS{By the resolvent expansion
\begin{align}
	(z-h_{\rm per}+\phi)^{-1} =& (z-h_{\rm per})^{-1} [(-\phi)  r_{\rm per}(z) ]^m \notag \\
		&+ \sum_{k=2}^{m-1}(z-h_{\rm per}+ \phi)^{-1} [(-\phi)  r_{\rm per}(z) ]^k
\end{align} 
for any $m \geq 1$, we see that
\begin{align}
	N(\phi) =& \den[\tilde{N}_m(\phi)] + \sum_{k=2}^{m-1} \den[N_k(\phi)], \label{eqn:nonlin-def}
\end{align}
for any $m \geq 3$, where
\begin{align}
	& N_k(\phi) := 4\pi \oint  r_{\rm per}(z)  [(-\phi)  r_{\rm per}(z) ]^k \label{eqn:N-k-def} .
\end{align}}

\DETAILS{We define explicit constants appearing in estimates below:
\begin{align} 
	& C_{1,\beta} = \sup_{f,g,h; P \in \{ P_-, P_+ \}} \frac{\oint \Tr [fR P g R P h P R]}{\|f\|_{L^2} \|\nabla g\|_{L^2}^{4/3} \|h\|_{L^2}^{2/3}}, \label{C1-beta} 
\end{align}
where the sup is taken over all possible $f,g,h,u, v$ such that the denominators are finite and
 $R =  r_{\rm per}(z)$ is given in \eqref{rper-def}.}



\begin{proposition} \label{prop:Nk-est}
 Let Assumption \ref{A:diel} hold 
  and let $N_2$ be given by \eqref{Nk-def}. Assume that $\|\nabla \psi\|_{L^2} = o(1)$, then we have the estimate
\begin{align}
	& \|\den[N_k(\psi)]\|_{L^2} 
	 \ls   \|\nabla \psi\|_{L^2}^{4/3}  
	  \|\psi\|_{L^2}^{2/3} \|\psi\|_{H^j}^{k-2},\ j=0, 1,\label{Nk-est-rough}
\end{align}
where the constants associated with $\ls$ are independent of $\beta$. 
\end{proposition}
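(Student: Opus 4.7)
The plan is to bound $\|\den[N_k(\psi)]\|_{L^2}$ by passing to a Schatten-class estimate on $N_k(\psi)$, controlling the integrand of the Cauchy contour integral by H\"older's inequality in Schatten classes and the Kato--Seiler--Simon inequality, and converting the resulting $L^p$-norms of $\psi$ into $\dot H^1$-$L^2$ interpolation via Gagliardo--Nirenberg.

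By Lemma \ref{lem:rhoA-by-tracial-norm}(1), for small $\epsilon>0$,
\[
  \|\den[N_k(\psi)]\|_{L^2} \ls \|(1-\Delta)^{3/4+\epsilon} N_k(\psi)\|_{S^2}.
\]
Pushing the Schatten norm inside the Cauchy integral \eqref{Nk-def} and invoking \eqref{Del-r-est} to obtain $\|(1-\Delta)^{3/4+\epsilon} r_{\rm per}(z)\|_{\rm op}\ls 1$ uniformly in $z\in\Gamma$ (write $(1-\Delta)^{3/4+\epsilon} r_{\rm per}(z) = (1-\Delta)^{-1/4+\epsilon}\cdot (1-\Delta)r_{\rm per}(z)$), it suffices to bound
\[
  \int_\Gamma |f_T(z-\mu)|\,\|[\psi\, r_{\rm per}(z)]^k\|_{S^2}\,|dz|
\]
uniformly in $z\in\Gamma$. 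The decay estimate \eqref{oint-decay} of $f_T(z-\mu)$ along the horizontal portions of the deformed contour of Figure \ref{fig:cauchy-int-contour'}, combined with the bounded-length vertical portions enclosing $\sigma(h_{\rm per})$, yields a finite, $T$-independent bound on $\int_\Gamma |f_T(z-\mu)|\,|dz|$, so it remains only to control $\|[\psi\, r_{\rm per}(z)]^k\|_{S^2}$ uniformly on $\Gamma$.

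By non-abelian H\"older in Schatten classes, split $[\psi\, r_{\rm per}(z)]^k$ into two ``core'' factors and $k-2$ ``auxiliary'' factors. For the core, choose $(p_1,p_2)$ with $1/p_1+1/p_2 = 1/2$ and apply the Kato--Seiler--Simon inequality \eqref{eqn:KSS} to get $\|\psi\, r_{\rm per}(z)\|_{S^{p_i}} \ls \|\psi\|_{L^{p_i}}$, using $\|(1-\Delta) r_{\rm per}(z)\|_{\rm op}\ls 1$ and the fact that $(1-\Delta)^{-1}\in L^{p_i}(\R^3)$ for $p_i>3/2$; Gagliardo--Nirenberg in $3$D, together with the Sobolev embedding $\|\psi\|_{L^6}\ls \|\nabla\psi\|_{L^2}$, then yields
\[
  \|\psi\|_{L^{p_1}}\|\psi\|_{L^{p_2}} \ls \|\nabla\psi\|_{L^2}^{4/3}\|\psi\|_{L^2}^{2/3},
\]
where the assumption $\|\nabla\psi\|_{L^2}=o(1)$ is used to weaken the natural interpolation to this equivalent form $\|\psi\|_{L^6}^{4/3}\|\psi\|_{L^2}^{2/3}$. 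For the $k-2$ auxiliary factors, in the case $j=1$ bound each $\psi\, r_{\rm per}(z)$ in operator norm via H\"older and Sobolev embedding, $\|\psi\, r_{\rm per}(z)\|_{\rm op} \ls \|\psi\|_{L^3}\ls \|\psi\|_{H^1}$; in the case $j=0$, absorb each one as $\|\psi\, r_{\rm per}(z)\|_{S^2} \ls \|\psi\|_{L^2}$ (Kato--Seiler--Simon with $p=2$, valid because $(1-\Delta)^{-1}\in L^2(\R^3)$ in dimension $3$) and correspondingly rebalance the Schatten exponents so that the total product remains in $S^2$.

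Assembling the contributions and absorbing the uniform $z$-dependent constants yields the claimed bound \eqref{Nk-est-rough}. The main technical obstacle is maintaining uniform control in $z$ along the entire contour $\Gamma$: this requires the contour to stay at positive distance from $\sigma(h_{\rm per})$ (to secure $\|(1-\Delta) r_{\rm per}(z)\|_{\rm op}\ls 1$) while still lying in a region where $f_T(z-\mu)$ decays exponentially, which is exactly what the deformation in Figure \ref{fig:cauchy-int-contour'} achieves. A secondary subtlety is matching the Schatten--H\"older split precisely to the Gagliardo--Nirenberg interpolation yielding the $(4/3,2/3)$ split, for which the hypothesis $\|\nabla\psi\|_{L^2}=o(1)$ is used to pass between equivalent GN forms.
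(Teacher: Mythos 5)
Your overall strategy — reduce $\|\den[N_k(\psi)]\|_{L^2}$ to a Schatten bound (here via Lemma \ref{lem:rhoA-by-tracial-norm}(1), while the paper uses $L^2$-$L^2$ duality, which is essentially the same thing), pull the resolvent regularity into operator norms via \eqref{Del-r-est}, and convert $L^p$-norms of $\psi$ to $\dot H^1$-$L^2$ by Gagliardo--Nirenberg — is in the right spirit. But the crucial step that produces the exponent pair $(4/3,2/3)$ is missing, and the proposed fix is invalid.

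Concretely: after Kato--Seiler--Simon you hold two core factors $\psi\,r_{\rm per}(z)$ in $S^{p_1},S^{p_2}$ with $1/p_1+1/p_2=1/2$, hence $\|\psi\|_{L^{p_1}}\|\psi\|_{L^{p_2}}$; applying Gagliardo--Nirenberg to each scalar factor yields $\|\nabla\psi\|_{L^2}^{\theta_1+\theta_2}\|\psi\|_{L^2}^{2-\theta_1-\theta_2}$ with $\theta_1+\theta_2 = 3\bigl(1-\tfrac{1}{p_1}-\tfrac{1}{p_2}\bigr)=\tfrac32$ — that is, the split $(3/2,1/2)$, not $(4/3,2/3)$. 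Your claim that the hypothesis $\|\nabla\psi\|_{L^2}=o(1)$ lets you pass from $(3/2,1/2)$ to $(4/3,2/3)$ is incorrect: the inequality $\|\nabla\psi\|^{3/2}\|\psi\|^{1/2}\ls \|\nabla\psi\|^{4/3}\|\psi\|^{2/3}$ amounts to $\|\nabla\psi\|_{L^2}\ls \|\psi\|_{L^2}$, which is not implied by smallness of $\|\nabla\psi\|_{L^2}$ (it can fail for highly oscillatory $\psi$ with small $L^2$-norm). The two estimates are genuinely incomparable, and the paper needs the precise $(4/3,2/3)$ split because of how the $L^2$-factor later gets weighted by $\zeta=\delta m^{-1/2}$ inside the $\|\cdot\|_\delta$ norm.

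The missing ingredient is the \emph{operator-level} trace estimate
\begin{align}
\|A\|_{S^3}^3=\Tr|A|^3 \le \|A\|_{\rm op}\,\Tr|A|^2 = \|A\|_{\rm op}\,\|A\|_{S^2}^2 \le \|A\|_{S^6}\,\|A\|_{S^2}^2,
\end{align}
i.e.\ $\|A\|_{S^3}\le \|A\|_{S^6}^{1/3}\|A\|_{S^2}^{2/3}$, applied to $A=\psi\,r_{\rm per}(z)$. This is \emph{not} complex interpolation (Riesz--Thorin on Schatten classes would give exponents $(1/2,1/2)$ instead of $(1/3,2/3)$), and it cannot be recovered by Gagliardo--Nirenberg on the functions alone. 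With the paper's choice $(p_1,p_2)=(6,3)$ and this inequality, one gets $\|\psi r\|_{S^6}\|\psi r\|_{S^3}\ls \|\psi\|_{L^6}^{4/3}\|\psi\|_{L^2}^{2/3}$, whence $(4/3,2/3)$ after Sobolev. A secondary issue: your ``rebalancing'' of Schatten exponents for the $k-2$ auxiliary factors in the case $j=0$ is ill-posed for $k\ge 3$ (putting them in $S^2$ would make the core exponent budget negative); the correct move, as in the paper, is to take all auxiliary factors in operator norm, $\|\psi r_{\rm per}(z)\|_{\rm op}\ls \|\psi\|_{H^j}$ for $j\in\{0,1\}$.
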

\begin{proof}
Below, we use the notation $r =  r_{\rm per}(z)$, where $r_{\rm per}(z)$ is given in \eqref{rper-def}, and the estimate (see \eqref{Del-r-est'})
\begin{align}\label{Del-r-est''}\|(1-\Delta)^{\al} r\|\ls\ & d^{\al-1}\ls 1,\end{align}
for $\al\in [0, 1]$ and $z\in \Gam$, where \begin{align}\label{d-est-eta'}d \equiv d(z):=\dist (z, \s(h_{\rm per}))\ge \frac14 .\end{align}

We use the $L^2$-$L^2$ duality to estimate the $L^2$ norm of $\den[N_k(\psi)]$. We have, by  \eqref{den-def1} and definition \eqref{Nk-def},
 \begin{align}\notag	\|\den[N_k(\psi)]\|_{L^2} &=\sup_{ \|f\|_{L^2}=1}|\int f \den[N_k(\psi)]|\\ \notag&=\sup_{ \|f\|_{L^2}=1}|\Tr [f N_k(\psi)]|\\
 \label{NL2-dual'} &=\sup_{ \|f\|_{L^2}=1}|\oint \Tr (f  r (\psi r)^k)|.\end{align} 
(In the last two lines, $f$ is considered as a multiplication operator.)
 
 Let $f \in L^2$ and recall the Schatten norm $\|\cdot\|_{S^p}$ 
  defined in \eqref{Ip-norm'}. Using the non-abelian H\"{o}lder's inequality $1 = \frac{1}{2} + \frac{1}{6} + \frac{1}{3}+ \frac{1}{\infty}$, 
 we see that, for $k\ge 2$,
\begin{align}\label{fRphiRphiR}	|\Tr( f r (\psi r)^k)| \ls & \|f r\|_{S^2} \|\psi r\|_{S^6} \|\psi r\|_{S^3}\|\psi r\|^{k-2} \, . 
\end{align}
Next, we use 
 the operator trace-class estimate $\|A\|_{S^3}^3 = \Tr(|A|^3) \leq \|A\| \Tr(|A|^2) = \|A\| \|A\|_{S^2}^2\le \|A\|_{S^6} \|A\|_{S^2}^2$ to obtain 
\begin{align}\label{I3-I2-bnd'} 
 \|A\|_{S^3}\le \|A\|_{S^6}^{1/3} \|A\|_{S^2}^{2/3}.	 
\end{align}


Using this equality to estimate the third factor in \eqref{fRphiRphiR} and the standard relative bounds $\|\psi r\|\ls \|\psi \|_{L^2}$ and  $\|\psi r\|\ls \|\psi \|_{L^6}\ls \|\psi \|_{\dot H^1}$, we bound the r.h.s. of \eqref{fRphiRphiR} as 
\begin{align} \label{L2-est1'}	|\Tr( f r (\psi r)^k)| \ls &  
\|f r\|_{S^2} \| \psi r\|_{S^6}^{4/3} \|\psi r\|_{S^2}^{2/3}\|\psi \|_{\dot H^i}^{k-2},\ i=0, 1.
\end{align} 
For a typical term on the r.h.s., we have 
 $\|g r\|_{S^p}\le \|g (1-\Delta)^{-\al_p} \|_{S^p} \| (1-\Delta)^{\al_p} r\|$, with $3/(2p) <\al_p<1, p>3/2$, which, together with Kato-Seiler-Simon's inequality \eqref{eqn:KSS} and inequality \eqref{Del-r-est''}, gives \[\|g r\|_{S^p}\ls\|g\|_{L^p}d^{\al_p-1},\ 3/(2p) <\al_p<1, p>3/2.\] 
Applying this estimate to each of the first three factors on the r.h.s. of \eqref{L2-est1'} and using 
the Gagliardo-Nirenberg-Sobolev inequality \eqref{eqn:hardy-ineq}, we find  
\begin{align} \label{L2-est2'}	\big|\Tr( f r (\psi r)^k)\big|\ls & d^{-4/3}\|f\|_{L^2} \|\n \psi\|_{L^2}^{4/3} \|\psi \|_{L^2}^{2/3}\|\psi \|_{\dot H^j}^{k-2}\, , 
	\end{align}
for $j=0, 1$.  Recalling definition \eqref{d-est-eta'} of $d \equiv d(z)$, we see that  the integral on the r.h.s. of \eqref{NL2-dual'} converges absolutely. 
Eqs \eqref{L2-est2'}, \eqref{d-est-eta'}, \eqref{eqn:oint-adhoc} and \eqref{oint-decay} 
give
 \begin{align} \big| \oint \Tr( f r (\psi r)^k) \big| \ls & 
  \|f\|_{L^2} \|\n \psi\|_{L^2}^{4/3} \|\psi \|_{L^2}^{2/3}\|\psi \|_{\dot H^j}^{k-2}, \label{L2-est'} \end{align} 
for $j=0, 1$.   Eqs. \eqref{NL2-dual'} and \eqref{L2-est'} imply 
\eqref{Nk-est-rough}. 
\end{proof}


Now, we complete the proof of Proposition \ref{prop:nonlinear-rough}.  Proposition \ref{prop:Nk-est}  shows that 
  if 
  either $\|\psi\|_{L^2} = o(1)$ or $\|\psi\|_{L^2} < \infty$ and $\|\psi\|_{\dot{H}^{1}} = o(1)$, then series \eqref{N-series} converges absolutely in $L^2$. 

Now, using series \eqref{N-series}, we write  
\begin{align}
	N(\psi_1) - N(\psi_2)= \sum_{k \geq 2} \den[N_k(\psi_1) - N_k(\psi_2)] \, .
\end{align}
By definition \eqref{Nk-def}, $N_k(\psi)$ is an $k$-th degree monomial in $\phi$.
Hence, we can expand $N_k(\psi_1) - N_k(\psi_2)$ in the following telescoping form
\begin{align}
	x^k& - y^k = x^{k-1}(x-y) + x^{k-2}(x-y)y + \cdots + (x-y)y^k\, . \label{N-telescoping}
\end{align}
The proof of Proposition \ref{prop:nonlinear-rough} follows by applying appropriate and straightforward extension of  Proposition \ref{prop:Nk-est}  
to each term in the expansion of  $N_k(\psi_1) - N_k(\psi_2)$ given in \eqref{N-telescoping}.
 \end{proof}

\DETAILS{\begin{proof}[Proof of Proposition \ref{prop:nonlinear-micro-rough}]

Now, we complete the proof of Proposition \ref{prop:nonlinear-micro-rough}. Lemma \ref{lem:Nk-rough}
shows that if $\|\psi\|_{L^2} < \infty$ and either $\|\nabla \psi\|_{L^2} = o(1)$ or $\|\nabla \psi\|_{\dot{H}^{1}} = o(1)$, then series \eqref{N-series} converges absolutely in $L^2$ and gives estimate \[\|\den[N (\psi)]\|_{L^2} \ls  \|\nabla \psi\|_{L^2}^{4/3}\|\psi\|_{L^2}^{2/3} \|\psi\|_{H^j}^{k-2}.\]  

Now, using series \eqref{N-series}, we write  
\begin{align}
	N(\psi_1) - N(\psi_2)= \sum_{k \geq 2} \den[N_k(\psi_1) - N_k(\psi_2)] \, .
\end{align}
By definition \eqref{Nk-def}, $N_k(\psi)$ is an $k$-th degree monomial in $\phi$.
Hence, we can expand $N_k(\psi_1) - N_k(\psi_2)$ in the following telescoping form
\begin{align}
	x^k& - y^k = x^{k-1}(x-y) + x^{k-2}(x-y)y + \cdots + (x-y)y^k\, . \label{N-telescoping}
\end{align}
The proof of Proposition \ref{prop:nonlinear-micro-rough} follows by applying appropriate and straightforward extension of Lemma \ref{lem:Nk-rough} 
to each term in the expansion of $N_k(\psi_1) - N_k(\psi_2)$ given in \eqref{N-telescoping}.
\end{proof}}

\DETAILS{\begin{proposition} \label{prop:nonlinear-micro}
 Let Assumptions \ref{A:kappa} - \ref{A:scaling} hold. If $\|\phi_1\|_{\dot{H}^1}, \|\phi_2\|_{\dot{H}^1} = o(1)$, then we have the estimates
\begin{align}
	& \|N(\phi_1) - N(\phi_2)\|_{L^2} \notag \\
	&\ls  (\|\phi_1\|_{\dot{H}^1} + \|\phi_2\|_{\dot{H}^1})\|\phi_1 - \phi_2\|_{\dot{H}^1} \notag \\
	&+ C_{1,\beta} (\|\phi_1\|_{\dot{H}^1}^{1/3} \|\phi_1\|_{L^2}^{2/3} + \|\phi_2\|_{\dot{H}^1}^{1/3} \|\phi_2\|_{L^2}^{2/3}) \|\phi_1 - \phi_2\|_{L^2} \notag \\
	&+ C_{2,\beta} (\|\phi_1\|_{\dot{H}^1} \|\phi_1\|_{L^2} + \|\phi_2\|_{\dot{H}^1}\|\phi_2\|_{L^2})\|\phi_1 - \phi_2\|_{\dot{H}^1}.
\end{align}
\end{proposition}
We first derive Proposition \ref{pro:nonlinear}  from  Proposition \ref{prop:nonlinear-micro} and then prove the latter statement.}

\DETAILS{\begin{proof}[Proof of Proposition \ref{prop:nonlinear}]
By \eqref{eqn:F-delta-rescaling-rel}, $N_\delta$ and the unscaled nonlinearity $N = N_{\delta = 1}$ are related via
\begin{align}
	N_\delta(\varphi) = \delta^{-3/2} \Udel N( \phi) \label{eqn:nonlinear-explicit-rescale}
\end{align}
where $\Udel$ is given in \eqref{rescU} and 
\begin{align}
	\phi = \delta^{-1/2} \Udel^* \varphi. \label{eqn:varphi-phi-norm-scale} 
\end{align}
We first observe that 
\begin{align}
	& \|N_\delta(\varphi_1) - N_\delta(\varphi_2)\|_{L^2} \notag \\
		&\quad \ls  \delta^{-1/2}(\|\varphi_1\|_{\dot{H}^1} + \|\varphi_2\|_{\dot{H}^1})\|\varphi_1 - \varphi_2\|_{\dot{H}^1} \notag \\
		&\quad + \delta^{-7/6} C_{1,\beta} (\|\varphi_1\|_{\dot{H}^1}^{1/3} \|\varphi_1\|_{L^2}^{2/3} + \|\varphi_2\|_{\dot{H}^1}^{1/3} \|\varphi_2\|_{L^2}^{2/3}) \notag \\
\label{eqn:pro:nonlinear:eqn-3}		&\quad  \quad \times  \|\varphi_1 - \varphi_2\|_{\dot{H}^1},
\end{align}
where the constant $C_{1,\beta}$ is given explicitly in  \eqref{C1-beta}. 
We rewrite each term $\|\varphi_2\|_{L^2}$ in \eqref{eqn:pro:nonlinear:eqn-3} as $\|\varphi_2\|_{L^2} = \delta m^{-1/2} (\delta^{-1}m^{1/2} \|\varphi_2\|_{L^2})$. Using the definition of the $B_{s,\delta}$-norm in \eqref{Bs-norm}, we rewrite \eqref{eqn:pro:nonlinear:eqn-3} as
\begin{align}
	& \|N_\delta(\varphi_1) - N_\delta(\varphi_2)\|_{L^2} \notag \\
		\ls & \delta^{-1/2}(\|\varphi_1\|_{\dot{H}^1} + \|\varphi_2\|_{\dot{H}^1})\|\varphi_1 - \varphi_2\|_{\dot{H}^1} \notag \\
		&+ \delta^{-7/6+2/3} C_{1,\beta} m^{-2/6}(\|\varphi_1\|_{B_{s,\delta}} + \|\varphi_2\|_{B_{s,\delta}}) \notag \\
\label{eqn:pro:nonlinear:eqn-5}		&\quad  \times  \|\varphi_1 - \varphi_2\|_{\dot{H}^1}.
\end{align}By the gap assumption \ref{A:spec-gap}, 
inequality \eqref{eqn:pro:nonlinear:eqn-1} follows from \eqref{eqn:pro:nonlinear:eqn-5} modulo bounding the coefficient $C_{1,\beta} m^{-s}$ for $k=1,...,4$ and $s \leq 5/6$. The latter estimates follow from  Lemmas \ref{lem:Cjbeta-exp-decay} and \ref{lem:V1-lower-bound}  below. 
\end{proof}}

\appendix
\section{$\epsilon(T) \rightarrow \epsilon(0)$ as $T \rightarrow 0$} \label{sec:epsilon-T-to-zero}
\begin{lemma} \label{lem:epsilon-T-to-zero}
Let $\ex = 0$. Then $\epsilon\equiv \epsilon(T) \rightarrow \epsilon(0)$ as $T \rightarrow 0$, where $\epsilon(0)$ is the dielectric constant for $T=0$ obtained in \cite{CL}.
\end{lemma}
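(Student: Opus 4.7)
My plan is to pass to the $T \to 0$ limit in the explicit formulae \eqref{eps}--\eqref{eps''} for $\epsilon$ by deforming the contour $\Gamma$ in $\oint = \frac{1}{2\pi i}\int_\Gamma dz\, f_T(z-\mu)$ (see \eqref{eqn:oint-adhoc} and Figure~\ref{fig:cauchy-int-contour}) and then identifying the limit with the formula of \cite{CL}. Since $\mu_{\rm per}$ lies in a gap of $h_{\rm per}$ by Assumption~\ref{A:diel} and the poles $\mu_{\rm per} + i\pi(2k+1)T$ of $f_T(z-\mu_{\rm per})$ accumulate at $\mu_{\rm per}$ as $T \to 0$, I can deform $\Gamma$ without crossing any poles or spectrum into a union $\Gamma_- \cup \Gamma_+$, chosen independently of $T$, where $\Gamma_-$ encircles $\sigma(h_{{\rm per},0}) \cap (-\infty,\mu_{\rm per})$ inside $\{\Re z \le \mu_{\rm per} - \eta/2\}$ and $\Gamma_+$ encircles $\sigma(h_{{\rm per},0}) \cap (\mu_{\rm per},\infty)$ inside $\{\Re z \ge \mu_{\rm per} + \eta/2\}$ (reminiscent of Figure~\ref{fig:cauchy-int-contour'}). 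On $\Gamma_-$ one has $|f_T(z-\mu_{\rm per}) - 1| = O(e^{-\eta/(4T)})$ and on $\Gamma_+$, $|f_T(z-\mu_{\rm per})| = O(e^{-\eta/(4T)})$, uniformly along $\Gamma_\pm$ (with further exponential decay as $|\Re z| \to \infty$).

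I then apply the deformed contour to each piece of $\epsilon$. By the Kato-Seiler-Simon inequality \eqref{eqn:KSS} and the uniform resolvent bound \eqref{Del-r-est'}, the integrand in \eqref{eps'} is trace class on $L^2_{\rm per}$ with a norm summable along $\Gamma_\pm$, so dominated convergence yields
\begin{equation*}
\epsilon'(T) \longrightarrow -\frac{1}{|\Omega|}\,\frac{1}{2\pi i}\Tr_{L^2_{\rm per}} \int_{\Gamma_-}\!dz\; r_{{\rm per},0}^2(z)(-i\nabla) r_{{\rm per},0}(z)(-i\nabla) r_{{\rm per},0}(z),
\end{equation*}
the $\Gamma_+$ contribution being killed by the exponential smallness of $f_T$ there. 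The identical argument applied to \eqref{rho'} gives $\rho'(T) \to \rho'(0)$ in $L^2_{\rm per}$, and applied to the fibre formula \eqref{exact-form-Mk} of Proposition~\ref{prop:M-BF-deco} gives $\Mfib{0}(T) \to \Mfib{0}(0)$ in operator norm.

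For $\epsilon''$, I combine the $\rho'$ convergence with the convergence of $\barKfib{0}^{-1}$. Since $\Mfib{0}(T) \ge 0$ for all $T \ge 0$ --- the kernel formula \eqref{eqn:fMg-HS-form}--\eqref{eqn:temp-f-x-y-tanh} remains non-negative at $T=0$ upon replacing $\tanh$ by $\operatorname{sgn}$ --- and $\bar\Pi_0(-\Delta)\bar\Pi_0 \ge c > 0$ on $\Ran \bar\Pi_0 \subset L^2_{\rm per}$, the operator $\barKfib{0}(T)$ is uniformly invertible in $T \ge 0$. The second resolvent identity then promotes the operator-norm convergence $\Mfib{0}(T) \to \Mfib{0}(0)$ to $\barKfib{0}^{-1}(T) \to \barKfib{0}^{-1}(0)$, whence $\epsilon''(T) \to \epsilon''(0)$. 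Finally, evaluating the $T=0$ contour integrals over $\Gamma_-$ via the Cauchy formula re-expresses $\epsilon(0)$ in terms of the Fermi sea projection $P_- := \chi_{(-\infty,\mu_{\rm per})}(h_{{\rm per},0})$ and its complement $P_+ = \one - P_-$, which is the form in which the dielectric constant is presented in \cite{CL}.

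The main technical obstacle is the first paragraph: the original contour $\Gamma$ is unbounded on the right, so $\Gamma_+$ must also be taken unbounded, and one must verify that the exponential decay of $f_T$ on $\Gamma_+$ dominates the trace-class growth of the integrand as $|\Re z| \to \infty$. This amounts to one careful estimate combining the pointwise bound $|f_T(z-\mu_{\rm per})| \lesssim e^{-(\Re z - \mu_{\rm per})/T}$ for $\Re z > \mu_{\rm per}$ with the polynomial Hilbert-Schmidt decay of the resolvent-times-$(-i\nabla)$ products along the contour.
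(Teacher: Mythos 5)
Your proposal follows the same basic strategy as the paper --- pass to the $T\to 0$ limit in the contour integrals of \eqref{eps}--\eqref{eps''} by using that $f_T(z-\mu)$ tends to the indicator $\chi_{(-\infty,\mu)}$ --- but you treat the $\epsilon''$ term with noticeably more care than the paper does. The paper's proof asserts that $\epsilon(T)$ is of the form $\frac{1}{2\pi i}\int_\Gamma f_T(z-\mu)\,X(z)\,dz$ with a single $\beta$-independent holomorphic $X$; this is literally accurate only for $\epsilon'$. The term $\epsilon'' = |\Omega|^{-1}\langle\rho',\barKfib{0}^{-1}\rho'\rangle$ is a quadratic expression in a contour integral (through $\rho'$) sandwiched against $\barKfib{0}^{-1}$, and $\barKfib{0}$ itself depends on $T$ through $\Mfib{0}$. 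Your argument correctly decomposes the problem: you get $\rho'(T)\to\rho'(0)$ and $\Mfib{0}(T)\to\Mfib{0}(0)$ from the contour estimate, and then you deduce $\barKfib{0}^{-1}(T)\to\barKfib{0}^{-1}(0)$ from the $T$-uniform lower bound $\barKfib{0}\ge\bar\Pi_0(-\Delta)\bar\Pi_0\gtrsim 1$ (valid because $\Mfib{0}\ge 0$ and the Laplacian is coercive on $\Ran\bar\Pi_0$) plus the resolvent identity. This is the step the paper skips. The final identification with \cite{CL} via the Cauchy formula and the spectral decomposition of $h_{\rm per,0}$ is the same in both treatments. Your argument is therefore a correct, and more complete, version of the paper's proof.
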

\begin{proof}
 We see from \eqref{eps} 
  that $\epsilon(T), T=1/\beta,$ is of the form
\begin{align}
	\epsilon(T)  = \frac{1}{2\pi i} \int_\G \ft(z-\mu) X(z)
\end{align}
where $X(z)$ is some  holomorphic function on $\C \backslash \R$, independent of $\beta$, and remains holomorphic on the real axis where the gap of $h_{\rm per}$ occurs. On $\R$, we note that $f_{\rm FD}(\beta x)$ converges to the indicator function $\chi_{(-\infty, 0)}$ as $\beta \rightarrow \infty$. If we take $\beta \rightarrow \infty$, the integral
\begin{align}
\frac{1}{2\pi i} \int_\G \ft(z-\mu) X(z)
\end{align}
converges to $\frac{1}{2\pi i} \int_{G_1} X(z)$ where $G_1$ is any contour around the part of the spectrum of $h_{\rm per}$ that is less than $\mu_{\rm per}$. This is the same expression as in \cite{CL} after inserting $1 = \sum_i |\varphi_i \ran \lan \varphi_i|$ for each resolvent of $h_{\rm per}$ in $X(z)$ where the $\varphi_i$'s are eigenvectors of $h_{\rm per}$.
\end{proof}

\section{Bounds on  
$m$ and $V$} \label{app:bounds-on-V}
In this section, we prove bounds on 
$m$ and $V$  given   \eqref{m} and \eqref{V}. Note that $m=\|V\|_{L^1_{\rm per}}$. 
Since $f'_{T} < 0$ ($T= 1/\beta$),  \eqref{V}  implies that 
$V > 0$ 
and therefore, by \eqref{V},  $\|V\|_{L^1_{\rm per}} = \int_\Om V$, where $\Om$ is a fundamental domain of $\lat$ (see Section \ref{sec:results}), which yields
\begin{align} \label{m-intV-expr}m=\int_\Om V = - 
  \Tr_{L^2_{\rm per}}  f'_{T}(h_{{\rm per}, 0}-\mu),\ \qquad \mu=\mu_{\rm per}.\end{align}

\begin{lemma} \label{lem:V1-lower-bound}
Let Assumption \ref{A:diel} 
 hold and $\etal$ be given in \eqref{eta-Om}. Then
\begin{align} \label{V1-lower-bound}	m=\|V\|_{L^1_{\rm per}} \ge \frac14  \beta e^{-\beta \etal},
\end{align}
where 
$\etal$ is given in \eqref{eta-Om}. 
\end{lemma}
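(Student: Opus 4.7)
\textbf{Proof proposal for Lemma \ref{lem:V1-lower-bound}.}

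The plan is to evaluate $m$ explicitly through the spectral decomposition of $h_{{\rm per}, 0}$ and then bound the resulting sum below by a single term. First, recall from \eqref{m-intV-expr} that
\begin{align*}
m = -\Tr_{L^2_{\rm per}} f_T'(h_{{\rm per}, 0}-\mu_{\rm per}).
\end{align*}
Using $f_T(\lambda) = (1+e^{\beta\lambda})^{-1}$, a direct calculation gives
\begin{align*}
-f_T'(\lambda) = \frac{\beta}{4\cosh^2(\beta\lambda/2)} \geq 0.
\end{align*}
This is where positivity of $V$ (and of $m$) originates.

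Next I invoke the fact, noted in the paper just before \eqref{eta-Om}, that $h_{{\rm per}, 0}$ is self-adjoint on $L^2_{\rm per}$ with purely discrete spectrum. Let $\{\lambda_n\}$ denote its eigenvalues (repeated with multiplicity). Then the spectral theorem yields
\begin{align*}
m = \sum_n \frac{\beta}{4\cosh^2(\beta(\lambda_n - \mu_{\rm per})/2)}.
\end{align*}
Since the spectrum is discrete and $\eta_0 = \dist(\mu_{\rm per}, \sigma(h_{{\rm per}, 0}))$ is finite, there exists some index $n_0$ with $|\lambda_{n_0} - \mu_{\rm per}| = \eta_0$. Because every term in the sum above is nonnegative, I can retain only the $n_0$-th term as a lower bound:
\begin{align*}
m \geq \frac{\beta}{4\cosh^2(\beta \eta_0/2)}.
\end{align*}

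Finally, I apply the elementary inequality $\cosh(x) \leq e^{x}$ for $x\ge 0$ (which follows from $(e^x+e^{-x})/2 \leq e^x$), giving $\cosh^2(\beta\eta_0/2) \leq e^{\beta\eta_0}$, and hence
\begin{align*}
m \geq \frac{\beta}{4\,e^{\beta\eta_0}} = \frac{1}{4}\beta e^{-\beta \eta_0},
\end{align*}
which is precisely \eqref{V1-lower-bound}. The argument is essentially one-line once the explicit formula for $f_T'$ and the discreteness of $\sigma(h_{{\rm per}, 0})$ are in hand; no real obstacle arises, since by only dropping nonnegative terms we never lose more than a constant factor and we keep the correct exponential scale $e^{-\beta\eta_0}$.
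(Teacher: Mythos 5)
Your proof is correct and follows essentially the same route as the paper: starting from \eqref{m-intV-expr}, using discreteness of $\sigma(h_{{\rm per},0})$ to drop all but the contribution of the eigenvalue nearest $\mu_{\rm per}$, and then bounding that single term below. Your $\cosh$-form $-f_T'(\lambda)=\beta/(4\cosh^2(\beta\lambda/2))$ is algebraically identical to the paper's $\beta e^{\beta\lambda}/(1+e^{\beta\lambda})^2$, and the final inequality $\cosh^2(\beta\eta_0/2)\le e^{\beta\eta_0}$ is the same estimate the paper uses to pass from $\beta e^{\beta\eta_0}/(1+e^{\beta\eta_0})^2$ to $\tfrac14\beta e^{-\beta\eta_0}$, just packaged differently.
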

\begin{proof}
 Using that $\etal$ is the smallest distance between $\mu = \mu_{\rm per}$ and the spectrum of $h_{{\rm per}, 0}$ (see \eqref{eta-Om}) and Eq. \eqref{m-intV-expr} and replacing $\Tr_{L^2_{\rm per}}  f'_{T}(h_{{\rm per}, 0}-\mu)$ by the contribution of the eigenvalue of $h_{{\rm per}, 0}$ closest to $\mu$, we find 
\begin{align}
		m \geq & - 
		 f'_{T}(\etal) =  \beta \frac{e^{\beta \etal}}{(1+e^{\beta \etal})^2}  \ge \frac14  \beta e^{-\beta \etal}. \label{eqn:beta-f-prime-mu}
\end{align}
This gives  \eqref{V1-lower-bound}. \end{proof}

\begin{lemma} \label{lem:Vp-uppper-bound}
Let Assumption \ref{A:diel} 
  hold. Then, for $1 \leq p \leq \infty$, 
   \begin{align}
	\|V\|_{L^p_{\rm per}} \ls 
	\beta e^{-\etal\beta}. \end{align}
\end{lemma}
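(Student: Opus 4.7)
The plan is to exploit the spectral gap at $\mu$ and the decay of $f_T'$ there: first via a direct spectral argument for $p=1$, then by extending to $p=\infty$ through a Cauchy-integral representation combined with Lemma \ref{lem:rhoA-by-tracial-norm}, and finally interpolating for intermediate $p$.

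First, for $p=1$ I would argue as in the proof of Lemma \ref{lem:V1-lower-bound}. Since $V\ge 0$,
\[
\|V\|_{L^1_{\rm per}} \;=\; \int_\Om V \;=\; -\Tr_{L^2_{\rm per}}\,f_T'(h_{{\rm per},0}-\mu) \;=\; \sum_i \frac{\beta}{4\cosh^2(\beta(\lam_i-\mu)/2)},
\]
where $\{\lam_i\}$ is the (discrete) spectrum of $h_{{\rm per},0}$ on $L^2_{\rm per}$. The gap inequality $|\lam_i-\mu|\ge \etal$ and the elementary estimate $1/(4\cosh^2(s/2))\le e^{-|s|}$ give the pointwise bound $|f_T'(\lam_i-\mu)|\le \beta e^{-\beta|\lam_i-\mu|}$. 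Factoring out $\beta e^{-\beta\etal}$ reduces the task to bounding the tail $\sum_i e^{-\beta(|\lam_i-\mu|-\etal)}$, which is controlled by a heat-trace estimate using the Weyl asymptotics of $h_{{\rm per},0}$ together with the $L^\infty$-boundedness of $\phi_{\rm per}\in H^2_{\rm per}$ supplied by Assumption \ref{A:diel}.

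Next, for $p=\infty$ I would write the Cauchy representation
\[
V \;=\; -\den\!\left[\,\frac{1}{2\pi i}\int_\Gamma f_T(z-\mu)\,r_{{\rm per},0}^2(z)\,dz\,\right],
\]
with $\Gamma$ deformed to stay at distance $\gs 1$ from $\sigma(h_{{\rm per},0})$ (feasible since $\mu$ is in a gap). To control the operator $r_{{\rm per},0}^2(z)$ pointwise I would rely on Lemma \ref{lem:rhoA-by-tracial-norm}, using the Sobolev embedding $H^s_{\rm per}\hookrightarrow L^\infty$ for $s>3/2$ to convert the $L^\infty$ density bound into a Schatten bound on $(1-\Delta)^{s+3/4+\e}r_{{\rm per},0}^2(z)$. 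This Schatten norm is estimated uniformly on $\Gamma$ by combining the relative-boundedness estimate $\|(1-\Delta)^\alpha r_{{\rm per},0}(z)\|\ls 1$ from \eqref{Del-r-est'} with the Kato--Seiler--Simon inequality \eqref{eqn:KSS} applied to one resolvent factor, giving $\|\den[r_{{\rm per},0}^2(z)]\|_{L^\infty_{\rm per}}\ls 1$. The $\beta e^{-\beta\etal}$ factor is then produced either by integrating $|f_T(z-\mu)|\ls e^{-\beta|\Re(z-\mu)|}$ along the portions of $\Gamma$ escaping to $\pm\infty$, or equivalently by integration by parts in $z$ (rewriting the integrand with $f_T'$ against $r_{{\rm per},0}$), which supplies the $\beta$ prefactor explicitly.

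Finally, the intermediate cases $1<p<\infty$ follow immediately from the two endpoint bounds via
\[
\|V\|_{L^p_{\rm per}} \;\le\; \|V\|_{L^1_{\rm per}}^{1/p}\,\|V\|_{L^\infty_{\rm per}}^{1-1/p}.
\]
The main technical obstacle I anticipate is the uniform-in-$T$ control of the tail $\sum_i e^{-\beta(|\lam_i-\mu|-\etal)}$ in Step 1, equivalently the integrability along the unbounded parts of $\Gamma$ in Step 2: the heat trace grows like $\beta^{-3/2}$ as $\beta\to 0$, so one really must balance this Weyl-type growth against the exponential factor $e^{-\beta\etal}$. This is exactly the regime controlled by condition \eqref{beta-del-cond} (the opposite regime $\beta\to\infty$ being strictly easier), so the estimates can be closed with constants independent of $T$ and $\del$ in the sense of the paper's $\ls$-convention.
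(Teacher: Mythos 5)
Your $p=1$ argument and the interpolation step match the paper. For $p=\infty$, however, you take a genuinely different route (contour integral of $r_{{\rm per},0}^2(z)$ plus pointwise-in-$z$ density estimates) versus the paper's (Sobolev embedding $W^{4,1}_{\rm per}\hookrightarrow L^\infty_{\rm per}$, the commutator identity $\nabla\den(A)=\den([\nabla,A])$, and reduction to trace-norm bounds of $\nabla^s f_T'(h_{{\rm per},0}-\mu)\nabla^{4-s}$, which are then estimated by the same eigenvalue sum as $p=1$). Unfortunately, your $p=\infty$ route has a gap.

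The estimate $|f_T(z-\mu)|\ls e^{-\beta|\Re(z-\mu)|}$ is \emph{false} on the half of $\Gamma$ with $\Re\,z<\mu$: there $f_T(z-\mu)\to 1$, since $f_T(\lambda)=1/(e^{\beta\lambda}+1)\to 1$ as $\lambda\to-\infty$ (cf.\ \eqref{oint-decay}, which only gives $\min(1,e^{-(\Re z-\mu)/T})$). So integrating $|f_T|\cdot\|\den[r^2(z)]\|_{L^\infty}$ along $\Gamma$ produces an $O(1)$ contribution from the bounded left piece (around the occupied bands), not the required $\beta e^{-\etal\beta}$. To make the contour route work you must first split $\Gamma$ into two disjoint loops as in Figure \ref{fig:cauchy-int-contour'}, and on the closed left loop $\Gamma_1$ replace $f_T$ by $f_T-1$ (allowed because $\oint_{\Gamma_1}r^2(z)\,dz=0$, $r^2$ being an exact derivative in $z$), exactly the trick used in Lemma \ref{lem:int-reduct}; only then does one have $|f_T-1|\ls e^{-\beta|\Re z-\mu|}$ for $\Re\,z<\mu$. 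Your alternative, integrating by parts to $f_T'(z-\mu)r_{{\rm per},0}(z)$, does not rescue this: in $d=3$ the diagonal $r_{{\rm per},0}(z)(x,x)$ is divergent, so $\|\den[r_{{\rm per},0}(z)]\|_{L^\infty}$ does not exist pointwise in $z$; one may only treat the contour-integrated operator $f_T'(h_{{\rm per},0}-\mu)$ as a whole, which is precisely what forces the paper's approach.

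A secondary, more technical issue: the step ``$H^s_{\rm per}\hookrightarrow L^\infty$ converts the $L^\infty$ density bound into a Schatten bound on $(1-\Delta)^{s+3/4+\epsilon}r_{{\rm per},0}^2(z)$'' is not correct as written. Lemma \ref{lem:rhoA-by-tracial-norm} gives $L^2_{\rm per}$ control of $\den[A]$; to upgrade to $H^s_{\rm per}$ one must track derivatives of the density, and $\nabla^j\den[A]$ is the density of a \emph{sum} $\sum_k\binom{j}{k}\nabla^k A(-\nabla)^{j-k}$ coming from iterated commutators, not of $(1-\Delta)^{j/2}A$ alone. The paper handles this correctly. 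Finally, your anticipated difficulty ``as $\beta\to 0$'' is real (in $d=3$ the eigenvalue tail gives $\|V\|_{L^1_{\rm per}}\sim\beta^{-1/2}$ for $\beta\to 0$, incompatible with $\ls\beta e^{-\etal\beta}$), but it should be read as an implicit restriction to $\beta\gs 1$ built into the paper's conventions, not as something closed by \eqref{beta-del-cond}; that assumption does not appear in this lemma's hypotheses.
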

\begin{proof}
We do the case for $p=1$ and $p=\infty$, and conclude the lemma by interpolation. 
 By Assumption \ref{A:diel}, 
 the potential $\phi_{\rm per}$ is bounded. 
 Thus, $h_{{\rm per}, 0}$ has only discrete spectrum on $L^2_{\rm per}(\R^3)$ and
\begin{align}
	\frac{1}{\beta}\|V\|_{L^1_{\rm per}} = & \sum_{\lambda \in \sigma(h_{{\rm per}, 0})} \frac{ e^{\beta(\lambda-\mu)}}{(1+e^{\beta(\lambda-\mu)})^2}\\ 
 \label{V-L1-bnd}		\le & \sum_{\mu > \lambda \in \sigma(h_{{\rm per}, 0})}e^{\beta(\lambda-\mu)}+ \sum_{\mu < \lambda \in \sigma(h_{{\rm per}, 0})}e^{-\beta(\lambda-\mu)}\\ \label{V-L1-bnd'} = & \sum_{\lambda \in \sigma(h_{{\rm per}, 0})}e^{\beta |\lambda-\mu|} .\end{align}
  Again, we use that $\etal$ is the smallest distance between $\mu = \mu_{\rm per}$ and the spectrum of $h_{{\rm per}, 0}$ (see \eqref{eta-Om}).  Peeling the eigenvalue(s) closest to $\mu$ and letting $\etal +\xi$ stand for the distance between $\mu$ and the rest of the spectrum $\sigma(h_{{\rm per}, 0})$, 
  we find, for some constant $c$,
\begin{align} \label{V-L1-bnd''}
   \sum_{\lambda \in \sigma(h_{{\rm per}, 0})}& e^{\beta |\lambda-\mu|}  =ce^{\beta \etal}+  \sum_{\lambda \in \sigma(h_{{\rm per}, 0}), |\lambda-\mu|\ge\etal +\xi }e^{\beta |\lambda-\mu|}.\end{align}
 We estimate the sum on the r.h.s. by an integral as follows. 
Since the potential $\phi_{\rm per}$ is infinitesimally bounded with respect to $-\Delta$, the eigenvalues of $h_{{\rm per}, 0}$ go to infinity at a similar rate as those of $-\Delta$ (on $L^2_{\rm per}(\R^3)$), i.e. as $n^2$. Thus, assuming that for $\lam$ sufficiently large, the $n$th eigenvalue $\lam_n\approx n^2$ has the degeneracy of the order $O(n^k), k\ge 0$, we conclude that 
\begin{align} \label{2nd-sum-bnd} 
		& \sum_{\mu < \lambda \in \sigma(h_{{\rm per}, 0})}e^{-\beta(\lambda-\mu)}\ls \int_{x^2 \ge \mu +\etal+\xi} x^k e^{-\beta (x^2 -\mu)} d x \notag\\
		&\qquad =\frac12 \int_{y\ge\etal+\xi} (y+\mu)^{\frac{k-1}{2}} e^{-\beta y} d y  \ls\frac1\beta \mu^{\frac{k-1}{2}} e^{-\beta (\etal+\xi)}.\end{align} 
For the first sum in \eqref{V-L1-bnd}, we consider separately the cases $\mu\ls 1$ and $\mu\gg 1$ and, in the 2nd case, break the sum into the sums over $\lam \ls 1$ and  $\lam \gg 1$.  In the first three situations, the estimate is straightforward and in the last one, we proceed as in \eqref{2nd-sum-bnd} to obtain  
\[\sum_{\mu -\etal-\xi \ge \lambda \in \sigma(h_{{\rm per}, 0})}e^{\beta(\lambda-\mu)}\ls \frac1\beta \mu^{\frac{k-1}{2}} e^{-\beta \etal}.\] This proves the lemma for $p=1$.

Let $W^{4,1}_{\rm per}$ be the usual Sobolev space associated to $L^1_{\rm per}$ involving up to 4 derivatives. For the case $p=\infty$, we use the Sobolev inequality 
\begin{align}
	\|f\|_\infty \ls \|f\|_{W^{4,1}_{\rm per}}
\end{align}
for $f \in W^{4,1}_{\rm per}$.  Thus, it suffices for us to estimate $\|\nabla^j V\|_{L^1_{\rm per}}, j=0,...,4$. To this end, we note that 
\begin{align}
	\nabla \den(A) = \den( [\nabla, A] )
\end{align}
for an operator $A$ on $L^2_{\rm per}(\R^3)$. Thus, it suffices that we estimate the trace 1-norm of $\nabla^s f_{T}'(h_{{\rm per}, 0}-\mu) \nabla^{4-s}$ on $L^2_{\rm per}(\R^3)$ for $s=0,...,4$. 
 {Since the potential $\phi_{\rm per}$ is bounded together with all its derivatives, 
we have, for $s=0, \dots, 4$,  
   \begin{align} \label{nab-h-bnd}	\|\nabla^s h^{-s/2}\|\ls 1, \end{align}
where  $h:=h_{{\rm per}, 0}+c$, with $c>0$  s.t. $h_{{\rm per}, 0}+c>0$. Indeed, to fix ideas, consider one of the terms, say, $\|\nabla^3 h^{-3/2}\|$. We have  $\|\nabla^3f\|^2\le \|(-\Delta)^{3/2}f\|^2=\lan f, (h+\phi_{\rm per}-c)^3f\ran$. Taking $f = h^{-3/2}u$, expanding the binomial $(h+\phi_{\rm per}-c)^3$ and commuting the operator $h$ in  the resulting terms $h^2\phi_{\rm per}$ and $\phi_{\rm per} h^2$ to the right and left, respectively, and estimating the resulting commutators, $[h, \phi]$ and $[\phi_{\rm per}, h]=-[h, \phi_{\rm per}]$, we arrive at the estimate $\|\nabla^3 h^{-3/2}\|\ls 1$ as claimed.  \eqref{nab-h-bnd} implies also that $ \| h^{-2+s/2}\nabla^{4-s}\|\ls 1$, for $j=0, \dots, 4$. As the result, we have \[\|\nabla^s f_{T}'(h_{{\rm per}, 0}-\mu) \nabla^{4-s}\|_{S^1}\ls \|g(h_{{\rm per}, 0})\|_{S^1},\]  
where $g(x):=-(x+c)^{s/2} f_{T}'(x-\mu) (x+c)^{2-s/2}\ge 0$.} 
   Hence, it suffices  to estimate   $\|g(h_{{\rm per}, 0})\|_{S^1}=\tr [g(h_{{\rm per}, 0})]$. The latter can be done the same way as the case for $p=1$ by summing eigenvalues of $h_{{\rm per}, 0}$ and the lemma is proved.
\end{proof}


\DETAILS{\section{Bounds on $V$} \label{app:bounds-on-V}
In this section, we prove bounds on $C_{j,\beta}$ and 
 $V$ with the exchange-correlation term $\ex$ included (see \eqref{eqn:V-def-lem:M0-term-ex}; also see \eqref{eqn:V-def-lem:M0-term} for the case without $\ex$).

Since $f'_{\rm FD} < 0$, the explicit form \eqref{eqn:V-def-lem:M0-term-ex} of $V$ implies that $V > 0$:

\begin{lemma} \label{lem:V-is-pos}
Let Assumptions \ref{A:kappa} - \ref{A:ex} hold, then $V > 0$.
\end{lemma}

\paragraph{\bf Lower bound.}

\begin{lemma} \label{lem:V1-lower-bound}
Let Assumptions \ref{A:kappa} - \ref{A:ex} hold, then
\begin{align} \label{V1-lower-bound}	\|V\|_{L^1_{\rm per}} \gs \beta e^{-\beta \eta(\Om)}
\end{align}
where $\eta(\Om)$ is given in \eqref{eqn:eta-Om-def}.
\end{lemma}
\begin{proof}
By Lemma \ref{lem:V-is-pos}, $V > 0$ and so $\|V\|_{L^1_{\rm per}} = \int_\Om V$, where $\Om$ is a fundamental domain of $\lat$ (see Section \ref{sec:crystals}). By definition \eqref{eqn:V-def-lem:M0-term-ex} of $V$, $\int_\Om V = -\beta \Tr_{L^2_{\rm per}}  f'_{FD}(\beta(h_{{\rm per}, \ex, 0}-\mu))$ where $h_{{\rm per}, \ex, 0}$ is given in \eqref{h-per-ex}. Recall that $\eta(\Om)$ is the smallest distance between $\mu = \mu_{\rm per}$ and the spectrum of $h_{{\rm per}, \ex, 0}$ (see \eqref{eqn:eta-Om-def}). Consequently,
\begin{align}
	\|V\|_{L^1_{\rm per}} =& - \Tr_{L^2_{\rm per}(\R^3)}  \ft'(h_{{\rm per}, 0}-\mu) \\
		\geq & -\beta \ft'(\eta(\Om) \\
		=& \beta \frac{e^{\beta \eta(\Om)}}{(1+e^{\beta \eta(\Om)})^2} . \label{eqn:beta-f-prime-mu}
\end{align}
Since $\delta \ll 1$, \ref{A:scaling} implies that $1 \ll \beta$. Combining this with \eqref{eqn:beta-f-prime-mu}, we arrive at \eqref{V1-lower-bound}.
\end{proof}

 We state two immediate corollaries of Lemmas \ref{lem:Cjbeta-exp-decay} and \ref{lem:V1-lower-bound}  relating $C_{j,\beta}$ for $j=1,...,4$ (see \eqref{eqn:def-C-beta} - \eqref{eqn:def-C-beta4}) and $V$.
\begin{corollary}  \label{cor:gap-consequence-0}
Let Assumptions \ref{A:kappa} - \ref{A:ex} hold. For $j=1,2,3, 4$ and $C_\beta$ at most polynomial in $\beta$,
\begin{align}
	C_{j,\beta}\|V\|_{L^1_{\rm per}}^{-5/6} \ls C_\beta e^{-\frac{1}{2}\eta \beta},
\end{align}
where $\eta$ is given in in the paragraph right above Assumption \ref{A:spec-gap}.
\end{corollary}

\begin{corollary} \label{cor:gap-consequence}
Let Assumptions \ref{A:kappa} - \ref{A:ex} hold. Let $s,t > 0$ and $e \leq 5/6$ be fixed, there exists $C_{s,t,e}$ such that for $\beta = C_{s,t,e} |\ln(\delta)|$,
\begin{align}
	\delta^{-s} C_{j,\beta}\|V\|_{L^1_{\rm per}}^{-e} \ls \delta^t
\end{align}
for $j=1,2,3$.
\end{corollary}

\paragraph{\bf Upper bound.}

\begin{lemma} \label{lem:Vp-uppper-bound}
Let Assumptions \ref{A:kappa} - \ref{A:ex} hold. Let $1 \leq p \leq \infty$. Then
\begin{align}
	\|V\|_{L^p_{\rm per}} \ls C_\beta e^{-\eta(\Om)\beta}
\end{align}
where $C_\beta$ is polynomial in $\beta$ and $\eta(\Om)$ is give in \eqref{eqn:eta-Om-def} (also see Assumption \ref{A:spec-gap}).
\end{lemma}
\begin{proof}
We do the case for $p=1$ and $p=\infty$, and conclude the lemma by interpolation. By definition of $V$ in \eqref{eqn:V-def-lem:M0-term-ex}, we see that
\begin{align}
	\|V\|_{L^1_{\rm per}} =& - 2\pi \beta \Tr_{L^2_{\rm per}}  f'_{FD}(\beta(h_{{\rm per}, \ex, 0}-\mu)) .
\end{align}
where $h_{{\rm per}, \ex, 0}$ is given in \eqref{h-per-ex}. By \ref{A:ex}, \ref{A:kappa}, and Theorem \ref{thm:ideal-cryst-exist}, the potential $\phi_{\rm per} + \ex(\rho_{\rm per})$ is infinitesimally bounded with respect to $-\Delta$. Thus, $h_{{\rm per}, \ex, 0}$ has only discrete spectrum on $L^2_{\rm per}(\R^3)$ and
\begin{align}
	\|V\|_{L^1_{\rm per}} =& \sum_{\lambda \in \sigma(h_{{\rm per}, \ex, 0})} \frac{\beta e^{\beta(\lambda-\mu)}}{(1+e^{\beta(\lambda-\mu)})^2} \\
	=& \sum_{\mu > \lambda \in \sigma(h_{{\rm per}, \ex, 0})}\frac{\beta e^{\beta(\lambda-\mu)}}{(1+e^{\beta(\lambda-\mu)})^2} \\
		&+ \sum_{\mu < \lambda \in \sigma(h_{{\rm per}, \ex, 0})}\frac{\beta}{(e^{-\frac{1}{2}\beta(\lambda-\mu)} +e^{\frac{1}{2}\beta(\lambda-\mu)})^2} \label{eqn:mu-split-tr-f-FD}
\end{align}
Factoring a factor of $e^{-\beta\eta(\Om)}$ from \eqref{eqn:mu-split-tr-f-FD}, we get
\begin{align}
		\|V\|_{L^1_{\rm per}} =& e^{-\beta\eta(\Om)}\bigg(\sum_{\mu > \lambda \in \sigma(h_{{\rm per}, \ex, 0})}\frac{\beta e^{\beta(\lambda-\mu +\eta(\Om))}}{(1+e^{\beta(\lambda-\mu)})^2} \notag \\
		&+  \sum_{\mu < \lambda \in \sigma(h_{{\rm per}, \ex, 0})}\frac{\beta}{(e^{-\frac{1}{2}\beta(\lambda-\mu+\eta(\Om))} +e^{\frac{1}{2}\beta(\lambda-\mu+\eta(\Om))})^2} \bigg). \label{eqn:V-L1-final-expand-mu-split}
\end{align}
Since the potential $\phi_{\rm per} + \ex(\rho_{\rm per})$ is infinitesimally bounded with respect to $-\Delta$, the eigenvalues of $h_{{\rm per}, 0}$ goes to infinite at a similar rate as those of $-\Delta$ (on $L^2_{\rm per}(\R^3)$). Thus, for $\lambda$ close to $\mu$, the terms in the bracket in \eqref{eqn:V-L1-final-expand-mu-split} are of order $O(C_\beta)$, where $C_\beta$ is at most polynomial in $\beta$. For $\lambda$ large, we can approximate each term in the sum \eqref{eqn:V-L1-final-expand-mu-split} by the eigenvalues of $-\Delta$ and conclude that the sum is summable with a $O(C_\beta)$ sum. This proves the lemma for $p=1$.

Let $W^{4,1}_{\rm per}$ be the usual Sobolev space associated to $L^1_{\rm per}$ involving up to 4 derivatives. For the case $p=\infty$, we use the Sobolev inequality 
\begin{align}
	\|f\|_\infty \ls \|f\|_{W^{4,1}_{\rm per}}
\end{align}
for $f \in W^{4,1}_{\rm per}$, where  Thus, it suffices for us to estimate $\|\nabla^4 V\|_{L^1_{\rm per}}$. To this end, we note that 
\begin{align}
	\nabla \den(A) = \den( [\nabla, A] )
\end{align}
for an operator $A$ on $L^2_{\rm per}(\R^3)$. Thus, it suffices that we estimate the trace 1-norm of $\beta\nabla^s f_{FD}'(\beta(h_{{\rm per}, 0}-\mu)) \nabla^{4-s}$ on $L^2_{\rm per}(\R^3)$ for $s=0,...,4$. Since the potential $\phi_{\rm per} + \ex(\rho_{\rm per})$ is infinitesimally bounded with respect to $-\Delta$, it suffices for us to estimate $\beta |h_{{\rm per}, 0}|^s f_{FD}'(\beta(h_{{\rm per}, 0}-\mu)) |h_{{\rm per}, 0}|^{4-s}$. This can be done the same way as the case for $p=1$ by summing eigenvalues of $h_{{\rm per}, 0}$ and the lemma is proved.
\end{proof}}

\section{Bound on $M_\del$} \label{sec:Mdel-bnd}
In an analogy to $\Lper\equiv L_{\rm per}^2(\R^3)$ given in \eqref{L2-per}, we let
\begin{align}\label{L2-per-delta}	
	L^2_{{\rm per}, \delta}\equiv  L^2_{{\rm per}, \delta}(\R^3)  := \{ f \in L^2_{\rm loc}(\R^3) : \text{$f$ is $\latd$-periodic } \}. 
\end{align} 
Moreover, we recall 
 $\n^{-1} := \n (-\Delta)^{-1}$ (see \eqref{nabla-inverse-def}). 
The main result of this appendix is the following  
\begin{proposition} \label{prop:Mdel-deco} Let  Assumption \ref{A:diel} 
  hold. Then the operator $M_\del$ can be decomposed as 
\begin{align} \label{Mdelta-deco}	M_\delta = M_\del' + M_\delta'' \, ,
\end{align}
with the operator $M_\del'$ and $M_\del''$ satisfying the estimates 
\begin{align} \label{M0-bnd}	&\| \bar \Proj  \nabla^{-1} M_\del' \Proj  \varphi \|_{L^2} \ls 
 \delta^{-1} \|V\|_{L^2_{\rm per}} \|\Proj \varphi\|_{L^2},\\
\label{Mdel''-bnd}	&\| \bar \Proj  \nabla^{-1} M_\delta'' \Proj  \nabla^{-1} \varphi \|_{L^2} \ls 
  \|\Proj  \varphi\|_{L^2}.
\end{align}
\end{proposition}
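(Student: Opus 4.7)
The plan is to introduce the splitting $M_\delta = M'_\delta + M''_\delta$ by first working at the microscopic scale ($\delta=1$): set $M' := V(\cdot)$, the multiplication operator by the periodic function $V$ of \eqref{V}, and $M'' := M - V$. This is natural because, by Proposition \ref{prop:M-BF-deco}, $M_{k=0}\one = V$, so $M'$ captures the leading action of $M$ on slowly varying inputs. Transporting via $M_\delta = \delta^{-2}\Udel M\Udel^*$ yields $M'_\delta = \delta^{-2} V^\delta$ with $V^\delta(x) := V(\delta^{-1}x)$, which is $\latd$-periodic, and $M''_\delta := M_\delta - M'_\delta$; both pieces are $\latd$-periodic.

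For \eqref{M0-bnd}, the rescaling identity $\bar\Proj\n^{-1} M'_\delta \Proj = \delta^{-1}\Udel\,\bar P_a\n^{-1}V P_a\,\Udel^*$ (with $a := \delta r = O(1)$ small) reduces the claim to
\[
\|\bar P_a\n^{-1}V P_a\psi\|_{L^2}\ls \|V\|_{L^2_{\rm per}}\|P_a\psi\|_{L^2}.
\]
This is a Parseval computation: $V$ has Fourier support on $\lat^*$ and $\widehat{P_a\psi}$ is supported in $B(a)\subset\Om^*$, so the translates $\{K+B(a)\}_{K\in\lat^*}$ are pairwise disjoint, yielding $\|VP_a\psi\|_{L^2}^2\ls \|V\|_{L^2_{\rm per}}^2\|P_a\psi\|_{L^2}^2$. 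Moreover the $K=0$ piece $\hat V_0 P_a\psi$ still has Fourier in $B(a)$ and is annihilated by $\bar P_a$, so $\bar P_a VP_a\psi$ has Fourier support at distance $\gs 1$ from the origin; applying $\n^{-1}$ is then a bounded gain and \eqref{M0-bnd} follows.

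For \eqref{Mdel''-bnd}, the analogous rescaling gives $\bar\Proj\n^{-1}M''_\delta\Proj\n^{-1}=\Udel\,\bar P_a\n^{-1}(M-V)P_a\n^{-1}\,\Udel^*$, reducing the claim to $\|\bar P_a\n^{-1}(M-V)P_a\n^{-1}\psi\|_{L^2}\ls\|P_a\psi\|_{L^2}$ at the microscopic scale. By Proposition \ref{prop:M-BF-deco}, for $g=P_ag$ the Bloch fiber of $(M-V)g$ at $k\in B(a)$ is $c_k^g\,(M_k-M_0)\one$ with $c_k^g = |\Om|^{-1}\hat g(k)$, and the resolvent identity \eqref{rper-res-form} gives
\[
(M_k-M_0)\one = -\den\!\left[\oint \rpero(z)\,r_{{\rm per},k}(z)[2(-i\n)\cdot k - k^2]\rpero(z)\right].
\]
Combining the uniform resolvent bound \eqref{Del-r-est''}, Lemma \ref{lem:rhoA-by-tracial-norm}, the Kato-Seiler-Simon inequality \eqref{eqn:KSS}, and the contour decay \eqref{oint-decay}, one obtains $\|(M_k-M_0)\one\|_{L^2_{\rm per}}\ls |k|$ uniformly in $k\in B(a)$. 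Taking $g=\n^{-1}P_a\psi=P_a\n^{-1}\psi$ supplies a factor $|k|$ that cancels one of the two $\n^{-1}$'s. To absorb the outer $\bar P_a\n^{-1}$, I expand $(M_k-M_0)\one\in L^2_{\rm per}$ in the plane waves $\{e^{iKx}\}_{K\in\lat^*}$: for $K\ne 0$ the $\n^{-1}$ fiber multiplier is the bounded $|K+k|^{-1}=O(1)$, while for $K=0$ the scalar $\Pi_0(M_k-M_0)\one = |\Om|^{-1}\langle\one,(M_k-M_0)\one\rangle$ is $O(|k|^2)$ by the evenness of $k\mapsto\langle\one,M_k\one\rangle$ (the same conjugation-symmetry argument that proved $b(k)$ even in Lemma \ref{lem:ell-b-rel}); this more than compensates the $|k|^{-1}$ from $\n^{-1}$ on the constant mode. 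A Plancherel summation along Bloch-Floquet concludes the bound.

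The main obstacle is the quantitative $L^2_{\rm per}$-norm control $\|(M_k-M_0)\one\|_{L^2_{\rm per}}\ls |k|$ uniformly in $z\in\Gamma$: one must do a careful H\"older decomposition of the integrand into $S^2_{\rm per}$-Schatten factors (paying the $|k|$ as a scalar and keeping the resolvents and $(1-\Delta)^{3/4+\epsilon}$ factors of Lemma \ref{lem:rhoA-by-tracial-norm} bounded via \eqref{Del-r-est''}), then integrate over $\Gamma$ using \eqref{oint-decay}. The evenness argument rescuing the $K=0$ mode in the second bound is the conceptually delicate point; without it, $\n^{-1}$ acting on the constant part of $(M-V)g$ would produce a spurious infrared singularity that would destroy \eqref{Mdel''-bnd}.
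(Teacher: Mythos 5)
Your route is essentially the paper's, modulo a cosmetic difference in the choice of $M'_\delta$ and one unnecessary step. The paper defines $M'_\delta := \int^\oplus d\hat k\, M_{\delta,0}$ (the constant-fiber operator built from the $k=0$ fiber of $M_\delta$), and Lemma \ref{lem:M0-on-P} shows this acts on $\Ran\Proj$ precisely as multiplication by $V_\delta$; your choice $M' := V(\cdot)$ therefore agrees with the paper's $M'_\delta$ exactly where the two bounds test it. Likewise, your fiber identity $((M-V)g)_k = c_k^g\,(M_k-M_0)\one$ for $g\in\Ran P_a$ (using $V = M_0\one$) reproduces the paper's central quantity $M''_{\delta,k}\one$, so you are estimating the same object as \eqref{eqn:delta-3-2-M-prime-explicit}--\eqref{eqn:d-12-d12r}. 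The $K=0$ evenness argument, however, is not needed: the $K=0$ contribution has Fourier support at $\xi=k$ with $|k|\le a$ and is annihilated outright by the outer $\bar P_a$ before $\n^{-1}$ can produce any infrared singularity, so invoking $\Pi_0(M_k-M_0)\one = O(|k|^2)$ is redundant. (The paper avoids all mode-by-mode accounting by bounding $\|\bar\Proj\n^{-1}\|\ls r^{-1}$ and using $r^{-1}\delta^{-1}=a^{-1}=O(1)$.) The step you flag as the main obstacle, $\|(M_k-M_0)\one\|_{L^2_{\rm per}}\ls|k|$ uniformly on $\Gamma$, is carried out in the paper exactly along the lines you sketch: the resolvent identity produces the factor $A_k=-2(-i\n)\cdot\delta k+\delta^2|k|^2$, which is then controlled by Lemma \ref{lem:rhoA-by-tracial-norm}(1), the Kato--Seiler--Simon bound \eqref{eqn:KSS}, the resolvent regularity \eqref{Del-r-est'} with $\al=0,1/2$, and the contour decay \eqref{oint-decay}; this needs to be written out but contains no hidden difficulty.
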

\begin{proof}[Proof of Proposition \ref{prop:Mdel-deco}]

Proposition \ref{prop:M-BF-deco} and the rescaling relation \eqref{Mdelta-M-relation'} imply the explicit form for the $k$-fibers of $M_\delta$:

\begin{lemma} \label{lem:for:pro:M-low-E}
 Then $M_\delta$ has a Bloch-Floquet decomposition \eqref{BF-deco-opr} with $\lat = \latde$, whose $k-$fiber $M_{\delta, k}$ acting on $\LperB$ is given by 
\begin{align} \label{exact-form-Mdeltak}
	M_{\delta,k} f =&  -  \delta \den\left[ \oint r^\del_{\rm per, 0}(z)f \rperkDelta(z)\right]
\end{align}
where 
 $f \in \LperB$ and, on $\LperB$,
\begin{align}
	&\rperkDelta = (z- \hperkDelta)^{-1}, \label{eqn:rperkDelta-adhoc} 
\quad	\hperkDelta = \delta^2 (-i\nabla-k)^2 + \delta \phi_{\rm per}^\delta. 
\end{align}
\end{lemma}

  We decompose  the operator $M_{\delta, k}$ acting on $\LperB$ as 
  \begin{align}
	M_{\delta, k} &=: M_{\delta, 0} + M_{\delta, k}'' \, , \label{eqn:Mprime-delta-k}
\end{align}
where $M_{\delta, 0} = M_{\delta, k=0}$ and $M_{\delta, k}'$ is defined by the expression \eqref{eqn:Mprime-delta-k}. We define operators $M_\del'$ and $M_\delta''$ on $L^2(\R^2)$ via
\begin{align}
	M_\del' &:=   \int_{\Omde^*}^\oplus d\hat k \, M_{\delta, 0} \varphi , \label{eqn:M0-def}\\
	M_\delta'' &:=  \int_{\Omde^*}^\oplus d\hat k \,  M''_{\delta, k} \varphi  \label{M''-def}\, ,
\end{align}
where $\Omde^*$ is a fundamental cell of the reciprocal lattice to  $\latd$ and  $d\hat k = |\Omde^*|^{-1} dk$. By Lemma \ref{lem:for:pro:M-low-E} and definition \eqref{eqn:Mprime-delta-k}, the latter operators satisfy \eqref{Mdelta-deco}.

\begin{lemma} \label{lem:M0-on-P} 
$M_\del'$ (see \eqref{eqn:M0-def}) restricted to the range of $\Proj $ is a multiplication operator given by 
\begin{align}
\label{M0-Pr}	(M_\del' \Proj  \varphi)(x)  = V_\delta(x)  (\Proj \varphi)(x),
\end{align}
where
\begin{align}
	V_\delta(x) = -\delta^{-2}  \den\left[f_{T}'(h_{{\rm per}, 0}-\mu) \right](\delta^{-1}x) \label{Vdelta-def}\, ,
\end{align} 
with $h_{{\rm per}, 0}$ given in \eqref{eqn:hperk-adhoc} (with $k = 0$). 
\end{lemma}
\begin{proof}
By \eqref{exact-form-Mdeltak} and definition of $M_\del'$ in \eqref{eqn:M0-def}, we see that
\begin{align}
	M_\del' & P_r \varphi 
	= - \int^\oplus_{\Omde^*} d\hat k \, \delta \den\left[ \oint r^\del_{\rm per, 0}(z)(P_r \varphi)_k r^\del_{\rm per, 0}(z)\right].
\end{align}
By Corollary \ref{cor:Pf-computation} and the Cauchy integral formula, 
\begin{align}
	M_\del' \Proj  \varphi =& - \int^\oplus_{\Omde^*} d\hat k \, \delta \den\left[ \oint (r^\del_{\rm per, 0}(z))^2 \right] |\Omde|^{-1} \hat \varphi(k) \\
		=&  - \int^\oplus_{\Omde^*} d\hat k \, \den[f_{T}'(h^\delta_{{\rm per}, 0} - \mu)] |\Omde|^{-1} \hat \varphi(k) . \label{eqn:after-app-of-cauchy-to-M-0}
\end{align}
where $r^\del_{\rm per, 0}(z)$ and $h^\delta_{{\rm per}, 0}$ are given in \eqref{eqn:rperkDelta-adhoc}. 
Applying the inverse Bloch-Floquet transform \eqref{eqn:inverse-bloch}, \eqref{eqn:after-app-of-cauchy-to-M-0} implies
\begin{align}
	M_\del' &\Proj  \varphi 
		= -\delta   \den[f_{T}'(h^\delta_{{\rm per}, 0}-\mu) ]  \int_{\Omde^*} d\hat k \, e^{-ikx} |\Omde|^{-1}\hat \varphi(k). \label{eqn:eqn:after-app-of-cauchy-to-M-1}
\end{align}
Since $d\hat k$ is normalized by the volume $|\Omde^*|$ 
 (which is independent of the choice of the cell), \eqref{eqn:eqn:after-app-of-cauchy-to-M-1} shows
\begin{align}
	 \label{eqn:M0Pvarphi-bloch-form}	M_\del' \Proj  \varphi	=& - \delta \den\left[f_{T}'(h^\delta_{{\rm per}, 0}-\mu) \right]  \Proj \varphi .
\end{align}
By Lemma \ref{lem:U-rho-UAUstar} and recalling the definition of $\Udel$ from \eqref{rescU}, we see that
\begin{align}
	 \delta \den\left[ f_{T}'(h^\delta_{{\rm per}, 0}-\mu) \right] &= \delta \den\left[\Udel f_{T}'(h_{{\rm per}, 0}-\mu) \Udel^* \right] \\
	&= \delta^{-2} \den\left[f_{T}'(h_{{\rm per}, 0}-\mu) \right](\delta^{-1}x) \, ,
\end{align}
where $h_{{\rm per}, 0} = h_{{\rm per}, 0}^{\delta = 1}$, which together with \eqref{eqn:M0Pvarphi-bloch-form} gives \eqref{M0-Pr}-\eqref{Vdelta-def}.
\end{proof}
 
 \begin{proof}[Proof of \eqref{M0-bnd}]
Let $V_\delta$ be given in \eqref{Vdelta-def}.  Since the Bloch-Floquet decomposition is unitary,  we see, by Lemma \ref{lem:M0-on-P} and Corollary \ref{cor:Pf-computation}, that
\begin{align}
	\|M_\del' \Proj  \varphi\|_{L^2}^2 = & \|V_\delta \Proj  \varphi\|_{L^2}^2 =  \int_{\Om_\delta^*} d\hat k   \| V_\delta |\hat \varphi(k)| |\Om_\delta|^{-1}  \|_{\LperB}^2, \label{eqn:lem:M0-term-eqn}
\end{align}
where $\LperB$ is given in \eqref{L2-per-delta}.
Using the fact that $d\hat k = |\Omde^*|^{-1} dk$ and $|\Omde| = \delta^3 |\Om|$, \eqref{eqn:lem:M0-term-eqn} implies
\begin{align}
	\| M_\del' \Proj \|_{L^2}^2 = & \delta^{-3} |\Om|\| V_\delta \|_{\LperB}^2 \|\Proj \varphi\|_{L^2}^2.  \label{eqn:PM0Pvarphi-pre}
\end{align}
By a change of variable, we see that
	$\|V_\delta \|_{\LperB} = \delta^{-1/2}\|V \|_{L^2_{\rm per}},$ 
where $V$ is given by \eqref{V}. Combing with \eqref{eqn:PM0Pvarphi-pre}, the fact $\bar \Proj  (-i\nabla)^{-1}  \ls r^{-1}$ (where $\n^{-1}$ is given in \eqref{nabla-inverse-def}) and $r^{-1}= a^{-1}\delta \ls \delta$ (see \eqref{a-m}) yields Eq. \eqref{M0-bnd}. \end{proof}

\begin{proof}[Proof of \eqref{Mdel''-bnd}]
 Let $M_\delta''$ be given by \eqref{M''-def} 
  and 
 $k^{-1} := k/|k|^2$. Let $\varphi \in L^2(\R^3)$. 
 By Corollary \ref{cor:Pf-computation}, we have 
\begin{align}
(\Proj  \nabla^{-1} \varphi)_k=	k^{-1} \hat \varphi(k) |\Omde|^{-1} \chi_{\Br}(k). \label{eqn:p-inver-Proj-varphi}
\end{align}
This gives $M_\delta'' \Proj  \nabla^{-1} \varphi= |\Omde|^{-1}\int^\oplus_{\Br} d\hat k M''_{\delta, k} k^{-1} \hat \varphi(k)$. Since the Bloch-Floquet decomposition is unitary, 
 we see, using \eqref{eqn:p-inver-Proj-varphi}, that 
\begin{align}
		 \| M_\delta'' \Proj  \nabla^{-1} &\varphi \|_{L^2}^2 \notag\\
		&= |\Om_\delta|^{-2}\int_{B_r} d\hat k \, \|M''_{\delta, k} 1\|_{\LperB}^2 |k|^{-2} |\hat \varphi(k)|^2. \label{eqn:Mprime-two-scale-1}
\end{align}
Since $d \hat k = |\Omd^*|^{-1} dk = |\Omd| dk$ and $|\Om_\delta| = \delta^3|\Om|$,  \eqref{eqn:Mprime-two-scale-1} is bounded as
\begin{align}
	\| M_\delta'' \Proj  \nabla^{-1} &\varphi \|_{L^2}^2 \notag\\
		& \ls  \delta^{-3} \sup_{k \in B_r} \left( \|M''_{\delta, k} 1\|_{\LperB}^2 |k|^{-2} \right) \|\Proj  \varphi\|_{L^2}^2,  \label{eqn:Mprime-two-scale} 
\end{align}
where $1 \in \LperB$ is the constant function $1$ and $\LperB$ is given in \eqref{L2-per-delta}.

By \eqref{exact-form-Mdeltak} and \eqref{eqn:Mprime-delta-k}, we have, for $M''_{\delta, k}$  given in \eqref{eqn:Mprime-delta-k},  that 
\begin{align}
	M''_{\delta, k} \varphi =& -  \delta \den \left[ \oint r_{{\rm per}, 0}(z)  \varphi  (\rperkDelta(z)-r^\delta_{{\rm per}, 0}(z)) \right].
\end{align}
Since $\rperkDelta(z)-r_{{\rm per}, 0}(z)=r^\delta_{{\rm per}, 0}(z)A_k \rperkDelta(z)$, where 
$A_k:=-2 (-i\n) \delta k+\delta^2 |k|^2$, this gives 
\begin{align} 
	M''_{\delta, k} \varphi =& - \delta  \den \oint \left[\rperDelta(z)  \varphi \rperDelta(z) A_k \rperkDelta(z) \right]. \label{eqn:Mprimek-form}
\end{align}
By the rescaling relation \eqref{Mdelta-M-relation'} and \eqref{eqn:Mprimek-form}, we see that
\begin{align}
	 \|M''_{\delta, k}& 1\|_{\LperB} = \|U_\del^* M''_{\delta, k}U_\del \cdot U_\del^*  1\|_{L^2_{\rm per}} 
		=  \delta^{3/2-2}\|M''_{1, k} 1\|_{L^2_{\rm per}}\notag \\
		& =  \delta^{-1/2} \big\| \den \big[ \oint  r_{\rm per}^2(z) A_k\rperDeltak(z)  \big] \big\|_{L^2_{\rm per}}. \label{eqn:delta-3-2-M-prime-explicit}
\end{align}
By \eqref{eqn:delta-3-2-M-prime-explicit}, notation $A_k:=-2 (-i\n) \delta k+\delta^2 |k|^2$ and inequality \eqref{Del-r-est'}, with $\al=0,1/2$, we obtain, for $|k| \leq r$,
\begin{align}
	\|M''_{\delta, k}1\|_{L^2_{\rm per}} |k|^{-1} \ls & \delta^{-1/2+1}  + \delta^{-1/2+2} r  \notag\\
		=& \delta^{1/2} + \delta^{3/2} r \, . \label{eqn:d-12-d12r}
\end{align}
By \eqref{a-m} and \eqref{eqn:Mprime-two-scale}, equation \eqref{eqn:d-12-d12r} shows that
\begin{align}
	\| M_\delta'' \Proj  \nabla^{-1} \varphi \|_{L^2} \ls \delta^{-1} .
\end{align}
This bound, the observation that  $\|\bar \Proj \nabla^{-1}\|_\infty \ls r^{-1}$ (see \eqref{eqn:choice-of-P}) and the definition $r =a/\delta \gs 1/\del$ imply Eq. \eqref{Mdel''-bnd}. 
\end{proof}
This completes the proof of Proposition \ref{prop:Mdel-deco}.
\end{proof}
We use Proposition \ref{prop:Mdel-deco} to prove the following
 \begin{proposition} \label{prop:Mdel-bnd}  Let  Assumption \ref{A:diel} 
  hold and let $\beta e^{-\etal\beta}\ls 1$ (which is weaker than  Assumption \ref{A:scaling}).  Then the operator $M_\del$ is bounded as
\begin{align} \label{Mdel-bnd}	\|\n^{-1}\bar \Proj & M_\delta f\|_{\dot{H}^1} \ls  \| f\|_{\del}. 
\end{align}
\end{proposition}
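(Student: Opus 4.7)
The plan is to exploit the decomposition $M_\delta=M_\delta'+M_\delta''$ from Proposition \ref{prop:Mdel-deco} together with a Lyapunov--Schmidt-type splitting $f=P_r f+\bar P_r f$. Since $\nabla^{-1}$, $P_r$, and $\bar P_r$ are all Fourier multipliers they commute, and on scalars $\|\nabla^{-1} g\|_{\dot H^1}=\|g\|_{L^2}$; in particular the desired estimate reduces to separately bounding the pieces $\|\bar P_r \nabla^{-1} M_\delta P_r f\|_{L^2}$ and $\|\bar P_r \nabla^{-1} M_\delta \bar P_r f\|_{L^2}$.

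For the low-momentum piece I further split $M_\delta=M_\delta'+M_\delta''$. The $M_\delta'$ contribution is controlled directly by \eqref{M0-bnd}: combining Lemma \ref{lem:Vp-uppper-bound} ($\|V\|_{L^2_{\rm per}}\lesssim c_T$) with Lemma \ref{lem:V1-lower-bound} ($m\gtrsim c_T$) and the standing hypothesis $c_T\lesssim 1$, one gets $\delta^{-1}\|V\|_{L^2_{\rm per}}\lesssim \delta^{-1} m^{1/2}=\zeta^{-1}$, so that $\zeta^{-1}\|P_r f\|_{L^2}\le \|f\|_\delta$ by the definition \eqref{s-norm-def} of $\|\cdot\|_\delta$. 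The $M_\delta''$ contribution is trickier because \eqref{Mdel''-bnd} requires a factor of $\nabla^{-1}$ on \emph{both} sides of $M_\delta''$. To supply the missing factor I would use the identity $P_r f = -\nabla^{-1}\!\cdot(\nabla P_r f)$ (valid modulo the zero Fourier mode, which is harmless), and then apply \eqref{Mdel''-bnd} componentwise with $\varphi$ equal to a component of $\nabla P_r f$; this produces a bound by $\|\nabla P_r f\|_{L^2}\le \|\nabla f\|_{L^2}\le \|f\|_\delta$.

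For the high-momentum piece $\bar P_r\nabla^{-1} M_\delta \bar P_r f$ I would use the crude operator-norm bound $\|M_\delta\|\lesssim \delta^{-2}$ from Proposition \ref{pro:Mdel-prop} together with $\|\bar P_r \nabla^{-1}\|\le r^{-1}$ and the frequency-domain Poincar\'e inequality $\|\bar P_r f\|_{L^2}\le r^{-1}\|\nabla f\|_{L^2}$. Multiplying these gives a bound $\lesssim (r\delta)^{-2}\|\nabla f\|_{L^2}=a^{-2}\|\nabla f\|_{L^2}\lesssim \|f\|_\delta$, since $a=\delta r=O(1)$ by the choice \eqref{a-m}; the losses from the crude norm bound on $M_\delta$ are fully absorbed by the two factors of $r^{-1}$.

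The main obstacle is the $M_\delta''$ contribution on the low-momentum piece, where \eqref{Mdel''-bnd} is set up in the ``wrong'' form (with two $\nabla^{-1}$'s) for our target. The manoeuvre of writing $P_r f=-\nabla^{-1}\!\cdot(\nabla P_r f)$ to trade an extra $\nabla^{-1}$ for a gradient on $f$ demands careful bookkeeping of the scalar/vector valences of $\nabla^{-1}$ and of the zero Fourier mode; once this identity is in place, the remainder of the proof is a direct combination of Proposition \ref{prop:Mdel-deco} with the $V$-estimates of Appendix \ref{app:bounds-on-V} and the simple Poincar\'e-type bounds associated with $\bar P_r$.
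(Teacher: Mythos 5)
Your proof follows the paper's approach: it uses the decomposition $M_\delta = M_\delta' + M_\delta''$ from Proposition \ref{prop:Mdel-deco}, controls $M_\delta'$ via \eqref{M0-bnd} together with the $V$-estimates of Appendix \ref{app:bounds-on-V}, and controls $M_\delta''$ via \eqref{Mdel''-bnd} by trading the surplus $\n^{-1}$ for a gradient on $f$ through the identity $\Proj f = -\n^{-1}\!\cdot(\n\Proj f)$ applied componentwise — a step the paper performs implicitly and you make explicit. Two remarks. First, the paper's proof only treats $M_\delta\Proj f$, which is all that the single application (\eqref{KinvMP-est'} in Lemma \ref{lem:Kinv-est-k}) requires; your extra treatment of the $\bar\Proj f$ piece via $\|M_\delta\|\ls\delta^{-2}$ and the two Poincar\'e factors from $\bar\Proj\n^{-1}$ and $\bar\Proj$ is a harmless strengthening. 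Second, a notational caution: from $\|\n^{-1}g\|_{\dot H^1}=\|g\|_{L^2}$ the target $\|\n^{-1}\bar\Proj M_\delta f\|_{\dot H^1}$ equals $\|\bar\Proj M_\delta f\|_{L^2}$, whereas what you (and the paper, and indeed what \eqref{M0-bnd}--\eqref{Mdel''-bnd} deliver and what \eqref{KinvMP-est'} consumes) actually bound is $\|\bar\Proj\n^{-1}M_\delta\Proj f\|_{L^2}$, i.e.\ with the extra $\n^{-1}$ inside. So the reduction sentence in your opening paragraph is internally inconsistent, but this mirrors the wording of the proposition itself; the estimate you establish is the one the paper uses.
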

\begin{proof}[Proof of Proposition \ref{prop:Mdel-bnd}]
 Decomposing $M_\delta$ according to   \eqref{Mdelta-deco} and using bounds  Eqs \eqref{M0-bnd} and \eqref{Mdel''-bnd} of Proposition \ref{prop:Mdel-deco}, we see that 
\begin{align} \notag 
	\|\n^{-1}\bar \Proj  M_\delta \Proj & \varphi\|_{L^2} \leq  \|\n^{-1}\bar \Proj  M_\del' \Proj  \varphi\|_{L^2} + \|\n^{-1}\bar \Proj  M_\delta'' \Proj  \varphi\|_{L^2}\\
	&\ls  \delta^{-1} \|V \|_{L^2_{\rm per}} \| f\|_{L^2} + \|\nabla f\|_{L^2}, \label{eqn:estimate-M-decompose-M-0-M-prime-lem-applied}
\end{align}
where $V$ is given in \eqref{V}. Using $\|V\|_{L^2_{\rm per}} \leq \|V\|_{L^\infty}^{1/2} \|V\|_{L_{\rm per}^1}^{1/2}$ and the definition $m:=\|V\|_{L^1_{\rm per}}$ in \eqref{eqn:estimate-M-decompose-M-0-M-prime-lem-applied} gives
\begin{align}
	\|\n^{-1}\bar \Proj & M_\delta f\|_{\dot{H}^1} 
	\ls  \|V \|_{L^\infty}^{1/2} (\delta^{-1} m^{1/2}\| f\|_{L^2}) +  \|\nabla f\|_{L^2} \, .
\end{align}
  Lemma \ref{lem:Vp-uppper-bound} and definition \eqref{s-norm-def} imply \eqref{Mdel-bnd}. 
\end{proof}

\section{Refined nonlinear estimates} \label{sec:nonlin-est-improv}

Let $N_\delta$ be given implicitly by \eqref{eqn:macro-eqn-varphi} and recall the definition of the $B_{s,\delta}$ norm from \eqref{s-norm-def}. Let $\dot{H}^{0}\equiv L^2$.  
In this section we prove estimates on $N_\delta$.


\begin{proposition} \label{prop:nonlinear}
 Let Assumption \ref{A:diel} hold. 
 If $\|\varphi_1\|_{B_{s,\delta}}, \|\varphi_2\|_{B_{s,\delta}} = o(\delta^{-1/2})$, then we have the estimate
\begin{align}
 \label{Ndel-est}	\|N_\delta & (\varphi_1) - N_\delta(\varphi_2)\|_{L^2} \notag \\
		& \ls  e^{-\beta}m^{ -1/3} \delta^{-1/2}(\|\varphi_1\|_{B_{s,\delta}} + \|\varphi_2\|_{B_{s, \delta}})\|\varphi_1 - \varphi_2\|_{\dot{H}^1} .
\end{align}
\end{proposition}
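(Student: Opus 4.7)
The plan is to mirror the proof of Proposition~\ref{prop:nonlinear-del-rough} but to insert at the very beginning of the microscopic chain a contour-deformation argument that exploits Assumption~\ref{A:diel} and the normalization \eqref{eta-R3}, thereby extracting an extra exponential factor $e^{-\beta}$. As in the rough version, the macroscopic statement \eqref{Ndel-est} reduces via the scaling identity \eqref{nonlinear-explicit-rescale} and the relations $\|\psi\|_{\dot H^k}=\delta^{k-1/2}\|\vphi\|_{\dot H^k}$ to a refined version of Proposition~\ref{prop:nonlinear-rough} at $\delta=1$, namely
\[
\|N(\psi_1)-N(\psi_2)\|_{L^2}\ls e^{-\beta}\cdot\big(\text{rough RHS of }\eqref{N-est-rough}\big),
\]
so I would concentrate all the work at the microscopic scale.

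The core microscopic step is to upgrade Proposition~\ref{prop:Nk-est}. Using $L^2$-$L^2$ duality and the telescoping identity \eqref{N-telescoping}, this amounts to showing that for each $k\ge 2$,
\[
\Big|\oint f_T(z-\mu)\,\Tr\big(f\,r(z)(\psi r(z))^k\big)\,dz\Big|\ls e^{-\beta}\cdot d(z)\text{-free Schatten bound},
\]
where the Schatten-H\"older bound \eqref{L2-est2'} already provides the factor $\|f\|_{L^2}\|\n\psi\|_{L^2}^{4/3}\|\psi\|_{L^2}^{2/3}\|\psi\|_{\dot H^j}^{k-2}$. Thus the whole task reduces to proving $\int_\Gamma |f_T(z-\mu)|\,d(z)^{-4/3}|dz|\ls e^{-\beta}$. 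I would deform $\Gamma$, using Assumption~\ref{A:diel} and \eqref{eta-R3}, into two disjoint closed contours $\Gamma_\pm$ encircling $\sigma_\pm:=\sigma(h_{\rm per})\cap\{\pm(\lambda-\mu)>0\}$, arranged so that $\pm(\Re z-\mu)\ge \eta/2=1/2$ on $\Gamma_\pm$; this keeps $d(z)$ bounded below by $1/4$ uniformly, so the resolvent estimate \eqref{Del-r-est'} holds uniformly. On $\Gamma_+$ one reads off $|f_T(z-\mu)|\ls e^{-\beta(\Re z-\mu)}\le e^{-\beta/2}$.

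On $\Gamma_-$ the naive bound $|f_T|\le 1$ must be replaced by the key symmetry identity
\[
f_T(z-\mu)\;=\;1-f_T(\mu-z),
\]
under which the $f_T(\mu-z)$-piece is handled exactly as on $\Gamma_+$ and yields $e^{-\beta/2}$, while the residual piece $\oint_{\Gamma_-}\Tr(f r(z)(\psi r(z))^k)\,dz$ is a pure Cauchy integral. By the calculus of Riesz projections, this integral equals $2\pi i$ times $\frac{1}{k!}\partial_s^k\Tr[fP_-(h_{\rm per}-s\psi)]|_{s=0}$, i.e.\ the $k$-th order term in the $T=0$ Taylor expansion of the density of the below-$\mu$ spectral projection.

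The main obstacle is controlling this last, zero-temperature, Cauchy piece: at $T=0$ it need not be exponentially small. I expect the cancellation to come not from this term being individually small, but from a combined structure once one restores the full nonlinearity $N(\psi_1)-N(\psi_2)$ and the density map. Two strategies are available: (i) use idempotency of $P_-(h_{\rm per}-s\psi)$ and the Combes--Thomas/off-diagonal Schatten bounds on $P_\pm r(z)\psi\cdots r(z)P_\mp$ to show that, on the diagonal, all Taylor coefficients of order $\ge 2$ in $\psi$ of $\den[P_-(h_{\rm per}-s\psi)]$ already appear among the $\Gamma_+$-type pieces and therefore inherit the $e^{-\beta}$ decay once the telescoping \eqref{N-telescoping} is combined with the decomposition $f_T=\chi_{<0}+g_T$, $|g_T|\ls e^{-\beta|\cdot|}$; or (ii) repeat the Schatten-H\"older estimate \eqref{fRphiRphiR}--\eqref{L2-est2'} directly on $\oint_{\Gamma_-} r(z)(\psi r(z))^k\,dz$ after extracting one factor of $f_T'$ via integration by parts $\partial_z r(z)=-r(z)^2$ (using that $f_T'$ is uniformly $\ls \beta e^{-\beta|\Re z-\mu|}\le \beta e^{-\beta/2}$ on both arcs, thereby converting the contour bound into one with genuine exponential decay on $\Gamma_-$ as well). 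Either way, once this step is in hand the remaining assembly proceeds exactly as in Sections~\ref{sec:nonlin-est-rough} and the proof of Proposition~\ref{prop:nonlinear-del-rough}: the telescoped sum $N_k(\psi_1)-N_k(\psi_2)$ is estimated term by term and the microscopic bound is rescaled to \eqref{Ndel-est}.
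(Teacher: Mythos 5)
Your reduction from the macroscopic to the microscopic statement via \eqref{nonlinear-explicit-rescale} is correct, and you are right that the naive bound $\int_\Gamma|f_T(z-\mu)|\,d(z)^{-4/3}|dz|\ls e^{-\beta}$ fails because $f_T\approx 1$ for $\Re z<\mu$, so some cancellation of the ``zero-temperature'' Cauchy piece is needed. But the key structural idea of the paper's proof is missing from your proposal, and neither of your two strategies for the obstruction is actually carried out. The paper does not try to put $e^{-\beta}$ in front of the whole microscopic bound: indeed Proposition~\ref{prop:N2} (and its consequence Proposition~\ref{prop:nonlinear-micro}) has \emph{two} terms, a gradient term $\|\n\psi\|_{L^2}^k$ \emph{without} exponential decay, and a separate $e^{-\beta}\|\n\psi\|^{4/3}\|\psi\|_{L^2}^{2/3}$ term for $k=2$ only. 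This split is produced not by deforming the contour first, but by inserting the partition of unity $P_1+P_2=\one$ (with $P_1=\chi_{h_{\rm per}<\mu}$, $P_2=\chi_{h_{\rm per}\ge\mu}$) after each resolvent \emph{before} integrating over $\Gamma$, yielding $N_2=\sum_{a,b,c}N_2^{abc}$ with $N_2^{abc}=\oint R_a\psi R_b\psi R_c$ where $R_i=r(z)P_i$.

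The case analysis on $(a,b,c)$ is where the mechanism you are groping for actually lives. For mixed blocks (e.g.\ $(121)$) one uses $P_1P_2=0$ to turn each $P_1\psi P_2$ into a commutator $[P_1,\psi]P_2$, and Lemma~\ref{lem:commutator-P} (Cauchy formula for $P_1$ plus Kato--Seiler--Simon) converts each commutator into a factor $\|\n\psi\|_{L^2}$; these blocks are bounded by $\|\n\psi\|_{L^2}^2$ with \emph{no} exponential gain, and that is fine. For the homogeneous blocks $(111)$ and $(222)$, the pure Cauchy piece is \emph{exactly zero}: Lemma~\ref{lem:int-reduct} shows via the Bloch eigenbasis that $\frac{1}{2\pi i}\int_{\Gamma_q}\Tr[fR_qgR_qhR_q]\,dz=0$, because all three poles of $(z-\lambda_1)^{-1}(z-\lambda_2)^{-1}(z-\lambda_3)^{-1}$ lie inside $\Gamma_q$ and their residues sum to zero (the integrand decays as $|z|^{-3}$). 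This leaves only $\int_{\Gamma_q}(f_T(z-\mu)-1)\,\Tr[\cdots]\,dz$, and it is for \emph{this} integrand, on a contour kept at distance $\ge 1/2$ from $\mu$, that $|f_T(z-\mu)-1|\ls e^{-\beta}$ gives the exponential factor. Your strategy (i) hints at projections but doesn't recognize that the cancellation is block-by-block in this combinatorial sense, nor that the mixed blocks deliberately do not carry $e^{-\beta}$; your strategy (ii) (integration by parts $\partial_zr=-r^2$) would produce extra resolvents that worsen the Schatten indices and is not used. Without the $P_1,P_2$ block decomposition, the argument does not close.
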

\DETAILS{\begin{proposition} \label{prop:nonlinear-rough}
 Let Assumption \ref{A:diel} hold. 
Then  we have the estimates  
\begin{align}
 \label{nonlinear-estm-rough}	 
  &\|N_\delta(\varphi_1) - N_\delta(\varphi_2)\|_{L^2}  \notag \\
& \ls  m^{-\frac13} \delta^{-1/2} (\|\varphi_1\|_{{ \delta}} + \|\varphi_2\|_{{ \delta}}) 
	\|\varphi_1 - \varphi_2\|_{{ \delta}}, 
\end{align} 
where, recall,  $m$ is defined in 
 \eqref{m} and, to simplify the notation, we let $m\ls 1$. 
\end{proposition}}

 We derive Proposition \ref{prop:nonlinear}  from  its version  with $\delta = 1$ by rescaling. 
For  $\delta = 1$, we have the following result.

  \begin{proposition} \label{prop:nonlinear-micro}
 Let Assumption \ref{A:diel} hold 
  $\| \psi\|_{L^2} = o(1)$. 
 Then  $N:=N_{\delta=1}$ satisfies the estimate 
\begin{align}
	 \|N(\psi_1) - N(\psi_2)&\|_{L^2}\ls  \sum_{j=1}^2\bigg[ (\|\psi_j\|_{\dot{H}^1})\|\psi_1 - \psi_2\|_{\dot{H}^1} \notag \\&+ 
	 e^{-\beta} \big(\|\psi_j\|_{\dot{H}^1}^{1/3} \|\psi_j\|_{L^2}^{2/3} \|\psi_1 - \psi_2\|_{\dot{H}^{1}} \notag \\&  + \|\psi_j\|_{\dot{H}^{1}} \|\psi_1 - \psi_2\|_{\dot{H}^{1}}^{1/3}\|\psi_1 - \psi_2\|_{L^2}^{2/3}\big)\bigg]. \label{N-est}\end{align}
\end{proposition}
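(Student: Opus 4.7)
The plan is to retrace the proof of Proposition \ref{prop:nonlinear-rough} but carry out two complementary Schatten--H\"older allocations and sharpen the Cauchy-integral estimate, producing three qualitatively distinct contributions matching the three summands in \eqref{N-est}. As in that earlier proof, I would start from the absolutely convergent series \eqref{N-series}, write
\[N(\psi_1) - N(\psi_2) = \sum_{k\geq 2}\den[N_k(\psi_1) - N_k(\psi_2)],\]
and telescope each $N_k(\psi_1) - N_k(\psi_2)$ via \eqref{N-telescoping} into a sum of $k$ terms of the form $\oint r_{\rm per}(z)\prod_{j=0}^{k-1}(-\psi_{i_j})r_{\rm per}(z)$ with exactly one index $j^\star$ at which $\psi_{i_{j^\star}} = \chi := \psi_1 - \psi_2$ and the remaining $\psi_{i_j} \in \{\psi_1, \psi_2\}$. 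Passing to $L^2$--$L^2$ duality as in \eqref{NL2-dual'} then reduces the proof to bounding $|\oint\Tr(fr(\psi_{i_0}r)\cdots(\psi_{i_{k-1}}r))|$ for arbitrary $f$ with $\|f\|_{L^2}=1$.

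\textbf{First (non-exponential) contribution.} To produce the $\|\psi_j\|_{\dot{H}^1}\|\chi\|_{\dot{H}^1}$ bound I would apply Schatten--H\"older with $1 = \tfrac12+\tfrac14+\tfrac14$, allocating $\|fr\|_{S^2}\|\psi_j r\|_{S^4}\|\chi r\|_{S^4}$ for $k=2$ (times $\|\psi r\|^{k-2}$ when $k\ge 3$). Factoring a resolvent regulariser $(1-\Delta)^\alpha r$ with $\alpha > 3/8$ out of each $\psi r$-factor and invoking Kato--Seiler--Simon \eqref{eqn:KSS} in $S^4$ reduces the $S^4$-side to $\|\psi\|_{L^4}$, which is controlled via $\|\psi\|_{L^4}\ls\|\psi\|_{L^6}\ls\|\nabla\psi\|_{L^2}$ (the last step being the 3D Sobolev embedding \eqref{eqn:hardy-ineq}). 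The contour-distance bound $d(z)\ge 1/4$ and the crude $|f_T(z-\mu)|\le 1$ then close this estimate without any exponential factor.

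\textbf{Second and third (exponentially small) contributions.} For these I would essentially repeat the proof of Proposition \ref{prop:nonlinear-rough} --- Schatten--H\"older with $1 = \tfrac12+\tfrac16+\tfrac13$ and the interpolation $\|A\|_{S^3}\le \|A\|_{S^6}^{1/3}\|A\|_{S^2}^{2/3}$ from \eqref{I3-I2-bnd'} --- but sharpen the Cauchy integral by deforming $\Gamma$, using Assumption \ref{A:diel} and \eqref{eta-Om}, to the two-component contour of Figure \ref{fig:cauchy-int-contour'} pushed so that its horizontal branches lie at real distance $\ge \etal \ge 1$ from $\mu$. On those branches $|f_T(z-\mu)|\ls e^{-\beta\etal}\le e^{-\beta}$, while the vertical branches contribute $\int e^{-\beta|y|}\,dy = O(T)$; together with the resolvent bounds \eqref{Del-r-est''} this yields the claimed $e^{-\beta}$ prefactor in front of both the $\|\psi_j\|_{\dot{H}^1}^{1/3}\|\psi_j\|_{L^2}^{2/3}\|\chi\|_{\dot{H}^1}$ and $\|\psi_j\|_{\dot{H}^1}\|\chi\|_{\dot{H}^1}^{1/3}\|\chi\|_{L^2}^{2/3}$ contributions (the latter coming from placing the $S^2$/$L^2$-factor on $\chi$ rather than $\psi_j$ in the analogue of \eqref{fRphiRphiR}). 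Summing over $k\ge 2$ is then routine, with convergence ensured by $\|\psi\|_{L^2}=o(1)$ together with $\|\psi r\|\ls\|\psi\|_{\dot{H}^1}$ as used in the proof of Proposition \ref{prop:Nk-est}.

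\textbf{Main obstacle.} The delicate step is the first one. The Schatten budget $1 = \tfrac12+\tfrac14+\tfrac14$ is rigid: descending to $S^3$ or $S^2$ on any $\psi r$-factor forces, through Kato--Seiler--Simon, an $L^p$-norm of $\psi$ with $p<4$ which by Gagliardo--Nirenberg requires an $L^2$ interpolation factor, thereby reintroducing $L^2$-norms that would spoil the pure-$\dot{H}^1$ character of the first term. Avoiding this likely requires careful commutator manipulation --- for instance using $[\nabla,r] = r(\nabla\phi_{\rm per})r$, bounded under Assumption \ref{A:diel}, to move a $\nabla$ across $r$ and redistribute derivatives between the $\psi$-factors --- so that each $\psi$-factor is controlled by $\dot{H}^1$ alone. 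Verifying this fine Schatten bookkeeping is where the proof will be technically most demanding.
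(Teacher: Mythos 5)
Your plan diverges from the paper's and has two genuine gaps that I don't see how to repair without essentially importing the paper's key ideas (the spectral partition $P_1=\chi_{h_{\rm per}<\mu}$, $P_2=\chi_{h_{\rm per}\ge\mu}$, the commutator bound $\|[P_i,\psi]\|_{S^2}\ls\|\nabla\psi\|_{L^2}$ of Lemma~\ref{lem:commutator-P}, and the Cauchy-cancellation Lemma~\ref{lem:int-reduct}).

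\textbf{Gap 1: the $S^4$ allocation.} In the step ``$\|\psi\|_{L^4}\ls\|\psi\|_{L^6}\ls\|\nabla\psi\|_{L^2}$'' the first inequality is false on $\R^3$: $L^p$ spaces on a set of infinite measure do not nest, and by scaling no estimate $\|\psi\|_{L^4}\ls\|\nabla\psi\|_{L^2}$ can hold. Interpolation forces $\|\psi\|_{L^4}\ls\|\nabla\psi\|_{L^2}^{3/4}\|\psi\|_{L^2}^{1/4}$, which re-introduces the $L^2$-factor and destroys the pure-$\dot H^1$ first term of \eqref{N-est} — exactly the obstruction you flag yourself, but the fix you sketch (commuting $\nabla$ past $r$ via $[\nabla,r]=r(\nabla\phi_{\rm per})r$) produces $\nabla\phi_{\rm per}$, not $\nabla\psi$, so it does not resolve the issue. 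The paper's actual mechanism is different: after inserting $P_1+P_2=\one$ between the resolvents, the cross terms such as $N^{(121)}_2$ and $N^{(112)}_2$ can be rewritten using $P_1P_2=0$ so that each $\psi$ enters as a commutator $[P_i,\psi]$, and $\|[P_i,\psi]\|_{S^2}\ls\|\nabla\psi\|_{L^2}$ (Lemma~\ref{lem:commutator-P}). That is what yields the $\|\psi_j\|_{\dot H^1}\|\psi_1-\psi_2\|_{\dot H^1}$ term without any $L^2$-factor.

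\textbf{Gap 2: the exponential factor.} Deforming $\Gamma$ so that $|\Re z-\mu|\ge 1$ on the contour does \emph{not} make $|f_T(z-\mu)|\ls e^{-\beta}$ everywhere: on the piece with $\Re z<\mu$ we have $f_T(z-\mu)\to 1$ as $\beta\to\infty$, not $0$ (and $f_T$ is $2\pi T$-periodic in $\Im z$, so there is no decay along the vertical branches either). The exponential smallness for the diagonal contributions $(qqq)$ does not come from $|f_T|$ alone; it comes from the cancellation of Lemma~\ref{lem:int-reduct}: because all three resolvents are spectrally restricted by the same $P_q$, the poles of the integrand on the relevant contour piece $\G_q$ all lie inside it and the residue sum of a cubic-denominator rational function vanishes, so $\oint\Tr[fR_qgR_qhR_q]=\tfrac{1}{2\pi i}\int_{\G_q}(f_T(z-\mu)-1)\Tr[\,\cdots\,]$. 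It is $|f_T(z-\mu)-1|$ — the \emph{complementary} Fermi factor — that is uniformly $\ls e^{-\beta}$ on $\G_1$, and $|f_T|$ that is small on $\G_2$. This algebraic cancellation requires all three resolvents to carry the same projection $P_q$, so the partition of unity is indispensable and cannot be replaced by a contour deformation alone.
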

The derivation of Proposition \ref{prop:nonlinear}  from  Proposition \ref{prop:nonlinear-micro} is same as that of  Proposition \ref{prop:nonlinear-del-rough}  from  Proposition \ref{prop:nonlinear-rough} and we omit it here. 
\DETAILS{\begin{proof}[Proof of Proposition \ref{prop:nonlinear}]
 By \eqref{eqn:F-delta-rescaling-rel}, $N_\delta$ and the unscaled nonlinearity $N = N_{\delta = 1}$ are related via
\begin{align}
	N_\delta(\varphi) = \delta^{-3/2} \Udel N( \psi),\ \qquad \psi = \delta^{-1/2} \Udel^* \varphi, \label{nonlinear-explicit-rescale}
\end{align}
where $\Udel$ is given in \eqref{rescU}. 
  Eqs \eqref{N-est} and \eqref{nonlinear-explicit-rescale}   the relation $\|\Udel^* \varphi\|_{L^2}= \|\varphi\|_{L^2}$ and the notation $\psi_j = \delta^{-1/2} \Udel^* \varphi_j$ imply 
\begin{align}
	 \|N_\delta(\vphi_1) - N_\delta(\vphi_2)\|_{L^2}&\ls  	\delta^{-3/2}  \sum_{j=1}^2\bigg[\|\psi_j\|_{\dot{H}^1}\|\psi_1 - \psi_2\|_{\dot{H}^1} \notag \\ &+ 
	e^{-\beta}  \big(\|\psi_j\|_{\dot{H}^1}^{1/3} \|\psi_j\|_{L^2}^{2/3} \|\psi_1 - \psi_2\|_{\dot{H}^{1}}  \notag \\& + \|\psi_j\|_{\dot{H}^{1}} \|\psi_1 - \psi_2\|_{\dot{H}^{1}}^{1/3}\|\psi_1 - \psi_2\|_{L^2}^{2/3}\big)\bigg].\end{align}
	  Furthermore, using the relation $\|\psi_j\|_{\dot{H}^{k}}= \delta^{-1/2}\|\Udel^* \varphi_j\|_{\dot{H}^{k}}= \delta^{k-1/2}\| \varphi\|_{\dot{H}^{k}}$, we find
	  \begin{align}
	 \|N_\delta(\vphi_1) - N_\delta(\vphi_2)\|_{L^2}&\ls    \delta^{- 3/2}	  \sum_{j=1}^2\bigg[\delta \|\vphi_j\|_{\dot{H}^1} \|\vphi_1 - \vphi_2\|_{\dot{H}^1} \notag \\ &+ 
 	 \delta^{\frac13} e^{-\beta}\big(\|\vphi_j\|_{\dot{H}^1}^{1/3} \|\vphi_j\|_{L^2}^{2/3} \|\vphi_1 - \vphi_2\|_{\dot{H}^{1}} \notag \\&  + \|\vphi_j\|_{\dot{H}^{1}} \|\vphi_1 - \vphi_2\|_{\dot{H}^{1}}^{1/3}\|\vphi_1 - \vphi_2\|_{L^2}^{2/3}\big)\bigg]	. \label{Ndel-est'} 
	\end{align}
To estimate the terms on the r.h.s. of \eqref{Ndel-est'} we use the inequality $a^{1/3} b^{2/3}\le \frac23 (a+b)$, with $a:=\|\vphi\|_{\dot{H}^1}$ and $b:= m^{ 1/2}\del^{-1}\|\psi\|_{L^2}$, to obtain  
\[\|\vphi\|_{\dot{H}^1}^{1/3} \|\psi\|_{L^2}^{2/3}\le \frac23 (m^{ -1/2}\del)^{2/3}(\|\vphi\|_{\dot{H}^1}+m^{ 1/2}\del^{-1}\|\psi\|_{L^2}).\]
With  the definition of the norm $\|\cdot\|_{\del}$ in \eqref{s-norm-def}, this yields $\delta^{\frac13} \|\vphi\|_{\dot{H}^1}^{1/3} 
\|\vphi\|_{L^2}^{2/3} \|\chi\|_{\dot{H}^{1}}$\\ $\le\frac23 m^{ -1/3}\del \|\vphi\|_{\del} \|\chi\|_{\dot{H}^{1}}.$ Since $ \|\chi\|_{\dot{H}^{1}}\le \|\chi\|_{\del}$, this in turn implies \[\delta^{\frac13} \|\vphi\|_{\dot{H}^1}^{1/3} \|\vphi\|_{L^2}^{2/3} \|\chi\|_{\dot{H}^{1}}\le\frac23 m^{ -1/3}\del \|\vphi\|_{\del}\|\chi\|_{\del}.\]
 Applying this inequality to \eqref{Ndel-est'}, 
  we arrive at \eqref{Ndel-est}.  
 \end{proof}}

\begin{proof}[Proof of Proposition \ref{prop:nonlinear-micro}]
Let $h_{\rm per}$ and $r_{\rm per}(z)$ be given in \eqref{rper-def}. We use the relations \eqref{N} - \eqref{Nk-def} in the proof of Proposition \ref{prop:nonlinear-rough}. Following the latter proof we see that it suffices to improve the estimate of $N_k(\psi)$
\DETAILS{ which we reproduce here,  

\begin{align} \label{Nk-def'}	& N_k(\psi) := \oint  r_{\rm per}(z)  [(-\psi)  r_{\rm per}(z) ]^k ,
\end{align}}
in Proposition \ref{prop:Nk-est}, to which we proceed.

\DETAILS{We define explicit constants appearing in estimates below:
\begin{align} 
	& C_{1,\beta} = \sup_{f,g,h; P \in \{ P_-, P_+ \}} \frac{\oint \Tr [fR P g R P h P R]}{\|f\|_{L^2} \|\nabla g\|_{L^2}^{4/3} \|h\|_{L^2}^{2/3}}, \label{C1-beta} 
\end{align}
where the sup is taken over all possible $f,g,h,u, v$ such that the denominators are finite and
 $R =  r_{\rm per}(z)$ is given in \eqref{rper-def}.}



\begin{proposition} \label{prop:N2}
 Let Assumption \ref{A:diel} hold 
  and let $N_k$ be given by \eqref{Nk-def}. Assume that $\|\nabla \psi\|_{L^2} = o(1)$, then, for any $k\ge 2$,  we have the estimate
\begin{align} \label{Nk-est}	& \|\den[N_k(\psi)]\|_{L^2} 
	\ls \|\nabla \psi\|_{L^2}^k + 
	 e^{-\beta}  \|\nabla \psi\|_{L^2}^{4/3}  \|\psi\|_{L^2}^{2/3}\del_{k, 2},
\end{align}
where the constants associated with $\ls$ are independent of $\beta$ and $\del_{k, 2}$ is the Kronecker  delta. 
\end{proposition}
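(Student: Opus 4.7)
The plan is to establish the refined bound by duality, as in the proof of Proposition~\ref{prop:Nk-est}, estimating
$$\|\den[N_k(\psi)]\|_{L^2} = \sup_{\|f\|_{L^2}=1}\Big|\oint\Tr\big[f\,r_{\rm per}(z)\,(\psi\,r_{\rm per}(z))^k\big]\Big|,$$
and then treating $k\ge 3$ and $k=2$ by different arguments. Throughout I will abbreviate $r=r_{\rm per}(z)$ and $d(z)=\mathrm{dist}(z,\sigma(h_{\rm per}))$, both as in Proposition~\ref{prop:Nk-est}; the spectral-gap assumption~\ref{A:diel} guarantees $d(z)\gtrsim 1$ on $\Gamma$ and the exponential decay of $f_T(z-\mu)$ in \eqref{oint-decay} ensures integrability of the remaining $z$-factors.

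For $k\ge 3$ the plan is to apply the non-commutative H\"older inequality for Schatten classes with the partition $1=\tfrac12+\tfrac36+(k-3)\cdot 0$, giving
$$\big|\Tr[f r(\psi r)^k]\big|\le \|fr\|_{S^2}\,\|\psi r\|_{S^6}^{3}\,\|\psi r\|_{\mathrm{op}}^{k-3}.$$
Each factor is then bounded by $\|\nabla\psi\|_{L^2}$ times an integrable power of $d(z)$: the first two via the Kato--Seiler--Simon inequality~\eqref{eqn:KSS} combined with \eqref{Del-r-est''} and the Sobolev embedding $\|\psi\|_{L^6}\lesssim\|\nabla\psi\|_{L^2}$, and the operator-norm factor via
$$\|\psi r u\|_{L^2}\le\|\psi\|_{L^6}\|ru\|_{L^3}\le\|\psi\|_{L^6}\|ru\|_{L^2}^{1/2}\|ru\|_{L^6}^{1/2}\lesssim d(z)^{-3/4}\|\nabla\psi\|_{L^2}\|u\|_{L^2}.$$
Multiplying these bounds and integrating against $f_T(z-\mu)$ along $\Gamma$ yields the clean estimate $\|\den N_k(\psi)\|_{L^2}\lesssim\|\nabla\psi\|^k_{L^2}$ for all $k\ge 3$.

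For $k=2$ a pure H\"older argument necessarily introduces an $\|\psi\|_{L^2}$ factor, so the two summands in the target bound will come from different mechanisms. The piece $e^{-\beta}\|\nabla\psi\|^{4/3}\|\psi\|^{2/3}$ is obtained by refining Proposition~\ref{prop:Nk-est}: insert $I=P_-+P_+$ between the three resolvents, where $P_\pm$ are the spectral projections of $h_{\rm per}$ onto the parts of the spectrum below and above $\mu_{\rm per}$, and carry out the $z$-integration via the (iterated) divided-difference identity~\eqref{eqn:formula-for-L-alpha}. In every term where the three projections do not all agree, the divided difference $(h^L-h^R)^{-1}\big[f_T(h^L-\mu)-f_T(h^R-\mu)\big]$ is controlled by $f_T$ evaluated on one side of the gap, where $\|P_+ f_T(h_{\rm per}-\mu)\|_{\rm op}\lesssim e^{-\beta\eta_0}$ (and its analogue for $P_-$), exactly as in Lemma~\ref{lem:Vp-uppper-bound}; applying the H\"older chain of Proposition~\ref{prop:Nk-est} to these terms then gives $e^{-\beta}\|\nabla\psi\|^{4/3}\|\psi\|^{2/3}$. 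The remaining diagonal terms, where all three projections coincide, no longer contribute an $\|\psi\|_{L^2}$ factor, since on $\mathrm{Ran}(P_\epsilon)$ the restrictions $r P_\epsilon$ are uniformly bounded, and we can use $\|\psi rP_\epsilon\|_{\rm op}\lesssim\|\nabla\psi\|_{L^2}$ for two of the three resolvents together with $\|f rP_\epsilon\|_{S^2}\lesssim\|f\|_{L^2}$ for the remaining one; this produces the clean $\|\nabla\psi\|^2_{L^2}$ contribution.

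The main obstacle is the bookkeeping in the $k=2$ spectral splitting, since the divided differences of $f_T$ grow like $\beta$ pointwise near the gap edge; extracting $e^{-\beta\eta_0}$ uniformly requires the operator-norm control of $(h^L-h^R)^{-1}$ on the mixed subspaces $\mathrm{Ran}(P_{\epsilon_1})\otimes\mathrm{Ran}(P_{\epsilon_2})$ with $\epsilon_1\ne\epsilon_2$, which is guaranteed by Assumption~\ref{A:diel}, combined with the uniform Schatten-norm bound on $P_+ f_T(h_{\rm per}-\mu)$ from Lemma~\ref{lem:Vp-uppper-bound}. Ensuring that all implicit constants remain independent of $\beta$, and that the $d(z)$-dependence is integrable against $f_T(z-\mu)$ on $\Gamma$ in each of the several terms, is the most delicate part of the argument.
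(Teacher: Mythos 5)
Your argument for $k\ge 3$ is fine and essentially matches the paper's: both exploit that three of the $\psi\, r$ factors land in $S^6$ (bounded by $\|\nabla\psi\|_{L^2}$ via Kato--Seiler--Simon plus Sobolev) and the remaining ones in operator norm, with the only cosmetic difference being whether you pass through Lemma~\ref{lem:rhoA-by-tracial-norm} or dualize the trace directly.

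For $k=2$, however, the roles you assign to the two groups of terms are reversed, and neither of your two claims survives inspection. First, the \emph{mixed} terms $(abc)$ with not all $P$'s equal do \emph{not} carry an exponential factor: the divided difference of $f_T$ \emph{across} the gap is not small. Taking $\lambda_-<\mu<\lambda_+$ on either side of the gap, one has $f_T(\lambda_-)-f_T(\lambda_+)\approx 1-0=1$, so the first divided difference is $\sim 1/\eta_0=O(1)$ and the second divided difference in the three-resolvent case is likewise $O(1)$. The bound $\|P_+ f_T(h_{\rm per}-\mu)\|\lesssim e^{-\beta\eta_0}$ controls one-sided evaluations, not the cross-gap difference $f_T(\lambda_-)-f_T(\lambda_+)$, which is the quantity that actually appears. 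Consequently your mixed-term estimate yields $\|\nabla\psi\|_{L^2}^{4/3}\|\psi\|_{L^2}^{2/3}$ \emph{without} the $e^{-\beta}$ factor, which is not consistent with the target bound. Second, the \emph{diagonal} terms $(qqq)$ cannot be bounded by $\|\nabla\psi\|_{L^2}^2$ by the bookkeeping you describe: with $\|f r P_q\|_{S^2}\lesssim\|f\|_{L^2}$ and two factors estimated only in operator norm, the product $f r P_q\cdot \psi r P_q\cdot\psi r P_q$ is merely in $S^2$, not trace class, so $|\Tr[\cdot]|$ is not controlled. Any honest trace-class closure with $f\in L^2$ and two $\psi\, r$'s leaves you with a factor of $\|\psi\|_{L^p}$ with $p<6$, hence an $\|\psi\|_{L^2}$ contribution by interpolation; there is no way to avoid it in the diagonal block by Schatten--Hölder alone.

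The mechanism the paper uses is the opposite of what you propose, and each half relies on a structural fact beyond Sobolev/Hölder. For the \emph{mixed} terms, the clean $\|\nabla\psi\|_{L^2}^2$ bound comes from the identity $P_1P_2=0$, which forces commutators $[P_q,\psi]$ to appear, and the commutator Lemma~\ref{lem:commutator-P} gives $\|[P_q,\psi]\|_{S^2}\lesssim\|\nabla\psi\|_{L^2}$ — an $S^2$ bound genuinely gaining a derivative, which cannot be reproduced by $\|\psi\|_{L^6}\lesssim\|\nabla\psi\|_{L^2}$ alone. For the \emph{diagonal} terms, the exponential factor comes from a contour-deformation / Cauchy-vanishing argument (Lemma~\ref{lem:int-reduct}): with all three $R_q$ projected to the same side of the gap, $\frac{1}{2\pi i}\int_{\Gamma_q}\Tr[\cdot]\,dz=0$ by the residue theorem, and so only the piece $f_T(z-\mu)-1$ contributes, which is $O(e^{-\beta})$ on $\Gamma_q$. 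Without one of these two mechanisms the exponential factor in front of the $\|\psi\|_{L^2}^{2/3}$ term cannot be extracted, and without the commutator bound the $\|\nabla\psi\|_{L^2}^2$ term cannot be obtained.
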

\begin{proof}
We begin with $k=2$. To improve upon estimate \eqref{Nk-est-rough}, we, following \cite{CLS}, use the partition of unity  
\begin{align}
 \label{P-+}P_1+ P_2= \one, \text{ with } 	P_1 := \chi_{h_{\rm per} < \mu} \text{ and } P_2 := \chi_{h_{\rm per} \geq \mu}. 
\end{align}
  Let $R_i \equiv R_i (z)=  r_{\rm per}(z)  P_i$ where $i=1, 2,$ $P_i$ and $r_{\rm per}(z)$  are given in \eqref{P-+} and \eqref{rper-def}. Recalling definition \eqref{Nk-def} of $N_k(\psi)$ and inserting the partition of unity,  $P_1 + P_2 = \one$, after each $R$ in the integrand of \eqref{Nk-def}, we arrive at 
\begin{align}
	N_2(\psi) = \sum_{a,b,c = 1,2}N^{abc}_2(\psi) \, ,
\end{align}
 where the  $N_2^{abc}(\psi)$, for $a,b,c =1, 2$, denote the operators
\begin{align}
	N^{abc}_2(\psi) =&  \oint R_a \phi R_b \psi R_c \, . \label{eqn:N-abc-form-def}
\end{align}
We estimate the terms individually.  Below, we use 
  the estimate (see \eqref{Del-r-est'})
\begin{align}\label{Del-r-est}\|(1-\Delta)^{\al} R_i(z)\|\ls\ & d^{\al-1}\ls 1,\ i=1, 2,\end{align}
for $\al\in [0, 1]$ and $z\in \Gam$, where 
\begin{align}\label{d-est-eta}d \equiv d(z):=\dist (z, \s(h_{\rm per}))\ge \frac14 .\end{align}

\textbf{Case 1: $(121)$ and $(212)$.} We estimate the case for $(121)$, the other case is done similarly. Since $P_1P_2 = 0$, we write
\begin{align}
	N^{(121)}_2(\psi) =&  \oint R_1 \psi R_2 P_2 \psi R_1 \\
		=&  \oint R_1 [P_1, \psi ]P_2 R_2 P_2 [\psi, P_1] R_1 \, . \label{eqn:N--+--form-def}
\end{align}
Applying Lemma \ref{lem:rhoA-by-tracial-norm} and equation 
\eqref{Del-r-est} to the r.h.s. and using that the operator norm is bounded by the $I^2$ norm, 
 we find 
\begin{align}
 \label{N2-est121'}	\|\den[N^{(121)}_2(\psi)] \|_{L^2} &\ls  \|(1-\Delta)^{3/4+\epsilon} N^{(121)}_2(\phi)\|_{S^2} \\
 \label{N2-est1'}		&\ls \absoint d^{-3} \| [P_1, \psi ]P_2 \|_{S^2}^2. 
	\end{align} 
where, recall,
\begin{align}
	\absoint :=& \frac{1}{2\pi} \int_\Gamma dz |f_{T} (z-\mu)|. \label{eqn:absoint-def}
\end{align}
A key observation allowing us to obtain an improved estimate is that the commutators lead to gradient estimates: 
\begin{lemma} \label{lem:commutator-P}
 Let Assumption \ref{A:diel} hold, 
 we have the estimate
\begin{align}
	\|[P_i, \psi] \|_{S^2} \ls \|\nabla \psi\|_{L^2},\ i=1, 2 .
\end{align}
\end{lemma}
\begin{proof}[Proof of Lemma \ref{lem:commutator-P}]
Since the identity commutes with any operator and $P_2 = \one - P_1$ (see \eqref{P-+}), we prove the lemma for $P_1$ only. Since $h_{\rm per}$ (see \eqref{rper-def}) has a gap at $\mu$, 
the Cauchy integral formula implies
\begin{align}
	P_1 = \frac{1}{2\pi i} \int_{\Gamma_1} (z-h_{\rm per})^{-1} = \frac{1}{2\pi i} \int_{\Gamma_1} r_{\rm per}(z) \label{eqn:P---explicit-form}
\end{align}
where $\Gamma_1$ is the contour $\{t + i ;  -c\le t < \mu\} \cup \{t - i ;   -c\le t < \mu\}
 \cup \{ -c -it + (1-t)i  : t \in [0,1] \} \cup  \{ \mu -it  + (1-t)i  : t \in [0,1] \}$, where $c > 0$ is any constant such that $h_{\rm per} > -c +1$, and the contour is traversed counter-clockwise. We see that
\begin{align}
	& [P_1, \psi] = \frac{1}{2\pi i} \int_{\Gamma_1} [ r_{\rm per}(z) ,\psi] \\
		=& \frac{1}{2\pi i} \int_{\Gamma_1}  r_{\rm per}(z) [\nabla \cdot, \nabla \psi] r_{\rm per}(z)  \notag \\
			& +  \frac{1}{2\pi i} \int_{\Gamma_1} r_{\rm per}(z) (2\nabla \psi \cdot \nabla) r_{\rm per}(z).  \, \label{eqn:P---commute-varphi}
\end{align}
Lemma \ref{lem:commutator-P} is now proved by an application of the Kato-Seiler-Simon inequality (\eqref{eqn:KSS}) to \eqref{eqn:P---commute-varphi} and noting that $\Gamma_1$ is compact and has length $O(1)$.
\end{proof}
Using Lemma \ref{lem:commutator-P} and estimates  \eqref{oint-decay} and \eqref{d-est-eta} in \eqref{N2-est1'} yields that
\begin{align}
	\|\den[N^{(121)}_2(\psi)] \|_{L^2} 
	 \ls \|\nabla \psi\|_{L^2}^2 \, .\label{eqn:final-nonlin-N-2-est-1}
\end{align}

\textbf{Case 2: $(112), (211), (122),$ $(221)$.} We estimate the case for $(112)$, the other cases are done similarly. Again, since $P_1P_2 = 0$, we write
\begin{align}
	N^{(112)}_2(\psi) =&  \oint R_1 \psi R_1  \psi  R_2 \\
		=&  \oint R_1 \psi R_1  [\psi, P_1] R_2 \, . \label{eqn:N--+-eqn1}
\end{align}
Using Lemma \ref{lem:rhoA-by-tracial-norm} as in with $N^{(121)}_2(\psi)$ in \eqref{N2-est121'}, we estimate \eqref{eqn:N--+-eqn1} as
\begin{align} \label{N112-est1}	\|\den[N^{(112)}_2(\psi)] \|_{L^2} &\ls  \absoint  d^{-1} \|\psi R_1 \| \| [\psi, P_1]R_2 \|_{S^2} .
\end{align}
where $\absoint$ is defined in \eqref{eqn:absoint-def}. By the inequality $\|A\|  \leq \|A\|_{I^p}$, for any $p < \infty$ for any operator $A$ on $L^2(\R^3)$, and the Kato-Seiler-Simon inequality \eqref{eqn:KSS}, we find  $\|\psi R_1 \|\le\|\psi R_1 \|_{S^6}\ls \|\psi\|_{L^6}$. Using this, together with Lemma \ref{lem:commutator-P}, in \eqref{N112-est1}, we obtain
\begin{align}
		\|\den[N^{(112)}_2(\psi)] \|_{L^2} \ls & \absoint  d^{-2} \|\psi\|_{L^6} \|\nabla \psi\|_{L^2} .\label{eqn:nonlin-N-2-mmp-2}
\end{align}
Combining this with  \eqref{oint-decay}, \eqref{d-est-eta} and Hardy-Littlewood's inequality \eqref{eqn:hardy-ineq} gives 
\begin{align}
	\|\den[N^{(112)}_2(\psi)] \|_{L^2} \ls & \|\nabla \psi\|_{L^2}^2. \label{eqn:final-nonlin-N-2-est-3}
\end{align}

\textbf{Case 3: $(111)$ and $(222)$.} 
 We use the $L^2$-$L^2$ duality to estimate the $L^2$ norm of $\den[N^{(qqq)}_2(\psi)], q=1, 2$. We have, by  \eqref{den-def1} and definition \eqref{Nk-def},
 \begin{align}\notag	\|\den[N^{(qqq)}_2(\psi)]\|_{L^2} &=\sup_{ \|f\|_{L^2}=1}|\int f \den[N^{(qqq)}_2(\psi)]|\\ \notag&=\sup_{ \|f\|_{L^2}=1}|\Tr [f N^{(qqq)}_2(\psi)]|\\
 \label{NL2-dual} &=\sup_{ \|f\|_{L^2}=1}|\oint \Tr (f R_q \psi R_q \psi R_q)|.\end{align} 
(In the last two lines, $f$ is considered as a multiplication operator.) 
To show that the integral on the r.h.s. converges absolutely, we follow the arguments in \eqref{fRphiRphiR}-\eqref{L2-est2'} to prove, 
\DETAILS{Let $R = r_{\rm per}(z)$ and $P_\pm$  be given in \eqref{rper-def} and \eqref{P-+}.  By definition \eqref{C1-beta}, we have to estimate
\begin{align}
	\oint \Tr [f R P_k g R P_k h P_k], \quad k=1,2, 
\end{align}
where $f, h \in L^2(\R^3)$, $g \in \dot{H}^1(\R^3)$.} 
\DETAILS{ $$***$$
Finally, we estimate the constants $C_{1,\beta}$,  given in \eqref{C1-beta}, below. 
 Let 
\begin{align}
	\eta(\R^3) := & \text{dist}(\mu_{\rm per}, \sigma(h_{\rm per})), \label{eqn:eta-R-def}\\
	\eta(\Om) := &  \text{dist}(\mu_{\rm per}, \sigma(h_{\rm per, 0}))  \label{eqn:eta-Om-def}.
\end{align}}
%
\DETAILS{ Let $f \in L^2$ and recall the Schatten norm $\|\cdot\|_{S^p}$ 
  defined in \eqref{Ip-norm'}. Using the non-abelian H\"{o}lder's inequality $1 = \frac{1}{2} + \frac{1}{6} + \frac{1}{3}$, we see that 
\begin{align}
	|\Tr( f R_q \psi R_q \psi R_q)| \ls & \|f R_q\|_{S^2} \|\psi R_q\|_{S^6} \|\psi R_q\|_{S^3} \, . \label{fR-phiR-phiR-}
\end{align}
We use 
 the operator trace-class estimate $\|A\|_{S^3}^3 = \Tr(|A|^3) \leq \|A\| \Tr(|A|^2) = \|A\| \|A\|_{S^2}^2\le \|A\|_{S^6} \|A\|_{S^2}^2$ to obtain 
\begin{align}\label{I3-I2-bnd} 
 \|A\|_{S^3}\le \|A\|_{S^6}^{1/3} \|A\|_{S^2}^{2/3}.	 
\end{align}
Using this equality to estimate the third factor in \eqref{fR-phiR-phiR-}, 
 we bound the r.h.s. of \eqref{fR-phiR-phiR-} as 
\begin{align} \label{L2-est1}	|\Tr( f R_q (\psi R_q)^2)| \ls &  
\|f R_q\|_{S^2} \| \psi R_q\|_{S^6}^{4/3} \|\psi R_q\|_{S^2}^{2/3}. 
\end{align} 
For a typical term on the r.h.s., we have 
 $\|g R_q\|_{S^p}\le \|g (1-\Delta)^{-\al_p} \|_{S^p} \| (1-\Delta)^{\al_p} R_q\|$, with $3/(2p) <\al_p<1, p>3/2$, which, together with Kato-Seiler-Simon's inequality \eqref{eqn:KSS} and inequality \eqref{Del-r-est}, gives \[\|g R_q\|_{S^p}\ls\|g\|_{L^p}d^{\al_p-1},\ 3/(2p) <\al_p<1, p>3/2.\] 
Applying this estimate to the factors on the r.h.s. of \eqref{L2-est1} and using 
the Gagliardo-Nirenberg-Sobolev inequality \eqref{eqn:hardy-ineq}, we find,}
 for $q=0, 1$,  
\begin{align} \label{L2-est2}	\big|\Tr( f R_q (\psi R_q)^2)\big|\ls & d^{-4/3}\|f\|_{L^2} \|\n \psi\|_{L^2}^{4/3} \|\psi \|_{L^2}^{2/3}. 
	\end{align}
Due to definition \eqref{d-est-eta} of $d \equiv d(z)$, this shows that  the integral on the r.h.s. of \eqref{NL2-dual} converges absolutely.
\DETAILS{$***$
Using 
that for any operator $A$ on $L^2(\R^3)$, we have
\begin{align}
	\|A\|_{S^3}^3 = \Tr(|A|^3) \leq \|A\| \Tr(|A|^2) = \|A\| \|A\|_{S^2}^2, \label{eqn:3-norm-bound-by-infty-and-2-norm}
\end{align}
Kato-Seiler-Simon's inequality \eqref{eqn:KSS}, and Hardy-Littlewood's inequality \eqref{eqn:hardy-ineq}, equation \eqref{eqn:fR-phiR-phiR-} shows 
\begin{align}
	|\Tr( f R_1 \psi R_1 \psi R_1)| \ls & \|f\|_{L^2} \|\n \psi\|_{L^2}^{4/3} \|\psi R_1\|_{L^2}^{2/3} \, . 
\end{align}
$***$}

		\begin{lemma} \label{lem:int-reduct}
For $q=1, 2$, we have
 \begin{align}
	\oint & \Tr [f R_q g R_q h R_q] \notag \\
		&= \frac{1}{2\pi i}\int_{\G_q} (f_{T}(z) - 1) \Tr [f R_q g R_q h R_q]. \label{int1}
\end{align}
\end{lemma}
\begin{proof}
Note that  the contour $\G$ in Figure \ref{fig:cauchy-int-contour'} is the union of two disjoint contours, $\G=\G_1 \cup \G_2$, with  $\G_1$ being the closed contour  and $\G_2$ unbounded one (i.e. the parts of $\G$ with $\Re \, z < \mu$ and $\Re \, z > \mu$). Hence, since $P_1/P_2$ projects to the spectrum of $h_{\rm per}$ on the left/right of $\mu$, the Cauchy theorem implies that 
\begin{align} \label{Rq-exp1}
	\int_{\G} dz \, &\Tr [f R_q g R_q h R_q] =\int_{\G_q} dz \, \Tr [f R_q g R_q h R_q]  .
\end{align}
 Next, we  note that, by Bloch's theory, we have
\begin{align}
	\int_{\G_q} dz \, \Tr [f R_q g R_q h R_q]& =\int_{\G_q} dz\, \int_{(\Om^*)^3} d\hat k d\hat k_1 d \hat k_2 \\
	&\times \Tr_{L^2_{\rm per}} f_{k-k_1} (R_q)_{k_1} g_{k_1-k_2} (R_q)_{k_2} h_{k_2-k} (R_q)_k . \label{eqn:RPminus-exp1}
\end{align}
Computing the trace in the complete orthonormal basis  of eigenvectors $\varphi_{m,k}$ of  $(R_q)_{k}$ (with eigevalues $\lambda_{m,k}$)  and inserting the complete orthonormal bases  of eigenvectors $\varphi_{n,k_1}$ and $\varphi_{r, k_2}$ of  $(R_q)_{k_1}$ and $(R_q)_{k_2}$ (with eigevalues $\lambda_{n,k_1}$ and $\lambda_{r,k_2}$) into \eqref{eqn:RPminus-exp1}, we see that 
\begin{align}
	 \int_{\G_q} \Tr [f R_q g & R_q h R_q] = \sum_{m,n,r}  \int_{(\Om^*)^3} d\hat k d\hat k_1 d \hat k_2\\
	&\times  \lan \varphi_{m,k}, f_{k-k_1} \varphi_{n, k_1} \ran \lan \varphi_{n, k_1}, g_{k_1-k_2} \varphi_{r, k_2} \ran \lan \varphi_{r, k_2}, h_{k_2-k} \varphi_{m, k} \ran \\
		&\times \int_{\G_q} dz\, \frac{1}{(z-\lambda_{m,k})(z-\lambda_{n,k_1})(z-\lambda_{r, k_2})} . \label{eqn:RPminus-exp2}
\end{align}
For $q=1$, since $P_1$ projects to the spectrum of $h_{\rm per}$ on the left of $\mu$, we see that $\lambda_{m,k}, \lambda_{n,k_1}, \lambda_{r, k_2} < \mu$. In particular, these eigenvalues are in the left closed contour in Figure \ref{fig:cauchy-int-contour'}. Consequently, Cauchy's integral formula shows that the 
integal in \eqref{eqn:RPminus-exp2} is identically zero. Similar argument applies to  $q=2$. This shows that

\begin{align}
 \frac{1}{2\pi i}\int_{\G_q} \Tr [f R_q g R_q h R_q]  = 0.
\end{align}
Thus 
\eqref{int1} follows.
\end{proof}

Using the explicit form of the Fermi-Diract distribution $f_{T}$ in \eqref{fT}, 
  we see that
\begin{align}
	|f_{T}(z-\mu) - 1| = \frac{e^{\beta(\Re \, z-\mu)}}{|1+e^{\beta(z-\mu)}|} \, .\label{eqn:fFD-1-exp1}
\end{align}
By condition \eqref{eta-R3} and by the choice of the contour, $\G$, in Figure \ref{fig:cauchy-int-contour'}, we see that the if $z \in \G$, then $\Re \, z$ is at least 
at the distance $\ge 1$ from $\mu$. 
Hence, for $z$ in a contour $\G$,  \eqref{eqn:fFD-1-exp1} implies that
\begin{align}\label{fT-est2}
	|f_{T}(z-\mu) - 1|  \ls  e^{-\beta}.
\end{align}

 Applying estimates \eqref{fT-est2} and \eqref{L2-est2} to the r.h.s. of \eqref{int1} and recalling the definition \eqref{d-est-eta} of $d\equiv d(z)\ge \frac14$, 
 we arrive at the inequality 
\begin{align} \label{int1-est}	|\oint & \Tr [f R_q g R_q h R_q]| \notag \\
		&\ls  e^{-\beta}\|f\|_{L^2} \|\n \psi\|_{L^2}^{4/3} \|\psi \|_{L^2}^{2/3}.
\end{align}
This inequality, together with the relation \eqref{NL2-dual}, gives 
\begin{align}
	\|\den[N^{(qqq)}_2(\phi)]\|_{L^2} \ls & e^{-\beta} \|\nabla \phi\|_{L^2}^{4/3} \|\phi\|_{L^2}^{2/3}. \label{eqn:final-nonlin-N-2-est-5}\end{align}
Inequalities \eqref{eqn:final-nonlin-N-2-est-1}, \eqref{eqn:final-nonlin-N-2-est-3} and \eqref{eqn:final-nonlin-N-2-est-5} imply estimate \eqref{Nk-est} for $k=2$. 

Now we estimate $N_k$ for $k>2$.  
By \eqref{oint-decay} and \eqref{eqn:resolvent-bound-on-G}, it suffices to estimate $\den[R(\phi R)^k]$ where $R =  r_{\rm per}(z)$ is given in \eqref{rper-def}. Using Lemma \ref{lem:rhoA-by-tracial-norm}, we see that
\begin{align}
	\|\den[R(\phi R)^k]\|_{L^2} \ls & \|(1-\Delta)^{4/3+\epsilon}R(\phi R)^k\|_{S^2} \\
		\ls & \|(\phi R)^k\|_{S^2}. \label{eqn:lem:L2kgeq3-eqn-1}
\end{align}
Using H\"{o}lder's inequality with $\frac{1}{2} = \frac{1}{6} + \frac{1}{6} + \frac{1}{6} + $ another $k$ terms of $\frac{1}{\infty}$, \eqref{eqn:lem:L2kgeq3-eqn-1} becomes 
\begin{align}
	\|\den[R(\phi R)^k]\|_{L^2}  \leq & \|\phi R\|_{S^6}^3 \|\phi R\|^{k-3} \\
		\leq & \|\phi R\|_{S^6}^k , \label{eqn:lem:L2kgeq3-eqn-2}
\end{align}
where the last line follows since $\| \cdot \| \leq \| \cdot \|_{S^p}$ for $p < \infty$. Combining with Kato-Seiler-Simon's inequality \eqref{eqn:KSS} and Hardy-Littlewood's inequality \eqref{eqn:hardy-ineq}, \eqref{eqn:lem:L2kgeq3-eqn-2} implies \eqref{Nk-est} for $k\ge 3$. 
\end{proof}



The rest of the proof of  Proposition \ref{prop:nonlinear} proceeds as the proof of in Proposition \ref{prop:Nk-est}.
\DETAILS{Now, we complete the proof of Proposition \ref{prop:nonlinear}. Lemma \ref{lem:L2kgeq3}  shows that 
  if $\|\psi\|_{L^2} < \infty$ and either $\|\nabla \psi\|_{L^2} = o(1)$ or $\|\nabla \psi\|_{\dot{H}^{1}} = o(1)$, then series \eqref{N-series} converges absolutely in $L^2$. Moreover this lemma and  Proposition \ref{prop:N2} imply the estimate \[\|\den[N (\psi)]\|_{L^2} \ls  \|\nabla \psi\|_{L^2}^{4/3}\|\psi\|_{L^2}^{2/3} \|\psi\|_{H^j}^{k-2}.\]  

Now, using series \eqref{N-series}, we write  
\begin{align}
	N(\psi_1) - N(\psi_2)= \sum_{k \geq 2} \den[N_k(\psi_1) - N_k(\psi_2)] \, .
\end{align}
By definition \eqref{Nk-def}, $N_k(\psi)$ is an $k$-th degree monomial in $\phi$.
Hence, we can expand $N_k(\psi_1) - N_k(\psi_2)$ in the following telescoping form
\begin{align}
	x^k& - y^k = x^{k-1}(x-y) + x^{k-2}(x-y)y + \cdots + (x-y)y^k\, . \label{N-telescoping}
\end{align}
The proof of Proposition \ref{prop:nonlinear-micro} follows by applying appropriate and straightforward extension of  Proposition \ref{prop:N2}  and Lemma \ref{lem:L2kgeq3} to each term in the expansion of  $N_k(\psi_1) - N_k(\psi_2)$ given in \eqref{N-telescoping}.} 
 \end{proof}

\DETAILS{\begin{proof}[Proof of Proposition \ref{prop:nonlinear-micro-rough}]

Now, we complete the proof of Proposition \ref{prop:nonlinear-micro-rough}. Lemma \ref{lem:Nk-rough}
shows that if $\|\psi\|_{L^2} < \infty$ and either $\|\nabla \psi\|_{L^2} = o(1)$ or $\|\nabla \psi\|_{\dot{H}^{1}} = o(1)$, then series \eqref{N-series} converges absolutely in $L^2$ and gives estimate \[\|\den[N (\psi)]\|_{L^2} \ls  \|\nabla \psi\|_{L^2}^{4/3}\|\psi\|_{L^2}^{2/3} \|\psi\|_{H^j}^{k-2}.\]  

Now, using series \eqref{N-series}, we write  
\begin{align}
	N(\psi_1) - N(\psi_2)= \sum_{k \geq 2} \den[N_k(\psi_1) - N_k(\psi_2)] \, .
\end{align}
By definition \eqref{Nk-def}, $N_k(\psi)$ is an $k$-th degree monomial in $\phi$.
Hence, we can expand $N_k(\psi_1) - N_k(\psi_2)$ in the following telescoping form
\begin{align}
	x^k& - y^k = x^{k-1}(x-y) + x^{k-2}(x-y)y + \cdots + (x-y)y^k\, . \label{N-telescoping}
\end{align}
The proof of Proposition \ref{prop:nonlinear-micro-rough} follows by applying appropriate and straightforward extension of Lemma \ref{lem:Nk-rough} 
to each term in the expansion of $N_k(\psi_1) - N_k(\psi_2)$ given in \eqref{N-telescoping}.
\end{proof}}

\DETAILS{\begin{proposition} \label{prop:nonlinear-micro}
 Let Assumptions \ref{A:kappa} - \ref{A:scaling} hold. If $\|\phi_1\|_{\dot{H}^1}, \|\phi_2\|_{\dot{H}^1} = o(1)$, then we have the estimates
\begin{align}
	& \|N(\phi_1) - N(\phi_2)\|_{L^2} \notag \\
	&\ls  (\|\phi_1\|_{\dot{H}^1} + \|\phi_2\|_{\dot{H}^1})\|\phi_1 - \phi_2\|_{\dot{H}^1} \notag \\
	&+ C_{1,\beta} (\|\phi_1\|_{\dot{H}^1}^{1/3} \|\phi_1\|_{L^2}^{2/3} + \|\phi_2\|_{\dot{H}^1}^{1/3} \|\phi_2\|_{L^2}^{2/3}) \|\phi_1 - \phi_2\|_{L^2} \notag \\
	&+ C_{2,\beta} (\|\phi_1\|_{\dot{H}^1} \|\phi_1\|_{L^2} + \|\phi_2\|_{\dot{H}^1}\|\phi_2\|_{L^2})\|\phi_1 - \phi_2\|_{\dot{H}^1}.
\end{align}
\end{proposition}
We first derive Proposition \ref{pro:nonlinear}  from  Proposition \ref{prop:nonlinear-micro} and then prove the latter statement.}

\DETAILS{\begin{proof}[Proof of Proposition \ref{prop:nonlinear}]
By \eqref{eqn:F-delta-rescaling-rel}, $N_\delta$ and the unscaled nonlinearity $N = N_{\delta = 1}$ are related via
\begin{align}
	N_\delta(\varphi) = \delta^{-3/2} \Udel N( \phi) \label{eqn:nonlinear-explicit-rescale}
\end{align}
where $\Udel$ is given in \eqref{rescU} and 
\begin{align}
	\phi = \delta^{-1/2} \Udel^* \varphi. \label{eqn:varphi-phi-norm-scale} 
\end{align}
We first observe that 
\begin{align}
	& \|N_\delta(\varphi_1) - N_\delta(\varphi_2)\|_{L^2} \notag \\
		&\quad \ls  \delta^{-1/2}(\|\varphi_1\|_{\dot{H}^1} + \|\varphi_2\|_{\dot{H}^1})\|\varphi_1 - \varphi_2\|_{\dot{H}^1} \notag \\
		&\quad + \delta^{-7/6} C_{1,\beta} (\|\varphi_1\|_{\dot{H}^1}^{1/3} \|\varphi_1\|_{L^2}^{2/3} + \|\varphi_2\|_{\dot{H}^1}^{1/3} \|\varphi_2\|_{L^2}^{2/3}) \notag \\
\label{eqn:pro:nonlinear:eqn-3}		&\quad  \quad \times  \|\varphi_1 - \varphi_2\|_{\dot{H}^1},
\end{align}
where the constant $C_{1,\beta}$ is given explicitly in  \eqref{C1-beta}. 
We rewrite each term $\|\varphi_2\|_{L^2}$ in \eqref{eqn:pro:nonlinear:eqn-3} as $\|\varphi_2\|_{L^2} = \delta m^{-1/2} (\delta^{-1}m^{1/2} \|\varphi_2\|_{L^2})$. Using the definition of the $B_{s,\delta}$-norm in \eqref{Bs-norm}, we rewrite \eqref{eqn:pro:nonlinear:eqn-3} as
\begin{align}
	& \|N_\delta(\varphi_1) - N_\delta(\varphi_2)\|_{L^2} \notag \\
		\ls & \delta^{-1/2}(\|\varphi_1\|_{\dot{H}^1} + \|\varphi_2\|_{\dot{H}^1})\|\varphi_1 - \varphi_2\|_{\dot{H}^1} \notag \\
		&+ \delta^{-7/6+2/3} C_{1,\beta} m^{-2/6}(\|\varphi_1\|_{B_{s,\delta}} + \|\varphi_2\|_{B_{s,\delta}}) \notag \\
\label{eqn:pro:nonlinear:eqn-5}		&\quad  \times  \|\varphi_1 - \varphi_2\|_{\dot{H}^1}.
\end{align}By the gap assumption \ref{A:spec-gap}, 
inequality \eqref{eqn:pro:nonlinear:eqn-1} follows from \eqref{eqn:pro:nonlinear:eqn-5} modulo bounding the coefficient $C_{1,\beta} m^{-s}$ for $k=1,...,4$ and $s \leq 5/6$. The latter estimates follow from  Lemmas \ref{lem:Cjbeta-exp-decay} and \ref{lem:V1-lower-bound}  below. 
\end{proof}}


\end{document}